\documentclass[12pt]{article}
\usepackage{amsmath, amssymb, amsfonts, amsthm}
\usepackage{color}
\setlength{\oddsidemargin}{-.1in}
\setlength{\textwidth}{6.6in}
\setlength{\textheight}{8.2in}

\title{Second order semiclassics with self-generated
magnetic fields}
\author{L\'aszl\'o Erd\H os
 \thanks{Partially supported by SFB-TR12 of
the German Science Foundation. {\text lerdos@math.lmu.de} }
\\Institute of Mathematics, University of Munich \\
Theresienstr. 39, D-80333 Munich, Germany \\
S\o ren Fournais \thanks{Work partially supported by the Lundbeck
  Foundation, the Danish Natural Science Research Council and the European 
Research Council under the
 European Community's Seventh Framework Program (FP7/2007--2013)/ERC grant
 agreement  202859.
{\text fournais@imf.au.dk}} \\ Department of Mathematical Sciences, Aarhus University\\
 Ny Munkegade 118, DK-8000 Aarhus, Denmark
\\ and \\
Jan Philip Solovej \thanks{Work partially supported
   by the Danish Natural Science Research Council and by a Mercator
   Guest Professorship from the German Science Foundation. {\text
solovej@math.ku.dk}}
\\ Department of Mathematics, University of Copenhagen\\
Universitetsparken 5, DK-2100 Copenhagen,
Denmark}

\date{May 29, 2011}

\newtheorem{theorem}{Theorem}[section]
\newtheorem{proposition}[theorem]{Proposition}
\newtheorem{corollary}[theorem]{Corollary}
\newtheorem{lemma}[theorem]{Lemma}

\newtheorem{remark}[theorem]{Remark}
\numberwithin{equation}{section}

\newcommand{\rd}{{\rm d}}
\newcommand{\be}{\begin{equation}}
\newcommand{\ee}{\end{equation}}
\newcommand{\bey}{\begin{eqnarray}}
\newcommand{\eey}{\end{eqnarray}}
\newcommand{\beys}{\begin{eqnarray*}}
\newcommand{\eeys}{\end{eqnarray*}}

\DeclareMathOperator{\supp}{supp}

\newcommand{\bZ}{{\mathbb Z}}

\newcommand{\bsigma}{\mbox{\boldmath $\sigma$}}

\renewcommand{\iint}{\int \!\! \int}

\newcommand{\bR}{{\mathbb R}}
\newcommand{\bC}{{\mathbb C}}

\newcommand{\bN}{{\bf N}}

\newcommand{\ep}{\varepsilon}

\newcommand{\ov}{\overline}

\newcommand{\e}{\varepsilon}

\newcommand{\Tr}{{\rm Tr\;}}
\newcommand{\tr}{{\rm Tr\;}}

\newcommand{\wh}{\widehat}
\newcommand{\wt}{\widetilde}

\newcommand{\cE}{{\cal E}}

\newcommand{\al}{\alpha}

\newcommand{\pt}{\partial}

\newcommand{\om}{\omega}

\newcommand{\non}{\nonumber}




\begin{document}
\maketitle

\begin{abstract}
We consider the semiclassical asymptotics of
the sum of negative eigenvalues of the three-dimensional Pauli operator
with an external potential and  a self-generated magnetic field $B$.
We also add
the field energy $\beta \int B^2$ and we
minimize over all magnetic fields. 
The parameter $\beta$ 
 effectively determines the strength of the field.
We consider the weak field regime with  $\beta  h^{2}\ge \mbox{const}>0$,
where $h$ is the semiclassical parameter.
For smooth potentials we prove that the semiclassical
asymptotics of the total energy is  given by the non-magnetic Weyl term to leading order
with an error bound that is smaller by a factor $h^{1+\e}$,
i.e. the subleading term vanishes.
However, for potentials with a Coulomb singularity the subleading term
does not vanish due to the non-semiclassical effect of the singularity.
Combined with a multiscale technique, this refined estimate is used 
in the companion paper \cite{EFS3} 
to prove the second order Scott correction 
to the ground state energy of large atoms and molecules.

\end{abstract}

\bigskip\noindent
{\bf AMS 2010 Subject Classification:} 35P15, 81Q10, 81Q20

\medskip\noindent
{\it Key words:} Semiclassical eigenvalue estimate,
Maxwell-Pauli system, Scott correction,

\medskip\noindent
{\it Running title:} Semiclassics with self-generated field.

\section{Introduction}\label{sec:intro}

\bigskip

An important problem in semiclassical spectral analysis is to determine  the sum of
the negative eigenvalues of a Schr\"odinger operator
$-h^2\Delta-V(x)$ on $L^2(\bR^d)$, i.e.,
$$
\tr(-h^2\Delta-V(x))_-.
$$
We will use the convention that $x_-= (x)_- =\min\{ x, 0\}$
when $x$ is either a real number or a self-adjoint operator.
It is well known that  under appropriate integrability conditions on  $V$ 
the leading order term is given by the Weyl asymptotics
\be
   \tr(-h^2\Delta-V(x))_- = (2\pi h)^{-d}\iint_{\bR^d\times\bR^d} (p^2-V(x))_-\rd x\rd p 
  + o(h^{-d}), \qquad h\to 0,
\label{weyl}
\ee
and for smooth potentials the error term can be improved to  $O(h^{-d+2})$
(under some non-criticality assumption) by using pseudo-differential calculus.
In other words,  the
subleading term in the semiclassical expansion in powers of $h$ vanishes.
We remark that with more elementary methods and under less regularity assumptions on $V$, 
for a local version of this problem
$$
  \tr\big[\psi (-h^2\Delta-V(x))\psi\big]_-,   \qquad \psi\in C_0^\infty(\bR^d),
$$
the error term has been shown to be  $O(h^{-d+6/5})$ in Theorem 12 of \cite{SS}
(we will recall it in Theorem~\ref{thm:SS}).
This bound was extended to the relativistic case in \cite{SSS}.

A generalization of this problem is to consider not only a potential
$V$ but also an exterior magnetic field $B=\nabla\times A$ generated by a vector potential
$A$. The corresponding magnetic Schr\"odinger operator is 
$
(-i\nabla + A)^2-V(x).
$
A further generalization is to consider the particles as having 
spin$-\frac{1}{2}$
and introduce the magnetic Pauli operator 
$[\bsigma\cdot (-i\nabla + A)]^2-V$, where in  $d=3$
dimensions $\bsigma=(\sigma_1, \sigma_2, \sigma_3)$ denotes the vector
of Pauli matrices.  For simplicity we will consider 
the $d=3$ dimensional case only and
we denote both the Schr\"odinger
operator $(-ih\nabla + A)^2$ and the Pauli operator 
$[\bsigma\cdot (-ih\nabla + A)]^2$ by $T_h(A)$.
Our analysis will be carried out in the more
complicated case of the Pauli operator, analogous but easier
results can be proved for the Schr\"odinger case as well.

Much work has gone into understanding the semiclassical asymptotics of
the sum of negative eigenvalues, i.e., the asymptotics for small $h>0$
of
$$
\tr( T_h(A)-V(x))_-. \quad
$$
It is well known that under appropriate conditions on $A$ and $V$ the
leading behavior as $h$ tends to zero is given by the Weyl asymptotics 
\eqref{weyl} and note that  the limit behavior is non-magnetic, i.e.
fixed magnetic fields do not influence the leading order semiclassics.
If the magnetic field is rescaled, $B\to \mu B$, and the coupling constant $\mu$
increases along with the $h\to 0$ semiclassical limit at least as $\mu \gtrsim h^{-1}$,
then magnetic fields become relevant even in the leading term.
Most work in this direction has been carried out with 
a homogeneous magnetic field \cite{Sob, LSY1, LSY2} with some generalization
to an inhomogeneous one \cite{ES, I} but always subject to regularity conditions
on the field.

In this paper we will address a related and equally important issue,
namely the case when the magnetic field is not a fixed external field,
but the self-generated classical magnetic field generated by the
particles themselves. 
The vector potential $A$ will be optimized to
minimize the total energy consisting of the sum of negative eigenvalues
(corresponding to the ground state energy of non-interacting fermions)
and the field energy
$$
\int B^2 =\int |\nabla\times A|^2
$$
(we use the convention that unspecified integrals are always
on $\bR^3$ w.r.t. the Lebesgue measure). 
The problem we consider is thus to determine the
energy
\begin{equation}\label{eq:energy}
E(\beta,h,V)=\inf_A\left[\tr(T_h(A)-V)_-+\beta\int |\nabla\times A|^2\right]
\end{equation}
for $\beta,
h>0$, where the infimum runs over all vector fields $A\in
H^1(\bR^3;\bR^3)$; in fact minimizing only for all $A\in
C_0^\infty(\bR^3;\bR^3)$ gives the same infimum.
Alternatively, in addition to  $A\in
H^1(\bR^3;\bR^3)$, one could  impose a gauge
fixing condition, e.g., $\nabla\cdot A=0$,
  see more
details in Appendix A of \cite{EFS1}.

Here $\beta$ is an additional parameter setting the strength of
 the coupling of
the particles to the field. Formally $\beta=\infty$ corresponds to the non-magnetic case; smaller
$\beta$ means that a larger optimizing magnetic field is expected.
In a given physical system the values of 
$h$ and $\beta$ are given, but as is standard in semiclassical analysis we
leave them as free parameters.

The Euler-Lagrange  equation corresponding to the variational problem
\eqref{eq:energy} above is 
\be
\beta\,\nabla\times B=J_A,
\label{maxwell}
\ee
where $J_A$ is the current of the Fermi gas, which in the
Schr\"odinger case is
$$
J_A(x)=-\text{Re}\,\left[(-ih\nabla+A)1_{(-\infty,0]}(T_h(A)-V)\right](x,x)
$$ 
and in the Pauli case is
$$
J_A(x)=-\text{Re}\,\left[\tr_{\bC^2}\left(\bsigma (\bsigma\cdot(-ih\nabla+A))1_{(-\infty,0]}(T_h(A)-V)\right)\right](x,x).
$$
In other words the Euler-Lagrange equations are the non-linear coupled
Maxwell-Schr\"odinger or Maxwell-Pauli equations. 

Semiclassical results with magnetic fields mentioned above 
assume that the field is regular. However, if the magnetic field
arises as self-generated and thus determined internally via
a variational principle, the sufficient regularity is not a-priori given.

The first semiclassical result for the local problem with a self-generated magnetic field was
presented in \cite{ES3}, where the leading order asymptotics  was shown 
to be given by the non-magnetic Weyl term
in the weak field regime, $\beta h^2\ge \mbox{const}>0$.
The error term was by a factor of order $h^{1/2}$ smaller than the leading term
(see Theorem 1.3 of \cite{ES} for the lower bound; the matching upper
bound was not explicitly stated in \cite{ES} but it clearly  follows
by choosing $A\equiv 0$).

The main result of this paper
is a substantial improvement of the error term to a factor of order $h^{1+\ep}$.
This result can also be interpreted as showing that the
subleading term in the semiclassical expansion in powers of $h$ vanishes.

\begin{theorem}\label{thm:secondorder} Let $V\in C_0^\infty(\bR^3)$.
There exist a universal
 constant $\e>0$ such that for any fixed $\kappa_0>0$ we have 
\be\label{eq:secondorder}
  \lim_{h\to0\; , \; \beta h^2\ge \kappa_0}  h^{2-\e}\Big| E(\beta, h, V)- 
  2(2\pi h)^{-3}\iint \big[ p^2 - V(q)\big]_- \rd p \rd q \Big| =0
\ee
for the Pauli problem. The same result (without the prefactor 2) holds
for the Schr\"odinger problem as well.
\end{theorem}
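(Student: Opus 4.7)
The proof splits into an upper bound, which is elementary, and a lower bound, which carries the main content.

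For the upper bound, I would simply take $A\equiv 0$ as a trial choice in \eqref{eq:energy}. Then $T_h(0)=-h^2\Delta$ in the Schr\"odinger case and $[\bsigma\cdot(-ih\nabla)]^2=-h^2\Delta\otimes I_{\bC^2}$ in the Pauli case (the $\bC^2$ trace is where the factor $2$ comes from). Since $V\in C_0^\infty(\bR^3)$, the standard pseudo-differential semiclassical expansion of $\tr(-h^2\Delta-V)_-$ matches the Weyl integral with error $O(h^{-1})$, comfortably within the allowed $o(h^{-2+\e})$.

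For the lower bound, my plan is to begin with an a priori estimate on the self-generated field. Combining the upper bound above with a Lieb--Thirring-type inequality for self-generated magnetic fields of the kind established in \cite{EFS1}, any near-minimizing $A$ should satisfy $\beta\int|B|^2\leq Ch^{-3}$, and in particular $\int|B|^2\leq Ch^{-1}$ under $\beta h^2\geq\kappa_0$. Next I would introduce a smooth partition of unity $\sum_j\chi_j^2\equiv1$ on a spatial scale $\ell=h^{\alpha}$ (with $\alpha\in(0,1)$ to be optimized), use an IMS-type localization to reduce $\tr(T_h(A)-V)_-$ to a sum of localized pieces, and on each cell compare the magnetic operator with the non-magnetic one: for the Schr\"odinger version via the diamagnetic inequality, and for the Pauli version via a sharper magnetic Lieb--Thirring bound controlling the spin--field term $\bsigma\cdot B$. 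Remainders of the form $h^{-1}\int V^{1/2}|B|^2$ or similar can then be absorbed into the field-energy penalty $\beta\int B^2$ using $\beta h^2\geq\kappa_0$. On each cell, the non-magnetic local semiclassics is handled by Theorem~12 of \cite{SS}, which gives an error of order $h^{-3+6/5}=h^{-9/5}$ per unit volume; this forces the universal constant $\e$ in the theorem to satisfy $\e<\tfrac{1}{5}$.

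The main obstacle is to balance three competing scales against the $o(h^{-2+\e})$ error budget: the weak (only $H^1$) Sobolev regularity of $A$ inherited from $\int|B|^2\leq Ch^{-1}$; the localization scale $\ell=h^{\alpha}$, whose IMS kinetic commutator error must remain below $h^{-2+\e}$; and the local semiclassical error from \cite{SS}. The Pauli term $\bsigma\cdot B$ is sign-indefinite and admits zero modes, so the diamagnetic inequality is unavailable in the spin case, and at face value the subleading $h^{-2}$ correction in the pseudo-differential expansion would carry a non-trivial magnetic contribution. The crux of the proof is therefore to show that in the self-generated regime this subleading magnetic contribution is strictly dominated---and hence effectively cancelled---by the field-energy penalty $\beta\int B^2$ whenever $\beta h^2\geq\kappa_0$, and to select $\alpha$ together with the $A$-regularization scale so that all localization, comparison, and local semiclassical errors fit within the $o(h^{-2+\e})$ budget simultaneously.
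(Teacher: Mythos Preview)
Your upper bound is fine and matches the paper. The gap is in the lower bound for the Pauli case.

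You propose to ``compare the magnetic operator with the non-magnetic one'' on each localization cell and then invoke the non-magnetic result of \cite{SS}. For the Pauli operator this comparison is exactly the hard step, and the tools you name are not sharp enough at the $h^{-2+\e}$ level. The a~priori bound you derive, $\int|B|^2\le Ch^{-1}$, fed into the Lieb--Loss--Solovej inequality yields an error of order $(h^{-2}\int|B|^2)^{3/4}\sim h^{-9/4}$, which is already larger than $h^{-2}$. More fundamentally, on a single localization scale $\ell=h^\alpha$ the cross term $2D\cdot A$ and the Zeeman term $h\bsigma\cdot B$ do not produce remainders that can simultaneously be absorbed into $\beta\int|B|^2$ with precision $o(h^{-2+\e})$; this is precisely why the earlier leading-order result of \cite{ES3} only reached an error of relative order $h^{1/2}$. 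Your last paragraph correctly identifies this as the crux, but does not supply a mechanism.

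The paper does \emph{not} remove $A$. Instead it smooths $A$ to $A_r=A*\chi_r$ on a sub-semiclassical scale $r=h^{1/2+\rho}$, proves that the smoothing error is $O(h^{-2+\e})$ (Theorem~\ref{thm:replace}), and then applies Ivrii's precise \emph{magnetic} semiclassics (Theorem~\ref{thm:ivrii}) to the operator $T_h(A_r)-V$, exploiting that the phase-space integral $\iint[(p+A_r(q))^2-V(q)]_-\,\rd p\,\rd q$ equals the non-magnetic one by the shift $p\mapsto p+A_r(q)$. Controlling the smoothing error requires a refined a~priori bound on the local field energy, obtained through a two-scale bootstrap: on a secondary scale $L_1=h^{1/2+\e_0}\ll h^{1/2}$ one \emph{can} remove $A$ crudely via a dyadic decomposition near the Fermi surface (Theorem~\ref{thm:2}), yielding only $O(h^{1-\e})$ relative precision but producing kinetic-energy bounds (Theorem~\ref{thm:5.14}) that feed back into the smoothing estimate on the larger scale $L_0=h^{1/2-\e_0}$. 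Both the smoothing idea and the two-scale bootstrap are missing from your proposal, and they are the essential new ingredients.
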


In fact, we  can  replace the infimum over all $A$ in the definition of
$ E(\beta, h, V)$ with a good apriori bound on the field energy:

\begin{corollary}\label{cor:scott}  There exist universal constants $\e>0$ and $\kappa_0>0$ 
such that
if $A_h$ is a sequence of vector potentials satisfying
\be
  \lim_{h\to0} h^{2-\e}\beta \int |\nabla\times A_h|^2 =0,
\label{weaker}
\ee
then for any fixed smooth, compactly supported potential $V$, we have
\be\label{weakclaim}
  \lim_{h\to0\; , \; \beta h^2\ge \kappa_0}  h^{2-\e}\Big| 
\tr(T_h(A_h)-V)_-+\beta\int |\nabla\times A_h|^2- 
  2(2\pi h)^{-3}\iint \big[ p^2 - V(q)\big]_- \rd p \rd q \Big| =0
\ee
for the Pauli problem.  The same result (without the prefactor 2) holds
for the Schr\"odinger problem as well.
\end{corollary}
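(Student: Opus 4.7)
The argument has two directions. The \emph{lower bound} on the total energy is immediate: since $A_h$ is a trial vector potential for the infimum defining $E(\beta,h,V)$, we have $\tr(T_h(A_h)-V)_- + \beta\int|\nabla\times A_h|^2 \ge E(\beta,h,V)$, and Theorem~\ref{thm:secondorder} supplies $E(\beta,h,V) - \mathrm{Weyl} \ge -o(h^{-2+\e})$. Together these give the lower-bound half of \eqref{weakclaim}.

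For the \emph{upper bound}, the hypothesis \eqref{weaker} already shows $\beta\int|\nabla\times A_h|^2 = o(h^{-2+\e})$, so it suffices to prove
\[
\tr(T_h(A_h)-V)_- \;\le\; 2(2\pi h)^{-3}\iint (p^2-V(q))_-\,\rd p\,\rd q + o(h^{-2+\e}).
\]
The plan is a variational trial-state argument. Take as trial density matrix the non-magnetic spectral projector $\gamma_0 := \mathbf{1}_{(-\infty,0]}(-h^2\Delta-V)$, tensored with $I_{\bC^2}$ in the Pauli case, and use $\tr(T_h(A_h)-V)_- \le \tr[\gamma_0(T_h(A_h)-V)]$. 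In the Coulomb gauge $\nabla\cdot A_h=0$, one expands $T_h(A_h) = -h^2\Delta - 2ihA_h\cdot\nabla + |A_h|^2 + h\bsigma\cdot (\nabla\times A_h)$ and notes two cancellations: the Zeeman term has zero trace because $\bsigma$ is traceless on $\bC^2$ while $\gamma_0$ is spin-scalar, and the paramagnetic cross-term vanishes because the real eigenfunctions of $-h^2\Delta-V$ carry no current. What remains is
\[
\tr[\gamma_0(T_h(A_h)-V)] = 2\tr(-h^2\Delta-V)_- + 2\int|A_h|^2\rho_0,
\]
where $\rho_0(x):=\gamma_0(x,x)$. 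The first term equals $\mathrm{Weyl}+O(h^{-1}) = \mathrm{Weyl}+o(h^{-2+\e})$ by the standard semiclassical asymptotic for smooth $V$ recalled in Theorem~\ref{thm:SS}.

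The main obstacle is then to control $\int|A_h|^2\rho_0 = o(h^{-2+\e})$ using only the apriori bound \eqref{weaker}. The Sobolev embedding $\|A_h\|_{L^6}\le C\|\nabla\times A_h\|_{L^2}$ (valid in Coulomb gauge) combined with $\beta h^2\ge\kappa_0$ yields $\|A_h\|_{L^6}^2 \le C\int|\nabla\times A_h|^2 = o(h^\e/\kappa_0)$; however, coupled with the crude density estimate $\rho_0 \le C h^{-3}\mathbf{1}_{\Omega}$ on a bounded region $\Omega\supset\supp V$, a direct Hölder bound falls short of the required order by one power of $h$. Closing this gap is the delicate step; I would attempt it by replacing $\gamma_0$ with a magnetic coherent-state trial density matrix that absorbs the troublesome $|A_h|^2$ integrand into a magnetic phase, or by refining the density estimate to exploit the semiclassical concentration of $\rho_0$ on the classically allowed set together with a Biot--Savart-type bound converting the $L^2$ control of $B_h$ into a localized $L^p$ control of $A_h$. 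In the Schr\"odinger case no such subtlety appears: the diamagnetic inequality $\tr((-ih\nabla+A_h)^2-V)_- \le \tr(-h^2\Delta-V)_-$ combined with the non-magnetic Weyl asymptotic closes the upper bound directly, and hence the corollary.
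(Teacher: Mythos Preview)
Your lower bound and your Schr\"odinger upper bound (via diamagnetism) are both correct and match what the paper would do. The Pauli upper bound, however, has a real gap that you yourself flag: with the plain non-magnetic projector $\gamma_0$ the leftover term is $\int |A_h|^2\rho_0$, and under the hypothesis \eqref{weaker} together with $\beta h^2\ge\kappa_0$ you only know $\int|\nabla\times A_h|^2=o(h^{\e})$, hence $\|A_h\|_6^2=o(h^{\e})$; coupled with $\rho_0\le Ch^{-3}$ this yields at best $o(h^{-3+\e})$, one full power short. Neither of your two suggested fixes as stated will close it: there is no ``semiclassical concentration'' of $\rho_0$ beyond boundedness on a compact set (the classically allowed region has positive measure), and Biot--Savart cannot improve the $L^2\!\to\!L^6$ Sobolev estimate for $A_h$.

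The paper's remedy is exactly your first instinct, carried out precisely in the proof of Theorem~\ref{thm:UpperSemiclassNew}. One does \emph{not} use the global $\gamma_0$; instead one localizes on an intermediate scale $L_0\le 1$ via a partition $\{\chi_u\}$ and takes the trial state
\[
\gamma \;=\; \int \chi_u\, e^{-ic_u\cdot x/h}\,\gamma_u\, e^{\,ic_u\cdot x/h}\,\chi_u\,\frac{\rd u}{L_0^3},
\qquad c_u:=\fint_{B_u(L_0)}\! A_h,
\]
with $\gamma_u$ the non-magnetic spectral projector of the $\chi_u$-localized problem. The phase $e^{-ic_u\cdot x/h}$ shifts $A_h\to A_h-c_u$ in each cell, so the residual quadratic term is $\int (A_h-c_u)^2\wt\rho_u$ rather than $\int A_h^2\rho_0$. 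Now Poincar\'e on $B_u(L_0)$ gains a factor $L_0$: combining $\|A_h-c_u\|_{L^2(B_u(L_0))}\le CL_0\|\nabla\otimes A_h\|_{L^2(B_u(L_0))}$ with Sobolev and the kinetic Lieb--Thirring bound on $\wt\rho_u$ gives (see \eqref{eq:20})
\[
\int \tr\big[(A_h-c_u)^2\chi_u^2\psi^2\gamma_u\big]\frac{\rd u}{L_0^3}
\;\le\; C\,L_0^{2}\,h^{-3}\int_{B(2)}|\nabla\otimes A_h|^2,
\]
while the localization costs $O(h^{-1}L_0^{-2})$ per unit volume. Optimizing $L_0$ yields the upper bound
\[
\tr[\psi H_h(A_h)\psi]_- - \mathrm{Weyl}\;\le\; Ch^{-2}\Big(\int|\nabla\otimes A_h|^2\Big)^{1/2}+Ch^{-1},
\]
which is the upper inequality in \eqref{eq:USNew}. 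Feeding in $\int|\nabla\times A_h|^2=o(h^{\e})$ then gives an error $o(h^{-2+\e/2})$; together with \eqref{weaker} for the field-energy term and the lower bound you already have, this proves \eqref{weakclaim} (with the universal $\e$ in the corollary taken to be at most half the $\e$ of Theorem~\ref{thm:secondorder}). The essential idea you were missing is that the gauge freedom should be spent \emph{locally}, box by box, so that Poincar\'e converts $\|B_h\|_2$-control into the needed extra power of $h$ via the box size $L_0$.
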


In \cite{EFS1} we have also analyzed the semiclassical behavior of $E(\beta,h,V)$
in other regimes of the parameter $\beta$. The main motivation for
a  precise second order asymptotics in the specific   regime $\beta h^2\ge \mbox{const}>0$ is that
Theorem~\ref{thm:secondorder} is the main technical input
for the proof of the Scott correction term 
of the ground state energy of large atoms and molecules
in the limit when the nuclear charge $Z$ tends to infinity.
We refer to
 Section 2.3 of \cite{EFS1} for the precise statement on
the Scott correction with a self-generated magnetic field and for an explanation 
on its connection with semiclassical asymptotics.
 The actual proof of the Scott correction is given in a separate paper
\cite{EFS3}.

\section{Localized models}\label{sec:loc}

The semiclassical asymptotics is essentially a local issue. 
If the potential $V$ has sufficient decay there will be no semiclassical
contribution from infinity. It is an interesting question whether the 
decay needed for stability will ensure that there is no
semiclassical contribution from inifinity, but at this point we do not have
a general proof of this. 
To separate the local semiclassics from the issue at infinity we 
introduce local versions of the energy. 

Let $\psi$ be a smooth function with $\mbox{supp}\, \psi\subset
B(1)$,  where in general $B(r)$ denotes the ball of radius $r$ centered at the origin.
We will always assume that $\|\psi\|_\infty\le 1$ and
$\psi$ is identically 1 in a neighborhood of the origin.
Denote by $\psi_r(x)=\psi(x/r)$. For any $r>0$ define
\be
  E_{r}(\beta, h, V): = 
\inf_A \Big[\tr \big[ \psi_r(T_h(A)-V)\psi_r\big]_- + 
\beta \int_{\bR^3} |\nabla \times A|^2\Big],
\label{Ebhpr}
\ee
where we minimize over all $A\in H^1(\bR^3 ;\bR^3)$, or, equivalently,
over all  $A\in C_0^\infty(\bR^3;\bR^3)$.
Alternatively, we may again restrict to vector potentials
$A\in  H^1(\bR^3 ;\bR^3)$ with $\nabla\cdot
A=0$  in \eqref{Ebhpr} without changing the value of the infimum.
Without loss of generality, we
can always assume that $V$ is supported in $B(r)$.

Here we have localized the particles but not the fields. We can also
do both. 
Let us first note that if $\nabla\cdot A=0$ then 
\begin{equation}\label{eq:times=otimes}
\int_{\bR^3}|\nabla\times A|^2=\int_{\bR^3}|\nabla\otimes A|^2,
\end{equation}
where the integrand of the right contains all first derivatives, i.e.
$|\nabla \otimes A|^2 = \sum_{ij=1}^3 |\partial_i A_j|^2$. This
identity is easily seen using the Fourier transform. As the local
version of the field energy we will use the localization of the
integral on the right. For any $R\ge r>0$ we thus  set
\be 
E_{r,R}(\beta, h, V): = \inf_A \Big[ \tr \big[
\psi_r(T_h(A)-V)\psi_r\big]_- + 
\beta \int_{B(R)} |\nabla\otimes A|^2\Big],
\label{Ebh}
\ee 
where we minimize over all $A\in H^1(B(R) ;\bR^3)$,
or, equivalently, over all  $A\in C_0^\infty(\bR^3;\bR^3)$.
The localized field energy is not fully gauge
invariant, we therefore cannot restrict attention to
divergence
free vector potentials. We are however free to add a constant vector
to $A$.  
Notice that the integral of $|\nabla\otimes A|^2$ is taken on a
larger ball $B(R)$ than $B(r)$ which contains 
the support of $\psi_r$. 
By virtue of gauge invariance and 
\eqref{eq:times=otimes}
we have 
$$
E_r=E_{r,\infty}:=\lim_{R\to\infty}E_{r,R}.
$$ 
By a simple rescaling we have
$$
E_{r,R}(\beta, h, V)=r^{-2}E_{1,R/r}(r\beta, h, V_r),
$$
where $V_r(x)=r^{2}V(xr)$.  It is thus enough to analyse the case
$r=1$.

 The definition
\eqref{Ebhpr} is physically somewhat better motivated than \eqref{Ebh}
since it contains the energy of the magnetic field in the whole space
and thus gives the correct magnetic interaction between the
particles.  The form \eqref{Ebh} is however more useful if we want to
localize all parts of the energy. 

Another version of the localized energy would be 
\be
  E'_{r,R}(\beta, h, V): = \inf_A 
\Big[ \tr \psi_r^2 \big[ T_h(A)-V\big]_- + \beta \int_{B(R)} |\nabla\otimes A|^2\Big] .
\label{Ebhprpr}
\ee
This has the disadvantage of being more complicated to calculate as it
requires knowledge of the operator $T_h(A)-V$ on the whole space,
e.g. it is not enough to know $V$ only on $B(r)$ or $A$ only on
$B(R)$. It has the advantage of not causing localization errors. 
Note that since $\psi$ is assumed to be equal to $1$ in a neighborhood of
$0$  we may identify 
$$
E(\beta, h, V)=
\lim_{r\to\infty}E'_{r,\infty}(\beta, h, V)=:E'_{\infty,\infty}(\beta, h, V).
$$
Finally we could also have defined the localized energy by introducing
Dirichlet boundary conditions on the boundary of $B(r)$, i.e., 
\be
  E^{\rm D}_{r,R}(\beta, h, V): = 
\inf_A 
\Big[ \tr\big[ (T_h(A)-V)_{B(r),\rm D}\big]_- 
+ \beta \int_{B(R)} |\nabla\otimes A|^2\Big].
\label{EbhprD}
\ee
We note that the leading order local semiclassical result
(Theorem 1.3 in \cite{ES}) was  stated for Dirichlet boundary conditions.

In the main part of this
work we will mainly use the global energy $E$ in \eqref{eq:energy} or 
the localized energy $E_{r,R}$ in 
\eqref{Ebh} but in
the next section we will compare the local versions
$E_r, E_{r,R}$, and $E'_{r,R}$. We will not discuss Dirichlet boundary
conditions.

\subsection{Comparison between the different localized energies}

We first compare the energies $E_{r,R}$ and  $E_{r}$.

\begin{lemma}\label{lm:version1}
There exists a universal constant $C_0$ such that
for all $0<r<R/2$ we have 
\be
   E_{r,R}(\beta, h, V)\le E_{r}(\beta, h, V) \le  E_{r,R}\big( ( 1+C_0(r/R)^3)\beta, h, V\big).
\label{EE}
\ee
This result holds for both the Pauli and Schr\"odinger operators.
\end{lemma}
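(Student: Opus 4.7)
The lemma consists of two inequalities, which I handle separately.

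\emph{Easy direction} $E_{r,R}(\beta,h,V)\le E_r(\beta,h,V)$. Take a near-minimizer $A$ of $E_r(\beta,h,V)$ and impose the Coulomb gauge $\nabla\cdot A=0$, as permitted by the remark following \eqref{Ebhpr} (this gauge transformation preserves both $\int|\nabla\times A|^2$ and the trace term, the latter by gauge covariance of $T_h$). Identity \eqref{eq:times=otimes} then yields $\int_{\bR^3}|\nabla\times A|^2=\int_{\bR^3}|\nabla\otimes A|^2\ge\int_{B(R)}|\nabla\otimes A|^2$, so the same $A$ is a valid competitor for $E_{r,R}(\beta,h,V)$ with no larger total energy.

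\emph{Hard direction} $E_r(\beta,h,V)\le E_{r,R}(\beta',h,V)$ with $\beta':=(1+C_0(r/R)^3)\beta$. Let $A\in H^1(B(R);\bR^3)$ be any competitor for $E_{r,R}(\beta',h,V)$. Since the trace term $\tr[\psi_r(T_h(A)-V)\psi_r]_-$ depends only on $A|_{B(r)}$ and is gauge invariant, it suffices to construct an extension $\tilde A\in H^1(\bR^3;\bR^3)$ that is gauge-equivalent to $A$ on $B(r)$ and satisfies
$$
\int_{\bR^3}|\nabla\times\tilde A|^2 \;\le\; (1+C_0(r/R)^3)\int_{B(R)}|\nabla\otimes A|^2.
$$
The plan is a two-step procedure. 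First, Helmholtz-project $A$ on $B(R)$ to the Coulomb gauge with $\nabla\cdot A=0$ in $B(R)$ and $\nu\cdot A=0$ on $\partial B(R)$; this is a gauge transformation $A\mapsto A-\nabla\chi$ (trace preserved by gauge covariance) that does not increase $\int_{B(R)}|\nabla\otimes A|^2$, and in the new gauge one has $\int_{B(R)}|\nabla\times A|^2\le\int_{B(R)}|\nabla\otimes A|^2$. Second, set $\tilde A=A$ on $B(R)$ and extend to $\bR^3\setminus B(R)$ by the unique finite-energy solution of the exterior curl-curl system $\nabla\times(\nabla\times\tilde A)=0,\ \nabla\cdot\tilde A=0$, with matching tangential trace on $\partial B(R)$ and decay at infinity; this is by construction the minimum-curl-energy extension outside $B(R)$.

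\emph{Main obstacle.} Establishing the quantitative estimate
$$
\int_{\bR^3\setminus B(R)}|\nabla\times\tilde A|^2 \;\le\; C_0(r/R)^3\int_{B(R)}|\nabla\otimes A|^2.
$$
The sharp volume-ratio factor $(r/R)^3$ arises from a multipole analysis of the exterior extension. Expanding the tangential boundary trace on $\partial B(R)$ in vector spherical harmonics, the exterior curl energy decouples by modes, and the divergence-free constraint on $B=\nabla\times\tilde A$ forces the monopole mode of $\nu\cdot B$ on $\partial B(R)$ to vanish (Gauss's theorem on the exterior), so the expansion begins at the $l=1$ dipole. The $l=1$ contribution to the exterior energy scales as $R^{-3}$ times the squared dipole moment, and the dipole moment is controlled by a Biot--Savart-type estimate against the interior currents $j=-\Delta A$ in $B(R)$, yielding an extra factor $r^3$ because the current density is weighted against $V$ which is supported on $B(r)$. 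Higher multipoles contribute suppressed factors $(r/R)^{2l+1}$ and absorb into $C_0$. The final step combines $\int_{B(R)}|\nabla\times A|^2\le\int_{B(R)}|\nabla\otimes A|^2$ from step one with the exterior bound, giving the required $(1+C_0(r/R)^3)$ factor.
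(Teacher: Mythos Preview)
Your easy direction is fine. The hard direction has a genuine gap.

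You propose to keep $A$ fixed on all of $B(R)$ and then extend it across $\partial B(R)$ to $\bR^3\setminus B(R)$ by the minimum-curl-energy extension, claiming
\[
\int_{\bR^3\setminus B(R)}|\nabla\times\tilde A|^2 \le C_0(r/R)^3\int_{B(R)}|\nabla\otimes A|^2.
\]
But $r$ does not enter this extension problem at all: the exterior energy is determined solely by the tangential trace of $A$ on $\partial B(R)$, and for a \emph{generic} competitor $A\in H^1(B(R))$ this trace can be as large as you like relative to $r$. Your justification --- that the interior ``currents'' $j=-\Delta A$ are supported in $B(r)$ because $V$ is --- is not valid for an arbitrary competitor; it would only hold (via the Euler--Lagrange equation) for an exact minimizer, and even then you would still need to convert that information into a quantitative bound on the boundary multipoles at $\partial B(R)$, which you have not done. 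As stated, the displayed inequality is simply false in general.

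The paper's proof avoids this by exploiting the only place where $r$ actually enters: the trace term depends on $A$ only through $A|_{B(r)}$. So one should freeze $A$ on $B(r)$, not on $B(R)$. Starting from a near-minimizer of $E_{r,R}$, first replace $A$ on the annulus $r<|x|<R$ by the \emph{harmonic} (minimum $\int|\nabla\otimes\cdot|^2$) extension of its boundary values on $|x|=r$; this can only lower the competing energy. Then define $A'$ to agree with $A$ on $B(r)$ and to be the harmonic extension to all of $|x|>r$. The whole question reduces to comparing two Dirichlet minimizers with the \emph{same} boundary data on $\partial B(r)$: one on the annulus $B(R)\setminus B(r)$ (with Neumann at $|x|=R$), the other on the exterior $\bR^3\setminus B(r)$. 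A spherical-harmonic computation shows these two Dirichlet integrals differ by a factor at most $1+C_0(r/R)^3$; the $\ell=0$ mode is killed by subtracting a constant so that $A$ has mean zero on $|x|=r$, and the worst remaining mode is $\ell=1$, which gives the $(r/R)^3$. This is where the volume ratio genuinely comes from --- not from any Biot--Savart estimate on currents.
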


{\it Proof.}  
The first inequality in \eqref{EE} is trivial since $E_r=E_{r,\infty}$
and $E_{r,R}$ is clearly an increasing function of $R$.

To prove the second inequality we start from an approximate minimizing
$A\in C_0^\infty$ for the energy $E_{r,R}$ on the right hand side of \eqref{EE}.
 By subtracting a constant vector from
$A$ we may assume that $A$ has average 0 on the sphere $|x|=r$.  We
may, moreover, assume that $A$ minimizes the Dirichlet integral
$\int_{r<|x|<R}|\nabla\otimes A(x)|^2\rd x$ given the boundary value of
$A$ on
$|x|=r$. If not, we could improve the trial energy of $A$ by replacing
it on the set $r<|x|<R$ with the vector field that agrees with $A$ for
$|x|\leq r$ and minimizes $\int_{r<|x|<R}|\nabla\otimes A(x)|^2\rd x$.  As a
trial vector field for the $E_r$ we choose the field $A'$ defined on
all of $\bR^3$ that
agrees with $A$ for $|x|\leq r$ and minimizes the integral
$\int_{r<|x|}|\nabla\otimes A'(x)|^2\rd x$.  The vector fields $A$ and $A'$ in
the region $|x|\geq r$ can be expressed in terms of the common boundary value
of $A$ on $|x|=r$. {F}rom the lemma below we see that there exists a universal constant $C_0$ such that 
$$
\int_{r<|x|}|\nabla\otimes A'(x)|^2\rd x\leq (1+C_0(r/R)^3)\int _{r<|x|<R}|\nabla\otimes A(x)|^2\rd x.
$$
Note that the constructed $A'$ is not necessarily divergence free.
Since $A$ and $A'$ agree on $|x|\leq r$ we obviously have 
$$
\int_{\bR^3}|\nabla\times A'(x)|^2\rd x
\leq \int_{\bR^3}|\nabla\otimes A'(x)|^2\rd x\leq (1+C_0(r/R)^3)\int _{|x|<R}|\nabla\otimes A(x)|^2\rd x
$$
and 
\be\label{ggg}
\tr \big[ \psi_r(T_h(A)-V)\psi_r\big]_- =  
\tr \big[ \psi_r(T_h(A')-V)\psi_r\big]_- .
\ee
The second inequality of \eqref{EE} follows from the last two observations. 
\qed

\medskip

We now give the  simple estimate needed in the previous proof. 
\begin{lemma} Assume $0<2r<R$ and let $S(r)=\{|x|=r\}$. Given $g\in
  L^2(S(r))$ with average 0.
Let $f_1\in H^1(B(R)\setminus B(r))$ and $f_2\in H^1(\bR^3\setminus B(r))$
satisfy the boundary conditions 
$f_{1|S(r)}=f_{2|S(r)}=g_{|S(r)}$ and minimize the 
respective Dirichlet integrals 
$$
\int_{B(R)\setminus B(r)}|\nabla f_1|^2,\quad \int_{\bR^3\setminus
  B(r)}|\nabla f_2|^2.
$$
Then 
$$
\int_{\bR^3\setminus B(r)}|\nabla f_2|^2\leq
(1+C_0(r/R)^3)\int_{B(R)\setminus B(r)}|\nabla f_1|^2,
$$
for some universal constant $C_0$.
\end{lemma}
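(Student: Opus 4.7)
\medskip

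\noindent\textbf{Proof plan.} The strategy is to reduce the problem to a term-by-term comparison using spherical harmonics, exploiting that both $f_1$ and $f_2$ are harmonic (they minimize the Dirichlet integral with Dirichlet data on $S(r)$; on $S(R)$ the minimizer $f_1$ satisfies the natural Neumann condition $\partial_r f_1=0$, while $f_2$ tends to $0$ at infinity).

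\smallskip

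First I would expand the boundary datum in spherical harmonics,
\[
g(r\omega)=\sum_{\ell\ge 1,\;|m|\le\ell} g_{\ell m}\, Y_{\ell m}(\omega),
\]
where the $\ell=0$ mode is absent precisely because $g$ has mean zero on $S(r)$. Using the general solution $A|x|^\ell+B|x|^{-\ell-1}$ of the radial ODE for harmonic functions on annuli, I would write
\[
f_2(x)=\sum_{\ell,m} g_{\ell m}\Bigl(\tfrac{r}{|x|}\Bigr)^{\ell+1} Y_{\ell m}(x/|x|),\qquad
f_1(x)=\sum_{\ell,m}\bigl[a_{\ell m}|x|^\ell+b_{\ell m}|x|^{-\ell-1}\bigr]Y_{\ell m}(x/|x|),
\]
and determine $a_{\ell m}, b_{\ell m}$ from the Dirichlet condition on $S(r)$ together with the Neumann condition $\ell a_{\ell m} R^{\ell-1}-(\ell+1)b_{\ell m}R^{-\ell-2}=0$ on $S(R)$.

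\smallskip

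Next I would compute both Dirichlet integrals via the boundary identity
\[
\int_\Omega|\nabla f|^2=-\int_{S(r)} f\,\partial_r f\,d\sigma,
\]
valid because $f$ is harmonic and the normal derivative vanishes on the remaining boundary ($S(R)$ for $f_1$, infinity for $f_2$). Orthonormality of the spherical harmonics reduces the integrals to diagonal sums in $(\ell,m)$. A direct calculation, writing $t_\ell:=(r/R)^{2\ell+1}$, gives
\[
\int|\nabla f_2|^2=r\sum_{\ell,m}(\ell+1)|g_{\ell m}|^2,\qquad
\int|\nabla f_1|^2=r\sum_{\ell,m}(\ell+1)|g_{\ell m}|^2\,\frac{1-t_\ell}{1+\tfrac{\ell+1}{\ell}t_\ell}.
\]

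\smallskip

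Finally I would note that the term-by-term ratio equals
\[
\frac{1+\tfrac{\ell+1}{\ell}t_\ell}{1-t_\ell},
\]
and since $\ell\ge 1$ one has $t_\ell\le (r/R)^3$ and $\tfrac{\ell+1}{\ell}\le 2$. The hypothesis $r\le R/2$ ensures $t_\ell\le 1/8$, hence this ratio is bounded by $1+C_0(r/R)^3$ for a universal $C_0$ (e.g.\ $C_0=4$). Summing over $(\ell,m)$ yields the stated estimate. The only step requiring any care is the justification of the natural Neumann condition for $f_1$ at $|x|=R$ and the boundary identity for $f_2$ on the unbounded domain; both are standard for $H^1$ minimizers and harmonic functions with sufficient decay, so I do not expect genuine obstacles.
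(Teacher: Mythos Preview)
Your proposal is correct and follows essentially the same approach as the paper: both expand in spherical harmonics, identify $f_2$ as the purely decaying harmonic function and $f_1$ as the harmonic function satisfying the natural Neumann condition on $S(R)$, reduce the Dirichlet integrals via the boundary identity $\int|\nabla f|^2=-\int_{S(r)}\bar g\,\partial_r f$, and then compare term by term. The only cosmetic difference is that the paper parametrizes the sums by the coefficients $b_{1\ell m}$ of $f_1$ while you parametrize by the boundary coefficients $g_{\ell m}$; the resulting term-by-term ratio $\frac{1+\frac{\ell+1}{\ell}t_\ell}{1-t_\ell}$ is identical, and your bound $C_0=4$ is a perfectly valid explicit choice.
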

\begin{proof}
By rescaling we may assume that $r=1$.
We know that $f_1$ is harmonic for $r<|x|<R$ and
satisfies Neumann boundary conditions on the
boundary $|x|=R$ and that $f_2$ is harmonic on $|x|>1$ 
and tends to 0 at infinity. We can write
$$
\int_{B(R)\setminus B(1)}|\nabla f_1|^2=
-\int_{S(1)}\overline{g}\partial_rf_1,\quad
\int_{\bR^3\setminus
  B(1)}|\nabla f_2|^2= -\int_{S(1)}\overline{g}\partial_rf_2.
$$
We expand $g$ in angular momentum eigenfunctions
$g=\sum_{\ell=0}^\infty\sum_{ m=-\ell}^\ell\alpha_{\ell m}Y_{\ell m}$.
Since $g$ has average 0, we have $\alpha_{00}=0$. 
We also expand the harmonic functions
$$
f_1(x)=\sum_{\ell m} (a_{1\ell m}|x|^\ell
+b_{1\ell m}|x|^{-\ell-1})Y_{\ell m}(x/|x|)
$$
and 
$$
f_2(x)=\sum_{\ell m} b_{2\ell m}|x|^{-\ell-1}Y_{\ell m}(x/|x|).
$$
The equality of the boundary conditions at $|x|=1$ implies
$$
b_{2\ell m}=a_{1\ell m}+b_{1\ell m}=\alpha_{\ell m}.
$$
The Neumann condition at $|x|=R$ implies 
$$
\ell a_{1\ell m}=(\ell+1)b_{1\ell m}R^{-2\ell-1}.
$$
(Note all coefficients vanish for $\ell=0$).
Hence 
$$
b_{2\ell m}=(1+\ell^{-1}(\ell+1)R^{-2\ell-1})b_{1\ell m}.
$$
We thus obtain
$$
-\int_{S(1)}\overline{g}\partial_rf_2=4\pi\sum_{\ell=1}^\infty\sum_{ m=-\ell}^\ell
(1+\ell^{-1}(\ell+1)R^{-2\ell-1})^2|b_{1\ell m}|^2(\ell+1)
$$
and 
\begin{eqnarray*}
-\int_{S(1)}\overline{g}\partial_rf_1&=&4\pi\sum_{\ell=1}^\infty\sum_{ m=-\ell}^\ell(1-R^{-2\ell-1})
(1+\ell^{-1}(\ell+1)R^{-2\ell-1})
(\ell+1)|b_{1\ell m}|^2\\
&\geq&-(1-3R^{-3})\int_{S(1)}\overline{g}\partial_rf_2.
\end{eqnarray*}
\end{proof}

\bigskip

Now we compare $E_{r,R}$ and $E'_{r,R}$.
Note that the only difference between $E_{r,R}$ and $E'_{r,R}$ is that in \eqref{Ebhprpr} $\psi_r$ is
outside the negative part. The following lemma shows that
$E_{r,R}(\beta, h, V)$ and  $E'_{r,R}(\beta, h, V)$ are essentially equivalent
up to a term of high order in $h$ (to appreciate
the next result, recall that $E_{r,R}(\beta, h, V)$ is
typically of order $h^{-3}$).

\begin{lemma}\label{lm:version2}
There exist universal constants $C_0,C_1,C_2>0$, such
that
\begin{eqnarray}
  E'_{r,R}(\beta, h, V) \le E_{r,R}(\beta, h, V)\label{EEprpr}
  &\le&   
  E'_{r,R}((1+C_0(r/R)^3+\delta)\beta, h, V) \\&&+
  C_1h^{-1}(\|V\|_{5/2}^{5/2}+r^{-2})
  +C_2h^2\delta^{-3}\beta^{-3}(\|V\|_4^4+ r^{-5})\non
  \end{eqnarray}
  for all $\delta>0$.
The same result holds for the Schr\"odinger case, with $\delta=C_0=C_2=0$, i.e.
\be
  E'_{r,R}(\beta, h, V) \le E_{r,R}(\beta, h, V) 
  \le  
  E'_{r,R}(\beta, h, V) +
  C_1h^{-1}(\|V\|_{5/2}^{5/2}+r^{-2}).
\ee
\end{lemma}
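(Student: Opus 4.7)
The first inequality in \eqref{EEprpr} follows from the operator Jensen (Hansen--Pedersen) inequality. The map $x\mapsto x_-$ is operator concave, since $x\mapsto |x|$ is operator convex and $x_-=(x-|x|)/2$. As $\psi_r$ is multiplication by a scalar function with $\|\psi_r\|_\infty\le 1$ and $0_-=0$, Jensen gives the operator bound $[\psi_r H\psi_r]_-\ge\psi_r[H]_-\psi_r$ for $H:=T_h(A)-V$. Taking the trace and using $\tr\psi_r[H]_-\psi_r=\tr\psi_r^2[H]_-$ by cyclicity, then adding the common field-energy term, yields $E'_{r,R}(\beta,h,V)\le E_{r,R}(\beta,h,V)$.

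For the opposite direction, fix $\delta>0$, set $\eta:=C_0(r/R)^3+\delta$, and let $A_0$ be a near-minimizer for $E'_{r,R}((1+\eta)\beta,h,V)$. The plan is to use $A_0$ (extended harmonically from $B(R)$ to all of $\bR^3$ as in the proof of Lemma~\ref{lm:version1}, which inflates the field energy by at most the factor $1+C_0(r/R)^3$) as a trial for $E_{r,R}(\beta,h,V)$. Since $\psi_r$ is a scalar multiplication that commutes with $V$ and with the matrix-valued multiplication $h\bsigma\cdot B$ that is the only difference between the Pauli and Schr\"odinger operators, a short commutator calculation using $(\bsigma\cdot\nabla\psi_r)^2=|\nabla\psi_r|^2$ produces the common IMS-type identity
\[
\psi_r H\psi_r=\tfrac{1}{2}(\psi_r^2H+H\psi_r^2)+h^2|\nabla\psi_r|^2.
\]
Feeding the spectral projection $\gamma_0:=\mathbf{1}_{(-\infty,0)}(H)$ into the variational principle $\tr[\psi_r H\psi_r]_-\le \tr\gamma_0\psi_r H\psi_r$, and using $[\gamma_0,H]=0$ together with cyclicity, one obtains
\[
\tr[\psi_r H\psi_r]_-\le \tr\psi_r^2[H]_-+h^2\int|\nabla\psi_r|^2\rho_0,
\]
where $\rho_0$ is the one-particle density of $\gamma_0$. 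Comparing with the hypothesis on $A_0$, what remains is to dominate the localization error $h^2\int|\nabla\psi_r|^2\rho_0$ by the claimed error terms, with the extra budget $\eta\beta\int|\nabla\times A_0|^2$ available.

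In the Schr\"odinger case, the standard Lieb--Thirring kinetic inequality $\int\rho_0^{5/3}\le Ch^{-2}\tr\gamma_0 T_h(A_0)$ combined with the bootstrap $\tr\gamma_0 T_h(A_0)=\tr[H]_-+\tr\gamma_0 V\le \|V\|_{5/2}\|\rho_0\|_{5/3}$ yields $\|\rho_0\|_{5/3}\le Ch^{-3}\|V\|_{5/2}^{3/2}$. Since $|\nabla\psi_r|\le Cr^{-1}\mathbf{1}_{B(r)}$ gives $\|\nabla\psi_r\|_5^2\le Cr^{-4/5}$, H\"older and Young's inequality absorb the term into $C_1 h^{-1}(r^{-2}+\|V\|_{5/2}^{5/2})$; no modification of $\beta$ is needed, which produces the Schr\"odinger half of the lemma ($\delta=C_0=C_2=0$).

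In the Pauli case the kinetic inequality fails in the presence of zero modes, and one must invoke instead a Pauli--Lieb--Thirring estimate with self-generated magnetic field (Lieb--Loss--Solovej / Erd\H{o}s--Fournais--Solovej), which controls $\int\rho_0^{5/3}$ only at the cost of a fraction $\delta\beta\int_{\bR^3}|\nabla\times A_0|^2$ of the field energy; this is the role of $\delta$ in the statement, and the harmonic extension mentioned above is precisely what makes $\int_{\bR^3}|\nabla\times A_0|^2$ available (hence the $C_0(r/R)^3$). Applying H\"older to $h^2\int|\nabla\psi_r|^2\rho_0$ with the Pauli--LT bound and optimizing in the auxiliary parameter leads to the remaining error $C_2 h^2\delta^{-3}\beta^{-3}(\|V\|_4^4+r^{-5})$. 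The main obstacle is exactly this Pauli step: balancing the harmonic extension, the field-energy trade-off, and the H\"older exponents required by the Pauli--LT inequality is what forces the somewhat unusual combination $\delta^{-3}\beta^{-3}(\|V\|_4^4+r^{-5})$ to appear in the error.
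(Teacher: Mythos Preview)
Your argument for the first inequality (operator Jensen) and the IMS commutator identity are correct and coincide with the paper's. Your Schr\"odinger argument, which bounds $\|\rho_0\|_{5/3}$ via the diamagnetic kinetic Lieb--Thirring inequality and then applies H\"older, is a legitimate alternative to what the paper does; the paper instead applies the ordinary Lieb--Thirring inequality directly to the operator $H-(\nabla\psi_r)^2$, obtaining $-\tr(\nabla\psi_r)^2\gamma\ge\tr[H-(\nabla\psi_r)^2]_-\ge -Ch^{-3}\int[V+(\nabla\psi_r)^2]_+^{5/2}$, which yields the same error after expanding the $5/2$-power. Both routes are fine here.

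The Pauli step, however, is where your proposal has a real gap. You propose to control $\int\rho_0^{5/3}$ by a ``Pauli--Lieb--Thirring estimate with self-generated field'' and then H\"older and optimize, but no such density inequality is available in the form you need; the zero-mode problem means there is no clean bound on $\|\rho_0\|_{5/3}$ in terms of $\tr T_h(A)\gamma_0$ plus a controllable magnetic piece. What the paper actually does is bypass the density altogether: since $\tr H\gamma\le 0$, one has
\[
-\tr(\nabla\psi_r)^2\gamma\;\ge\;\tr\big(H-(\nabla\psi_r)^2\big)\gamma\;\ge\;\tr\big[T_h(A)-V-(\nabla\psi_r)^2\big]_-,
\]
and then the Lieb--Loss--Solovej inequality (Theorem~\ref{thm:lls}) applied to the potential $V+(\nabla\psi_r)^2$ gives
\[
\ge -Ch^{-3}\!\int[V+(\nabla\psi_r)^2]_+^{5/2}
-C\Big(h^{-2}\!\int_{\bR^3}|\nabla\times A|^2\Big)^{3/4}\Big(\int[V+(\nabla\psi_r)^2]_+^{4}\Big)^{1/4}.
\]
Splitting the last product by $a^{3/4}b^{1/4}\le \delta\beta\, a+C(\delta\beta)^{-3}b$ produces precisely the $\delta\beta\int|\nabla\times A|^2$ absorbed into the field energy and the $C_2h^2\delta^{-3}\beta^{-3}(\|V\|_4^4+r^{-5})$ remainder. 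That is the missing idea in your Pauli paragraph. Note also that in the paper the harmonic-extension step of Lemma~\ref{lm:version1} is applied \emph{after} this, to replace $\int_{\bR^3}|\nabla\times A|^2$ by $\int_{B(R)}|\nabla\otimes A|^2$ at the cost of the factor $1+C_0(r/R)^3$; you invoke it at the wrong stage (on the near-minimizer of the right-hand side rather than on the trial for the left).
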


Note that norms of $V$ in this Lemma are in the whole space.

\medskip
{\it Proof.}  
Let $H=H_h(A)=T_h(A)-V$. Since
$$
   \tr \big[ \psi_r H \psi_r\big]_- \ge \tr \big[ \psi_r [H]_- \psi_r\big]_- = 
 \tr  \psi_r [H]_- \psi_r,
$$
the first bound in \eqref{EEprpr} is trivial.

In order to prove the second inequality, we write $\gamma: = {\bf 1}_{(-\infty,0]}(H)$ and calculate,
\begin{align}
\tr \psi_r^2 [H]_{-} &= \frac{1}{2} \tr\big( \psi_r^2 [H]_{-} + [H]_{-} \psi_r^2 \big) 
= \frac{1}{2} \tr\big( (\psi_r^2 H + H \psi_r^2) \gamma \big)\nonumber \\
&= \frac{1}{2} \tr\big( ([H, \psi_r],\psi_r] + 2\psi_r H \psi_r) \gamma \big).
\end{align}
Therefore, by the variational principle, and since $[[H, \psi_r],\psi_r]=-2h^2(\nabla\psi_r)^2$
\begin{align}\label{varp}
\tr\big[ \psi_r H \psi_r\big]_{-} \leq \tr \psi_r^2 [H]_{-} + h^2 \tr (\nabla \psi_r)^2 \gamma.
\end{align}
In order to estimate the last term we apply a Lieb-Thirring inequality from \cite{LLS}:
\begin{theorem}\cite{LLS}\label{thm:lls}
There exist a universal constant $C$ such that
for the semiclassical Pauli operator  $T_h(A)-V$ 
with a potential $V\in L^{5/2}(\bR^3)\cap L^4(\bR^3)$ and magnetic field $B=\nabla\times A
\in L^2(\bR^3)$ we have
\be
    \tr\big[ T_h(A)-V\big]_- \ge - Ch^{-3}\int \big[ V\big]_+^{5/2} 
 -  C\Big( h^{-2} \int |B|^2\Big)^{3/4}\Big( \int  \big[ V\big]_+^{4} \Big)^{1/4}.
\label{genlt}
\ee
\end{theorem}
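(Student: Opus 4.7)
\medskip
\noindent\textbf{Proof plan.} The strategy is to reduce the Pauli operator to a magnetic Schr\"odinger operator with a spin-dependent effective potential, apply a magnetic Lieb--Thirring inequality there, and then interpolate the resulting mixed $V$-$B$ contributions into the advertised form.

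First I would use the Pauli identity $[\bsigma\cdot(-ih\nabla+A)]^2 = (-ih\nabla+A)^2 - h\,\bsigma\cdot B$, rewriting $T_h(A)-V = (-ih\nabla+A)^2 - W$ with spinor-valued effective potential $W = V + h\,\bsigma\cdot B$. For the magnetic Schr\"odinger operator with potential $W$ one has the standard Lieb--Thirring inequality
\[
\tr\bigl[(-ih\nabla+A)^2 - W\bigr]_{-}\ \ge\ -Ch^{-3}\int W_+^{5/2},
\]
which follows via the diamagnetic inequality from the non-magnetic Lieb--Thirring inequality on $\bR^3$. This already produces a Weyl-type bound, but with $W_+^{5/2} \le C\bigl(V_+^{5/2} + h^{5/2}|B|^{5/2}\bigr)$ one ends up with an $h^{-1/2}\int|B|^{5/2}$ remainder that is \emph{not} controlled by the hypothesis $B\in L^2$. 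This mismatch is the heart of the problem and is the main obstacle.

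To overcome it I would not split $W_+^{5/2}$ pointwise; instead I would keep part of the kinetic energy in reserve to absorb $|B|$. Fix $\theta\in(0,1)$ and write
\[
T_h(A) - V\ =\ \theta\bigl[(-ih\nabla+A)^2 - \tfrac{1}{\theta}V\bigr]\ +\ (1-\theta)(-ih\nabla+A)^2 \ -\ h\,\bsigma\cdot B.
\]
Apply the magnetic Lieb--Thirring bound above to the first bracket (with $V$ replaced by $V/\theta$), producing a Weyl term of the form $-C\theta^{-3/2}h^{-3}\int V_+^{5/2}$. For the remaining operator, a Cauchy--Schwarz/quadratic-form estimate gives
\[
(1-\theta)(-ih\nabla+A)^2 - h\,\bsigma\cdot B\ \ge\ -\frac{Ch^2}{1-\theta}\,|B|^2
\]
as an operator lower bound on each spin sector, whose negative part contributes at most $C(1-\theta)^{-1}h^2\int|B|^2$ once integrated against the density of states of the full operator.

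The final step is to package the remainder. To produce the specific $\bigl(h^{-2}\!\int|B|^2\bigr)^{3/4}\bigl(\int V_+^4\bigr)^{1/4}$ shape, I would proceed by Birman--Schwinger: estimate $N(-\lambda,T_h(A)-V)$ using a Sobolev-type magnetic counting bound of the schematic form $h^{-3}\int(V-\lambda)_+^{3/2} + h^{-2}\int|B|\,(V-\lambda)_+^{1/2}$, integrate over $\lambda\in[0,\infty)$, and apply H\"older's inequality on the magnetic contribution with exponents $(4/3,4)$ between $|B|$ and a suitable power of $(V-\lambda)_+$; the $\lambda$-integral then interpolates, and optimizing the coupling $\theta$ (or using Young's inequality) produces the $3/4$-$1/4$ split with the field-energy factor $h^{-2}\int|B|^2$ raised to the $3/4$ power and $\int V_+^4$ to the $1/4$ power. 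Assembling the Weyl term and this magnetic correction yields \eqref{genlt}. Throughout, the subtle point is that zero modes of the Pauli operator (Loss--Yau spinors) preclude any clean pointwise lower bound on $T_h(A)$ by $|B|$, which is precisely why one must pay for the spin contribution with a fraction of the field energy rather than handling it pointwise.
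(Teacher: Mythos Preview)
The paper does not prove this theorem; it is quoted from \cite{LLS} and used as a black box. So there is no ``paper's proof'' to compare against, but your sketch has a genuine gap that would need to be repaired before it could stand on its own.

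The step
\[
(1-\theta)(-ih\nabla+A)^2 - h\,\bsigma\cdot B\ \ge\ -\frac{Ch^2}{1-\theta}\,|B|^2
\]
is false as an operator inequality. Take a constant field $B=(0,0,B_0)$ and a spinor in the lowest Landau level with spin aligned so that $T_h(A)\psi=0$; on such states $(-ih\nabla+A)^2=hB_0$ and $h\,\bsigma\cdot B=-hB_0$, hence the left side equals $-\theta hB_0$ while the right side is $-\tfrac{Ch^2}{1-\theta}B_0^2$. For small $B_0$ (after a routine localization to make $B\in L^2$) the inequality fails. This is exactly the zero-mode obstruction you name in your last sentence: because the Pauli kinetic energy can vanish while $|B|$ is nonzero, no fraction of $(-ih\nabla+A)^2$ can absorb $h\,\bsigma\cdot B$ up to a \emph{pointwise} remainder in $|B|^2$. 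Your ``Cauchy--Schwarz/quadratic-form'' justification treats $h\,\bsigma\cdot B$ as if it were controlled by the square root of the kinetic energy, which it is not.

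The final paragraph then assumes a ``Sobolev-type magnetic counting bound'' of the schematic form $h^{-3}\int(V-\lambda)_+^{3/2}+h^{-2}\int|B|(V-\lambda)_+^{1/2}$. A CLR-type bound of this kind for the Pauli operator is essentially equivalent in difficulty to the eigenvalue-sum estimate you are trying to prove; it is not a standard Sobolev consequence, and its proof in \cite{LLS} is the heart of the matter (it relies on an operator-theoretic inequality of Birman--Koplienko--Solomyak type rather than on any pointwise splitting of the spin term). So at present the argument either invokes a false operator bound or defers the real work to an unproven counting estimate.
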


Using $\tr H\gamma \le 0$ and applying \eqref{genlt}  to the operator $H - (\nabla \psi_r)^2$,
we obtain
\begin{align}
  - \tr (\nabla \psi_r)^2 \gamma& \geq
  \tr\big( H - (\nabla \psi_r)^2\big) \gamma \label{usemlt} \\
  & \geq - C h^{-3} \int \big[ V+(\nabla \psi_r)^2\big]_+^{5/2} -
  C\Big(h^{-2} \int_{\bR^3} |\nabla \times A|^2\Big)^{3/4}
  \Big( \int \big[V+(\nabla \psi_r)^2\big]_+^4 \Big)^{1/4}  \nonumber \\
  &\geq-Ch^{-3}\Big(\int [V]_+^{5/2}+r^{-2}\Big)-C\delta^{-3}\beta^{-3}\Big(\int [V]_+^4+ r^{-5}\Big)
  - \delta\beta h^{-2}
  \int_{\bR^3} |\nabla \times A|^2.\nonumber 
\end{align}
{F}rom \eqref{varp} we thus have 
\begin{eqnarray*}
  E_{r,R}(\beta, h, V) &\le& \inf_A\Big[ \tr \psi_r^2\big[ T_h(A) -
  V\big]_- 
  + (1+\delta)\beta \int_{\bR^3}
  |\nabla \times A|^2\Big]\\&&
 + Ch^{-1}\Big(\int [V]_+^{5/2}+r^{-2}\Big)
 +Ch^2\delta^{-3}\beta^{-3}\Big(\int [V]_+^4+ r^{-5}\Big).
\end{eqnarray*}
Exactly as in the upper bound in Lemma~\ref{lm:version1}, we can
replace $\int_{\bR^3} |\nabla \times A|^2$ with $\int_{B(R)} |\nabla
\otimes A|^2$ at the expense of increasing $\beta$ by a factor.
This proves Lemma~\ref{lm:version2} for the Pauli case.  For the
Schr\"odinger case we can use the non-magnetic Lieb-Thirring
inequality, i.e.  the second line of \eqref{usemlt} simplifies to
$$
   -   \tr (\nabla \psi_r)^2 \gamma \ge  
- C   h^{-3} \int \big[V+(\nabla \psi_r)^2\big]_+^{5/2}
$$
and the rest of the argument is the same.
\qed

\section{Refined local semiclassics with scales}

We consider a
local version of Theorem~\ref{eq:secondorder} for a model problem
living in the ball $B(\ell)$ of radius $\ell>0$. Note that not
only the Hamiltonian but also the magnetic field has been localized,
albeit on a twice bigger ball.   We introduce two
 scaling parameters, $\ell$ and $f$, that describe the lengthscale
and the strength of the potential and we follow the
dependence of the error term on these parameters.
This scale invariant formulation 
will be convenient for the multiscale analysis
in \cite{EFS3} and also in the reduction of the following
theorem to its special case, Theorem~\ref{thm:sc}.
Readers who are interested only in
the  semiclassical aspect of the result can think of
$f =\ell =1$.

\begin{theorem}[Semiclassical asymptotics]\label{thm:scMain} 
There exist universal constants $n_0 \in \bN$ and $\e >0$ such
that the following is satisfied.
Let $\kappa_0,f,\ell, h_0>0$ and let $\kappa \le \kappa_0 f^{-2}\ell^{-1}$.
Let $\psi\in C_0^\infty(\bR^3)$ with $\supp \psi \subset B(\ell)$
and let $V\in C^\infty(\ov{B}(\ell))$ be a real valued potential satisfying
\be
    |\pt^n \psi|\le C_n \ell^{-|n|}, \qquad |\pt^n V|\le C_n f^2\ell^{-|n|}
\label{derMain}
\ee 
for every multiindex $n$ with $|n|\le n_0$.
Then for the Pauli operator $H_h(A):= T_h(A)-V$, 
\begin{multline}
   \Bigg| \inf_A \Big( \tr [\psi H_h(A) \psi]_- + \frac{1}{\kappa  h^2}
\int_{B(2\ell)} |\nabla \otimes A|^2\Big)
  - 2(2\pi h)^{-3}\int_{\bR^3\times\bR^3} \psi(q)^2\big[ p^2 - V(q)\big]_- \rd q \rd p 
\Bigg| \\ \le  C  h^{-2+\e} f^{4-\e} \ell^{2-\e} 
\label{locscMain}
\end{multline}
for any $h\le h_0 f \ell$.
The constant $C$ depends only on $\kappa_0$, $h_0$ and on the constants $C_n$, 
 in \eqref{derMain}. The factor 2 in front of
the semiclassical term accounts for the spin and it is present only 
for the Pauli case.
\end{theorem}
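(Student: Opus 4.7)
The plan is to prove the theorem in two stages: a scaling reduction to the unit-scale case $f=\ell=1$ (which I assume will be the content of the subsequent Theorem~\ref{thm:sc}), followed by matching upper and lower bounds at that unit scale.

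\textbf{Scaling reduction.} Set $y=x/\ell$, introduce the unitary $U:L^2(dy)\to L^2(dx)$ given by $(Uf)(x)=\ell^{-3/2}f(x/\ell)$, and write $A(x)=f\wt A(y)$, $V(x)=f^2\wt V(y)$, $\psi(x)=\wt\psi(y)$. A direct computation gives $U^{*}T_h(A)U=f^2\,T_{\wt h}(\wt A)$ with $\wt h:=h/(f\ell)$ and $\int_{B(2\ell)}|\nabla\otimes A|^2=f^2\ell\int_{B(2)}|\nabla\otimes\wt A|^2$. Setting $\wt\kappa:=\kappa f^2\ell$, the full functional in \eqref{locscMain} rescales to $f^2$ times the unit-scale version with parameters $(\wt h,\wt\kappa,\wt V,\wt\psi)$. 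The hypotheses become $\wt\kappa\le\kappa_0$ and $\wt h\le h_0$, while the target error $h^{-2+\e}f^{4-\e}\ell^{2-\e}$ becomes $f^2\wt h^{-2+\e}$. Hence it suffices to prove the unit-scale statement.

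\textbf{Upper bound.} Take $A\equiv 0$. For smooth $\wt V$ and $\wt\psi$ a standard non-magnetic semiclassical expansion (Helffer--Sj\"ostrand/pseudodifferential calculus) gives
\begin{equation*}
\tr[\wt\psi(-\wt h^2\Delta-\wt V)\wt\psi]_-=2(2\pi\wt h)^{-3}\iint\wt\psi(q)^2[p^2-\wt V(q)]_-\,\rd q\,\rd p+O(\wt h^{-1}),
\end{equation*}
which is much sharper than the required $O(\wt h^{-2+\e})$.

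\textbf{Lower bound.} Let $\wt A$ be a near-minimizer. The proof has three substeps. (i)~\emph{A priori regularity.} Comparing with the trivial trial field $\wt A=0$ gives an upper bound of order $\wt h^{-3}$ on the optimal energy; feeding this into the magnetic Lieb--Thirring inequality \eqref{genlt} from Theorem~\ref{thm:lls} and absorbing $\|B\|_2^2$ into the field-energy term yields $\|\nabla\otimes\wt A\|_2^2=O(\wt h^{-s})$ for some explicit $s<2$. (ii)~\emph{Smoothing.} Mollify $\wt A$ at a scale $\rho=\wt h^{1-\delta}$ for a small $\delta>0$, producing $\wt A_\rho$ for which the standard pseudodifferential semiclassical expansion applies and delivers the magnetic Weyl integral with error $O(\wt h^{-1}\rho^{-N_0})$; the substitution $p\mapsto p-\wt A_\rho(q)$ then identifies this integral with the non-magnetic one. (iii)~\emph{Controlling the smoothing error.} Write $T_{\wt h}(\wt A)-T_{\wt h}(\wt A_\rho)$ as a linear plus quadratic expression in $\wt A-\wt A_\rho$; the linear term $(\wt A-\wt A_\rho)\cdot(-i\wt h\nabla)$ is controlled in the trace via Cauchy--Schwarz by $\varepsilon\|\nabla\otimes\wt A\|_2^2+\varepsilon^{-1}$ times a particle-density bound obtained from another application of \eqref{genlt}, while the quadratic term and the spin term $\wt h\bsigma\cdot\curl(\wt A-\wt A_\rho)$ are handled via interpolation between step~(i) and Sobolev embedding. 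The $\varepsilon$-piece is reabsorbed into the variational field energy.

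\textbf{Main obstacle.} The principal difficulty is the self-consistent bookkeeping in step~(iii): each use of Lieb--Thirring generates a small multiple of $\|\nabla\otimes\wt A\|_2^2$ which must be reabsorbed without inflating the Weyl constant, and simultaneously each use of a particle-density bound depends on the field energy that we are trying to control. Optimizing the mollification exponent $\delta$ against the exponents produced by the Young and Poincar\'e inequalities of step~(iii) yields a small but strictly positive universal $\e$. This balancing argument, rather than any new semiclassical input, is what upgrades the prior $O(\wt h^{-5/2})$ leading-order result of \cite{ES,ES3} to the claimed $O(\wt h^{-2+\e})$.
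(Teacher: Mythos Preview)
Your overall architecture---rescale to $f=\ell=1$, take $A\equiv 0$ for the upper bound, and for the lower bound mollify $A$ and feed the result into a sharp non-magnetic semiclassical expansion---is exactly the paper's strategy. However, the lower-bound sketch has two genuine gaps that are the heart of the paper's work.

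\textbf{Gap in step (ii): the mollified potential is not regular enough.} With only the naive a~priori bound of step~(i), namely $\|\nabla\otimes\wt A\|_2^2=O(\wt h^{-1})$ (which is what you actually get from Lieb--Thirring and $\cE(A)\le\cE(0)$), the mollification $\wt A_\rho$ at scale $\rho=\wt h^{1-\delta}$ satisfies $\|\partial^n\wt A_\rho\|_\infty\lesssim \rho^{-n-1/2}\|\nabla\otimes\wt A\|_2\sim \wt h^{-(1-\delta)(n+1/2)-1/2}$, which blows up already for $n=0$. No choice of $\delta$ makes finitely many derivatives bounded, so the standard pseudodifferential expansion is simply not applicable, and the error you quote as $O(\wt h^{-1}\rho^{-N_0})$ is much larger than $\wt h^{-3}$. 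The paper circumvents this by first localizing onto balls of radius $L_0=h^{1/2-\e_0}$; on each such ball the new semiclassical parameter is $h/L_0=h^{1/2+\e_0}$, and---crucially---one first proves an \emph{improved} a~priori bound $h^{-2}\int_{B(2L_0)}|\nabla\otimes A|^2\le Ch^{1-3\e_0}\Lambda_{L_0}$ (Corollary~\ref{cor:BoundMagn}), which makes the rescaled derivatives of $A_r$ bounded for $r=h^{1/2+\rho}$ (Lemma~\ref{lem:ScaledDerivatives}). Only then does Ivrii's Theorem~\ref{thm:ivrii} apply.

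\textbf{Gap in step (iii): Cauchy--Schwarz is too crude.} Obtaining that improved a~priori bound, and then controlling the smoothing error $\tr\,D\cdot(A-A_r)\phi\gamma\phi$ to precision $h^{1+\e}\Lambda_{L_0}$, cannot be done by a single Cauchy--Schwarz against the kinetic energy: the resulting bound is off by a power of $h$. The paper instead performs a \emph{second} localization at scale $L_1=h^{1/2+\e_0}\ll L_0$, freezes $V$ to a constant on each $L_1$-ball, and then carries out a dyadic decomposition in momentum shells around the Fermi surface $\{p^2=V_u\}$ (Section~\ref{sec:decomp}). The point is that away from the Fermi surface the trial state $\gamma$ is quantitatively close to the free projection $P$, so the cross term $D\cdot A$ is small there; near the Fermi surface the cross term is bounded by the thin-shell volume. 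This momentum-space bookkeeping (Lemmas~\ref{lm:ad}--\ref{lm:sumi} and Proposition~\ref{prop:KineticBoundsScaled}) is what both yields the improved field-energy bound and provides the kinetic-energy control $\int\cT_u\,\rd u$ (Theorem~\ref{thm:5.14}) needed to absorb the replacement $A\to A_r$ in Lemma~\ref{lem:5.21}.

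A secondary omission: the scaling reduction alone does not reach Theorem~\ref{thm:sc}, because that theorem carries the extra hypothesis $V\ge c_0 f^2$. The paper removes this assumption by a multiscale partition (Section~4.2) adapted to the function $\ell(x)=K^{-1}(V(x)^2+h^{2\alpha})^{1/2}$, treating the region $\{|V|\le h^\alpha\}$ separately via Lieb--Thirring.
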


{\it Remark 1.} For convenience, we introduced a new parameter $\kappa =\beta^{-1} h^{-2}$
instead of $\beta$. In the limit we consider, $h\to \infty$, $\beta h^2\ge \mbox{const}>0$,
the new parameter $\kappa$ will remain bounded uniformly.
We note that the constant factor $\beta$ in front of the magnetic energy
does not necessarily have to scale as $h^{-2}$ (see \cite{EFS1} for other
scalings) but our choice here
is dictated by the application to the Scott correction.
Similar results may be proven with a 
more general coefficient $\beta\sim h^{-\gamma}$ with a certain range of $\gamma$
 and the  exponent of the error term will be $\e=\e(\gamma)$.
We will not pursue the most general result in this paper.

\medskip

{\it Remark 2.}
By variation of $\kappa$, we obtain  from Theorem~\ref{thm:scMain}
the following estimate
\begin{align}
  \label{eq:21}
  \frac{1}{\kappa  h^2}
\int_{B(2\ell)} |\nabla \otimes A|^2\leq
 C  h^{-2+\e} f^{4-\e} \ell^{2-\e},
\end{align}
for (near) minimizing vector potentials $A$.

\bigskip

 The following result can be viewed as a partial  converse to \eqref{eq:21}
as it estimates the semiclassical error in terms of the magnetic field. Note that
the assumption in \eqref{eq:17} below is much weaker than \eqref{eq:21}.

\begin{theorem}\label{thm:UpperSemiclassNew}
Let the assumptions be as in  Theorem~\ref{thm:scMain}
 and assume that $A$ satisfies the bound
\begin{align}\label{eq:17}
\int_{B(2\ell)} |\nabla \otimes A|^2\leq
 C h^{-2}  f^{4} \ell^3.
\end{align}
Then, with $\e$ from Theorem~\ref{thm:scMain} we have
\begin{align}\label{eq:USNew}
C h^{-2}f^3\ell^{3/2}
 \Big\{ \int_{B(2\ell)}& |\nabla \otimes A|^2 \Big\}^{1/2}  + C h^{-1}f^3\ell \non\\ &\ge
\tr [\psi H_h(A) \psi]_{-} -  2(2\pi h)^{-3}\int_{\bR^3\times\bR^3} 
\psi(q)^2\big[ p^2 - V(q)\big]_- \rd q \rd p \nonumber \\
  & 
\ge  C  h^{-2+\e} f^{4-\e} \ell^{2-\e} - Ch^{-2}f^2\ell  \int_{B(2\ell)} |\nabla \otimes A|^2,
\end{align}
where the constants may depend on $h_0$ and $\kappa_0$ and on the constant in \eqref{eq:17}.
\end{theorem}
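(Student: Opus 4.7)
The lower bound is a direct consequence of Theorem~\ref{thm:scMain}. Applying it with the maximal admissible value $\kappa = \kappa_0 f^{-2}\ell^{-1}$ and using that the given $A$ is itself admissible, the functional value at $A$ exceeds the infimum, so
\begin{equation*}
\tr[\psi H_h(A)\psi]_{-} + \tfrac{f^2\ell}{\kappa_0 h^2}\int_{B(2\ell)}|\nabla\otimes A|^2 \;\ge\; 2(2\pi h)^{-3}\iint \psi(q)^2[p^2-V(q)]_-\rd p\,\rd q \,-\, Ch^{-2+\e}f^{4-\e}\ell^{2-\e}.
\end{equation*}
Rearranging produces the claimed lower bound.

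For the upper bound the natural first move is the non-magnetic trial $\rho_0 := {\bf 1}_{(-\infty,0]}(\psi H_h(0)\psi)$, a spectral projection of a real Schr\"odinger operator tensored with the spin identity. Expanding $H_h(A) - H_h(0) = A\!\cdot\!\bbp + \bbp\!\cdot\!A + |A|^2 + h\bsigma\!\cdot\!B$ in the Coulomb gauge, the cross term gives $2\int A\cdot j_0 = 0$ because $\rho_0$ has a real kernel and hence zero current density, and the Zeeman piece vanishes because $\tr_{\bC^2}\bsigma=0$. Combined with the standard smooth-potential semiclassical asymptotics $\tr[\psi H_h(0)\psi]_- = 2(2\pi h)^{-3}\iint\psi^2[p^2-V]_-\rd p\,\rd q + O(h^{-1}f^3\ell)$, one obtains
\begin{equation*}
\tr[\psi H_h(A)\psi]_- \;\le\; 2(2\pi h)^{-3}\iint\psi^2[p^2-V]_-\rd p\,\rd q + \int|A|^2\psi^2 n_0 + Ch^{-1}f^3\ell.
\end{equation*}
Poincar\'e plus the pointwise bound $n_0\lesssim h^{-3}f^3$ only controls $\int|A|^2\psi^2 n_0$ by $Ch^{-3}f^3\ell^2\|\nabla A\|_2^2$, which is $\le Ch^{-1}f^3\ell$ in the easy regime $\|\nabla A\|_2\le h\ell^{-1/2}$ but otherwise too large by a factor of $h^{-1}\ell^{1/2}\|\nabla A\|_2$.

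To deal with the remaining regime I would insert an intermediate scale $\ell_1\in[h/f,\ell]$, choose an IMS partition of unity $\{\chi_j\}$ with $\sum_j\chi_j^2 = 1$ on $\supp\psi$, subordinate to a covering by cells $\{C_j\}$ of side-length $\ell_1$, and replace the trial by $\rho := \sum_j \chi_j U_j^{-1}\rho_0^{(j)}U_j\chi_j$, where $U_j := e^{-iA_j^{\mathrm{avg}}\cdot x/h}$ is the local constant-field gauge (with $A_j^{\mathrm{avg}} := |C_j|^{-1}\int_{C_j}A$) and $\rho_0^{(j)}$ is the non-magnetic projection of $(\chi_j\psi)H_h(0)(\chi_j\psi)$. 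A direct check shows $0\le\rho\le 1$. Running the cross/Zeeman-vanishing calculation cell by cell on the gauge-transformed Hamiltonian $U_j H_h(A)U_j^{-1} = H_h(\delta A_j)$ with $\delta A_j := A - A_j^{\mathrm{avg}}$, and using Poincar\'e $\|\delta A_j\|_{L^2(C_j)}^2 \le C\ell_1^2\|\nabla A\|_{L^2(C_j)}^2$, converts the magnetic contribution into $\sum_j \int|\delta A_j|^2(\chi_j\psi)^2 n_0^{(j)} \le Ch^{-3}f^3\ell_1^2\|\nabla A\|_2^2$. The local Weyls collapse to the global one via $\sum_j(\chi_j\psi)^2 = \psi^2$, while the cumulative semiclassical and IMS localization error is of order $Ch^{-1}f^3\ell + Ch^{-1}f^3\ell^3\ell_1^{-2}$, the second summand coming from the $h^2|\nabla\chi_j|^2$ IMS cost summed over $\sim(\ell/\ell_1)^3$ cells. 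Balancing the $\ell_1$-dependent terms via $\ell_1^4 = h^2\ell^3/\|\nabla A\|_2^2$ gives each equal to $Ch^{-2}f^3\ell^{3/2}\|\nabla A\|_2$, and assumption \eqref{eq:17} guarantees $\ell_1\gtrsim h/f$, so that semiclassics on each cell remains meaningful.

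The main obstacle will be the bookkeeping around the piecewise gauge transformations. The unitaries $U_j$ carry a different constant phase on each cell and do not patch into a single global gauge, so one cannot simultaneously reduce $A$ to a small divergence-free field; one must instead verify cell by cell that $\rho$ really is a density matrix and that the cyclic trace identity $\tr(\chi_j\psi)H_h(A)(\chi_j\psi)U_j^{-1}\rho_0^{(j)}U_j = \tr(\chi_j\psi)H_h(\delta A_j)(\chi_j\psi)\rho_0^{(j)}$ permits reducing $A$ to $\delta A_j$ locally without contaminating neighboring cells. A careful accounting of the per-cell smooth-potential Weyl remainder is also needed to ensure that the cumulative semiclassical error stays $O(h^{-1}f^3\ell)$ uniformly in $\ell_1$.
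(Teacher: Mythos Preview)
Your approach is essentially the paper's own proof. For the lower bound you invoke Theorem~\ref{thm:scMain} with $\kappa=\kappa_0 f^{-2}\ell^{-1}$ and rearrange, exactly as the paper does. For the upper bound you build a trial density matrix from local non-magnetic spectral projections conjugated by cell-wise constant gauge shifts $e^{-iA^{\mathrm{avg}}\cdot x/h}$, exploit the reality of the non-magnetic projections to kill the linear-in-$A$ and Zeeman cross terms, control the residual $|A-A^{\mathrm{avg}}|^2$ via Poincar\'e, and balance the localization scale against the field energy. This is precisely the paper's construction \eqref{eq:18}, except that the paper uses a \emph{continuous} partition $\int\chi_u^2\,\rd u/L_0^3=1$ with $c_u=\fint_{B_u(L_0)}A$, whereas you use a discrete one; both work and yield the same optimized scale $L_0^4=h^2/\|\nabla\otimes A\|_2^2$.

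Two remarks. First, the ``main obstacle'' you flag---that the $U_j$ do not patch to a global gauge---is a non-issue: $0\le\rho\le 1$ follows immediately from $\sum_j\chi_j^2=1$ and $0\le U_j^{-1}\rho_0^{(j)}U_j\le 1$, and the cyclic identity $\tr(\chi_j\psi)H_h(A)(\chi_j\psi)U_j^{-1}\rho_0^{(j)}U_j=\tr(\chi_j\psi)H_h(\delta A_j)(\chi_j\psi)\rho_0^{(j)}$ is just the trace cyclicity plus $e^{ic\cdot x/h}H_h(A)e^{-ic\cdot x/h}=H_h(A-c)$; there is no interaction between cells. The paper's continuous version makes this transparent but gains nothing essential. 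Second, your pointwise bound $n_0^{(j)}\lesssim h^{-3}f^3$ on the density of the \emph{localized} spectral projection ${\bf 1}_{(-\infty,0)}\big((\chi_j\psi)H_h(0)(\chi_j\psi)\big)$ is not obvious---such a pointwise statement requires some semiclassical input beyond CLR or Lieb--Thirring. The paper sidesteps this by estimating $\int(A-c_u)^2\wt\rho_u$ via H\"older, then Lieb--Thirring $\int\wt\rho_u^{5/3}\le Ch^{-2}\tr D^2\wt\gamma_u$, then Sobolev on $A-c_u$ (see the display after \eqref{eq:23}); this uses only the kinetic energy bound \eqref{eq:19}, which follows from Lieb--Thirring and the upper bound on $\tr[\chi_u\psi H_h(0)\psi\chi_u]_-$. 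You should either do the same or supply a separate argument for the pointwise density bound. Finally, note that the paper begins by reducing to $\nabla\cdot A=0$ near $\supp\psi$ via the argument of Theorem~\ref{thm:scDiv}; you invoke ``the Coulomb gauge'' without saying why it is available, so cite that reduction.
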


\subsection{Proof of the Main Theorem~\ref{thm:secondorder}}

Using the local semiclassical asymptotics Theorem~\ref{thm:scMain}
it is an easy localization argument to prove the global result
for compactly supported potentials. 

Recall the choice of a smooth cutoff function $\psi$
from Section~\ref{sec:loc}. Let $f=1$ and $\ell\ge 1$ sufficiently large
so that $\psi_\ell(x):= \psi(x/\ell)\equiv 1$ on the support of $V$.

The upper bound in Theorem~\ref{thm:secondorder} is obtained by setting $A=0$
 and using the first inequality in Lemma~\ref{lm:version2} (with $R=\infty$)
$$
   E(\beta, h, V)\le 2\tr [-h^2\Delta - V]_- \le2\tr \psi_\ell[-h^2\Delta - V]_- \psi_\ell
   \le 2\tr \big[\psi_\ell(-h^2\Delta - V) \psi_\ell\big]_-,
$$ 
where the traces are on $L^2(\bR^3)$ and
the factor 2 accounts for the spin. The upper bound now follows from
 the main result of \cite{SS} which is formulated for $\ell=1$, but it clearly 
extends to any value of $\ell$:

\begin{theorem}[{\cite[Theorem 12]{SS}}]\label{thm:SS}
Let $d\geq 3$ and $\psi \in C_0^{d+4}(\bR^d)$ be supported in a ball $B$ 
of radius $1$ and $V \in C^3(\overline B)$ be a real function. Let $H = -h^2 \Delta -V$, $h>0$,
acting on $L^2(\bR^d)$.
Then,
\begin{align}
\Big| \tr_{L^2(\bR^d)} [\psi H \psi]_{-} - (2\pi h)^{-d} \int_{\bR^d\times \bR^d}
\psi^2(q) [p^2-V(q)]_{-}\,\rd p\rd q
\Big| \leq C h^{-d+6/5}.
\end{align}
The constant $C>0$ here depends only on $d$, $\| \psi \|_{C^{d+4}}$ and $\| V \|_{C^3}$.
\end{theorem}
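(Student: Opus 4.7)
The plan is to prove Theorem~\ref{thm:SS} by a coherent-state phase-space analysis in the spirit of Berezin--Lieb, adapted to the $C^3$ regularity of $V$. Fix a real, radial, normalized profile $G\in C_0^\infty(\bR^d)$ with $\int G^2=1$, and at a scale $t>0$ to be chosen as a small power of $h$, introduce the coherent states
\[
g_{u,v}(x) = t^{-d/2}\,G\bigl((x-u)/t\bigr)\,e^{i v\cdot x/h},
\]
giving the resolution of identity $(2\pi h)^{-d}\iint|g_{u,v}\rangle\langle g_{u,v}|\,\rd u\,\rd v = I$. A direct computation together with Taylor's theorem yields
\[
\langle g_{u,v},\psi H \psi\,g_{u,v}\rangle = \psi(u)^2\bigl(v^2 - V(u)\bigr) + R(u,v),
\]
where the diagonal error is controlled by $h^2/t^2$ (momentum--position uncertainty in the kinetic term) and by $t^2\|V\|_{C^2}$ (Husimi convolution of $V$); once the explicit leading Husimi correction of the form $t^2 c_1 \Delta V(u)$ is subtracted, the residual Taylor error improves to $t^3\|V\|_{C^3}$. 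Commutators with $\psi$ are lower order and are absorbed using $\psi \in C_0^{d+4}$.

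For the upper bound on $\tr[\psi H \psi]_-$, I would use the trial density matrix
\[
\gamma_{\rm trial} = (2\pi h)^{-d}\iint \chi\bigl(v^2\le V(u)\bigr)\,|g_{u,v}\rangle\langle g_{u,v}|\,\rd u\,\rd v,
\]
automatically in $[0,1]$, and evaluate $-\tr(\psi H\psi\,\gamma_{\rm trial})$: phase-space integration of the coherent-state diagonal reproduces the Weyl integral plus the integral of $R$ against $\chi(v^2\le V(u))$, while the previously subtracted $t^2\Delta V$ Husimi piece integrates to a total-derivative boundary term that is of strictly lower order in $h$. For the lower bound, I would apply the corresponding Berezin--Lieb operator inequality
\[
\psi H \psi \ge (2\pi h)^{-d}\iint \Phi(u,v)\,|g_{u,v}\rangle\langle g_{u,v}|\,\rd u\,\rd v,
\]
with Wick symbol $\Phi(u,v) = \psi(u)^2(v^2 - V(u)) - R'(u,v)$ and $|R'|\lesssim |R|$, and test against any $0\le\gamma\le 1$ to obtain the matching lower bound on the negative-part trace.

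The decisive step is the choice of scale. Once the $t^3$ budget is in force, balancing $h^2/t^2 = t^3$ fixes $t\sim h^{2/5}$ and delivers the claimed error $h^{-d+6/5}$. The main obstacle is achieving this $t^3$ budget: a naive Berezin--Lieb bound at uniform scale only gives $t^2$ and produces the weaker $h^{-d+1}$ remainder. To remove the loss I would either subtract the explicit $t^2 c_1 \Delta V$ Husimi correction by hand and verify that its integrated contribution over the classical region is a boundary term of lower order, or use a variable scale $t(u)$ shrinking near the turning surface $\{V=0\}$ in the Fefferman--Phong style. The extension from $\ell=1$ to arbitrary $\ell$ mentioned in the excerpt then follows by the rescaling $x = \ell x'$.
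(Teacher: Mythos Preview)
This theorem is not proved in the present paper; it is quoted from Solovej--Spitzer \cite{SS}, Theorem~12, and used only as a black box. So there is no ``paper's proof'' to compare against, but the original argument in \cite{SS} is indeed a coherent-state method, and your outline captures its skeleton correctly --- including the diagnosis that a single uniform scale $t$ with naive Berezin--Lieb yields only the $h^{-d+1}$ remainder.

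Of your two proposed fixes, however, only (b) actually closes. Your option~(a) asserts that the integrated $t^2\Delta V$ Husimi correction is ``a total-derivative boundary term of strictly lower order''. It is not: after carrying out the $v$-integral over $\{v^2\le V(u)\}$, this piece contributes a constant times
\[
t^2\,h^{-d}\int \psi(u)^2\,[V(u)]_+^{d/2}\,\Delta V(u)\,\rd u,
\]
and an integration by parts turns it into $-\tfrac{d}{2}\,t^2 h^{-d}\int \psi^2 [V]_+^{d/2-1}|\nabla V|^2\,\rd u$ plus $\psi$-derivative terms --- a genuine bulk quantity of size $t^2 h^{-d}$. There is no cancellation with the $h^2/t^2$ kinetic correction, and no mechanism in your sketch removes it; with $t=h^{2/5}$ this term is of order $h^{-d+4/5}$, worse than the claimed bound. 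One can attempt to compensate by shifting the trial-state cutoff to $\{v^2\le V+ct^2\Delta V\}$, but making this work simultaneously in the Berezin--Lieb lower bound (where the symbol sits inside $[\,\cdot\,]_-$) is exactly the hard part, and your sketch does not address it.

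Option~(b) --- letting the coherent-state scale $t=t(u)$ shrink near the turning surface $\{V=0\}$ via a continuous multiscale partition --- is precisely the mechanism in \cite{SS}; the relevant partition-of-unity construction is in fact recalled in the present paper as Lemma~\ref{lem:PartUnityMultScale}. So the viable route in your proposal coincides with the original one, while the alternative you offer does not work as stated.
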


For the lower bound in Theorem~\ref{thm:secondorder} we use the IMS localization with a partition of unity
$\psi_\ell^2 + \wt\psi_\ell^2\equiv 1$ and we borrow a small fraction 
of the kinetic energy to control the localization error. We have
\be
   T_h(A)-V \ge (1-\delta)\psi_\ell\big( T_h(A) - V_\delta\big)\psi_\ell
 + \delta\big[ T_h(A) -  \delta^{-1}h^2 I_\ell\big]
\label{ims2}
\ee
with $\delta\in (0,\frac{1}{4})$,
  $I_\ell:=(\nabla\psi_\ell)^2 +(\nabla\wt\psi_\ell)^2$ and $V_\delta = (1-\delta)^{-1} V$.
Here we dropped the term $\wt\psi_\ell\big( T_h(A) - V_\delta\big)\wt\psi_\ell$ that is positive
since $V_\delta=0$ on the support of $\wt\psi_\ell$.
The second term in \eqref{ims2} is estimated by the magnetic Lieb-Thirring inequality
similarly to \eqref{usemlt}:
$$
\Tr \big[ T_h(A) -  \delta^{-1}h^2 I_\ell\big]_- \ge - Ch^2\delta^{-5/2} - C\beta^{-3}h^2\delta^{-4}
 - \beta  \int |\nabla \times A|^2,
$$
where the constants depend on $V$ and $\ell$. Using $\beta h^2\ge \kappa_0$ and assuming $h\le 1$,
 we can combine the two
error terms to obtain 
$$
   E(\beta, h, V) \ge (1-\delta) E'_{\ell,\infty} \big( \beta, h, V_\delta\big) -  Ch^2\delta^{-3}
  \ge E_{\ell, 2\ell} (\beta/2, h, V_\delta)  -  Ch^2\delta^{-3} - C h^{-1}.
$$
In the last step we used the second inequality from \eqref{EEprpr} with $R=\infty$ 
and then the monotonicity of $E_{\ell, R}<0$ in $R$.
The energy $E_{\ell, 2\ell} (\beta/2, h, V_\delta)$ is estimated in \eqref{locscMain}
(with $\kappa = 2\beta^{-1}h^{-2}$) and we get
$$
   E(\beta, h, V) \ge 2(2\pi h)^{-3} \int_{\bR^3\times \bR^3} \big[ p^2 - V_\delta(q)\big]_-\rd p\rd q
 -Ch^{-2+\e}-  Ch^2\delta^{-3},
$$
where we used again that $\psi_\ell\equiv 1$ on the support of $V_\delta$.
It is easy to see that
$$
  \int_{\bR^3\times \bR^3} \big[ p^2 - V_\delta(q)\big]_-\rd p\rd q 
 =\int_{\bR^3\times \bR^3} \big[ p^2 - V(q)\big]_-\rd p\rd q + O(\delta)
$$
and finally, choosing $\delta= h^{1+\e}$ we obtain 
an error term $Ch^{-2+\e}$ assuming $\e\le 1/4$.
This completes the proof of Theorem~\ref{thm:secondorder}. \qed

\section{Refined local semiclassics with lower bound}

The main technical result of this paper is the following
version of Theorem~\ref{thm:scMain} where additionally 
 a lower bound on $V$, namely
\eqref{lowerbV} is assumed. 
Furthermore, we impose the condition that $\nabla \cdot A =0$.

\begin{theorem}[Semiclassical asymptotics]\label{thm:sc} 
There exist universal constants $n_0 \in \bN$ and $\e >0$ such
that the following is satisfied.
Let $\kappa_0,f,\ell, h_0>0$ and let $\kappa \le \kappa_0 f^{-2}\ell^{-1}$.
Let $\psi\in C_0^\infty(\bR^3)$ with $\supp \psi \subset B(\ell)$
and let $V\in C^\infty(\ov{B}(\ell))$ be a real valued potential satisfying
\be
    |\pt^n \psi|\le C_n \ell^{-|n|}, \qquad |\pt^n V|\le C_n f^2\ell^{-|n|}
\label{der}
\ee 
for every multiindex $n$ with $|n|\le n_0$,
and for some $c_0>0$ let 

\be
 \inf \{ V(x)\; : \; x\in \ov{B}(\ell)\} \ge c_0 f^2.
\label{lowerbV}
\ee
Then
\begin{multline}
   \Bigg| {\inf}'_A \Big( \tr [\psi H_h(A) \psi]_- + \frac{1}{\kappa  h^2}
\int_{B(2\ell)} |\nabla \otimes A|^2\Big)
  - 2(2\pi h)^{-3}\int_{\bR^3\times\bR^3} \psi(q)^2\big[ p^2 - V(q)\big]_- \rd q \rd p 
\Bigg| \\ \le  C  h^{-2+\e} f^{4-\e} \ell^{2-\e} 
\label{locsc}
\end{multline}
for any $h\le h_0f\ell$.
Here ${\inf}'_A$ denotes that the infimum is taken over all $A \in
H^1({\mathbb R}^3)$ satisfying that 
\begin{align}
  \label{eq:22}
  \nabla \cdot A = 0,\qquad \text{ on } B(5 \ell/4).
\end{align}
The constant $C$ depends only on $\kappa_0$, $h_0$  and on the constants $C_n$, 
and $c_0$ in \eqref{der} and \eqref{lowerbV}. The factor 2 in front of
the semiclassical term accounts for the spin and it is present only 
for the Pauli case.
\end{theorem}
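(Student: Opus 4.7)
The plan is to prove matching upper and lower bounds. The upper bound is relatively soft and reduces to Theorem~\ref{thm:SS}. The lower bound is the substance of the argument, and the strategy there is to establish apriori bounds on a (near) minimizing vector potential $A$ and then treat it perturbatively around the non-magnetic problem.

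\medskip

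\emph{Upper bound.} Take $A \equiv 0$ as trial field. The Pauli kinetic operator $T_h(0)$ reduces to $-h^2\Delta$ on each spin component, yielding a factor of $2$ upon tracing out the spin. After rescaling space by $\ell$ and energy by $f^2$ (so that the effective semiclassical parameter becomes $\tilde h = h/(f\ell) \le h_0$), Theorem~\ref{thm:SS} supplies the Weyl formula for $\tr[\tilde\psi(-\tilde h^2 \Delta - \tilde V)\tilde\psi]_-$ with error $O(\tilde h^{-3+6/5})$, which after reversing the rescaling matches the right-hand side of \eqref{locsc} with, e.g., $\e = 1/5$.

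\medskip

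\emph{Apriori bound on the minimizer.} Let $A$ be an approximate minimizer of the constrained variational problem with $\nabla \cdot A = 0$ on $B(5\ell/4)$. Comparing the functional at $A$ with its value at $A \equiv 0$ gives
\[
\frac{1}{\kappa h^2} \int_{B(2\ell)} |\nabla \otimes A|^2 \le \tr[\psi H_h(0)\psi]_- - \tr[\psi H_h(A)\psi]_-,
\]
and the magnetic Lieb--Thirring inequality (Theorem~\ref{thm:lls}) applied to the right-hand side yields an initial apriori bound on the field energy. Using $\nabla \cdot A = 0$ together with $-\Delta A = \nabla \times B$ and interior elliptic regularity, this $L^2$-control of $\nabla A$ can be upgraded to higher-order Sobolev (and hence pointwise) control of $A$ on any ball strictly inside $B(5\ell/4)$.

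\medskip

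\emph{Perturbative expansion.} Let $\gamma_A = \mathbf{1}_{(-\infty, 0]}(\psi H_h(A)\psi)$. Since $0 \le \gamma_A \le 1$, the variational characterization of $[\,\cdot\,]_-$ gives
\[
\tr[\psi H_h(A)\psi]_- = \tr \psi H_h(A) \psi \cdot \gamma_A \ge \tr[\psi H_h(0)\psi]_- + \tr\bigl( (H_h(A) - H_h(0))\,\psi \gamma_A \psi\bigr),
\]
and, since
\[
H_h(A) - H_h(0) = A \cdot (-ih\nabla) + (-ih\nabla) \cdot A + A^2 + h\,\bsigma \cdot B,
\]
the correction splits into paramagnetic ($\int A \cdot J$), diamagnetic ($\int A^2 \rho$), and Zeeman ($h\int B \cdot m$) contributions involving the density, current, and spin density of $\psi\gamma_A\psi$. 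The Maxwell--Pauli Euler--Lagrange equation $(\kappa h^2)^{-1} \nabla \times B = J_A$ for the full Pauli current, combined with integration by parts in the Coulomb gauge, converts $\int A \cdot J_A$ into a multiple of the field energy, producing a near-cancellation that reduces the remaining work to controlling expressions that are quadratic in $A$ or $B$. Applying Theorem~\ref{thm:SS} to $\tr[\psi H_h(0)\psi]_-$ then gives the Weyl asymptotics up to the error $O(h^{-3+6/5})$ (properly rescaled by $f$ and $\ell$).

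\medskip

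\emph{Main obstacle.} The principal difficulty is proving quantitatively that the surviving quadratic terms, added to the field energy, are $o(h^{-2+\e}f^{4-\e}\ell^{2-\e})$. The naive pointwise bound on $\|A\|_\infty$ inferred from the initial $L^2$ estimate on $\nabla A$ is not sharp enough to control the diamagnetic term $\int A^2 \rho$ directly. A bootstrap is therefore needed: exploit the vanishing of the unperturbed current $J_0$ and spin density $m_0$ (by time-reversal and spin symmetry of $H_h(0)$) to extract an extra factor of $A$ or $B$ from $J_A - J_0$ and $m_A - m_0$ via first-order perturbation theory, substitute into the Maxwell equation to sharpen the estimate on $B$, and iterate. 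The hypothesis $V \ge c_0 f^2$ is essential here: it ensures the classical energy shell $\{|p|^2 = V(q)\}$ is nondegenerate throughout $\supp \psi$, so that $\rho_0$, the current--current susceptibility, and the Pauli spin susceptibility are smooth and uniformly bounded on $\supp \psi$, avoiding the regularity loss one would suffer near classical turning points. The exponent $\e$ in the final bound is ultimately dictated by the exponent $6/5$ in Theorem~\ref{thm:SS} and by the number of bootstrap iterations that can be made to close without losing the positivity margin $c_0$.
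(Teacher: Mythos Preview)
Your upper bound is fine and matches the paper. The lower bound sketch, however, has real gaps that are not merely matters of detail.

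First, the elliptic regularity step does not do what you claim. From $\nabla\cdot A=0$ you get $-\Delta A=\nabla\times B$, but with only $B\in L^2$ this returns $A\in H^1$ locally, which is exactly the information you started from. There is no mechanism here to gain higher Sobolev or pointwise control of $A$; for that you would need regularity of $B$, which is precisely what is unknown for a self-generated field. Second, invoking the Euler--Lagrange equation $(\kappa h^2)^{-1}\nabla\times B=J_A$ is not justified: the problem is an infimum over $A$, no minimizer is shown to exist, and even if one did, the current appearing in the EL equation is that of the \emph{global} operator $T_h(A)-V$, not of the localized $\psi H_h(A)\psi$ whose negative eigenvalues you are summing. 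These are different objects and the substitution $\int A\cdot J_A \to \text{field energy}$ does not go through. Third, the bootstrap you describe (extract an extra $A$ from $J_A-J_0$ by perturbation theory, feed back into Maxwell) would itself require enough smoothness of $A$ to set up resolvent expansions, so it is circular given the failure of the regularity step.

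The paper's argument is structurally different and avoids all of these issues. It never uses the EL equation. Instead it introduces three length scales $L_0=h^{1/2-\e_0}\gg L_1=h^{1/2+\e_0}\gg r=h^{1/2+\rho}$. After localizing to $L_0$-balls, it further localizes to $L_1$-balls and on that (sub-semiclassical) scale proves directly that $A$ can be removed entirely (Theorem~\ref{thm:2}), via a dyadic decomposition of momentum space around the Fermi surface $\{|p|^2=V(u)\}$. This is where the hypothesis $V\ge c_0f^2$ enters: it pins the Fermi surface at a definite energy so the dyadic shells are well separated. The $L_1$-scale analysis yields the sharp apriori bound $h^{-2}\int|\nabla\otimes A|^2\le Ch^{1-3\e_0}\Lambda_{L_0}$ (much stronger than what Lieb--Thirring alone gives) together with control of the kinetic energy distribution of the trial state across the dyadic shells (Theorem~\ref{thm:5.14}). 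These apriori bounds are then used to replace $A$ by a mollified $A_r$ at the cost of an acceptable error (Lemma~\ref{lm:smoothA}); only \emph{after} mollification does one have enough derivatives of $A_r$ to invoke a precise semiclassical result (Theorem~\ref{thm:ivrii}, due to Ivrii) on the $L_0$-scale. The regularity of $A$ is thus manufactured by convolution, not by elliptic estimates, and the smallness needed to absorb the resulting powers of $L_0/r$ comes from the apriori field-energy bound, not from any EL identity.
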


Note that this theorem is formulated in a scale invariant way, so it will be
sufficient to prove it for $f=\ell =1$.

We first show that the  condition $\nabla \cdot A = 0$
is irrelevant for the statement of Theorem~\ref{thm:sc}, i.e.
one gets the same
result by dropping the condition $\nabla \cdot A = 0$.

\begin{theorem}\label{thm:scDiv}
Suppose that Theorem~\ref{thm:sc} holds, where the infimum is taken over all 
$A$ with $\nabla \cdot A =0$ on $B(5\ell/4)$. Then  Theorem~\ref{thm:sc} 
also holds where the infimum is taken over all $A$.
\end{theorem}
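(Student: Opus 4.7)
The upper bound in Theorem~\ref{thm:sc} (with the infimum over all $A$) is immediate: dropping the divergence constraint enlarges the class of trial vector potentials, so $\inf_A \le {\inf}'_A$, and the upper bound from Theorem~\ref{thm:sc} transfers. For the matching lower bound, my plan is a gauge-transformation argument that converts any $A\in H^1(\bR^3;\bR^3)$ into an admissible $A'$ satisfying $\nabla\cdot A'=0$ on $B(5\ell/4)$, at the price of only a universal multiplicative constant $C_{\rm gauge}\ge 1$ on the field-energy integral. Since the constant $\kappa_0$ in the hypothesis of Theorem~\ref{thm:sc} is a free parameter, shrinking it by $C_{\rm gauge}$ when deducing Theorem~\ref{thm:scMain} is harmless.

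Given $A$, let $\varphi\in H^2(B(3\ell/2))\cap H^1_0(B(3\ell/2))$ solve the Dirichlet problem
\begin{equation*}
 \Delta\varphi = -\nabla\cdot A \text{ in } B(3\ell/2), \qquad \varphi = 0 \text{ on } \pt B(3\ell/2),
\end{equation*}
and pick a smooth cutoff $\chi\in C_0^\infty(B(3\ell/2))$ with $\chi\equiv 1$ on $B(5\ell/4)$ and $|\pt^k\chi|\le C_k\ell^{-k}$. Set $A' := A + \nabla(\chi\varphi)$. Because $\chi\equiv 1$ on $B(5\ell/4)$, $\nabla\cdot A' = \nabla\cdot A + \Delta\varphi$ vanishes there, so $A'$ is admissible for ${\inf}'_A$. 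The multiplication operator $e^{i\chi\varphi/h}$ is a smooth global unitary that commutes with the scalar operator $\psi$ and conjugates $T_h(A)$ into $T_h(A')$; consequently $\tr[\psi H_h(A)\psi]_- = \tr[\psi H_h(A')\psi]_-$.

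The quantitative heart is the bound $\int_{B(2\ell)}|\nabla\otimes A'|^2 \le C_{\rm gauge}\int_{B(2\ell)}|\nabla\otimes A|^2$ with universal $C_{\rm gauge}$. Standard Dirichlet elliptic regularity on $B(3\ell/2)$ (after scale reduction to the unit ball) yields $\|\nabla^2\varphi\|_{L^2}\le C\|\nabla\cdot A\|_{L^2}\le C\|\nabla\otimes A\|_{L^2}$, while the identity $\|\nabla\varphi\|_{L^2}^2=-\int\varphi\,\Delta\varphi$ together with Poincar\'e's inequality (available from $\varphi|_{\pt}=0$) produces $\|\nabla\varphi\|_{L^2}\le C\ell\|\nabla\otimes A\|_{L^2}$ and $\|\varphi\|_{L^2}\le C\ell^2\|\nabla\otimes A\|_{L^2}$. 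Expanding $\nabla\otimes\nabla(\chi\varphi)$ by the product rule, each derivative of $\chi$ contributes a factor $\ell^{-k}$ that is precisely balanced by the compensating power $\ell^k$ in the corresponding bound for $\varphi$, so all length scales cancel and $C_{\rm gauge}$ is universal. Combined with the trace identity this gives, for every $A$,
\begin{equation*}
 \tr[\psi H_h(A')\psi]_- + \tfrac{1}{\kappa h^2}\!\int_{B(2\ell)}|\nabla\otimes A'|^2 \le \tr[\psi H_h(A)\psi]_- + \tfrac{1}{(\kappa/C_{\rm gauge})h^2}\!\int_{B(2\ell)}|\nabla\otimes A|^2.
\end{equation*}
Taking the infimum on the right over all $A$ and noting the admissibility of $A'$ on the left yields ${\inf}'_{C_{\rm gauge}\kappa}\le \inf_A$ at parameter $\kappa$; applying Theorem~\ref{thm:sc} to the left side at $C_{\rm gauge}\kappa$ (which remains below $\kappa_0 f^{-2}\ell^{-1}$ after replacing $\kappa_0$ with $\kappa_0/C_{\rm gauge}$) then delivers the required lower bound on $\inf_A$.

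The main obstacle is verifying the universality of $C_{\rm gauge}$: although each ingredient (Dirichlet $H^2$-regularity, Poincar\'e's inequality, product-rule expansion) is classical, one must ensure via the scale-invariance $\ell\mapsto 1$ that every constant is independent of $A$, $f$, and $V$. The geometric arrangement $B(\ell)\subset B(5\ell/4)\subset B(3\ell/2)\subset B(2\ell)$ was chosen with just enough room between the intermediate ball and both endpoints for the cutoff $\chi$ and its derivatives to yield clean scale-invariant bounds.
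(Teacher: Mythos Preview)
Your argument is correct. Both you and the paper reduce to showing that any $A$ can be replaced by a divergence-free (on $B(5\ell/4)$) vector potential $A'$ that is gauge-equivalent on $\supp\psi$ and satisfies $\int_{B(2\ell)}|\nabla\otimes A'|^2\le C\int_{B(2\ell)}|\nabla\otimes A|^2$ with a universal constant; the freedom in $\kappa_0$ then absorbs $C$. The implementations differ. You solve the Dirichlet problem $\Delta\varphi=-\nabla\cdot A$ and set $A'=A+\nabla(\chi\varphi)$, so the gauge transformation is explicit and the energy bound rests on $H^2$-regularity for the Dirichlet Laplacian together with Poincar\'e. The paper instead works through the curl: after subtracting a constant from $A$ (to enable Poincar\'e), it localizes to $\chi A$, passes to the global Coulomb gauge $A_\chi$ with $\nabla\times A_\chi=\nabla\times(\chi A)$ and $\nabla\cdot A_\chi=0$, and finally sets $A'=\chi A_\chi$. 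Their energy bound uses only Poincar\'e and the identity $\int_{\bR^3}|\nabla\otimes A_\chi|^2=\int_{\bR^3}|\nabla\times A_\chi|^2$ for divergence-free fields, avoiding elliptic regularity altogether. Your route is more direct and self-contained (one boundary-value problem, one explicit gauge function), while the paper's route is lighter on PDE input and exploits the magnetic structure via the curl identity. One small wording point: $\varphi$ is only in $H^2$, so $e^{i\chi\varphi/h}$ is a bounded unitary multiplication operator rather than a ``smooth'' one; this is all that is needed.
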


\begin{proof}
Clearly the unrestricted infimum is smaller than the one with the condition
 $\nabla \cdot A =0$ imposed.
We will prove the opposite inequality, namely that there exists $C>0$ such that
\be\label{willpr}
{ \inf_A}' \Big( \tr [\psi H_h(A) \psi]_- + \frac{1}{C \kappa  h^2}
\int_{B(2\ell)} |\nabla \otimes A|^2\Big)
\leq  \inf_A \Big( \tr [\psi H_h(A) \psi]_- + \frac{1}{\kappa  h^2}
\int_{B(2\ell)} |\nabla \otimes A|^2\Big),
\ee
where $ \inf'_A$ means that we take the restricted infimum over $A$ with
 $\nabla \cdot A =0$ on $B(5\ell/4)$. Since the constant $\kappa$ is arbitrary this implies the result.

For an arbitrary $A \in H^1$, we can add a constant to $A$ in order to get, by a Poincar\'e
 inequality, that
\begin{align}\label{eq:Poin1}
\ell^{-2} \int_{B(2\ell)} A^2 \leq  C \int_{B(2\ell)} |\nabla \otimes A|^2.
\end{align}
Clearly, this additive constant does not change the energy, i.e. on
the right hand side of \eqref{willpr} we can assume that
$A$ satifies \eqref{eq:Poin1}.

Choose a localization function $\chi \in C_0^{\infty}(B(3\ell/2))$, $\chi = 1$ on $B(5\ell/4)$. 
Define $A_\chi$ to satisfy
\begin{align}
\nabla \times A_{\chi}= B_{\chi}:=\nabla \times (\chi A),\qquad \nabla \cdot A_{\chi} = 0.
\end{align}
This system defines $A_{\chi}$ up to an additive constant that we will choose in order 
to be able to apply the Poincar\'e inequality below.
Finally, define a vector field $A'$ by $A':=\chi A_{\chi}$, then clearly $\nabla\cdot A'=0$
 on $B(5\ell/4)$.

Then, 
\begin{align}\label{AA'}
 \tr [\psi H_h(A) \psi]_- =  \tr [\psi H_h(\chi A) \psi]_- = \tr [\psi H_h( A_\chi) \psi]_- =
 \tr [\psi H_h(A') \psi]_- ,
\end{align}
where the middle equality follows from the gauge equivalence
of $A_\chi$ and $\chi A$, and the other two follow from the fact that $\chi\equiv 1$
on $\mbox{supp} \,\psi$.
Also, using the Poincar\'e inequality twice,
once for $A$ \eqref{eq:Poin1} and once for $A_\chi$, we have
\begin{align}\label{pointwice}
\int_{B(2\ell)} |\nabla \otimes A'|^2& \leq \int_{B(2\ell)} |\nabla \otimes A_{\chi}|^2+ 
C \ell^{-2}\int_{B(2\ell)} | A_{\chi}|^2 \leq C\int_{B(2\ell)}| \nabla \otimes A_{\chi}|^2 \nonumber \\
&\leq C \int_{{\mathbb R}^3} | \nabla \otimes A_{\chi}|^2 =
C \int_{{\mathbb R}^3} B_{\chi}^2 \leq C \Big(\ell^{-2} \int_{B(2\ell)} A^2 +
 \int_{B(2\ell)}|\nabla \times A|^2\Big)\nonumber \\
&\leq C \int_{B(2\ell)} |\nabla \otimes A|^2.
\end{align}
\end{proof}

\subsection{Multiscaling}

To  prove that Theorem~\ref{thm:scMain} follows from Theorem~\ref{thm:scDiv}
requires to perform a multiscale decomposition around the
sets where $V$ is too small and it violates \eqref{lowerbV}.
We present the setup for multiscaling for general potentials that will also
be applicable for resolution of Coulomb like singularities.

We will assume that the potential has a multiscale structure. Intuitively, this means that
there exist two scaling functions, $f, \ell: \bR^3\to \bR_+$  such
that for any $u\in \bR^3$, within the ball $B_u(\ell(u))$ centered
at $u$ with radius $\ell(u)$, the size of $V$ is of order $f^2(u)$
and $V$ varies on scale $\ell(u)$.  Moreover, we also require
that  the continuous family of balls  $B_u(\ell(u))$
supports a regular partition of unity.
The following lemma states this condition precisely.
This statement was  proved in Theorem 22 of \cite{SS}
with an explicit construction\footnote{Multiscaling was introduced in
  semiclassical problems in \cite{IS} (see also \cite{Sob})}.

We will use the notation $B_x(r)$ for the ball of radius $r$ and 
center at $x$ and if $x=0$, we use $B(r) = B_0(r)$.

\begin{lemma}\label{lem:PartUnityMultScale} Fix a cutoff function $\psi\in C_0^\infty(\bR^3)$
supported in the unit ball $B(1)$ satisfying $\int\psi^2=1$. 
Let $\ell: \bR^3\to (0,1]$ be a $C^1$ function with $\|\nabla\ell\|_\infty<1$.
Let $J(x,u)$ be the Jacobian of the map $u\mapsto (x-u)/\ell(u)$ and we define
$$
   \psi_u(x) = \psi\Big( \frac{x-u}{\ell(u)}\Big) \sqrt{J(x,u)} \ell(u)^{3/2}.
$$
Then, for all $x\in\bR^3$,
\be
   \int_{\bR^3} \psi_u(x)^2 \ell(u)^{-3} \rd u =1,
\label{partun}
\ee
and for all multi-indices $n\in \bN^3$ we have
\be\label{psider}
  \| \pt^n \psi_u\|_\infty \le C_n \ell(u)^{-|n|} , \qquad |n|=n_1+n_2+n_3,
\ee
where $C_n$ depends on the derivatives of $\psi$ but is independent of $u$.
\end{lemma}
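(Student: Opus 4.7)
The plan is to derive both statements from the change of variables $\Phi_x: u \mapsto y := (x-u)/\ell(u)$ at fixed $x$. First I would compute $\pt y_i/\pt u_j = -\delta_{ij}/\ell(u) - (x_i-u_i)\ell(u)^{-2}\pt_j\ell(u)$, so $D\Phi_x(u) = -\ell(u)^{-1}(I+N(x,u))$ with $N(x,u) := \ell(u)^{-1}(x-u)\nabla\ell(u)^T$ a rank-one matrix. Sylvester's identity gives $\det(I+N) = 1 + (x-u)\cdot\nabla\ell(u)/\ell(u)$, which on the set $\{|x-u|\le\ell(u)\}$ lies between $1-\|\nabla\ell\|_\infty$ and $1+\|\nabla\ell\|_\infty$, hence is strictly positive and bounded precisely because $\|\nabla\ell\|_\infty < 1$. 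Thus $J(x,u) = \ell(u)^{-3}\det(I+N(x,u))$ is smooth and positive on the region where the factor $\psi((x-u)/\ell(u))$ does not vanish.

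To justify the change of variables I would show $\Phi_x$ is a bijection from $\Phi_x^{-1}(\overline{B(1)})$ onto $\overline{B(1)}$ by reformulating $y=(x-u)/\ell(u)$ as the fixed-point equation $u = x - y\,\ell(u)$. For each $|y|\le 1$ this is a contraction on $\bR^3$ with Lipschitz constant $|y|\,\|\nabla\ell\|_\infty < 1$, yielding a unique preimage $u(x,y)$ depending smoothly on $(x,y)$. Unfolding the definition gives $\psi_u(x)^2\ell(u)^{-3} = \psi((x-u)/\ell(u))^2\,J(x,u)$, and the standard change of variables formula then yields
\[
\int_{\bR^3}\psi_u(x)^2\ell(u)^{-3}\,\rd u = \int_{\bR^3}\psi(y)^2\,\rd y = 1,
\]
which is \eqref{partun}.

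For the derivative bounds \eqref{psider} the key observation is that differentiation is in $x$ with $u$ held fixed, so $\ell(u)$ and $\nabla\ell(u)$ behave as constants of order one. Each $x$-derivative of $\psi((x-u)/\ell(u))$ produces a factor of $\ell(u)^{-1}$ by the chain rule, and $\psi$ has bounded derivatives of all orders. The remaining factor $\sqrt{J(x,u)}\,\ell(u)^{3/2} = \sqrt{\det(I+N(x,u))}$ is smooth in $x$ on the support; each $\pt_{x_j}$ applied to $(x-u)\cdot\nabla\ell(u)/\ell(u)$ produces $\pt_j\ell(u)/\ell(u)$, bounded by $\|\nabla\ell\|_\infty\,\ell(u)^{-1} \le \ell(u)^{-1}$, and the square root causes no problem because $\det(I+N)$ is bounded away from zero. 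The Leibniz rule then yields $\|\pt^n\psi_u\|_\infty \le C_n\ell(u)^{-|n|}$.

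The only nontrivial technical point is the global bijectivity of $\Phi_x$ on the relevant region and the uniform positivity of $\det(I+N)$, both of which rest directly on the strict inequality $\|\nabla\ell\|_\infty < 1$ via the contraction estimate above; everything else reduces to routine computation and the standard change of variables formula.
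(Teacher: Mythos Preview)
Your proof is correct and complete. The paper does not actually supply its own proof of this lemma; it simply remarks that the statement was proved in Theorem~22 of \cite{SS}. Your argument via the change of variables $u\mapsto (x-u)/\ell(u)$, the matrix determinant lemma to compute $J$, and the Banach fixed-point argument for global injectivity on $\Phi_x^{-1}(\overline{B(1)})$ is exactly the natural route and is essentially what one finds in \cite{SS}. The observation that the $x$-derivatives hit only the linear-in-$x$ factor $1+(x-u)\cdot\nabla\ell(u)/\ell(u)$ (so that only the $C^1$ regularity of $\ell$ is needed, and each derivative produces at worst a factor $\ell(u)^{-1}$) is the right way to get \eqref{psider}.
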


We will require that the potential satisfies
\be
     |\pt^n V(u)| \le C_n f(u)^{2}\ell(u)^{-|n|}
\label{Vbound}
\ee
for all $n \in \bN^3$
uniformly in $u$ in some domain $\Omega\subset \bR^3$.
In applications, $\Omega$ will exclude an $h$-neighborhood of the core 
of the Coulomb potentials.

For brevity, we will often use $\ell_u= \ell(u)$ and $f_u = f(u)$.

\subsection{Proof of   Theorem~\ref{thm:scMain}}

We now show how Theorem~\ref{thm:scMain} follows from Theorem~\ref{thm:sc}.

\begin{proof}[Proof of Theorem~\ref{thm:scMain}]
By Theorem~\ref{thm:scDiv} we may forget about the condition
\eqref{eq:22} in Theorem~\ref{thm:sc} and consider the infimum over all $A\in H^1$.
Moreover, since \eqref{locscMain} is local, we may assume without loss 
of generality that $V$ has compact 
support contained in $B(5\ell/4)$.
After these remarks, Theorem~\ref{thm:scMain} would follow from Theorem~\ref{thm:sc}
once we have shown that the additional condition \eqref{lowerbV} in Theorem~\ref{thm:sc} 
can be removed.

Since the statement of Theorem~\ref{thm:scMain} is scale invariant we may also assume
 that $f=\ell=1$ in the statement of that theorem. We will apply Theorem~\ref{thm:sc}
 and the partition of unity Lemma~\ref{lem:PartUnityMultScale} with a different choice 
of $f$ and $\ell$.

The statement of Theorem~\ref{thm:sc} consists of a lower and an upper
bound on the total quantum energy. Since the proof of the upper bound
(see \eqref{eq:upperSS}) does not use the condition \eqref{lowerbV}, 
the upper bound in Theorem~\ref{thm:scMain} follows immediately,
 so we only need to prove the lower bound.
Define the smooth functions
\be\label{newdeflf}
\ell(x) = f(x)^2: =\frac{1}{K}\big( V(x)^2 + h^{2\al}\big)^{1/2}
\ee
with some positive exponent with  $2/5< \al<1/2$
and recall the notation $\ell_x=\ell(x)$ and $f_x=f(x)$.
Here $K>0$ (independent of $h$, but depending on $V$ and  $h_0$ given in Theorem~\ref{thm:scMain})
 is chosen so large that 
\begin{align}
\label{eq:Boundsell}
\| \nabla \ell \|_{\infty} <1/4, \qquad \ell_u \leq 1/4, \quad u\in B( 2).
\end{align}
By Lemma~\ref{lem:PartUnityMultScale} there is a partition of unity $\psi_u$ associated to $\ell$.
Inserting this partition of unity, 
we have for sufficiently small values of $h$ that
\begin{align}\label{eq:AnotherIMS}
\tr \big[ \psi (T_h(A) &- V) \psi \big]_{-} + \frac{1}{\kappa h^2} \int_{B(2)} |\nabla \otimes A|^2 
\rd x  \\ 
& =\tr \Big[\int_{B(3/2)} \frac{\rd u}{\ell_u^3} \psi\Big(
   \psi_u [ T_h(A) - V]\psi_u - h^2|\nabla\psi_u|^2 \Big)\psi\Big]_{-} +
 \frac{1}{\kappa h^2} \int_{B(2)} | \nabla \otimes A|^2 \cr
 & \ge \tr \Big[\int_{B(3/2)} \frac{\rd u}{\ell_u^3} 
\wt\psi_u [ T_h(A) - V - Ch^2\ell_u^{-2}]\wt\psi_u  \Big]_{-} +
 \frac{1}{\kappa h^2} \int_{B(2)} | \nabla \otimes A|^2\cr
&\geq \int_{u \in B(3/2)} \Big\{
\tr \big[ \wt \psi_u (T_h(A) - V - C h^2 \ell_u^{-2}) \wt \psi_u \big]_{-} + \frac{1}{\kappa' h^2}
 \int_{B_u(2\ell_u)} |\nabla \otimes A|^2 \rd x\Big\} \frac{\rd u}{\ell_u^3} \non
\end{align}
with $\wt \psi_u := \psi_u \psi$ and $\kappa'$ a fixed constant times $\kappa$.
We also used
 $\tr [\int O_u \rd u]_-\ge \int \tr [O_u]_-\rd u$
for any continuous family of operators $O_u$.
In arriving at \eqref{eq:AnotherIMS} we 
reallocated the localization error by using
\be\label{reall}
\frac{\ell_u}{2} \leq \ell_x \leq \frac{3\ell_u}{2},
 \qquad \text{ for all } x \in B_u(2\ell_u)
\ee
which follows from $\| \nabla \ell\|_{\infty}\le 1/4$.
We also reallocated the
magnetic energy by using the estimate
\begin{align}
\int_{u \in B(3/2)} \int_{B_u(2\ell_u)}
 |\nabla \otimes A(x)|^2 \rd x \frac{\rd u}{\ell_u^3}
&\leq
\int_{x \in B(2)} \int_{u \in B_x(4\ell_x)}
 |\nabla \otimes A(x)|^2 (2\ell_x/3)^{-3} \rd u \rd x \nonumber \\
&=
C \int_{x \in B(2)}  |\nabla \otimes A(x)|^2 \rd x,
\end{align}
where we again used \eqref{reall}.

Consider now $u\in B(3/2)$ and
suppose first that $|V(u)| \geq h^{\al}$. Then, if $K$ in the definition of $f, \ell$ 
\eqref{newdeflf}
is chosen sufficiently large, we have by the uniform bounds on the derivatives of
 $V$ that $|V(x)| \geq |V(u)|/2\geq f_u^2/4$ for all $x \in B_u(\ell_u)$. 
So \eqref{lowerbV} is satisfied (with $c_0=1/4$). Using \eqref{psider},  the estimates in \eqref{der} 
are also easily seen to be satisfied for the cutoff function
 $\wt \psi_u$ and for the potential $V + Ch^2 \ell_u^{-2}$
with $\ell=\ell_u$ and $f=f_u$ (here we use that $\al<2/3$ and that $K$ may depend on $h_0$).
 We can 
therefore conclude by Theorem~\ref{thm:scDiv} that for any $A$, 
\begin{align}\label{eq:Vlarge}
&\int_{\{u \in B(3/2):|V(u)|\geq h^{\al}\}} \Big\{
\tr \big[ \wt \psi_u (T_h(A) - V - C h^2 \ell_u^{-2}) \wt \psi_u \big]_{-} + 
\frac{1}{\kappa' h^2} \int_{B_u(2\ell_u)} |\nabla \otimes A|^2 \rd x\Big\} \frac{\rd u}{\ell_u^3}
\non \\
&\geq \int_{\{u \in B(3/2):|V(u)|\geq h^{\al}\}} \Big\{
 \frac{2}{(2\pi h)^{3}}\iint \wt \psi_u(q)^2\big[ p^2 - V(q)- C h^2 \ell_u^{-2}\big]_- 
\rd q \rd p - C  h^{-2+\e} \ell_u^{4-3\e/2}
 \Big\} \frac{\rd u}{\ell_u^3} \nonumber \\
 &\geq 2(2\pi h)^{-3}\int_{\bR^3\times\bR^3} 
\psi(q)^2\big[ p^2 - V(q) \big]_- \rd q \rd p - Ch^{2-2\alpha} h^{-3}
 - C h^{-2+\e} \int_{B(3/2)} \ell_u^{1-3\e/2} \rd u,
\end{align}
where we used \eqref{partun} and  that $h^2 \ell_u^{-2} \leq C h^{2-2\al}$ to get the last inequality.
Since $\ell$ is bounded and $\al<1/2$,  both error terms in \eqref{eq:Vlarge} are acceptable.

For the set of $u$'s where $V(u)$ is small,  i.e., $|V(u)|\le h^{\al}$,
we  use the magnetic Lieb-Thirring inequality.
 We introduce $A'= \chi (A-c)$, where $\chi \in C_0^{\infty}(B_u(2\ell_u))$, $\chi = 1$
 on $B_u(\ell_u)$ and $|\nabla \chi|\leq C/\ell_u$. 
Here $c=\int_{B_u(2\ell_u)} A$.
Then we have, using the Poincar\'e inequality,
\begin{align*}
\int |\nabla \otimes A'|^2  \leq C \int_{B_u(2\ell_u)} \Big( \ell_u^{-2} (A-c)^2 
+|\nabla \otimes A|^2\Big) \leq C'  \int_{B_u(2\ell_u)}|\nabla \otimes A|^2 .
\end{align*}
So we get
\begin{align}
&\tr \big[ \wt \psi_u (T_h(A) - V - C h^2 \ell_u^{-2}) \wt \psi_u \big]_{-} 
+ \frac{1}{\kappa' h^2} \int_{B_u(2\ell_u)} |\nabla \otimes A|^2 \nonumber \\
&\geq
\tr \big[ T_h(A') - (V + C h^2 \ell_u^{-2} ){\bf 1}_{B_u(\ell_u)}) \big]_{-} 
+ \frac{1}{\kappa' h^2} \int_{B_u(2\ell_u)} |\nabla \otimes A|^2  \nonumber \\
&\geq - C\Big\{ h^{-3} \int_{B_u(\ell_u)} \big[ V + C h^2 \ell_u^{-2}\big]_+^{5/2} + 
\Big(\int_{B_u(\ell_u)} \big[V + C h^2 \ell_u^{-2} \big]_+^{4}\Big)^{1/4} 
\Big( h^{-2} \int_{B_u(2\ell_u)} |\nabla \otimes A|^2 \Big)^{3/4} \Big\} \nonumber \\
&\qquad\qquad\qquad+ \frac{1}{\kappa' h^2} \int_{B_u(2\ell_u)} |\nabla \otimes A|^2 \nonumber \\
&\geq 
- C (\kappa')\Big\{ h^{-3} \int_{B_u(\ell_u)} 
\big[V + C h^2 \ell_u^{-2}\big]_+^{5/2}
 +\int_{B_u(\ell_u)} \big[V + C h^2 \ell_u^{-2} \big]_+^{4} \Big\} \non\\
&\geq - C(\kappa') h^{-3+5\al/2} \ell_u^3,
\end{align}
since $|V(u)|\le h^{\al}$ and  $\ell_u\sim h^{\al}$,
so we get $|V(x)|\le |V(u)| + C \ell_u \le Ch^{\al}$ for any $x\in B_u(\ell_u)$.
In summary, we have
\begin{align}
\int_{\{u \in B(3/2):|V(u)|\leq h^{\al}\}} &\Big\{
\tr \big[ \wt \psi_u (T_h(A) - V - C h^2 \ell_u^{-2}) \wt \psi_u \big]_{-} 
+ \frac{1}{\kappa' h^2} \int_{B_u(2\ell_u)} |\nabla \otimes A|^2 
\Big\} \frac{\rd u}{\ell_u^3} \non \\
&\geq -C  
h^{-3+5\al/2}.\label{Vsmalll}
\end{align}
Since $\al>2/5$, this error term is acceptable, in fact $\al=4/9$
optimizes the errors from \eqref{eq:Vlarge} and \eqref{Vsmalll}, 
which shows that $\e < 1/9$. Thus the final $\e$ in Theorem~\ref{thm:scMain}
is the minimum of 1/9 and the $\e$ obtained in Theorem~\ref{thm:sc}.
This finishes the proof of the lower bound and therefore the proof of Theorem~\ref{thm:scMain}.
\end{proof}

\subsection{Proof of Theorem~\ref{thm:UpperSemiclassNew}}\label{sec:frommaintoupper}
We now show how Theorem~\ref{thm:UpperSemiclassNew} follows from Theorem~\ref{thm:scMain}.

\begin{proof}[Proof of Theorem~\ref{thm:UpperSemiclassNew}]
By the argument given in the proof of Theorem~\ref{thm:scDiv}, we may assume that $\nabla \cdot A = 0$
on a neighborhood of $\supp \psi$. Also, by a rescaling we may assume
that $f=\ell =1$.

The lower bound in  \eqref{eq:USNew} follows directly
 from Theorem~\ref{thm:scMain}.
The upper bound in \eqref{eq:USNew} will be constructed using an explicit trial
state. Define
\begin{align}
  \label{eq:18}
  \gamma := \int \chi_u \exp(-ic_u x) \gamma_u \exp(-ic_u x) \chi_u
  \frac{\rd u}{L_0^3}
\end{align}
with a length scale  $L_0\le 1$ that
will be optimized at the end of the proof.
Here $\chi_u(x) = \chi((x-u)/L_0)$, with $\chi \in
C_0^{\infty}(B(1))$ a real, positive function with $\int
\chi^2 = 1$. Clearly $\chi_u$ is supported in the ball $B_u(L_0)$.
We choose the parameter $c_u := \int_{B_u(L_0)} A$. 
Also
\begin{align*}
  \gamma_u = {\bf 1}_{(-\infty,0]} \big( \chi_u \psi H_h(A=0) \psi \chi_u \big).
\end{align*}
Notice that $\gamma_u$ is real and acts like a scalar in spinor
space. Therefore the contribution of the cross terms
$[\bsigma\cdot (-ih\nabla)][\bsigma \cdot A] +[\bsigma \cdot A] [\bsigma\cdot (-ih\nabla)]$
on $\psi\chi_u\gamma_u\chi_u\psi$  vanishes and we get
\begin{align*}
  \tr \psi H_h(A) \psi \gamma &= \int \tr[ \chi_u \psi H_h(A-c_u) \psi\chi_u
  \gamma_u] \frac{\rd u}{L_0^3} \\
  &= \int_{B(2)} \Big( \tr [ \chi_u \psi H_h(A=0) \psi \chi_u ]_{-} + \tr[ \chi_u^2
  \psi^2 (A-c_u)^2 \gamma_u]\Big) \frac{\rd u}{L_0^3}. 
\end{align*}
The $\rd u$ integration can be restricted to $B(2)$ by the support
properties of $\psi$ and $\chi_u$ and by $L_0\le 1$.
By a rescaling to a ball of unit size and an application of an optimal
semiclassical result (recalled below as Theorem~\ref{thm:ivrii2}) we
get that
\begin{align}
  \label{eq:23}
    \tr [ \chi_u \psi H_h(A=0) \psi \chi_u ]_{-} \leq 2(2\pi
  h)^{-3}\int_{\bR^3\times\bR^3} \chi^2_u(q)\psi(q)^2\big[ p^2 -
  V(q)\big]_- \rd q \rd p + C (h/L_0)^{-1},
\end{align}
with a constant $C$ independent of $u$.
Performing the integration over $u$ we will therefore get the desired
upper bound if we can estimate the $A^2$ term as well.
For this, we apply H\"{o}lder, Lieb-Thirring and Sobolev inequalities
to get (with $\wt \gamma_u := \chi_u \psi \gamma_u \psi \chi_u$ and
$\wt \rho_u$ the associated density that is supported on $B_u(L_0)$),
\begin{align*}
  \int (A-c_u)^2 \wt \rho_u & \leq \Big( \int_{B_u(L_0)} (A-c_u)^5 \Big)^{2/5} \Big(\int \wt
  \rho_u^{5/3} \Big)^{3/5}\\
& \leq \Big( \int_{B_u(L_0)} (A-c_u)^6 \Big)^{3/10}
\Big( \int_{B_u(L_0)} (A-c_u)^2 \Big)^{1/10}
 \Big(\int \wt
  \rho_u^{5/3} \Big)^{3/5}\\
& \leq \Big( \int_{B_u(L_0)} |\nabla\otimes A|^2 \Big)^{9/10}
\Big( L_0^2\int_{B_u(L_0)}  |\nabla\otimes A|^2 \Big)^{1/10}
 \Big(\int \wt
  \rho_u^{5/3} \Big)^{3/5}\\
&\leq C L_0^{1/5} \| \nabla \otimes A\|_{L^2(B_u(L_0))}^2
\Big( h^{-2} \tr[ D^2 \wt \gamma_u] \Big)^{3/5},
\end{align*}
where we defined $D:= -ih\nabla$ for brevity.
Here we used the choice of $c_u$ and the Poincar\'e inequality in
$B_u(L_0)$ to control $A- c_u$ in $L^2$ by $L_0 \| \nabla \otimes
A\|_{L^2(B_u(L_0))}$.

Since we have the following bound, which we will prove below,
\begin{align}
  \label{eq:19}
  \tr[ D^2 \wt \gamma_u] \leq C h^{-3} L_0^3,
\end{align}
we can estimate
\begin{align}
  \label{eq:20}
  \int_{B(2)}\tr[ \chi_u^2
  \psi^2 (A-c_u)^2 \gamma_u] \frac{\rd u}{L_0^3} &
\leq C L_0^2 h^{-3}
  \int \| \nabla \otimes A\|_{L^2(B_u(L_0))}^2 \frac{\rd u}{L_0^3}
  \nonumber \\
&\leq C L_0^2 h^{-3} \int_{B(2)} | \nabla \otimes A|^2.
\end{align}
Now we choose $L_0$ by optimizing the error terms $CL_0^{-3}(h/L_0)^{-1}$
from \eqref{eq:23} and from \eqref{eq:20}. Taking into account $L_0\le 1$, the result is the choice
\be
  L_0= \min\Big\{ 1, \; h^{1/2} \Big( \int_{B(2)} |\nabla\otimes A|^2\Big)^{-1/4}\Big\}.
\label{L0choice}
\ee
Therefore, we obtain the desired upper bound in \eqref{eq:USNew}.

It only remains to establish \eqref{eq:19}.
Applying \eqref{eq:23} for an upper bound (and using that the
first term on the right  hand side is non-positive) and the non-magnetic
Lieb-Thirring estimate for a lower bound we get
\begin{align}
  \label{eq:24}
C L_0h^{-1} &\geq
\tr [ \chi_u \psi H_h(A=0) \psi \chi_u ]_{-} \nonumber \\
&\geq  
\frac{1}{2} \tr[ D^2 \wt \gamma_u] +
\tr[ \frac{1}{2} D^2 - V 1_{\supp \chi_u}]_{-} \nonumber \\
&\geq \frac{1}{2} \tr[ D^2 \wt \gamma_u]
- C h^{-3} \int_{\supp \chi_u} V^{5/2}.
\end{align}
Since  $L_0^{-1} \le Ch^{-1}$ from \eqref{L0choice} and \eqref{eq:17}, 
this estimate clearly implies \eqref{eq:19}.
This completes the proof of Theorem~\ref{thm:UpperSemiclassNew}.
\end{proof}

\section{Proof of the Semiclassical Theorem~\ref{thm:sc}}\label{sec:sc}

The rest of the paper is devoted to the proof of the main technical result,
Theorem~\ref{thm:sc}.
We can assume that $f=\ell =1$ since \eqref{locsc} is scale invariant.

For the proof of the upper bound, we just set $A=0$, i.e.
\begin{multline}\label{eq:upperSS}
 {\inf_A}' \Big( \tr [\psi H_h(A) \psi]_- + \frac{1}{\kappa
 h^{2}}\int_{B(2)} |\nabla \otimes A|^2\Big) \le  \tr [\psi (D^2 - V) \psi]_-
\\
\le 2(2\pi h)^{-3}\iint \psi(q)^2\big[ p^2 - V(q)\big]_- \rd q \rd p 
+ C  h^{-2+1/5} ,
\end{multline}
where the last estimate is a consequence of Theorem~\ref{thm:SS}.
Actually, the exponent in the error term in the upper bound in \eqref{eq:upperSS} 
can be improved from $-2+1/5$ to $-1$ by appealing to Theorem~\ref{thm:ivrii} below.
 However, we will not need this.

The 
proof of the lower bound in Theorem~\ref{thm:sc} has several steps.
First, in Section \ref{sec:L0loc},
 we localize onto balls of radius $L_0\gg h^{1/2}$
at the expense of an acceptable error of order $O(h^{1+\e})$, in fact
we will choose $L_0:= h^{1/2-\e_0}$ with some small $\e_0$, and
$\e$ will depend on $\e_0$. (Note that here $L_0$ is chosen
differently than in the proof of Theorem~\ref{thm:UpperSemiclassNew}.)
Then we  wish to replace $A$ by a smoothed out version $A_r$ on scale $r:= h^{1/2+\rho}
\ll h^{1/2}$ for some small $\rho>0$
with an error of order $O(h^{1+\e})$, where $\e$ will
depend on $\varrho$. This will eventually be achieved
in Section \ref{sec:smooth} (Theorem~\ref{thm:replace}). In order to do that,
we will need apriori bounds on the magnetic field
and the momentum distribution of the low energy trial states. 
To obtain these apriori bounds, in  Section \ref{sec:remA}   we will introduce a second
localization on an intermediate scale $L_1= h^{1/2+\e_0}$ with $r\ll L_1\ll L_0$,
and we show that on this scale $A$ can be
neglected.  These apriori bounds will 
have a weaker precision of order $O(h^{1-\e})$ since localization
onto a short scale $L_1\ll h^{1/2}$ is expensive, but this will
be sufficient as an input for Section  \ref{sec:smooth}.

Finally we go back to the larger scale $L_0$ and use a semiclassical result
for the operator with the smoothed
vector potential $A_r$. Note that the scale of regularity, $r$, of the
 vector potential is much smaller than $L_0$, so
it is not a straightforward application of standard semiclassical
results with $C^\infty$ data. However, one can keep track of the dependence
of the error terms in the standard semiclassical statements
on the derivatives of the symbol, which in our case will
be powers of $L_0/r$. If $L_0/r$ is not too large, 
this error can be compensated by the smallness of $A$, still
rendering a few derivatives of $A_r$  bounded which is
sufficient for the semiclassical asymptotics.

\subsection{Localization onto balls of size $L_0$}\label{sec:L0loc}

We introduce a partition of unity  on the lengthscale $L_0 = h^{1/2-\e_0}\le 1/4$
with some sufficiently small $\e_0>0$,
i.e. we choose $\phi_*\in C_0^\infty(\bR^3)$, $\int \phi_*^2=1,$ 
$\supp \phi_{*} \subset B(1)$, and define
$$
   \phi_u (x) = \phi_*\Big( \frac{x-u}{L_0} \Big)
$$
then
$$
    \int_{\bR^3} \phi_u^2(x) \frac{\rd u}{L_0^3} \equiv 1.
$$
Inserting this identity into  $\tr [\psi H_h(A) \psi]_-$ and using
IMS localization, we have
\begin{multline}\non
 \tr [\psi H_h(A) \psi]_- = \tr \Bigg[ \int_{B(3/2)}
  \frac{\rd u}{L_0^3} \psi\Big[ \phi_u H_h(A) \phi_u 
- h^2|\nabla \phi_u|^2 \Big]\psi \Bigg]_-  \\
 \ge\tr \Bigg[ \int_{B(3/2)} \frac{\rd u}{L_0^3} \psi \phi_u \big(H_h(A)  
- Ch^2 L_0^{-2} \big)\phi_u\psi \Bigg]_-
\ge
   \int_{B(3/2)} \frac{\rd u}{L_0^3}\tr \Bigg[ \psi \phi_u \big(H_h(A)  
- Ch^2L_0^{-2}\big)\phi_u \psi \Bigg]_- 
\end{multline}
after reallocating the localization error. Notice that the $\rd u$ integration
can be restricted to $B(3/2)$, otherwise $\psi\phi_u =0$.
We can again redefine the potential
$$
   V \to  V^+:= V + Ch^2L_0^{-2}.
$$
The change of the semiclassical term due to this modification,
\begin{multline}
\Bigg|  2(2\pi h)^{-3}\iint \psi(q)^2\big[ p^2 - V(q)\big]_- \rd q \rd p 
 -2(2\pi h)^{-3}\iint \psi(q)^2\big[ p^2 - V^+(q)\big]_- \rd q \rd p 
\Bigg|  \\
 \le C h^{-3}  h^2L_0^{-2} \le Ch^{-2+2\e_0},
\end{multline}
can be incorporated into the error term in \eqref{locsc}.
Therefore to prove the lower bound in \eqref{locsc}, it is sufficient to prove that,
for some $\e>0$, 
\begin{multline}\label{locsc1}
   {\inf_A}' \Big( \int_{B(3/2)} \frac{\rd u}{L_0^3} \tr [\psi\phi_u H_h^+(A)\phi_u \psi]_-
 + \frac{1}{\kappa h^2}\int_{B(2)} |\nabla \otimes A|^2\Big)\\ 
  \ge 
  2(2\pi h)^{-3}\iint \psi(q)^2\big[ p^2 - V^+(q)\big]_- \rd q \rd p 
 -  C  h^{-2+\e},
\end{multline}
where $ H_h^+ := T_h - V^+$. Since $V^+$ satisfies the same bound as $V$ 
(second bound  in \eqref{der} with $f=\ell=1$), we can drop the upper index + and
we will focus on the proof of \eqref{locsc1}.

We reallocate the magnetic energy and consider the infimum over $A$ for each $u$
separately. Reallocation changes the coefficient of the field energy
by a universal constant factor $c>0$  using the inequality
$$
  \int_{B(2)} |\nabla \otimes A|^2
  \ge c\int_{B(3/2)} \frac{\rd u}{L_0^3} \int_{B_u(2L_0)} |\nabla \otimes A|^2.
$$
Therefore, it is sufficient to prove that for each fixed $u\in B(3/2)$ 
and $c>0$ we have
\begin{multline}
 {\inf_A}'  \Bigg(\tr [\psi\phi_u  H_h(A)\phi_u \psi]_- + \frac{c}{\kappa h^2}
\int_{B_u(2L_0)} |\nabla \otimes A|^2\Bigg) \\ \ge 
  2(2\pi h)^{-3}\iint \psi(q)^2\phi_u^2\big[ p^2 -  V(q)\big]_- \rd q
  \rd p 
 -  C  h^{-2+\e} L_0^3,
\label{locsc2}
\end{multline}
and then integrating this inequality w.r.t. $\rd u/L_0^3$ over $u\in B(3/2)$,
 we will obtain \eqref{locsc1}. Recall that $\inf_A'$ denotes infimum over vector potentials
with $\nabla\cdot A= 0$ on $B(5/4)$, in particular they are divergence free in
a neighborhood of the support of $\psi$.

Defining $\phi = \psi \phi_u$,  Theorem \ref{thm:sc} will immediately follow from 
 the following
theorem

\begin{theorem}\label{thm:locsc} Let $L_0= h^{1/2-\e_0}$ with
a sufficiently small $\e_0>0$.
Let  $\phi$
be supported in $B(L_0)$, with $|\pt^n \phi|\le C_n L_0^{-|n|}$. 
Let the potential $V$ satisfy 
\be 
|\pt^n V|\le C_n \qquad \mbox{on}\quad \ov B(L_0), \qquad \mbox{and}\qquad
 V(0) \ge c_0
\ee
with some positive constants $C_n$ and $c_0$. 
Then for any
$\kappa>0$ we have
\begin{multline}
 {\inf_A}'  \Bigg(\tr [\phi  H_h(A)\phi]_- + \frac{1}{\kappa 
 h^2}\int_{B(2L_0)} |\nabla \otimes A|^2\Bigg) \\ \ge 
  2(2\pi h)^{-3}\iint \phi(q)^2\big[ p^2 -  V(q)\big]_- \rd q \rd p 
 -  Ch^{1+\e_0}\Lambda_{L_0}  
\label{locsc3}
\end{multline}
with a constant $C$ depending on $\kappa$, $C_n$ and $c_0$,
where we recall $\Lambda_{L_0}= h^{-3}L_0^3$.
Here $\inf_A'$ is taken for all vector potentials satisfying $\nabla\cdot A=0$ on
$B(5L_0/4)$.
\end{theorem}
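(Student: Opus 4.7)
The plan is the three-scale localization strategy sketched in the paragraph preceding the theorem. Let $A$ be a near-minimizer with $\nabla\cdot A=0$ on $B(5L_0/4)$ and set $\gamma:=\mathbf{1}_{(-\infty,0]}(\phi H_h(A)\phi)$. The idea is to (i) mollify $A$ at a short scale $r=h^{1/2+\varrho}$ for some small $\varrho>0$, producing a smooth $A_r$ whose $C^{n_0}$-norm on $B(L_0)$ is controlled by $\|\nabla\otimes A\|_{L^2(B(2L_0))}$ times negative powers of $r$; (ii) apply a semiclassical Weyl expansion to $\phi H_h(A_r)\phi$, whose leading term is the non-magnetic phase-space integral on the right of \eqref{locsc3}; and (iii) absorb the passage $A\mapsto A_r$ into the error budget $O(h^{1+\e_0}\Lambda_{L_0})$. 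The magnetic subleading correction arising in step (ii) is of size $\int|B_r|^2 \le \int|\nabla\times A|^2$, which by the field-energy term in the infimum is $\lesssim \kappa h^2\Lambda_{L_0}$ (the $A\equiv 0$ trial already shows that the total energy is bounded by $C\Lambda_{L_0}$).

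Expanding $H_h(A)-H_h(A_r)$ in step (iii) produces terms linear and quadratic in $A-A_r$, together with a Pauli cross-term $h\,\bsigma\cdot(B-B_r)$. Bounding their trace against $\gamma$ via H\"older and the magnetic Lieb--Thirring inequality of Theorem~\ref{thm:lls} requires apriori control on $\tr\gamma\lesssim\Lambda_{L_0}$ and $\tr D^2\gamma\lesssim h^2\Lambda_{L_0}$. I would obtain these through a further IMS localization onto balls of radius $L_1=h^{1/2+\e_0}$: on each such ball, after subtracting the mean $c_v=|B_v(L_1)|^{-1}\int_{B_v(L_1)}\! A$, the Poincar\'e inequality makes $A-c_v$ a small form perturbation of $D$, so the magnetic problem is effectively non-magnetic on that scale, and Theorem~\ref{thm:lls} (or direct comparison with $D^2-V$) delivers the required density estimates. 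The resulting precision is only $O(h^{1-\e})$, weaker than the final target $O(h^{1+\e_0})$, but adequate because these bounds only feed into the small perturbations of step (iii), never into the leading Weyl constant.

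For step (ii), after rescaling $B(L_0)\to B(1)$ one applies Theorem~\ref{thm:SS} -- or the straightforward magnetic generalization that keeps track of the $C^{d+4}$-norm of $A_r$ -- to $\phi H_h(A_r)\phi$. The main obstacle will be the bookkeeping that balances the three scales $r\ll L_1\ll L_0$: the derivative losses $(L_0/r)^{n_0}$ from the smoothed Weyl expansion push $r$ towards $L_0$, while the cross-terms in step (iii) push $r$ towards $0$. The Pauli spin term $h\,\bsigma\cdot(B-B_r)$ is the tightest constraint, because $B-B_r$ carries one more derivative of $A-A_r$ than the other cross-terms; handling it requires the apriori bound on $\tr D^2\gamma$ combined with a resolvent-style expansion that trades one factor of $D$ for the magnetic commutator. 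Once this is done and $\varrho,\e_0$ are chosen small enough that all three scale errors become simultaneously $o(h^{1+\e_0}\Lambda_{L_0})$, the required lower bound \eqref{locsc3} follows.
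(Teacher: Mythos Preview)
Your three-scale architecture ($r\ll L_1\ll L_0$, with $L_1$ feeding apriori bounds and the final semiclassics applied after rescaling $B(L_0)\to B(1)$) is exactly the paper's route. Two of your steps, however, contain gaps that would make the argument fail as written.

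\textbf{The semiclassical input.} In step~(ii) you invoke Theorem~\ref{thm:SS}, whose error is $O(h_{\mathrm{new}}^{-d+6/5})$. After rescaling one has $h_{\mathrm{new}}=h/L_0=h^{1/2+\e_0}$, so the relative error is only $h_{\mathrm{new}}^{6/5}=h^{3/5+6\e_0/5}$, far short of the required $h^{1+\e_0}$. A two-term semiclassical result with relative error $h_{\mathrm{new}}^{2}$ is needed; the paper uses Ivrii's sharp Theorem~\ref{thm:ivrii}, which for the Pauli operator with bounded $C^K$ data yields error $O(h_{\mathrm{new}}^{-1})=O(h^{1+2\e_0}\Lambda_{L_0})$. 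A ``straightforward magnetic generalization'' of Solovej--Spitzer with the same $6/5$ exponent would not close the estimate. (Incidentally, the leading Weyl term of Theorem~\ref{thm:ivrii} is already independent of $A_r$ after the shift $p\mapsto p+A_r(q)$, so there is no separate ``magnetic subleading correction of size $\int|B_r|^2$'' to track.)

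\textbf{The current term in $A\mapsto A_r$.} The linear piece $\tr D(A-A_r)\phi\gamma\phi$ cannot be bounded to precision $h^{1+\e_0}\Lambda_{L_0}$ by H\"older and Lieb--Thirring alone. The paper's mechanism is different: one first subtracts the contribution of the \emph{free} projection $P=\mathbf{1}_{(-\infty,0]}(D^2-V_u)$, which vanishes identically by reality of $P$, and then performs a dyadic decomposition $\{F_i\}$ in momentum space around the non-magnetic Fermi surface. The apriori information that drives this is not just $\tr\gamma\lesssim\Lambda_{L_0}$ and $\tr D^2\phi\gamma\phi\lesssim\Lambda_{L_0}$ (not $h^2\Lambda_{L_0}$ as you wrote), but the shell-resolved kinetic bounds $\mathcal T_u$ of Theorem~\ref{thm:5.14}, which quantify how close $\gamma$ is to $P$ on each momentum annulus. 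These bounds are extracted from Proposition~\ref{prop:KineticBoundsScaled} by comparing with the upper bound; combined with $\|A-A_r\|_2\lesssim r\|\nabla\otimes A\|_{L^2(B(2L_0))}$ they buy the missing factor of $h$. The Pauli term $h\sigma\cdot(B-B_r)$ is in fact not the tightest constraint and is handled rather directly; the delicate piece is the $DA$ current, and no ``resolvent-style expansion'' enters.
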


In the next subsections of Section \ref{sec:sc}
we will prove Theorem \ref{thm:locsc}.
We first smooth out
the magnetic vector potential, $A$, on a scale $r\ll h^{1/2}$.
Theorem  \ref{thm:replace} in Section~\ref{sec:smooth}  states that
the error of this replacement is negligible.
This is the main technical result and it will be proven
in Section \ref{sec:repl}.
Once $A$ is replaced with a smoothed version $A_r$,
and we estimate the size of its derivatives,
 we can apply a  semiclassical result (Theorem~\ref{thm:ivrii}) 
 to evaluate the r.h.s. of \eqref{AtoAr}.
Combining Theorem  \ref{thm:replace} and Theorem \ref{thm:ivrii}
will then yield Theorem \ref{thm:locsc}.

\subsection{Replacing $A$ with a smooth version on $L_0$-box}\label{sec:smooth}

To prove  Theorem \ref{thm:locsc}, we first smooth out
the magnetic potential on a scale $r=h^{1/2+\varrho}\ll h^{1/2}$.
Using the lower bound from Theorem~\ref{thm:SS},
it is sufficient to consider magnetic vector potentials $A$ such that
$$
 \cE(A):=\tr [\phi  H_h(A)\phi]_- + \frac{1}{ \kappa 
 h^2}\int_{B(2L_0)} |\nabla \otimes A|^2 \le \cE(0).
$$
We can assume that $A$ has zero average, i.e.
\begin{align}\label{eq:gaugezero}
    \int_{B(2L_0)} A = 0,
\end{align}
since $\cE(A)=\cE(A-c)$ for any constant shift $c\in \bR^3$
in the vector potential. 
To see this, note that for any  cutoff function $\psi$, 
by the variational principle,
\begin{align}\label{gaugeinv}
  \tr [\psi (T_h(A) - V)\psi]_- & = \inf \big\{ \tr \psi (T_h(A) - V)\psi\gamma \; :
 \; 0\le \gamma\le 1  \big\} \non \\
& = \inf \big\{ \tr \psi (T_h(A-c) - V)\psi e^{-ic\cdot x/h}\gamma e^{ic\cdot x/h} \; :
 \; 0\le \gamma\le 1  \big\} \non \\
& = \inf \big\{ \tr \psi (T_h(A-c) - V)\psi \gamma  \; :
 \; 0\le \gamma\le 1  \big\}\non\\  & =  \tr [\psi (T_h(A-c) - V)\psi]_-.
\end{align}
In fact, $\tr [\psi (T_h(A) - V)\psi]_-$ is invariant under
any gauge transformation, $A\to A-\nabla \varphi$, $\varphi:\bR^3\to \bR$,
but $\int_{B(2L_0)} |\nabla \otimes A|^2$ is not.

Now we state the main result of this section:

\begin{theorem}\label{thm:replace}
Let $\kappa >0$ be given.
Let $L_0 = h^{1/2 - \e_0}$ for some sufficiently small $\e_0 >0$. Let $\phi$
with $\supp \phi \subset B(L_0)$ with $|\pt^n \phi|\le C_n L_0^{-|n|}$.
Assume that the potential $V$ satisfies
\begin{align}
\label{eq:lowerboundV}
|\pt^n V|\le C_n \quad \mbox{ on $\ov B(L_0)$} \qquad \mbox{and} \quad V(0)\ge c_0
\end{align}
with some positive constants $C_n$ and $c_0$.
Let $A$ be a vector potential such that $\cE(A)\le \cE(0)$ and
 satisfying \eqref{eq:gaugezero} and $\nabla\cdot A=0$ on $B(5L_0/4)$.
Then  for $\al:= 1-3\e_0$ 
we have
\begin{align}
  \label{eq:1}
  h^{-2} \int_{B(2L_0)} |\nabla \otimes A|^2  \leq Ch^\al  \Lambda_{L_0}\,.
\end{align}
Moreover, for $\rho>0$ sufficiently small there exists $\e_0>0$
 (in the definition of $L_0$) such that with $r := h^{1/2 + \rho}$ we have
\be\label{AtoAr}
\tr\big( \phi [T_h(A) - V ] \phi )_{-} + \frac{1}{\kappa h^2} \int_{B(2L_0)} 
| \nabla \otimes A|^2 
\geq
\tr\big(\phi   [T_h(A_{r}) - V ] \phi \big)_{-} - Ch^{1+\e_0} \Lambda_{L_0}.
\ee
Here $A_{r} = A * \chi_{r}$ is a smoothed out version of $A$ on the length scale $r$
and the constants in the estimates depend on 
the fixed unscaled cutoff function $\chi$, on $\kappa$, $C_n$ and $c_0$
in \eqref{eq:lowerboundV}.
\end{theorem}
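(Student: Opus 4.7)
The plan has two distinct parts, matching the two estimates in the theorem. For the apriori bound \eqref{eq:1}, I would combine the hypothesis $\cE(A)\le\cE(0)$ with a lower bound on $\tr[\phi H_h(A)\phi]_-$ obtained by IMS-localizing onto balls of the intermediate scale $L_1=h^{1/2+\e_0}$ (as announced for Section~\ref{sec:remA}): on each $L_1$-ball, gauge invariance as in \eqref{gaugeinv} lets one subtract the local average of $A$, the remainder is controlled through Poincar\'e by $L_1\|\nabla\otimes A\|_{L^2}$, and the resulting perturbation is absorbed into a small multiple of the kinetic energy $D^2$. Paired with the upper bound $\cE(0)\le 2(2\pi h)^{-3}\iint\phi^2[p^2-V]_-\,\rd p\rd q + o(\Lambda_{L_0})$, which after rescaling to the unit ball is Theorem~\ref{thm:SS}, the inequality $\cE(A)\le\cE(0)$ rearranges into $h^{-2}\int_{B(2L_0)}|\nabla\otimes A|^2\le Ch^{1-3\e_0}\Lambda_{L_0}$, establishing \eqref{eq:1}.

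For the replacement estimate \eqref{AtoAr}, the starting point is the operator identity
\[
T_h(A)-T_h(A_r)= D\cdot(A-A_r)+(A-A_r)\cdot D+(|A|^2-|A_r|^2)+h\,\bsigma\cdot(B-B_r),
\]
where the last term is present only in the Pauli case and $B=\nabla\times A$. Interpolating $A_t=A_r+t(A-A_r)$ and setting $\gamma_t=\mathbf{1}_{(-\infty,0]}(\phi H_h(A_t)\phi)$, the DuHamel/Hellmann--Feynman identity gives
\[
\tr[\phi H_h(A)\phi]_- - \tr[\phi H_h(A_r)\phi]_- = \int_0^1\tr\bigl(\phi\,(\pt_t T_h(A_t))\,\phi\,\gamma_t\bigr)\,\rd t,
\]
so it suffices to bound each type of integrand by $Ch^{1+\e_0}\Lambda_{L_0}$ uniformly in $t$. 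Since $A_r=A\ast\chi_r$, Young's inequality yields the smoothing estimates $\|A-A_r\|_{L^2}\le Cr\|\nabla\otimes A\|_{L^2}$ and $\|B-B_r\|_{L^2}\le Cr\|\nabla\otimes B\|_{L^2}$, and with $r=h^{1/2+\rho}$ the apriori bound \eqref{eq:1} makes $\|A-A_r\|_{L^2}$ quantitatively small.

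The linear cross term $D\cdot(A-A_r)+(A-A_r)\cdot D$ would then be controlled by Cauchy--Schwarz and the kinetic-energy bound $\tr[\phi D^2\phi\,\gamma_t]\le C\Lambda_{L_0}$ for the quasi-ground-state density matrix, which itself follows from Theorem~\ref{thm:lls} together with the field-energy apriori estimate. The quadratic piece $|A|^2-|A_r|^2$ is handled via H\"older combined with $\|A\|_{L^6(B(2L_0))}\le C\|\nabla\otimes A\|_{L^2}$, which becomes effective thanks to the gauge choice \eqref{eq:gaugezero} enabling Poincar\'e. These terms are structurally analogous to the density-current contributions that already appeared in the proof of Theorem~\ref{thm:UpperSemiclassNew}.

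The main obstacle I foresee is the Pauli spin term $h\,\bsigma\cdot(B-B_r)$: it is only first order in $h$, yet $B-B_r$ is a first derivative of the already rough difference $A-A_r$, so a bare $L^2$-bound loses a derivative relative to what \eqref{eq:1} furnishes. The fix I would attempt is to integrate once more by parts, rewriting $\bsigma\cdot B$ through the commutator identity $\bsigma\cdot B=\tfrac{i}{h}[\bsigma\cdot D,\bsigma\cdot A]-\ldots$, trading the lost derivative for an extra factor of $h$; after this the apriori bound on $\int|\nabla\otimes A|^2$ suffices. The final accounting requires the small parameters $\rho$ (defining $r$), $\e_0$ (defining $L_0$ and $L_1$), and $\al=1-3\e_0$ to be tuned so that every error of the form $r^a h^b\Lambda_{L_0}^c$ beats $h^{1+\e_0}\Lambda_{L_0}$; this tuning (essentially $\e_0\ll\rho$) is what ultimately fixes the universal exponent $\e$ appearing in Theorem~\ref{thm:scMain}.
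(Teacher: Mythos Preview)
Your outline for \eqref{eq:1} is broadly in the right spirit, but the phrase ``the resulting perturbation is absorbed into a small multiple of the kinetic energy $D^2$'' hides the actual work: on each $L_1$-ball the cross term $DA+AD$ cannot be controlled by a naive Cauchy--Schwarz against $D^2$ with the required precision. The paper carries this out via a dyadic decomposition in momentum around the free Fermi surface (the operators $F_i,\wt F_i$) together with the cancellation $\tr[(DA+AD)\eta P\eta]=0$ for the free spectral projection $P$; see Theorem~\ref{thm:2} and Proposition~\ref{prop:KineticBoundsScaled}. This same machinery, not a bare Poincar\'e absorption, is what yields Corollary~\ref{cor:BoundMagn} and hence \eqref{eq:1}.

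For \eqref{AtoAr} there is a genuine quantitative gap. Your plan is to bound the current term $\tr[D(A-A_r)\phi\gamma_t\phi]$ by Cauchy--Schwarz, pairing $(\tr D^2\phi\gamma_t\phi)^{1/2}$ with $(\tr(A-A_r)^2\phi\gamma_t\phi)^{1/2}$. Carrying this through with the available inputs (Lemma~\ref{lm:gamma0} for the kinetic factor, H\"older/Lieb--Thirring and Lemma~\ref{lem:AminusAl2} for the $(A-A_r)^2$ factor, and the apriori bound \eqref{eq:1}) yields at best an error of order $h^{11/20+\rho/5-O(\e_0)}\Lambda_{L_0}$, which would force $\rho\gtrsim 9/4$; but $\rho$ must be small (certainly $<1/2$) for the subsequent semiclassical step. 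The naive bound is therefore off by roughly $h^{-1/2}$ and cannot be repaired by tuning exponents.

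The paper circumvents this by \emph{not} estimating the full current but only its fluctuation around the free projection. One writes (see \eqref{eq:5122}) $\tr[D(A-A_r)\eta_u\gamma_0\eta_u]=\tr[D(A-A_r)\eta_u(\gamma_0-P)\eta_u]$, using that $P$ is real and scalar in spin so that the $P$-contribution vanishes identically (equation \eqref{reality}). The smallness then comes from the product of two small quantities: $A-A_r$ \emph{and} $\gamma_0-P$. The latter is measured through the refined kinetic quantities ${\mathcal T}_u$ of Theorem~\ref{thm:5.14}, themselves a byproduct of the apriori analysis. With this structure and the dyadic energy decomposition (Lemmas~\ref{lem:5.22}--\ref{lem:5.23}) one reaches $O(h^{1+\e_0}\Lambda_{L_0})$. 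Note also that no Hellmann--Feynman interpolation is used: the paper works directly with $\gamma_0={\bf 1}_{(-\infty,0)}(\phi H_h(A)\phi)$ and the variational principle $\tr[\phi H_h(A_r)\phi\,\gamma_0]\ge\tr[\phi H_h(A_r)\phi]_-$. Finally, the spin term $h\bsigma\cdot(B-B_r)$ is \emph{not} the bottleneck you anticipate; once the same $(\gamma_0-P)$ cancellation and dyadic framework are in place it is handled as in \eqref{Bterm} without any commutator integration by parts. The linear current term $D(A-A_r)$ is the real obstacle, and your proposal does not yet contain the mechanism that controls it.
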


We remark that this theorem involves only the
potential in $B(L_0)$. However, under the conditions \eqref{eq:lowerboundV}
one can extend $V$ to $\ov B(2L_0)$ with similar bounds on the derivative.
In the sequel we will thus assume that $V$ is defined in $B(2L_0)$ with
\be\label{eq:lowerboundVuj}
|\pt^n V|\le C_n \quad \mbox{ on $\ov B(2L_0)$} \qquad \mbox{and} \quad V(0)\ge c_0,
\ee
for some constants $C_n$ and $c_0$ that is determined by, but may differ from, 
the constants in \eqref{eq:lowerboundV} and with a slight abuse
of notations we will continue to use the
same letters.

Theorem~\ref{thm:replace} will be proved in Section~\ref{sec:repl} below.

\subsection{A precise semiclassical result}\label{sec:ivrii}

We state the following simplified version of \cite[Theorem~4.5.2]{I}
(see also \cite[Theorem~5.2.2]{I2}) for reference. Recall that $D=-ih\nabla$.

\begin{theorem}\label{thm:ivrii}
Suppose that $\phi$ is a localization function with
\begin{align}
  \label{eq:6}
  \supp \phi \subset B(1),
\end{align}
and $H = (D+A)^2 - V$ is a self-adjoint magnetic
Schr\"{o}dinger operator acting on $L^2(\bR^3)$ with
\begin{align}
  \label{eq:7}
  |\partial^{n} V| \leq c,\qquad |\partial^{n} A |\leq
  c,\qquad 
\text{ and } \quad |\partial^{n} \phi| \leq c
\end{align} 
for all multi-indices $n \in \bN^3$ with $|n|\leq K$.
Then, for all $h \in (0,1]$,
\begin{align}
  \label{eq:8}
  \Big|\tr_{L^2(\bR^3)}
 \phi^2 \big[(D+A)^2 - V \big]_{-} - (2\pi h)^{-3}\iint \phi^2(q) \big[ (p+A(q))^2 -
  V(q)\big]_{-} \,\rd q \rd p\Big| \leq C h^{-1}. 
\end{align}
Here $K$ is a universal constant and $C$ only depends on the constant $c$
in \eqref{eq:7}.

The same result holds for the Pauli operator $T_h(A)= [\bsigma\cdot (D+A)]^2$ as well
\begin{align}
  \label{eq:8Pauli}
  \Big|\tr
 \phi^2 \big[ T_h(A) - V \big]_{-} - 2 (2\pi h)^{-3}\iint \phi^2(q) \big[ (p+A(q))^2 -
  V(q)\big]_{-} \,\rd q \rd p\Big| \leq C h^{-1}. 
\end{align}
\end{theorem}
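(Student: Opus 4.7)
My plan is to treat this as a standard sharp semiclassical Weyl law for a smooth $h$-pseudodifferential operator, using the Helffer--Sjöstrand functional calculus together with symbolic composition. The assumption \eqref{eq:7} with $K$ sufficiently large gives uniformly bounded derivatives of the full Weyl symbol $\sigma(q,p)=(p+A(q))^2-V(q)$ on the support of $\phi$; away from that support the localization $\phi^2$ kills the contribution. For large $|p|$ the symbol is elliptic, so only a bounded phase-space region contributes, and everything reduces to semiclassical calculus with smooth, compactly controlled symbols.

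The main obstacle is the non-smoothness of $x\mapsto x_-$ at the origin, which prevents a direct symbolic expansion of $f(H)$. The standard way around this is the Helffer--Sjöstrand representation
\begin{equation*}
f(H)=-\frac{1}{\pi}\int_{\bC}\bar\partial\wt f(z)(H-z)^{-1}\,\rd z\wedge\rd\bar z,
\end{equation*}
where $\wt f$ is an almost analytic extension of a smooth approximation $f_\eta$ of $x_-$ that is $C^\infty$ outside an $\eta$-neighbourhood of $0$. Choosing $\eta$ to be a suitable power of $h$ and using the uniform a priori bound $\|(H-z)^{-1}\|\le |\Im z|^{-1}$ together with resolvent Lieb--Thirring-type estimates near the threshold controls the error from replacing $x_-$ by $f_\eta$. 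The bulk contribution is then computed by iterating the Moyal product: the Weyl symbol of $(H-z)^{-1}$ agrees with $(\sigma(q,p)-z)^{-1}$ modulo $h^2$ times rational functions of $\sigma-z$ with polynomial denominators, and similarly for its powers.

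After taking the trace via $\tr\,\mathrm{Op}^{\mathrm{W}}(a)=(2\pi h)^{-3}\int a$, the leading term becomes $(2\pi h)^{-3}\int\phi^2(q)f_\eta(\sigma(q,p))\,\rd q\rd p$, which is the claimed semiclassical expression. The key structural point is that for a self-adjoint operator Weyl-quantized from a real symbol, the odd powers of $h$ in the Moyal expansion cancel by the standard parity argument; consequently the first correction to $f(H)$ lives at order $h^2$ relative to the leading density, which after the $h^{-3}$ trace factor yields precisely the $O(h^{-1})$ remainder claimed in \eqref{eq:8}. The threshold contribution is estimated separately using the smoothness of $\sigma$ near $\{\sigma=0\}$ and a Tauberian argument applied to the smoothed spectral density, giving an error of the same order.

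For the Pauli statement \eqref{eq:8Pauli} the only new ingredient is the identity $T_h(A)=(D+A)^2+h\,\bsigma\cdot B$ on each spinor component, where $B=\nabla\times A$ is bounded by assumption \eqref{eq:7}. Treating $h\,\bsigma\cdot B$ as a subprincipal perturbation, tracing over the $\bC^2$ spin fiber gives the factor $2$ in front of the phase-space integral, while the spin term $h\,\bsigma\cdot B$ is traceless on $\bC^2$ and hence contributes only at the same $O(h^{-1})$ level as the Schrödinger remainder. The expected hard points are tracking the threshold region $\sigma\approx 0$ carefully enough to match the $h^{-1}$ error, and verifying that the number of derivatives $K$ of the symbol used in the Moyal expansion really is universal; both are handled by the standard semiclassical machinery once the smooth reduction above has been set up.
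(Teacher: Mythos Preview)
The paper does not prove this theorem: it is quoted verbatim as a simplified version of \cite[Theorem~4.5.2]{I} (see also \cite[Theorem~5.2.2]{I2}) and used as a black box. The only argument the paper supplies is the remark that for the Pauli case one writes $T_h(A)=(D+A)^2+h\,\bsigma\cdot B$ and treats $h\,\bsigma\cdot B$ as a subprincipal matrix-valued perturbation, so that Ivrii's analysis applies and the spin trace produces the factor $2$. Your treatment of the Pauli case is exactly this, so on that point you agree with the paper.

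Your Schr\"odinger sketch, by contrast, is an attempt at an independent proof via Helffer--Sj\"ostrand and Moyal calculus. The outline is standard, but the place where the argument actually bites is not the parity cancellation you emphasize. For smooth $f$ the vanishing of odd Moyal terms indeed gives $\tr\phi^2 f(H)=(2\pi h)^{-3}\!\int\phi^2 f(\sigma)+O(h^{-1})$. The issue is that $x\mapsto x_-$ is not smooth, and after smoothing on scale $\eta$ the $h^{2j}$ correction contains $f_\eta^{(2j)}=O(\eta^{1-2j})$, which forces $\eta\gtrsim h$, while the replacement error $|\tr\phi^2(f_\eta(H)-[H]_-)|$ is controlled by $\eta$ times the number of eigenvalues in $[-\eta,\eta]$. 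Getting that count at the level $O(h^{-3}\eta+h^{-2})$ needed for your balance is precisely the sharp local Weyl law for the spectral projector, and that is what requires the Tauberian step you mention only in passing. In Ivrii's proof this is obtained from the FIO parametrix for $e^{-itH/h}$ and microlocal propagation estimates, which is substantially heavier machinery than a resolvent/Moyal expansion; without it, the smoothing optimization only yields a weaker remainder. So your sketch identifies the right architecture but defers the genuinely hard ingredient to a single sentence; this is exactly why the paper outsources the whole statement to \cite{I} rather than reproving it.
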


\begin{remark}
By a simple shift of variables $p\to p+A(q)$
for each fixed $q$, we obtain that
$$
   (2\pi h)^{-3  }\iint \phi^2(q) \big[ (p+A(q))^2 -
  V(q)\big]_{-} \,\rd q \rd p
 = (2\pi h)^{-3  }\iint \phi^2(q) \big[ p^2 -
  V(q)\big]_{-} \,\rd q \rd p,
$$
in other words, the presence of the magnetic vector potential plays no
role in the semiclassical formula.
\end{remark}

\begin{remark}
Theorem \ref{thm:ivrii} in Ivrii's book is formulated for Schr\"odinger operators,
as stated in \eqref{eq:8} above. For the purpose of semiclassical analysis,
the Pauli operator, written as $T_h(A)=(D+A)^2 + h\bsigma\cdot B - V$, can be considered
as a Schr\"odinger operator with $2\times 2$ matrix valued potential
that is subprincipal. Therefore the analysis of Ivrii goes through for
the Pauli case as well and this gives \eqref{eq:8Pauli}. Alternatively, one can apply Ivrii's result
for the Dirac operator \cite[Theorem 7.3.14]{I}
 or \cite[Theorem 5.2.23]{I2}
and use that in the large mass $m$ limit, the square of the Dirac
operator converges to the Pauli operator. Although not stated explicitly,
the error estimates in the cited theorems are uniform as $m\to \infty$.
In this way, taking the semiclassical limit followed by the large mass
limit, the semiclassical estimate for the Pauli operator can be
deduced from Ivrii's result on the Dirac operator.
\end{remark}

We also need a slight modification of Theorem~\ref{thm:ivrii} when
the localization function $\phi$ is inside the negative part:

\begin{theorem}\label{thm:ivrii2}
Suppose that $\phi$ is a localization function with
\begin{align}
  \label{eq:6-2}
  \supp \phi \subset B(1),
\end{align}
and $H = T_h(A) - V$ is a self-adjoint  Pauli operator acting 
or $L^2(\bR^3, \bC^2)$ with
\begin{align}
  \label{eq:7-2}
  |\partial^{n} V| \leq c,\qquad |\partial^{n} A |\leq
  c,\qquad 
\text{ and } \quad |\partial^{n} \phi| \leq c
\end{align} 
for all multi-indices $n \in \bN^3$ with $|n|\leq K$.
Then, for all $h>0$,
\begin{align}
  \label{eq:8-2}
  \Big|\tr
 \big[\phi \big(T_h(A) - V\big) \phi \big]_{-} - 2(2\pi h)^{-3}\iint \phi^2(q) \big[ p^2 -
  V(q)\big]_{-} \,\rd q \rd p\Big| \leq C h^{-1}. 
\end{align}
Here $K$ is a universal constant and $C$ only depends on the constant $c$
in \eqref{eq:7-2}. Similar statement holds for the magnetic Schr\"odinger operator.
\end{theorem}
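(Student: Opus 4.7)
The plan is to compare the quantity $\tr[\phi H\phi]_-$ with $\tr\phi^2[H]_-$, reducing Theorem~\ref{thm:ivrii2} to the already-established Theorem~\ref{thm:ivrii} (the $\phi^2$-outside version) at the cost of a commutator error that is $O(h^{-1})$. Write $H = T_h(A) - V$ and $\gamma = \mathbf{1}_{(-\infty,0]}(H)$. The starting point is the algebraic identity
\[
 2\phi H \phi = H\phi^2 + \phi^2 H - [[H,\phi],\phi], \qquad [[H,\phi],\phi] = -2h^2|\nabla\phi|^2,
\]
where the double commutator is computed exactly as in the derivation preceding \eqref{varp} (for the Pauli case one uses $[\bsigma\cdot(D+A),\phi] = -ih\bsigma\cdot\nabla\phi$ twice). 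Taking the trace against $\gamma$ and using that $H\gamma = \gamma H = [H]_-$, we get
\[
 \tr\phi H\phi\,\gamma = \tr\phi^2[H]_- + h^2\tr|\nabla\phi|^2\gamma.
\]

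The first step is to sandwich $\tr[\phi H\phi]_-$ between $\tr\phi^2[H]_-$ and $\tr\phi^2[H]_- + h^2\tr|\nabla\phi|^2\gamma$. The upper bound is immediate from the variational principle $\tr[\phi H\phi]_- \leq \tr\phi H\phi\,\gamma$. For the lower bound, I would use that $[H]_- \leq H$ as operators, hence $\phi[H]_-\phi \leq \phi H\phi$ and therefore $(\phi[H]_-\phi)_- \leq (\phi H\phi)_-$ by monotonicity of $x\mapsto x_-$; since $\phi[H]_-\phi \leq 0$ is already negative, its negative part coincides with itself, giving $\tr\phi^2[H]_- = \tr(\phi[H]_-\phi)_- \leq \tr[\phi H\phi]_-$.

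The second step is to show $h^2\tr|\nabla\phi|^2\gamma = O(h^{-1})$. Here I apply the Lieb--Thirring inequality of Theorem~\ref{thm:lls} to $H - |\nabla\phi|^2 = T_h(A) - (V + |\nabla\phi|^2)$. Using $\tr H\gamma \leq 0$ and the variational bound $\tr(H-|\nabla\phi|^2)\gamma \geq \tr[H-|\nabla\phi|^2]_-$, exactly as in \eqref{usemlt}, we obtain
\[
 \tr|\nabla\phi|^2\gamma \leq -\tr[H-|\nabla\phi|^2]_- \leq C h^{-3}\!\!\int\!\!\big[V+|\nabla\phi|^2\big]_+^{5/2} + C\Big(h^{-2}\!\!\int\!\!|B|^2\Big)^{3/4}\!\!\Big(\int\!\![V+|\nabla\phi|^2]_+^{4}\Big)^{1/4}\!\!.
\]
Under the hypotheses \eqref{eq:7-2} and $\supp\phi\subset B(1)$, the potential $V$, its smooth cutoff, $\nabla\phi$, and $B=\nabla\times A$ are all uniformly bounded on the compact support of $\phi$, so the right-hand side is $O(h^{-3})$. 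Multiplying by $h^2$ gives the desired $O(h^{-1})$ bound.

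The third step is to invoke Theorem~\ref{thm:ivrii} (Pauli version, \eqref{eq:8Pauli}) to estimate $\tr\phi^2[H]_-$ by the phase-space integral with error $O(h^{-1})$, and finally note that the shift $p\mapsto p+A(q)$ in the semiclassical integral is a translation that eliminates $A$ from the symbol, yielding the Weyl formula in the form stated in \eqref{eq:8-2}. The Schr\"odinger case is identical, using \eqref{eq:8} in place of \eqref{eq:8Pauli}. The only technical point requiring care is the double-commutator computation in the Pauli case, but this is routine and the remainder of the argument is purely a bookkeeping reduction.
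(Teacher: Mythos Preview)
Your proposal is correct and follows essentially the same route as the paper's own proof: sandwich $\tr[\phi H\phi]_-$ between $\tr\phi^2[H]_-$ and $\tr\phi^2[H]_- + h^2\tr|\nabla\phi|^2\gamma$ via the double-commutator identity and the variational principle, bound the error term by the magnetic Lieb--Thirring inequality, and reduce to Theorem~\ref{thm:ivrii}. Two small points you should add. First, the Lieb--Thirring integrals $\int[V+|\nabla\phi|^2]_+^{5/2}$ and $\int|B|^2$ are over all of $\bR^3$, not just $\supp\phi$; since neither $V$ nor $A$ is assumed compactly supported, you need the one-line observation that both sides of \eqref{eq:8-2} depend only on $V,A$ restricted to $\supp\phi$, so one may freely assume $\supp V,\supp A\subset B(2)$ (the paper states this explicitly). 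Second, Theorem~\ref{thm:ivrii} is formulated only for $h\in(0,1]$, whereas \eqref{eq:8-2} is claimed for all $h>0$; the paper disposes of $h\ge1$ by noting that both the Weyl term and (via Lieb--Thirring) the quantum term are $O(h^{-3})\le Ch^{-1}$, and you should include that remark.
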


\begin{proof}
Since this estimate is local, we may assume that $\supp V, \supp A \subseteq B(2)$. 
For a lower bound we estimate
\begin{align}\label{inoutt}
\tr \big[\phi \big(T_h(A) - V\big) \phi \big]_{-} \geq \tr \big[\phi \big(T_h(A) - V\big)_{-} \phi \big]_{-} 
= \tr \big[\phi \big(T_h(A) - V\big)_{-} \phi \big].
\end{align}
In order to prove the upper bound, we write $\gamma: = {\bf 1}_{(-\infty,0]}(H)$ and calculate,
\begin{align}
\tr \phi^2 [H]_{-} &= \frac{1}{2} \tr\big( \phi^2 [H]_{-} + [H]_{-} \phi^2 \big) 
= \frac{1}{2} \tr\big( (\phi^2 H + H \phi^2) \gamma \big)\nonumber \\
&= \frac{1}{2} \tr\big( ([H, \phi],\phi] + 2\phi H \phi) \gamma \big).
\end{align}
Therefore, by the variational principle,
\begin{align}
\tr\big[ \phi H \phi\big]_{-} \leq \tr \phi^2 [H]_{-} + h^2 \tr (\nabla \phi)^2 \gamma.
\end{align}
In order to estimate the last term we apply a Lieb-Thirring inequality to the operator 
$H - (\nabla \phi)^2$. Using $\tr H\gamma \le 0$, this yields 
\begin{align}
-   \tr (\nabla \phi)^2 \gamma& \geq 
\tr\big( H - (\nabla \phi)^2\big) \gamma \nonumber \\
& \geq - C   h^{-3} \int \big([V]_++(\nabla \phi)^2\big)^{5/2} 
- C\Big(h^{-2} \int |\nabla \times A|^2\Big)^{3/4}
 \Big( \int \big([V]_++(\nabla \phi)^2\big)^4 \Big)^{1/4}  \nonumber \\
&\geq -C h^{-3} - Ch^{-3/2}.
\end{align}
For Schr\"{o}dinger a similar Lieb-Thirring inequality holds but without the magnetic term.

Combining the upper and lower bounds the estimate \eqref{eq:8-2} follows from \eqref{eq:8}
in case of $h\le 1$. Finally, for $h\ge 1$ \eqref{eq:8-2} trivially holds since
the semiclassical integral is of order $h^{-3}$ which is smaller than $Ch^{-1}$.
The quantum energy can be bounded by the Lieb-Thirring inequality after using
\eqref{inoutt}:
$$
  0\ge \tr \big[\phi \big(T_h(A) - V\big) \phi \big]_{-} \ge \tr  \big(T_h(A) - V\big)_{-}
  \ge - Ch^{-3}- Ch^{-3/2}  
$$
for the Pauli case and the $h^{-3/2}$ term is absent for the Schr\"odinger case.
Both terms are smaller than the error bar $Ch^{-1}$ in \eqref{eq:8-2}. 
\end{proof}

\bigskip

Now we explain how  Theorem~\ref{thm:ivrii} can be applied to
prove Theorem~\ref{thm:locsc} from Theorem~\ref{thm:replace}.

\medskip

To estimate the right hand side of \eqref{AtoAr} from below, we
first move the localization outside the negative part by the simple inequality
\begin{align}
  \label{eq:5}
  \tr\big(\phi   [(T_h(A_{r}) - V ] \phi \big)_{-} \geq \tr \phi 
  [T_h(A_{r}) - V ]_{-}\phi =  \tr \phi^2 
  [T_h(A_{r}) - V ]_{-}.
\end{align}

By unitary scaling $x\to L_0x$, we scale the unit ball 
to the ball $B(L_0)$ and
 we have 
\begin{align}
\tr \phi^2   [T_h(A_{r}) - V ]_{-} =
\tr {\wt \phi}^2 
  [T_{h/L_0}(\wt A_{r}) - \wt V ]_{-},
\end{align}
where 
\begin{align}
\wt \phi(x) = \phi(L_0 x), \qquad \supp \wt \phi \subset B(1),
\end{align}
and
\begin{align}
\wt V(x) = V(L_0 x), \qquad \wt A_r(x) = A_r(L_0 x).
\end{align}
Notice that the semiclassical parameter has changed from $h$ to $h_{new}:=h/L_0
=h^{\frac{1}{2}+\e_0}$
which is much smaller than $h^{1/2}$.  Theorem \ref{thm:ivrii}
will be used for the data with tilde with $h_{new}$ and it provides
a precise semiclassical asymptotics with a relative  error term of order $Ch_{new}^2$
compared with the main term. In terms of the original $h$, this error is of
order  $C h^{1+2\e_0}$.

We now check that the derivative 
estimates  \eqref{eq:7} hold for $\wt V$, $\wt A_r$ and $\wt\phi$. 
Notice that we may
replace $V$ and $A_r$ by localized versions in the left hand side of
\eqref{eq:5}. Therefore, we only need to control the (finitely many)
derivatives in \eqref{eq:7} in the ball $B(3/2)$.

Clearly, the derivatives of $\wt \phi$ and $\wt V$ are bounded
since $L_0\le 1$. For
$\wt A_r$ we have the following estimate which will be proved 
as a consequence of Lemma \ref{lem:derivatives} in Section~\ref{sec:EstimatingAr}.

\begin{lemma}\label{lem:ScaledDerivatives}
Assume that $\int_{B(2L_0)} A =0$, then
the following estimate holds, for $|n|\geq 0$,
\begin{align}
\int_{B(3/2)} | \partial^{n} \wt A_{r}|^2 \, \leq C_n
\Big(\frac{L_0}{r}\Big)^{2\max(|n|-1,0)} L_0^{-1} \int_{B(2L_0)} | \nabla
\otimes A|^2\,.
\end{align}
\end{lemma}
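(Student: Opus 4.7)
The plan is to reduce the estimate on $\wt A_r$ to one on the unscaled $A_r$ via the change of variables $x\mapsto L_0 x$, and then bound the derivatives of $A_r=A\ast\chi_r$ in terms of $\nabla\otimes A$ using Young's convolution inequality together with the Poincar\'e inequality (which is available because $\int_{B(2L_0)} A=0$).

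The first step is the scaling identity $\partial^n \wt A_r(x) = L_0^{|n|}(\partial^n A_r)(L_0 x)$, giving
\[
\int_{B(3/2)} |\partial^n \wt A_r(x)|^2 \rd x \;=\; L_0^{2|n|-3}\int_{B(3L_0/2)} |\partial^n A_r(y)|^2 \rd y.
\]
Because $r=h^{1/2+\rho}\ll L_0$, for any $y\in B(3L_0/2)$ the convolution $A_r(y)=(A\ast\chi_r)(y)$ only samples $A$ on a ball of radius $O(r)$ around $y$, and hence only on $B(2L_0)$. So there is no domain issue.

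For $|n|\ge 1$, I would move one derivative onto $A$ and keep the remaining $|n|-1$ derivatives on the kernel, writing $\partial^n A_r = (\partial^{n-1}\chi_r)\ast \partial A$ for a suitable first-order partial $\partial A$. Young's inequality then gives
\[
\|\partial^n A_r\|_{L^2(B(3L_0/2))} \;\le\; \|\partial^{n-1}\chi_r\|_{L^1(\bR^3)}\,\|\nabla\otimes A\|_{L^2(B(2L_0))},
\]
and a trivial rescaling of the fixed kernel $\chi$ yields $\|\partial^{n-1}\chi_r\|_{L^1}\le C_n r^{-(|n|-1)}$. Substituting this into the scaling identity above produces exactly the prefactor $L_0^{2|n|-3}r^{-2(|n|-1)} = L_0^{-1}(L_0/r)^{2(|n|-1)}$ claimed in the statement. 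The $|n|=0$ case is separate: Young gives $\|A_r\|_{L^2}\le \|A\|_{L^2}$, and since $A$ has zero average on $B(2L_0)$, the Poincar\'e inequality gives $\int_{B(2L_0)}|A|^2\le C L_0^2 \int_{B(2L_0)}|\nabla\otimes A|^2$, and combining with the scaling relation produces the missing factor $L_0^{-1}$.

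There is no real obstacle here beyond bookkeeping. The one point that needs care is that the simple inequality $\|A_r\|_{L^2}\le \|A\|_{L^2}$ loses one factor of $L_0$ compared with the $|n|\ge 1$ case (where the gradient does the job), and this is exactly why the Poincar\'e inequality, and therefore the mean-zero hypothesis $\int_{B(2L_0)}A=0$, must be used precisely in the $|n|=0$ step. Once this is noted, the two regimes $|n|=0$ and $|n|\ge 1$ combine into the uniform bound with the exponent $2\max(|n|-1,0)$.
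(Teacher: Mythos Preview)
Your proof is correct and follows essentially the same route as the paper: the paper derives this lemma from Lemma~\ref{lem:derivatives} by the change of variables $x\mapsto L_0x$, and that lemma is proved exactly as you do—Poincar\'e (from the zero-mean hypothesis) plus Young for $|n|=0$, and moving one derivative onto $A$ and the remaining $|n|-1$ onto the kernel for $|n|\ge 1$. The only cosmetic difference is that the paper uses Cauchy--Schwarz rather than Young's inequality in the $|n|\ge 1$ step, which changes an $L^1$ norm of $\partial^{n-1}\chi$ to an $L^2$ norm but gives the same $r^{2-2|n|}$ scaling.
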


Lemma~\ref{lem:ScaledDerivatives} combined with \eqref{eq:1}, yield
\begin{align}\label{eq:NumberOfDerivs}
\int_{B(3/2)} | \partial^{n} \wt A_{r}|^2 \,\leq C
 \Big(\frac{L_0}{r}\Big)^{2\max(|n|-1,0)} L_0^{2} h^{\alpha - 1}.
\end{align}
In order to use Theorem~\ref{thm:ivrii} we need uniform bounds on $K$ derivatives 
of $\wt A_r$. By Sobolev inequalities this corresponds to $K' > K+3/2$ $L^2$-derivatives. 
Inserting $|n| = K'$ and the definitions of $L_0, r$ and $\alpha$ in 
\eqref{eq:NumberOfDerivs}, we get 
\begin{align}\label{eq:NumberOfDerivs2}
\int_{B(3/2)} | \partial^{n} \wt A_{r}|^2 \,
\leq C h^{1 - 2(K'-1) \rho - (2K' +2) \e_0},
\end{align}
for all $n \in \bN^3$ with $|n| \leq K'$. Since $K'$ is universal, the
 right side of \eqref{eq:NumberOfDerivs2} is clearly bounded for 
sufficiently small $\rho$ and $\e_0$, i.e. if
\begin{align}
1 \geq  (2K'-2) \rho - (2K' +2) \e_0.
\end{align}
By Theorem~\ref{thm:ivrii}, applied to the Pauli operator,
 we get for such values of $\rho, \e_0$ that
\begin{align}
\tr \wt \phi^2 
  [T_{h/L_0}(\wt A_{r}) - \wt V ]_{-} &= 2(2\pi (h/L_0))^{-3}\iint
 \wt \phi^2(q) \big[ (p+\wt A_r(q))^2 -
  \wt V(q)\big]_{-} \,\rd q \rd p \! + \!{\mathcal O}( (h/L_0)^{-1} ) \nonumber \\
  &= 2(2\pi h)^{-3}\iint \phi^2(q) \big[ p^2 -
  V(q)\big]_{-} \,\rd q \rd p  + {\mathcal O}(h^{1+2\e_0} \Lambda_{L_0}).
\end{align}
Recalling that the error in \eqref{AtoAr} was ${\mathcal O}(h^{1+\e_0} \Lambda_{L_0})$
this completes the proof of  Theorem~\ref{thm:locsc}.

\qed

\section{Proof of  Theorem  \ref{thm:replace}}\label{sec:repl}

\subsection{Reduction to a constant potential}
\label{sec:red}

We use the notations and assumptions  of Theorem  \ref{thm:replace},
the extension of $V$ given in \eqref{eq:lowerboundVuj}, and assume that
$\cE(A)\le \cE(0)$. From \eqref{eq:lowerboundVuj} we can assume  $V\ge0$ 
 for $h$ sufficiently small.
Let $\chi$ be a nonnegative, smooth, symmetric, cutoff function on $\bR^3$
with $\int \chi^2 =1$ and $\supp \chi\subset B(1)$.
 We choose a new lengthscale $L_1\le \frac{1}{4}L_0$ 
such that $h\le L_1\leq h^{1/2} |\log h|^{-2}$.
For any $u\in \bR^3$ we denote
\be
   \chi_u(x)=\chi_{u, L_1}(x) =\chi \big( \frac{x-u}{L_1} \big),
   \qquad \int \frac{\rd u}{L^3_1}
  \chi_u^2(x) \equiv 1,
\label{chiint}
\ee
and  we will mostly omit the lengthscale $L_1$. In the applications, $L_1=h^{1/2+\e_0}$.

We first prove a rough lower bound on the left hand side of \eqref{AtoAr}.
This will eventually be used to get apriori bounds when we prove \eqref{AtoAr}.
Along the way, we also prove \eqref{eq:1}.

\begin{theorem}\label{thm:1}  Fix $\kappa >0$
and assume that the potential satisfies
\eqref{eq:lowerboundV}.  Let $\phi\in C_0^\infty(B(L_0))$
with $|\pt^n \phi|\le C_n L_0^{-|n|}$ and let $A$ be a vector potential
satisfying \eqref{eq:gaugezero} and $\nabla\cdot A=0$ on $B(5L_0/4)$.
For any $\al<1$ and $\delta>0$ there is a constant
$C_{\delta,\al}$ such that if $h\le h_\delta$, then
\begin{align}
  \tr \Big[ \phi [T_h(A)-V]\phi \Big]_-  &+ \frac{1}{ \kappa
  h^2}\int_{B(2L_0)} |\nabla \otimes A|^2 \nonumber \\
 \ge &
  \int\tr\Big[ \chi_u \phi \big[D^2-V(u) \big]_-\phi\chi_u\Big]\frac{\rd u}{L^3_1} 
 +(\kappa^{-1}- C\delta) h^{-2}\int_{B(2L_0)} |\nabla \otimes A|^2 \non\\
  & - \Big( C_{\delta,\al}h^{\al}+ Ch^2L^{-2}_1 + CL^2_1\Big)\Lambda_{L_0}.
\label{lower}
\end{align}
Moreover, we also have 
\begin{multline}
\inf_{A}\Bigg\{ \tr  \Big[ \phi [T_h(A)-V]\phi \Big]_- + \frac{1}{\kappa h^2}\int_{B(2L_0)}
 |\nabla \otimes A|^2\Bigg\} \\
\le  \int \tr \Big[ \chi_u \phi \big[D^2-V(u)\big]_-
 \phi\chi_u\Big] \frac{\rd u}{L_1^3} 
 + C\big(h^2L_1^{-2}+L^2_1\big) \Lambda_{L_0}.
\label{upper}
 \end{multline}
\end{theorem}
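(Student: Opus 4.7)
Both the lower bound \eqref{lower} and the upper bound \eqref{upper} are proved by an IMS localization at the fine scale $L_1$, based on the continuous partition of unity $\int \chi_u^2 \, \rd u/L_1^3 \equiv 1$, together with a ``freezing'' of $V$ (and of $A$, for the lower bound) on each ball $B_u(L_1)$. The IMS identity
$$
  \phi H_h(A) \phi \;=\; \int \phi\chi_u H_h(A) \chi_u \phi\, \frac{\rd u}{L_1^3} \;-\; h^2 \phi \Bigl(\int |\nabla\chi_u|^2\, \frac{\rd u}{L_1^3}\Bigr) \phi
$$
produces the $Ch^2 L_1^{-2}\Lambda_{L_0}$ error once the number of low-energy states is controlled via $\tr \phi^2 \Gamma \leq C\Lambda_{L_0}$, a consequence of the magnetic Lieb--Thirring inequality (Theorem~\ref{thm:lls}). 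The substitution $V(x) \mapsto V(u)$ on $B_u(L_1)$ uses the \emph{symmetry} of $\chi$: by Taylor expansion,
$$
   \int \chi_u^2(x)\,\bigl(V(u)-V(x)\bigr)\,\frac{\rd u}{L_1^3} \;=\; O(L_1^2)
$$
uniformly in $x$ (the first moment of $\chi^2$ vanishes), accounting for the $CL_1^2\Lambda_{L_0}$ error.

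\textbf{Lower bound.} On each $B_u(L_1)$ decompose $A = \overline{A}_u + \widetilde A_u$ with $\overline{A}_u$ the average. The Pauli operator inequality
$$
  T_h(A) \;\geq\; (1-\delta)\, T_h(\overline{A}_u) \;-\; C \delta^{-1} \widetilde A_u^2
$$
follows from $\bigl\{B, F\bigr\} \geq -\delta B^2 - \delta^{-1} F^2$ applied to $B = \bsigma\cdot (D + \overline{A}_u)$, $F = \bsigma\cdot \widetilde A_u$, together with $(\bsigma\cdot \widetilde A_u)^2 = \widetilde A_u^2$. Since $\overline{A}_u$ is a constant vector, $T_h(\overline{A}_u)$ is unitarily equivalent to $D^2$ via the gauge transformation $U_u := e^{-i\overline{A}_u \cdot x/h}$, which commutes with all multiplications $\chi_u, \phi$. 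The elementary ``subadditivity'' $\tr \bigl[\int Y_u\, \rd u/L_1^3\bigr]_- \geq \int \tr [Y_u]_- \, \rd u/L_1^3$, valid for any self-adjoint family $Y_u$ (since $\tr Y_u \Gamma \geq \tr[Y_u]_-$ for $0 \leq \Gamma \leq 1$), combined with the operator monotonicity $\tr [\phi\chi_u O \chi_u \phi]_- \geq \tr \chi_u \phi [O]_- \phi \chi_u$ (obtained from $\phi\chi_u O\chi_u\phi \geq \phi\chi_u [O]_- \chi_u\phi$), produces the target $\int \tr \chi_u \phi [D^2 - V(u)]_- \phi \chi_u \, \rd u/L_1^3$.

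\textbf{Upper bound.} Take the trial vector potential $A \equiv 0$, so the field energy drops and one must bound $\tr [\phi(D^2 - V)\phi]_-$. The free Pauli operator $D^2$ has the advantage that $\tr f[D^2 - V(u)]_-$ is computed \emph{exactly} by plane-wave diagonalization as $2(2\pi h)^{-3} \int f(q) [p^2 - V(u)]_- \, \rd p \, \rd q$ for any multiplier $f$. Applying this with $f = \chi_u^2 \phi^2$, integrating over $u$, and using once more the symmetric-$\chi$ estimate $\int \chi_u^2(q)[p^2 - V(u)]_-\, \rd u/L_1^3 = [p^2 - V(q)]_- + O(L_1^2)$ (here $[\,\cdot\,]_-$ is concave in $V$, and the small kink locus $V(q) = p^2$ contributes at the claimed order), reduces the right-hand side of \eqref{upper} to the standard Weyl expression plus the stated errors. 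Combining with the variational bound \eqref{varp} and Theorem~\ref{thm:ivrii2} for $\tr[\phi(D^2 - V)\phi]_-$ then yields \eqref{upper}.

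\textbf{Main obstacle.} The most delicate step is absorbing the fluctuation term $C\delta^{-1}\int \tr \phi\chi_u \widetilde A_u^2 \chi_u\phi \Gamma \, \rd u/L_1^3$ in the lower bound into a fraction of the magnetic energy. The Poincar\'e inequality on $B_u(L_1)$ provides $\| \widetilde A_u \|_{L^2(B_u(L_1))}^2 \lesssim L_1^2 \| \nabla\otimes A \|_{L^2(B_u(L_1))}^2$; combined with Sobolev embedding $H^1 \hookrightarrow L^6$ and Lieb--Thirring-type estimates on the density $\rho_\Gamma$ (controlled in turn by the a priori hypothesis $\cE(A) \leq \cE(0)$), this term is absorbed as $C\delta h^{-2}\int |\nabla\otimes A|^2 + C_{\delta, \alpha} h^\alpha \Lambda_{L_0}$ for any $\alpha < 1$, precisely the form appearing in \eqref{lower}. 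This delicate trade-off is what ultimately forces the scale choice $L_1 \sim h^{1/2+\e_0}$ in the subsequent application of the theorem.
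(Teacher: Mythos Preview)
Your high-level skeleton (IMS at scale $L_1$, freezing $V$ via the symmetry $\int (x-u)\chi_u^2(x)\,\rd u=0$, subadditivity $\tr[\int Y_u]_-\ge\int\tr[Y_u]_-$, and the trial-state upper bound with $A=0$) matches the paper. The upper bound sketch is different from the paper's (the paper builds the explicit trial state $\gamma=\int\chi_u\gamma_u\chi_u\,\rd u/L_1^3$ with $\gamma_u=\mathbf{1}_{(-\infty,0]}(D^2-V(u)+C(h^2L_1^{-2}+L_1^2))$ and computes directly, rather than going through Ivrii), but your route is defensible.

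The lower bound, however, has a real gap. Your key device is the crude Schwarz-type inequality
\[
  T_h(A)\;\ge\;(1-\delta)\,T_h(\overline A_u)\;-\;C\delta^{-1}\widetilde A_u^2,
\]
after which you gauge away $\overline A_u$. The problem is the $(1-\delta)$ in front of the kinetic term: it multiplies the entire leading semiclassical contribution, producing an error of size $O(\delta)\cdot\Lambda_{L_0}$ against the main term $\int\tr\chi_u\phi[D^2-V(u)]_-\phi\chi_u\,\rd u/L_1^3$. To absorb the $C\delta^{-1}\widetilde A_u^2$ contribution into $C\delta\,h^{-2}\int_{B(2L_0)}|\nabla\otimes A|^2$ (as you claim), you need at best $\delta^{-2}\lesssim h\,L_1^{-2}$; on the allowed range $L_1\le h^{1/2}|\log h|^{-2}$ this forces $\delta\gtrsim|\log h|^{-2}$. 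Hence the residual error is at best $O(|\log h|^{-2})\Lambda_{L_0}$, not $C_{\delta,\alpha}h^{\alpha}\Lambda_{L_0}$ for $\alpha$ arbitrarily close to $1$ as the theorem asserts. (And that $h^\alpha$ is not cosmetic: it feeds directly into the a priori bound \eqref{eq:1} and then into the derivative control on $A_r$ in Section~\ref{sec:ivrii}.) Also, you invoke $\cE(A)\le\cE(0)$ to control $\rho_\Gamma$, but \eqref{lower} is stated and proved for \emph{all} $A$; the paper handles large-field boxes separately via a $\Xi_<$/$\Xi_>$ splitting and the magnetic Lieb--Thirring inequality.

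What the paper actually does for the lower bound is substantially finer: it drops only the $A^2$ term (not a fraction of $D^2$), writes $T_h(A)\ge H_0+\sigma(D)\sigma(A)+\sigma(A)\sigma(D)$, and estimates the cross term $DA+AD$ against the \emph{free} projection $P=\mathbf{1}_{(-\infty,0]}(D^2-V_u)$ using the exact cancellation $\tr(DA+AD)\eta P\eta=0$ coming from the reality of $P$. The remaining contribution $\tr(DA+AD)\eta(\gamma-P)\eta$ is then controlled by a dyadic decomposition in momentum space around the Fermi surface (width $w=h^{\alpha/2}$), combined with the commutator Lemma~\ref{lem:comm2}; this is the content of Theorem~\ref{thm:2} and Proposition~\ref{prop:KineticBoundsScaled}. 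The free parameter $w$ is what produces the $h^\alpha$ error, decoupled from the parameter $\delta$ multiplying $h^{-2}\int|\nabla\otimes A|^2$. Your Schwarz inequality collapses exactly this structure and cannot recover it.
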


\begin{remark}\label{remark:6.2}
We have explicitly, for any $u\in\bR^3$,
$$
    \tr \Big[ \chi_u \phi \big[D^2-V(u)\big]_- \phi\chi_u\Big]
 = -\frac{2}{15\pi^2} h^{-3} [V(u)]^{5/2}\int \phi^2\chi_u^2
$$
 and thus
$$
    \int \tr \Big[ \chi_u \phi \big[D^2-V(u)\big]_-
 \phi\chi_u\Big] \rd u =  -\frac{2}{15\pi^2}h^{-3} \iint [V(u)]^{5/2}
 \chi_u^2(x)\phi(x) \rd x \rd u.
$$
After expansion
$$
    [V(u)]^{5/2} = [V(x)]^{5/2} + \nabla [V(x)]^{5/2} (x-u) + O( [x-u]^2)
$$
and since, for any $x\in\bR^3$,
$$
   \int \rd u \chi_u^2(x) (x-u) =0
$$
by symmetry, so we have
\be
    \int \tr \Big[ \chi_u \phi \big[D^2-V(u)\big]_-
 \phi\chi_u\Big] \frac{\rd u}{L^3_1} = -\frac{2}{15\pi^2}
 h^{-3}\int V^{5/2}\phi^2 + O(L_1^2h^{-3})\int \phi^2.
\label{eval}
\ee
\end{remark}

Moreover, we have the following upper bound on the magnetic energy:
\begin{corollary}\label{cor:BoundMagn}
Under the conditions of Theorem~\ref{thm:replace} we have
\begin{align*}
    \frac{1}{2\kappa h^2}\int_{B(2L_0)} |\nabla \otimes A|^2 \le 
\Big( C_{\delta,\al}h^{\al}+ Ch^2L^{-2}_1 + CL^2_1\Big)\Lambda_{L_0}
\end{align*}
for all sufficiently small $h$.
\end{corollary}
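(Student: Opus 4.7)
The plan is to derive the corollary directly by juxtaposing the lower bound \eqref{lower} and the upper bound \eqref{upper} from Theorem~\ref{thm:1}, exploiting that the common semiclassical main term cancels. The hypothesis $\cE(A)\le \cE(0)$ built into Theorem~\ref{thm:replace} is precisely what bridges the two inequalities.

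First, I would invoke \eqref{upper} at the trivial trial field $A=0$ (which bounds the infimum from above) to obtain
\begin{align*}
\cE(0) \;\le\; \int\tr\Big[ \chi_u \phi \big[D^2-V(u)\big]_-\phi\chi_u\Big] \frac{\rd u}{L_1^3}
\;+\; C\big(h^2L_1^{-2}+L_1^2\big)\Lambda_{L_0}.
\end{align*}
Combining with the standing assumption $\cE(A)\le \cE(0)$ from the statement of Theorem~\ref{thm:replace} yields the same upper bound for $\cE(A)$.

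Next, I would apply the lower bound \eqref{lower} to $\cE(A)$ itself. Subtracting the upper bound just obtained from the lower bound, the two copies of $\int\tr[\chi_u\phi[D^2-V(u)]_-\phi\chi_u]\rd u/L_1^3$ cancel exactly, leaving
\begin{align*}
(\kappa^{-1}-C\delta)\, h^{-2}\!\int_{B(2L_0)}\!|\nabla\otimes A|^2
\;\le\; \Big(C_{\delta,\alpha}h^{\alpha}+Ch^2L_1^{-2}+CL_1^2\Big)\Lambda_{L_0}.
\end{align*}
Here $C$ is the universal constant appearing in \eqref{lower}, in particular independent of $\delta$. I would then fix $\delta$ once and for all so small (depending only on $\kappa$) that $\kappa^{-1}-C\delta\ge \frac{1}{2\kappa}$; absorbing $\delta$ into the constants (which are allowed to depend on $\kappa$), the resulting inequality is exactly the claim
\begin{align*}
\frac{1}{2\kappa h^{2}}\int_{B(2L_0)}|\nabla\otimes A|^{2}
\;\le\;\Big(C_{\delta,\alpha}h^{\alpha}+Ch^2L_1^{-2}+CL_1^2\Big)\Lambda_{L_0},
\end{align*}
valid for all sufficiently small $h$ (the smallness condition $h\le h_\delta$ from Theorem~\ref{thm:1} now becomes a condition on $h$ alone since $\delta$ has been fixed).

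There is no genuine obstacle here: the corollary is a one-line arithmetic consequence of Theorem~\ref{thm:1} once one notices that the semiclassical main terms in \eqref{lower} and \eqref{upper} are identical and therefore cancel when the assumption $\cE(A)\le\cE(0)$ is used to sandwich $\cE(A)$ between the two. The only mild subtlety is that the coefficient $\kappa^{-1}-C\delta$ on the magnetic term must remain strictly positive, which forces $\delta$ to be chosen small in terms of $\kappa$; this is why the corollary's constants depend on $\kappa$ while the bound retains the prefactor $1/(2\kappa)$.
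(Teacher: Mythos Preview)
Your argument is correct and follows the same strategy as the paper: combine the lower bound \eqref{lower} with an upper bound on $\cE(0)$ via the hypothesis $\cE(A)\le\cE(0)$, cancel the common semiclassical main term, and fix $\delta$ small (depending on $\kappa$) so that $\kappa^{-1}-C\delta\ge\frac{1}{2\kappa}$.

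The only minor difference is the tool you use for the upper bound on $\cE(0)$. The paper appeals to Theorem~\ref{thm:ivrii2} (rescaled to $B(L_0)$) to estimate $\cE(0)$ by the Weyl term plus $O(h^{1+2\e_0}\Lambda_{L_0})$, and then implicitly converts to the $\int\tr[\chi_u\phi[D^2-V(u)]_-\phi\chi_u]\,\rd u/L_1^3$ form via Remark~\ref{remark:6.2}. You instead invoke \eqref{upper} directly, which already produces the main term in exactly the same form as \eqref{lower}, so the cancellation is immediate without any conversion. Your route is slightly more economical; the paper's route gives a marginally sharper error for $\cE(0)$, but that refinement is irrelevant here since the error budget in the corollary is dominated by the $h^{\alpha}$, $h^2L_1^{-2}$, and $L_1^2$ terms anyway.
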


Corollary~\ref{cor:BoundMagn} is an immediate consequence
 of \eqref{lower} from
Theorem~\ref{thm:1} choosing $\delta$ sufficiently small,
 using $\cE(A)\le \cE(0)$ and Theorem~\ref{thm:ivrii2} (with rescaling
it to the ball of size $L_0$).
Choosing $L_1=h^{1/2+\e_0}$ and  $\al= 1-3\e_0$, we immediately obtain
\eqref{eq:1}, the first claim of Theorem \ref{thm:replace}.

\bigskip

{\it Proof of Theorem~\ref{thm:1}.} First we give the lower bound \eqref{lower}.
Recall that we can extend the potential $V$ from $\ov B(L_0)$ to $B(2L_0)$ by keeping 
similar derivative bounds as in \eqref{eq:lowerboundV}, see \eqref{eq:lowerboundVuj}.
By the IMS formula, and Taylor expansion
\be
  \Big| V(x) - V(u) - \nabla V(x) \cdot (x-u) \Big|\le C(x-u)^2
\label{Vexpansion}
\ee
for all $x\in \supp \phi$, $u\in \supp \phi + L_1\supp \chi$, we have
\begin{align}
  \phi [ T_h(A)&-V]\phi  =  \int \chi_u^2(x) \phi(x)
 [T_h(A)-V(x)]\phi(x)\frac{\rd u}{L_1^3}\non\\
        \ge & \int \chi_u(x) \phi(x) \Big[T_h(A)-V(x)-Ch^2L_1^{-2}\Big]\phi(x)
\chi_u(x) \frac{\rd u}{L_1^3} \non\\
    \ge & \int \chi_u(x) \phi(x) \Big[ T_h(A)-V(u) -\nabla V(x)\cdot(x-u) 
-Ch^2L_1^{-2}- CL_1^2
 \Big]\phi(x)\chi_u(x) \frac{\rd u}{L_1^3} \non \\ 
 = &  \int \chi_u(x) \phi(x) \Big[T_h(A)-V(u) -Ch^2L_1^{-2}- CL_1^2
 \Big]\phi(x)\chi_u(x) \frac{\rd u}{L_1^3} .
\end{align}
In the last step we used that
\be
   \int (x-u)\chi_u^2(x) \rd u =0.
\label{int0}
\ee
Using that $\tr [\sum_j A_j]_- \ge\sum_j \tr [A_j]_-$, we get 
\begin{align}\label{2}
 \tr \Big[ \phi [T_h(A) &-V]\phi \Big]_- \ge
  \int\tr\Big[ \chi_u \phi [T_h(A)-V(u) -Ch^2L_1^{-2}- CL_1^2
 ]\phi\chi_u\Big]_- \frac{\rd u}{L_1^3} .
\end{align}

In Theorem~\ref{thm:2} of the 
 next section, we will prove that $A$ can be neglected. 
In order to facilitate the estimate, it is convenient to
ensure that $A$ has  zero average on the ball $B_u(2L_1)$.
We define
\be
    c_u : = \int_{B_u(2L_1)} A.
\label{cu}
\ee
Similarly to the argument in \eqref{gaugeinv}, we can subtract $c_u$
from $A$ in \eqref{2} and we have
\begin{align}\label{2new}
 \tr \Big[ \phi [T_h(A) &-V]\phi \Big]_- \ge
  \int \tr\Big[ \chi_u \phi [T_h(A-c_u)-V(u) -Ch^2L_1^{-2}- CL_1^2
 ]\phi\chi_u\Big]_- \frac{\rd u}{L_1^3} .
\end{align}

The $u$-integration in \eqref{2new} can be restricted to $u\in B(\frac{5}{4}L_0)$ 
by the support properties of $\phi$ and $\chi_u$.
 For each fixed $u\in B(\frac{5}{4}L_0)$ we will use
Theorem \ref{thm:2} proven in the next section to estimate the
trace in the integrand. Define
\begin{align}
  \label{eq:10}
  V_u:=  V(u) + Ch^2L_1^{-2}+ CL_1^2.
\end{align}
By \eqref{eq:lowerboundVuj} we know that $c_0/2\le V_u\le C$.
With the
choice of $L=L_1$, $W= V_u$
and $\eta = \chi_u\phi$, we see from Theorem \ref{thm:2} that
 for any $\al<1$, $\e>0$ and
for any $\delta >0$ there is a constant $C_{\delta, \e}$ such that
\begin{align}\label{26}
 \tr\Big[ \chi_u \phi & [T_h(A-c_u)-V(u) -Ch^2L_1^{-2}- CL_1^2
 ]\phi\chi_u\Big]_- \non\\
 & \ge  \tr \chi_u\phi \big[D^2-V_u \big]_-\phi\chi_u - \delta V_u^{1/2} h^{-2}
\int_{B_u(2L_1)}|\nabla \otimes A|^2
 - C_{\delta,\e}h^\al V_u^{5/2}\Lambda_{L_1}
\end{align}
as long as
\be
h^{1-\frac{\al}{2}-\e}V_u^{-1/2}\le L_1 \le Ch^{1/2}|\log h|^{-2} V_u^{-1/2}
\label{hL}
\ee
and
\be
\label{Bapr}
    h^{-2}\int_{B_u(2L_1)} |\nabla \otimes A|^2 \le C\Lambda_{L_1}V_u^2
\ee
hold for some constant $C$. 
The constant $C_{\delta,\e}$ will also depend on the constants
in \eqref{hL} and \eqref{Bapr}.

Recalling $L_1=h^{1/2+\e_0}$,
with the choice $\al=1-3\e_0$ and  $\e<\e_0/2$, we see that \eqref{hL} is always satisfied
since $c_0/2\le V_u\le C$
 uniformly in $u\in B(2L_0)$.
To guarantee the second condition \eqref{Bapr} for the availability of the
 estimate \eqref{26},
we split the integral on the r.h.s. of \eqref{2new} as follows
\begin{align}
  \label{eq:9}
  \int_{B(5L_0/4)} \tr \big[ \ldots \big]_-\frac{\rd u}{L_1^3} 
  = &\int {\bf 1}\big\{u \in \Xi_<\big\} \tr\big[ \ldots \big]_-\frac{\rd u}{L_1^3} + \int {\bf 1}
\big\{u \in \Xi_>\big\} \tr\big[ \ldots \big]_-\frac{\rd u}{L_1^3},
\end{align}
where we defined
\begin{align}
   \Xi_< : & = \Big\{ u\in B\big(\frac{5}{4}L_0\big) \; : \;  
\delta h^{-2} \int_{B_u(2L_1)} |\nabla \otimes A|^2
    \leq  \Lambda_{L_1}V_u^2\Big\} \non\\
    \Xi_> : &= \Big\{ u\in B\big(\frac{5}{4}L_0\big) \; : 
 \; \delta h^{-2} \int_{B_u(2L_1)} |\nabla \otimes A|^2
    \ge \Lambda_{L_1}V_u^2\Big\}. \non
\end{align}

The estimate \eqref{26} will thus be available for $u$'s in the first integral
and it yields
\begin{align}\label{fir1}
\int {\bf 1}&\big\{u \in \Xi_<\big\} 
\tr \Big[ \chi_u \phi [T_h(A-c_u)-V(u) -Ch^2L_1^{-2}- CL_1^2
 ]\phi\chi_u\Big]_- \frac{\rd u}{L_1^3}  \non \\
&\ge   \int {\bf 1}\big\{u \in \Xi_<\big\}\Big( \tr \chi_u\phi \big[D^2-V_u \big]_-\phi\chi_u \Big)
\frac{\rd u}{L_1^3}
  - C\delta  h^{-2}
\int_{B(2L_0)}|\nabla \otimes A|^2
 - C_{\delta,\al}h^\al\Lambda_{L_1}.
\end{align}
With  an explicit calculation, for any $\beta>0$ constant and cutoff function $\eta$, we have
$$
   \tr \; \eta [D^2 - \beta]_- \eta = 2(2\pi)^{-3}\int \eta^2(x)\rd x  \int [ (hp)^2 -\beta]_-\rd p
  = \frac{2}{15\pi^2} h^{-3}\beta^{5/2} \int \eta^2
$$
(the additional factor 2 comes from the spin degeneracy).
Since $|V(u)|\le C$, $h\le L_1\le 1$ and $V_u>0$, we obtain
\be
   V_u^{5/2} \le [V(u)]_+^{5/2} +  C(h^2L_1^{-2} + L_1^2),
\label{5/2}
\ee
and thus
$$
 \tr \chi_u\phi \big[D^2-V_u \big]_-\phi\chi_u \ge \tr \chi_u\phi\big[D^2-V(u) \big]_-\phi\chi_u 
- C(h^2L_1^{-2} + L_1^2)\Lambda_{L_1}.
$$
Inserting this into \eqref{fir1}, we have the following estimate
for the first integral on  the r.h.s. of \eqref{eq:9}:
\begin{align}\label{firs}
\int {\bf 1}\big\{u \in \Xi_<\big\} \big[ \ldots \big]\frac{\rd u}{L_1^3}
\ge &   \int {\bf 1}\big\{u \in \Xi_<\big\}\Big( \tr \chi_u\phi \big[D^2-V(u) \big]_-\phi\chi_u \Big)
\frac{\rd u}{L_1^3} \non\\
&  - C\delta  h^{-2}
\int_{B(2L_0)}|\nabla \otimes A|^2
 - \big[C_{\delta,\al}h^\al+C h^2L_1^{-2}+ CL_1^2\big]\Lambda_{L_1}.
\end{align}

The traces in the second integral of \eqref{eq:9} are directly
estimated using a Lieb-Thirring bound. For this estimate we will have to localize
the magnetic field. Fix $u\in \Xi_>$ and choose a cutoff function $ \phi'$ with
$\supp \phi' \subset B_u(3L_1/2)$, $|\pt^n \phi'|\le C_n L_1^{-n}$ and
such that 
$\phi'\equiv 1$ on $B(L_1)$ and define, for the
purpose of this proof, 
\be
   A'= (A-c_u)\phi', \qquad V'(x) = V_u \cdot {\bf 1}\{x\in B_u(L_1)\}.
\label{A1prime}
\ee
Clearly 
\be\label{nablaA'}
  \int_{\bR^3}
 |\nabla\otimes A'|^2 \le \int_{B_u(2L_1)} |\nabla\otimes A|^2 + CL_1^{-2} \int_{B_u(2L_1)} 
|A-c_u|^2  \le C\int_{B_u(2L_1)} |\nabla\otimes A|^2
\ee
where in the second step we used the Poincar\'e inequality.

By the magnetic Lieb-Thirring inequality,
Theorem~\ref{thm:lls}, and \eqref{nablaA'}:
\begin{align}\label{longg}
   \tr \Big[ \chi_u \phi [T_h(A&-c_u)-V(u) -Ch^2L_1^{-2}- CL_1^2
 ]\phi\chi_u\Big]_- =  \tr \Big[ \chi_u \phi [T_h(A')-V']\phi\chi_u\Big]_- \non\\
  & \ge - Ch^{-3}\int_{B_u(L_1)} V_u^{5/2} - C\Big(h^{-2}\int_{\bR^3}
   |\nabla \times A'|^2\Big)^{3/4} \Big(\int_{B_u(L_1)} V_u^{4}\Big)^{1/4} \non\\
  &\ge  - \delta h^{-2}\int_{B_u(2L_1)}
   |\nabla \otimes A|^2
- C(V_u^{5/2}+ h^3\delta^{-3}V_u^4) \Lambda_{L_1} \non\\
 &\ge  - C\delta h^{-2}\int_{B_u(2L_1)}
   |\nabla \otimes A|^2,
\end{align}
using that $V_u\le C$ and $h\le h_\delta$. In the last step we used
$u\in \Xi_>$ to estimate $V_u^2\Lambda_{L_1}$ by the local magnetic field energy.
Integrating out this inequality for $u\in B(5L_0/4)$, we obtain the following
bound for the second integral  on the r.h.s in \eqref{eq:9}
\be\label{secs}
\int {\bf 1}\big\{u \in \Xi_>\big\} \big[ \ldots \big]\frac{\rd u}{L_1^3}
\ge  - C\delta h^{-2}\int_{B(2L_0)}
   |\nabla \otimes A|^2.
\ee

Finally, the missing piece of the non-magnetic term on the r.h.s of \eqref{firs}
for $u\in \Xi_>$ can be estimated by the standard Lieb-Thirring similarly to
\eqref{longg} and \eqref{secs}
\be\label{miss}
  \int {\bf 1}(u\in \Xi_>) \tr \Big[ \chi_u \phi [D^2-V_u ]\phi\chi_u\Big]_-
  \frac{\rd u}{L_1^3}
  \ge  - C\delta h^{-2}\int_{B_u(2L_0)}
   |\nabla \otimes A|^2.
\ee
The estimates \eqref{firs}, \eqref{secs} and \eqref{miss} inserted into 
\eqref{eq:9} and \eqref{2new}
 complete the proof of \eqref{lower}.

\medskip

For the proof of \eqref{upper}, we define the spectral projection
$$ 
   \gamma_u : = {\bf 1}_{(-\infty,0]} \big(D^2 - V(u) + C(h^2L_1^{-2}+L_1^2)\big)
$$
and
$$
    \gamma : = \int \chi_u \gamma_u \chi_u \frac{\rd u}{L_1^3}.
$$
Note that $0\le \gamma \le 1$ by \eqref{chiint}. 
We will also use that the density of $\gamma_u$ is given by
$$
  \varrho_{\gamma_u}(x): =\gamma_u(x,x)= \frac{4\pi}{3}h^{-3} 
\big[ V(u) - C(h^{2}L_1^{-2} +L_1^2)\big]_+^{3/2} .
$$

Then, by Taylor expanding $V$ up to second order, similarly as in \eqref{Vexpansion}
but using $\nabla V(u)$ instead of $\nabla V(x)$,
 we have
\begin{align}
\inf_{A}\Bigg\{ \tr & \Big[ \phi [T_h(A)-V]\phi \Big]_-
 + \frac{1}{\kappa h^2}\int |\nabla \otimes A|^2\Bigg\}
 \le  \tr \Big[ \phi [D^2-V]\phi \Big]_-  \non\\
 \le & \tr  \phi [D^2-V]\phi \gamma \non\\
= &  \int \tr \Big( \chi_u \phi [D^2-V]\phi\chi_u \gamma_u \Big) \frac{\rd u}{L_1^3} \non\\
\le & \int \tr \Big( \chi_u \phi [D^2-V(u) -\nabla V(u)\cdot(x-u) + CL_1^2]
 \phi\chi_u \gamma_u \Big) \frac{\rd u}{L_1^3} \non\\
 = & \int \tr \Big( \chi_u \phi [D^2-V(u)  + CL_1^2]
 \phi\chi_u \gamma_u \Big) \frac{\rd u}{L_1^3} - \int\int \varrho_{\gamma_u}(x) \nabla V(u)\cdot (x-u) 
 \phi^2(x) \chi_u^2(x) \frac{\rd u}{L_1^3}\rd x \non\\
\le & \int \tr \Big( \chi_u^2 \phi^2 [D^2-V(u)  + Ch^2L_1^{-2}+CL_1^2]
 \gamma_u \Big)  \frac{\rd u}{L_1^3} \non \\
 &- c_1h^{-3}  \int\int \big[ V(u) - C(h^{2}L_1^{-2} +L_1^2)\big]_+^{3/2}
    \nabla V(u) \cdot(x-u)  \phi^2(x) \chi_u^2(x)  \frac{\rd u}{L_1^3}\rd x \non\\
\le & \int \tr \Big( \chi_u \phi [D^2-V(u)  + Ch^2L_1^{-2}+CL_1^2]_-
 \phi\chi_u\Big)  \frac{\rd u}{L_1^3}  + CL_1^2 h^{-3}\int \phi^2\non \\
\le & \int \tr \Big( \chi_u \phi [D^2-V(u)]_-
 \phi\chi_u\Big)  \frac{\rd u}{L_1^3} + C(h^2L_1^{-2}+L_1^2) h^{-3}\int \phi^2.
 \end{align}
In the last but one step we used \eqref{int0} and  that
$$
\big[ V(u) - C(h^{2}L_1^{-2} +L_1^2)\big]_+^{3/2}\nabla V(u) = 
\big[ V(x) - C(h^{2}L_1^{-2} +L_1^2)\big]_+^{3/2}\nabla V(x) + O(L_1)
$$
if $|x-u|\le CL_1$. In the last step we used an estimate
similar to \eqref{5/2}. This proves \eqref{upper} and completes the proof
of Theorem \ref{thm:1}, assuming Theorem \ref{thm:2} to be proven in
the next section. \qed

\subsection{Removing $A$}\label{sec:remA}

In this section we estimate the effect of
removing the vector potential from the operator
$H(A)= T_h(A)-W$ with a constant potential $W>0$
on a scale $L$. In the applications,  $L$ will be $L_1=h^{1/2+\e_0}$.
Let $H_0= H(A=0)=D^2-W$ and define the free
 projection 
\begin{align}\label{eq:freeP}
  P = {\bf 1}_{(-\infty,0]}(H_0).
\end{align}
We also set $\Lambda_L:= h^{-3} L^3$.

\begin{theorem}
\label{thm:2} Given $0\le \al<1$ and $0<\e< \frac{1}{2}(1-\al)$, let
 $L$ be a lengthscale that satisfies 
\be
 h^{1-\frac{\al}{2}-\e} W^{-1/2}\le L \le   Ch^{1/2}(|\log h|)^{-2} W^{-1/2}.\label {Lh}
\ee
Let $\eta$ be a smooth cutoff function supported on $B(L)$
with $|\pt^n \eta |\le C_n L^{-n}$.
Let $A \in H^1_{\rm loc}(\bR^3)$ be a vector potential. 
We will assume the gauge to be such that $\nabla\cdot A=0$ on $B(5L/4)$
and
\begin{align}
  \label{eq:3}
  \int_{B(2L)} A = 0.
\end{align}
We also assume that
\be
    h^{-2}W^{1/2}\int_{B(2L)} |\nabla \otimes A|^2 \le C W^{5/2}\Lambda_L.
 \label{eq:1new}
\ee
Then for any $\delta>0$  there exists $C_{\delta,  \e}$, depending
also on the constants in \eqref{Lh} and \eqref{eq:1new}, such that
for any density matrix $0\le \gamma\le 1$ 
we have
\be \label{eq:2alt}
 \tr H(A) \eta \gamma \eta  = \tr [T_h(A)-W] \eta \gamma \eta
\ge \tr \eta [H_0]_-\eta-
  C_{\delta,\e} h^\al W^{5/2}\Lambda_L
 - C\delta h^{-2}W^{1/2} \int_{B(2L)} |\nabla \otimes A|^2.
\ee

\end{theorem}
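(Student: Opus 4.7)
The plan is to reduce the Pauli operator to the free Laplacian through an operator inequality, and then control the resulting error terms by a combination of Poincar\'e, Cauchy--Schwarz, AM--GM, and Lieb--Thirring estimates. Since $\tr H(A)\eta\gamma\eta \ge \tr[\eta H(A)\eta]_-$ for every $0\le\gamma\le 1$, it suffices to prove the bound with $\gamma$ replaced by the minimizer $\gamma_* := \mathbf{1}_{(-\infty,0]}(\eta H(A)\eta)$, which is finite rank by Lieb--Thirring.

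The key operator inequality is
\[
T_h(A) \ge (1-\delta_0)D^2 - C\delta_0^{-1}|A|^2 - h|B|,
\]
which comes from $(D+A)^2 \ge (1-\delta_0)D^2 - C\delta_0^{-1}|A|^2$ (AM--GM on the cross term $2A\cdot D$, which is symmetric because $\nabla\cdot A=0$ on $B(5L/4)\supset\supp\eta$) together with $h\bsigma\cdot B\ge -h|B|$, where $\delta_0>0$ is an auxiliary parameter to be chosen. Subtracting $W$, tracing against $\eta\gamma_*\eta$, and applying the spectral bound $\tr H_0\eta\gamma_*\eta \ge \tr\eta[H_0]_-\eta$---which follows from $0\le \eta\gamma_*\eta\le\eta^2$ and the spectral decomposition of $H_0$ (so that the spectral measure $d(\tr E_\lambda\eta\gamma_*\eta)$ is dominated by $d(\tr E_\lambda\eta^2)$)---together with $(1-\delta_0)\tr\eta[H_0]_-\eta \ge \tr\eta[H_0]_-\eta$ (since the latter is $\le 0$), yields
\[
\tr H(A)\eta\gamma_*\eta \ge \tr\eta[H_0]_-\eta - \delta_0 W\tr\eta\gamma_*\eta - C\delta_0^{-1}\tr|A|^2\eta\gamma_*\eta - h\tr|B|\eta\gamma_*\eta.
\]

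The three error terms are then estimated separately. The fermionic density bound $\rho_{\eta\gamma_*\eta}\le 2\eta^2$ together with the Poincar\'e inequality (enabled by the gauge condition $\int_{B(2L)}A=0$) gives $\tr|A|^2\eta\gamma_*\eta \le 2\int\eta^2|A|^2\le CL^2\int_{B(2L)}|\nabla\otimes A|^2$. Cauchy--Schwarz gives $h\tr|B|\eta\gamma_*\eta\le ChL^{3/2}\|B\|_{L^2(B(2L))}$, which AM--GM splits into $\tfrac{\delta}{2}h^{-2}W^{1/2}\int|B|^2 + C\delta^{-1}h^4L^3W^{-1/2}$. For the term $\delta_0 W\tr\eta\gamma_*\eta$, I would use the magnetic Lieb--Thirring inequality of Theorem~\ref{thm:lls} combined with the a priori bound \eqref{eq:1new} to get $\tr[\eta H(A)\eta]_-\gtrsim -CW^{5/2}\Lambda_L$; since each negative eigenvalue of $\eta H(A)\eta$ lies in $[-CW,0]$, this forces $\mathrm{rank}(\gamma_*)\le CW^{3/2}\Lambda_L$ and hence $\tr\eta\gamma_*\eta\le CW^{3/2}\Lambda_L$, so that $\delta_0 W\tr\eta\gamma_*\eta\le C\delta_0 W^{5/2}\Lambda_L$. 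Finally, choosing $\delta_0=h^{\alpha'}$ with $\alpha<\alpha'<1$, the first contribution becomes $O(h^{\alpha'}W^{5/2}\Lambda_L)\le C_{\delta,\e}h^\alpha W^{5/2}\Lambda_L$, the upper scale bound $L\le Ch^{1/2}|\log h|^{-2}W^{-1/2}$ from \eqref{Lh} ensures $C\delta_0^{-1}L^2\le \delta h^{-2}W^{1/2}$ so the $|A|^2$ piece fits into the magnetic error, and the remainder from (c) is of higher order in $h$.

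The main obstacle is the rank bound for $\gamma_*$ in the presence of a potentially rough magnetic potential $A$: one must invoke the magnetic Lieb--Thirring inequality from \cite{LLS} together with the a priori bound \eqref{eq:1new} to ensure that the magnetic correction in that inequality does not spoil the $CW^{3/2}\Lambda_L$ rank estimate. Once this is established, the remaining work is a careful calibration of the auxiliary parameter $\delta_0$ with the scale constraints in \eqref{Lh} to match the claimed form of the error.
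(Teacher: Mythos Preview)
Your argument has a fatal gap at the ``fermionic density bound'' $\rho_{\eta\gamma_*\eta}\le 2\eta^2$. This inequality is simply false: the operator bound $0\le\gamma_*\le 1$ places no pointwise constraint on the diagonal $\gamma_*(x,x)$. For any trial state close to the free Fermi projection $P=\mathbf 1_{(-\infty,0]}(D^2-W)$ one has $\rho_P(x)=c\,h^{-3}W^{3/2}$, and the same $h^{-3}$ scaling is the correct order for $\rho_{\gamma_*}$ here. Once you insert the true density scale into your estimate of $\tr|A|^2\eta\gamma_*\eta$, the requirement $C\delta_0^{-1}h^{-3}W^{3/2}L^2\le \delta h^{-2}W^{1/2}$ together with $L^2W\lesssim h$ forces $\delta_0\gtrsim (\delta|\log h|^{4})^{-1}$, so $\delta_0 W\tr\eta\gamma_*\eta$ is of order $W^{5/2}\Lambda_L$ rather than $h^{\alpha}W^{5/2}\Lambda_L$. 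The same mismatch appears in your $h|B|$ term. Your rank bound is also not derivable as stated: from $\sum_j |E_j|\le CW^{5/2}\Lambda_L$ and $|E_j|\le CW$ you cannot conclude anything about the number of eigenvalues (many tiny eigenvalues are allowed); you would need a CLR-type count, and even that only recovers $\tr\gamma_*\lesssim W^{3/2}\Lambda_L$, not a pointwise density bound.

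The underlying reason the crude AM--GM splitting $T_h(A)\ge (1-\delta_0)D^2-C\delta_0^{-1}|A|^2-h|B|$ cannot reach $h^{\alpha}$ is that it throws away the cancellation that actually drives the result. The paper first drops only the nonnegative $A^2$ term, writes $\wt H(A)=H_0+\sigma(D)\sigma(A)+\sigma(A)\sigma(D)$, and uses the identity $\tr\big[\sigma(D)\sigma(A)+\sigma(A)\sigma(D)\big]\eta P\eta=0$ (reality of $P$) so that the cross term only sees $\gamma-P$. It then performs a dyadic decomposition $F_i$ in energy around the Fermi surface with width parameter $w$ (eventually $w=h^{\alpha/2}$); the near-Fermi piece $F_0$ gains the small volume factor $w$, while the off-Fermi pieces recover positive kinetic energy $\sim w_i\,\tr D^2F_i^2\eta\gamma\eta$ that absorbs the $DA$ errors. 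It is this interplay between the cancellation against $P$ and the dyadic energy gain, together with the commutator Lemma~\ref{lem:comm2} and Lemma~\ref{lm:disjoint}, that produces the $h^{\alpha}$ improvement. A one-step operator inequality cannot substitute for it.
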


\begin{remark}\label{rem:MagLocEstimate}
The gauge choice \eqref{eq:3} implies that one can use the
Poincar\'{e} inequality on $B(2L)$ to conclude that
\begin{align}
  \label{eq:4}
  \int_{B(2L)} A^2 \leq C L^2 \int_{B(2L)} |\nabla \otimes A|^2 \, .
\end{align}
\end{remark}

{\it Proof of Theorem~\ref{thm:2}.}
We start with localizing the vector potential which will be used later.
Let $\phi'$ be a standard localization function on $B(5L/4)$, i.e.
 $\phi' \equiv 1$ on $B(5L/4)$,
 $\supp \phi' \subset B(3L/2)$, and define 
\begin{align}\label{def:Aprim}
A' := \phi' A,
\end{align} 
in particular $\nabla\cdot A'=0$ on $B(5L/4)$.
Note that this definition of $A'$ is different from \eqref{A1prime}
used in the proof of Theorem \ref{thm:1} in Section \ref{sec:red};
the prime notation will always indicate a trivial cutoff outside of
the appropriate domain we actually work on. In Section \ref{sec:remA}
we will use \eqref{def:Aprim}.

We have  $\eta A = \eta A'$,
so the sole purpose of this modification is 
 to guarantee that only the local part of $A$ will
be taken into account. The Poincar\'e  inequality \eqref{eq:4} remains valid
in the form
\be
   \int [A']^2 \le \int_{B(2L)} A^2 \leq CL^2\int_{B(2L)} |\nabla \otimes A|^2.
\label{Aprime}
\ee

\subsubsection{Decomposition in energy space}\label{sec:decomp}

We introduce a dyadic decomposition around the non-magnetic Fermi
surface using cutoff functions in energy space. The final result of
this section is given in  Lemma \ref{lm:decomp}.

Let $\chi_i$, $i>0$, be  smooth cutoff functions such that $\supp \chi_i 
\subset [\frac{3}{4}\cdot 2^{i-1}, \frac{5}{4}\cdot 2^{i}]$,
$\chi_i\le 1$, $\chi_i(t)=1$ for $t\in  [\frac{5}{4}\cdot  2^{i-1}, \frac{3}{4}
 \cdot 2^{i}]$, $|\nabla \chi_i| \le C\cdot 2^{-i}$
and
$$
   \sum_{i\ge 1} \chi_i^2(t) \equiv 1,\qquad \forall t>2.
$$
Define cutoff functions on $\bR^3$ by
$$
   f_i(u) = \chi_i \Big( \frac{u^2-W}{wW} \Big) \quad \mbox{for} \quad i>0, \quad u\in\bR^3
$$
and
$$
   f_i(u) = \chi_{|i|} \Big( -\frac{u^2-W}{ wW} \Big) \quad\mbox{for} \quad i<0, \quad u\in\bR^3,
$$
with some $w$ such that $h \le w \le 1$.
Setting $i_0: = [|\log _2 w|]+1$, where $[\cdot ]$ denotes the integer part, we 
clearly have $f_i\equiv 0$ if $i < -i_0$.  Define $w_i := 2^{|i|}w$, then
$f_i$ is supported in a spherical shell of volume $Cw_iW^{3/2}$ for $|i|\le i_0$.

Clearly
$$
   \sum_{i>0} f_i^2 + \sum_{i<0} f_i^2 \le 1
$$
so we can define 
$$
   f_0(u) = \chi_0 \Big( \frac{u^2-W}{ wW} \Big)
$$
with an appropriate cutoff function $\chi_0$, with $0\le \chi_0\le 1$, $|\nabla\chi_0|\le C$,
 so that
$$
  \sum_{i\in \bZ} f_i^2 \equiv 1,
$$
i.e.,  $f_0(u)$ is supported in the regime where $|u^2-W|\le \frac{5}{4}wW$
 and $f_0(u)\equiv 1$ where $|u^2-W|\le \frac{3}{4}wW$.
Note that 
$$
    f_if_j \equiv 0, \quad \mbox{if} \quad |i-j|\ge 2.
$$
We also define $f_>$ by
$$
   f_>^2 : = \sum_{i> i_0} f_i^2.
$$
We note that  the support of $f_>(t)$ lies entirely in the regime $|t|\ge 2$  and
\be
\label{f>}
    |\nabla f_>| \le C, \qquad |\mbox{supp}(\nabla f_>)| \le C.
\ee

For each $i$ with $0<|i|\le i_0$ we also define 
enlarged cutoff functions $\wt f_i$ by
$$
  \wt  f_i(u) = \wt \chi_i \Big( \frac{u^2-W}{wW} \Big) \quad \mbox{for} \quad i>0
$$
and
$$
   \wt f_i(u)  =\wt \chi_{|i|} \Big( -\frac{u^2-W}{ wW} \Big) \quad\mbox{for} \quad i<0,
$$
where  $\wt\chi_i$, $i>0$, are  cutoff functions such that $\supp \wt \chi_i 
\subset [\frac{1}{2}\cdot 2^{i-1}, \frac{3}{2}\cdot 2^{i}]$,
$\chi_i\le 1$, $\wt\chi_i(t)=1$ for $t\in  [\frac{3}{4}\cdot  2^{i-1}, \frac{5}{4}
 \cdot 2^{i}]$ and  $|\nabla \wt\chi_i| \le C\cdot 2^{-i}$.
We also set
$$
    \wt f_0(u) = \wt \chi_0 \Big(  \frac{u^2-W}{wW} \Big),
$$
where $\wt\chi_0 \le 1$, $\wt\chi_0(t) \equiv 1$ for $|t|\le 2$, $\supp \wt\chi_0 \subset [-3,3]$
and $|\nabla\wt\chi_0|\le C$. 
We can similarly define 
  $\wt f_>$ by
$$
   \wt f_>^2 : = \sum_{i\ge i_0} f_i^2.
$$
Note that $\wt \chi_i \equiv 1$ on the support of $\chi_i$, therefore we have
\be
     f_i \wt f_i = f_i, \qquad 0\le |i|\le i_0, \qquad \mbox{and}
  \quad f_> \wt f_> = f_>.
\label{Fi}
\ee
These extended functions clearly satisfy
\be\label{overlap1} 
\wt f_i \le f_{i-1} + f_i + f_{i+1}\quad \mbox{for}  \quad |i|<i_0,
\ee
\be
\wt f_{i_0} \le  f_{i_0-1} +   f_{i_0} + f_>
\label{overlap}
\ee
and $0\le \wt f_i \le 1$, $0\le \wt f_> \le 1$. 

Finally we define the  momentum cutoff operators by
\begin{align}
  \label{eq:2}
     F_i: = f_i(D),\quad F_>: = f_>(D), \quad 
 \wt F_i: = \wt f_i(D),\quad \wt F_>: = \wt f_>(D),
\end{align}
then  all inequalities \eqref{Fi}, \eqref{overlap1} and \eqref{overlap}
are clearly satisfied as operator inequalities if the functions $f$ are replaced with
the operators $F$.

\bigskip

In the first step, we neglect the positive $A^2$ term:
$$
  \tr H(A) \eta \gamma \eta  \ge \tr\wt H(A) \eta \gamma \eta 
$$
with
\be 
 \qquad \wt H(A) : = H_0+ \sigma(D)\sigma(A)+\sigma(A)\sigma(D) = H_0 + 2DA + h\sigma(B)
\label{tildeH}
\ee
using $\nabla\cdot A=0$ on the support of $\eta$ and  $\nabla \times A=B$ in the last identity
and introducing the shorter notation $\sigma(v)=\bsigma\cdot v$ for any vector
$v$. Note that 
the formula \eqref{tildeH} holds
if the kinetic energy $T_h(A)$ is the Pauli operator, for the Schr\"odinger
case we would have the simpler formula
$$
  \qquad \wt H(A) : = H_0+ DA+AD = H_0 + 2DA .
$$
Here we adopted the convention that $DA = \sum_{j=1}^3 (-ih\nabla_j)A_j$, $AD =
\sum_{j=1}^3 A_j(-ih\nabla_j)$ and $D\cdot A =\sum_{j=1}^3 -ih(\nabla_j A_j)
 = -ih\, \mbox{div} A$, in particular, $[D,A]= D\cdot A$. 

\begin{remark}
We remark that neglecting $A^2$  is affordable since $\gamma$ is close to
the projection $P$, defined in \eqref{eq:freeP}, and
the density of the projection $\varrho_P(x)$ is bounded by $Ch^{-3}W^{3/2}$, thus
\begin{multline*}
 \tr A^2 \eta\gamma \eta \approx 
  \int A^2\eta^2 \varrho_P = ch^{-3}W^{3/2} \int A^2\eta^2 \le ch^{-3} L^2W^{3/2}
  \int_{ B(2L)} |\nabla \otimes A|^2  \\
= (h^{-1}L^2W)
h^{-2}W^{1/2}\int_{ B(2L)} |\nabla \otimes A|^2
\ll h^{-2}W^{1/2}\int_{ B(2L)} |\nabla \otimes A|^2 
\end{multline*}
by the Poincar\'e inequality \eqref{eq:4} and  $L\ll h^{1/2}W^{-1/2}$
(see \eqref{Lh}).  
\end{remark}
By the reality of the projection $P$, and since it acts like a scalar in spinor space, we have
\be
   \tr \big[ \sigma(D)\sigma(A)+\sigma(A)\sigma(D) + \sigma(B)\big] \eta P\eta =
  \tr (DA+AD)\eta P\eta =0.
\label{reality}
\ee
Thus we get
\begin{align}\label{bigsplit1}
\tr H(A) \eta \gamma \eta  
\geq  &
\tr \wt H(A) \eta \gamma \eta -  
 \tr \big[ \sigma(D)\sigma(A)+\sigma(A)\sigma(D)\big] \eta P\eta\nonumber\\
= & \tr F_0^2H_0 \eta \gamma \eta + 
\tr F_0^2\big[ \sigma(D)\sigma(A)+\sigma(A)\sigma(D)\big]\eta(\gamma-P) \eta 
+\tr (1-F_0^2) H_0 \eta P\eta 
\non\\
&+\sum_{i<0} \tr  F_i^2 \wt H(A) \eta (\gamma-1) \eta 
 - \sum_{i<0} \tr F_i^2 \wt H(A)\eta (P-1) \eta  \non\\
  & +\sum_{i>0} \tr F_i^2 \wt H(A) \eta \gamma \eta  
 -\sum_{i>0} \tr  F_i^2 \wt H(A)\eta P \eta.
\end{align}
The combination of the first and third term on the r.h.s  of \eqref{bigsplit1}
gives the main
term in \eqref{eq:2alt}:
\begin{align}
    \tr F_0 H_0 F_0 \eta \gamma
  \eta  + \tr H_0 (1- F_0^2) \eta P \eta 
\ge & \tr \big[ \eta F_0 H_0 F_0 \eta\big]_- +  \tr \big[\eta (1-F_0^2)^{1/2}
   H_0 (1-F_0^2)^{1/2} \eta\big]_-  \non\\
 \ge & \tr \eta \Big( \big[ F_0 H_0 F_0 \big]_- +   \big[ (1-F_0^2)^{1/2}
   H_0 (1-F_0^2)^{1/2}\big]_- \Big) \eta  \non \\
 = & \tr \eta [H_0]_-\eta.  
\end{align}
Notice that the sum on the RHS of \eqref{bigsplit1} is real though each individual term
might not be.
 To symmetrize some of the terms we take the real part and we
have proved
\begin{lemma}\label{lm:decomp} 
With the notations introduced above, we have
\begin{align}\label{bigsplit}
\tr H(A) \eta \gamma \eta  
\geq & \tr \eta [H_0]_-\eta + 
\Re \tr F_0^2\big[ \sigma(D)\sigma(A)+\sigma(A)\sigma(D)\big]\eta(\gamma-P) \eta 
\non\\
&+\sum_{i<0} \Re \tr  F_i^2 \wt H(A) \eta (\gamma-1) \eta 
 - \sum_{i<0} \Re \tr F_i^2 \wt H(A)\eta (P-1) \eta  \non\\
  & +\sum_{i>0} \Re \tr F_i^2 \wt H(A) \eta \gamma \eta  
 -\sum_{i>0} \Re \tr  F_i^2 \wt H(A)\eta P \eta.  \qquad \qed
\end{align}
\end{lemma}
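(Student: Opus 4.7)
The plan is to derive \eqref{bigsplit} by three manipulations of the trace $\tr H(A)\eta\gamma\eta$: drop the nonnegative potential $A^2$, subtract the vanishing contribution $\tr\Sigma\,\eta P\eta$ (where $\Sigma:=\sigma(D)\sigma(A)+\sigma(A)\sigma(D)$), and insert the partition $\sum_i F_i^2=1$ to localize in momentum around the Fermi surface.

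For the first step, $T_h(A)=(D+A)^2+h\bsigma\cdot B=D^2+\Sigma+A^2$ on the set $\{\nabla\cdot A=0\}$; since $A$ is divergence free on $\supp\eta\subset B(L)$, this gives $H(A)=\wt H(A)+A^2$ on $\supp\eta$, so $\tr H(A)\eta\gamma\eta\ge \tr\wt H(A)\eta\gamma\eta$ using $\gamma,A^2\ge 0$. For the second step, I would split $\Sigma=2D\cdot A+h\bsigma\cdot B$. The magnetic piece has vanishing trace because $\eta P\eta$ acts as a scalar on $\bC^2$ and $\tr_{\bC^2}\sigma_j=0$. For $\tr D\cdot A\,\eta P\eta$, Hermiticity of $D\cdot A$ (which uses $\nabla\cdot A=0$) makes the trace real, while time-reversal antisymmetry of $D\cdot A$ against the real, even-in-momentum kernel of $\eta P\eta$ forces it to be purely imaginary; hence $\tr\Sigma\,\eta P\eta=0$.

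Combining the two steps and inserting $1=F_0^2+\sum_{i\ne 0}F_i^2$ into $\tr H_0\,\eta\gamma\eta$ in the identity
\[
\tr \wt H(A)\eta\gamma\eta \;=\; \tr H_0\eta\gamma\eta+\tr\Sigma\,\eta(\gamma-P)\eta,
\]
I would rearrange so that for each $i\ne 0$ the non-magnetic piece $\tr F_i^2 H_0\eta\gamma\eta$ pairs with $\tr F_i^2\Sigma\,\eta(\gamma-P)\eta$ to form $\tr F_i^2\wt H(A)\eta(\gamma-P)\eta+\tr F_i^2 H_0\,\eta P\eta$. Together with the $i=0$ kinetic piece, the residuals assemble into the ``main'' non-magnetic term
\[
\tr F_0^2 H_0\,\eta\gamma\eta+\tr(1-F_0^2)H_0\,\eta P\eta,
\]
where I used $\sum_{i\ne 0}F_i^2=1-F_0^2$. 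I would bound this below by the variational principle $\tr B\gamma'\ge \tr[B]_-$ for $0\le\gamma'\le 1$: applied twice (to $\gamma'=\gamma$ and $\gamma'=P$) this yields $\tr[\eta F_0 H_0 F_0\eta]_-+\tr[\eta(1-F_0^2)^{1/2}H_0(1-F_0^2)^{1/2}\eta]_-$. Then the operator identity $[F_0^2 H_0]_-+[(1-F_0^2)H_0]_-=[H_0]_-$ (valid because $F_0^2$ and $1-F_0^2$ are nonnegative and commute with $H_0$), combined with the general inequality $\tr[\eta B\eta]_-\ge \tr\eta[B]_-\eta$ (proved by splitting $B=B_++B_-$ and using $0\le\eta\gamma'\eta\le\eta^2$, which forces $\tr B_+\eta\gamma'\eta\ge 0$ and $\tr B_-\eta\gamma'\eta\ge \tr B_-\eta^2$), collapses the main piece to $\tr\eta[H_0]_-\eta$.

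To finish, I would split $\gamma-P=(\gamma-1)-(P-1)$ on the $i<0$ terms (setting up the later use of $\gamma-1\le 0$) and leave it as $\gamma-P$ on the $i>0$ terms (to exploit $\gamma\ge 0$). This recasts the sum into the exact form of \eqref{bigsplit}. Since $\tr H(A)\eta\gamma\eta$ is real, taking real parts termwise on the right is without cost. The only genuinely nontrivial input is the vanishing $\tr\Sigma\,\eta P\eta=0$; the remaining steps are algebraic rearrangement or the standard variational characterization of $[\cdot]_-$.
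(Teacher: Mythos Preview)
Your proposal is correct and follows essentially the same approach as the paper: drop the nonnegative $A^2$, use the reality/spinor-scalar nature of $\eta P\eta$ to kill $\tr\Sigma\,\eta P\eta$, insert $\sum_i F_i^2=1$, and collapse the non-magnetic main piece $\tr F_0^2 H_0\eta\gamma\eta+\tr(1-F_0^2)H_0\eta P\eta$ to $\tr\eta[H_0]_-\eta$ via the variational principle and the commutation $[F_0,H_0]=0$. Your justifications for $\tr\Sigma\,\eta P\eta=0$ and for $\tr[\eta B\eta]_-\ge\tr\eta[B]_-\eta$ are a bit more explicit than the paper's, but the argument is the same.
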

The main term is the first on the r.h.s of \eqref{bigsplit}.
In the next subsections we  estimate the other terms
and we show that they can be included in  the negative
 error terms in \eqref{eq:2alt}.

\subsubsection{Estimate of the $\sigma(D)\sigma(A)+\sigma(A)\sigma(D)$ term in 
\eqref{bigsplit}}

\begin{lemma}\label{lm:ad} For any $\delta>0$ we have
\begin{align}
   \Big|\tr F_0^2\big[& \sigma(D)\sigma(A)+\sigma(A)\sigma(D)\big]
\eta(\gamma-P) \eta\Big| \non\\
&\le  C\delta^{-1} \big[ h + wLW^{1/2} +h^{-1}w^2L^2W\big]W^{5/2}  \Lambda_L
 + \delta h^{-2}W^{1/2} \int_{B(2L)} |\nabla \otimes A|^2.
\end{align}
\end{lemma}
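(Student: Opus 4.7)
The plan is to cycle the trace and then apply a Hilbert--Schmidt Cauchy--Schwarz inequality. Writing $X := \sigma(D)\sigma(A)+\sigma(A)\sigma(D)$ and using cyclicity,
\begin{equation*}
\tr F_0^2 X \eta(\gamma-P)\eta
= \tr\bigl[(\gamma-P)\eta F_0\bigr]\bigl[F_0 X\eta\bigr],
\end{equation*}
and the bound $|\tr UV| \le \|U\|_2 \|V\|_2$ (Hilbert--Schmidt norm) reduces the estimate to controlling the two factors separately.

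For the $\gamma-P$ factor, since $0 \le \gamma \le 1$ and $P$ is an orthogonal projection, $(\gamma-P)^2 \le 1$, hence
\begin{equation*}
\bigl\|(\gamma-P)\eta F_0\bigr\|_2^2 \le \|\eta F_0\|_2^2 = (2\pi h)^{-3}\|f_0\|_{L^2}^2\|\eta\|_{L^2}^2 \le CwW^{3/2}\Lambda_L,
\end{equation*}
via the standard kernel identity $\|m(D)g\|_2^2 = (2\pi h)^{-3}\|m\|_2^2\|g\|_2^2$ for $g$ acting by multiplication, together with the fact that $f_0$ is supported in a spherical shell of volume $\sim wW^{3/2}$ around the sphere of radius $\sqrt{W}$, and that $\eta$ is supported in $B(L)$.

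For the factor $F_0 X\eta$, I would first compute
\begin{equation*}
\sigma(D)\sigma(A)+\sigma(A)\sigma(D) = 2A\cdot D - ih\,\nabla\cdot A + h\,\sigma(B).
\end{equation*}
The divergence term is killed on the right by $\eta$ because $\nabla\cdot A = 0$ on $\supp\eta$. After commuting $D$ across $\eta$ and across multiplication by $A$ (producing subleading contributions proportional to multiplication by $A\nabla\eta$ and by $\nabla(A\eta)$, each carrying a factor of $h$), every remaining piece has the form (Fourier multiplier) $\cdot$ (multiplication). Applying the kernel identity together with $\|kf_0(k)\|_2^2 \le CW\|f_0\|_2^2 \le CwW^{5/2}$, the Poincar\'e inequality $\|A\|_{L^2(B(2L))}^2 \le CL^2 \int_{B(2L)}|\nabla\otimes A|^2$ (available by the gauge choice \eqref{eq:3}), and $|B|\le|\nabla\otimes A|$, one obtains
\begin{equation*}
\|F_0 X\eta\|_2^2 \le C\bigl[h^{-3}wW^{5/2}L^2 + h^{-1}wW^{3/2}\bigr]\int_{B(2L)}|\nabla\otimes A|^2.
\end{equation*}

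Multiplying the two Hilbert--Schmidt bounds, extracting the square root (using $\sqrt{a_1+a_2} \le \sqrt{a_1}+\sqrt{a_2}$), and applying the AM--GM inequality $\sqrt{ab} \le \delta a + C\delta^{-1}b$ with $a := h^{-2}W^{1/2}\int|\nabla\otimes A|^2$ then produces the desired decomposition into $\delta h^{-2}W^{1/2}\int|\nabla\otimes A|^2$ and an absolute error $C\delta^{-1}[\cdots]W^{5/2}\Lambda_L$; the three contributions in the bracket originate from the main $A\cdot D$ piece, the localization error coming from $A\nabla\eta$, and the spin--field term $h\sigma(B)$. The main obstacle in the bookkeeping is to keep the commutators between $D$, multiplication by $A$, and $\eta$ under sharp control so that only the stated powers of $h, w, L, W$ appear in the final three-term bound; the intermediate term $wLW^{1/2}\cdot W^{5/2}\Lambda_L$ in particular arises from a cross term generated when balancing the two summands inside $\|F_0 X\eta\|_2^2$ against $\|\eta F_0\|_2^2$ at different exponents.
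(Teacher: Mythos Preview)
Your approach is correct and in fact yields a bound that is slightly sharper than the one stated in the Lemma: carrying out your AM--GM step carefully, the two contributions to $\|F_0X\eta\|_2^2$ produce error terms of order $h^{-1}w^2L^2W\cdot W^{5/2}\Lambda_L$ and $hw^2\cdot W^{5/2}\Lambda_L$, the second of which is dominated by $h\cdot W^{5/2}\Lambda_L$ since $w\le 1$. The middle term $wLW^{1/2}\cdot W^{5/2}\Lambda_L$ does \emph{not} arise in your argument at all; your explanation for it as a ``cross term'' is therefore spurious, but harmless---you have simply proved more than you need. One small simplification: since the paper works in the ``$DA$'' ordering, $F_0(DA)\eta=\sum_j(F_0D_j)(A_j\eta)$ is already of the form (Fourier multiplier)$\cdot$(multiplication), so no commutation of $D$ past $A$ or $\eta$ is needed; the only subleading piece that survives is $hF_0\sigma(B')\eta$.

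The paper takes a different route. Instead of pairing $(\gamma-P)\eta F_0$ against $F_0X\eta$ directly, it inserts the enlarged cutoffs $\wt F_0+\wt G_0=1$ and splits the $DA'$ term into a ``diagonal'' piece $F_0^2DA'\wt F_0\eta\gamma\eta\wt F_0$ and an ``off-diagonal'' piece $F_0^2A'\wt G_0\eta\gamma\eta\wt F_0D$. The diagonal piece is estimated much as you do (producing the $h^{-1}w^2L^2W$ term), while the off-diagonal piece is controlled via the commutator Lemma~\ref{lem:comm2}, exploiting that $f_0$ and $\wt g_0$ have disjoint supports; this is the origin of the paper's middle term $wLW^{1/2}$. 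Your direct Hilbert--Schmidt argument is more elementary here and avoids Lemma~\ref{lem:comm2}. The paper's $\wt F_i/\wt G_i$ splitting, however, is not gratuitous: the same decomposition is reused in Lemmas~\ref{lm:largeenergy} and~\ref{lm:sumi} for the $i\neq 0$ shells, where one must extract a \emph{positive} kinetic term $w_i\tr D^2F_i^2\eta\gamma\eta$ rather than merely bound an absolute value, and there your direct pairing would not suffice.
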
 

{\it Proof.}
Using $\nabla\cdot A = 0$ on $B(5L/4)$ and $\eta A= \eta A'$ (recall the definition of $A'$
from \eqref{def:Aprim}) and the locality of the operator $D$, we have
\be\label{gg}
F_0^2\big[ \sigma(D)\sigma(A)+\sigma(A)\sigma(D)\big]\eta(\gamma-P) \eta 
= 2 F_0^2DA'\eta(\gamma-P) \eta + hF_0^2\sigma(B')\eta(\gamma-P) \eta
\ee
with $B' = \nabla\times A'$.  Note that
\be
   \int [B']^2 \le CL^{-2}\int_{B(2L)} A^2 + \int_{\bR^3} [\phi']^2 |\nabla\times A|^2
 \le C\int_{B(2L)} |\nabla \otimes A|^2 
\label{nablaA1}
\ee
using \eqref{eq:3} and the Poincar\'e inequality.

We will estimate $|\tr F_0^2DA'\eta\gamma\eta|$ 
and $|h\tr F_0^2\sigma(B')\eta\gamma\eta|$ for any density
matrix $0\le\gamma\le 1$, and then the same 
estimate can be applied to $\gamma$
replaced with $P$ as well. 
We start with the second (magnetic) term in \eqref{gg},
 and in fact we prove the following stronger
estimate that will be used later
\begin{align}\label{Bterm}
\sum_{|i|\le i_0} h |\tr F_i^2\sigma(B') \eta \gamma \eta|
 &\le \sum_{|i|\le i_0} h \Big(\tr F_i^2 [B']^2 \Big)^{1/2}
 \Big(\tr F_i^2 \eta \gamma 
\eta\Big)^{1/2} \non\\
&\le\sum_{|i|\le i_0} Ch^{-2} w_i W^{3/2} \Big( \int
 [B']^2\Big)^{1/2} \Big( \int \eta^2 \Big)^{1/2} \non
\\
&\le \delta h^{-2}W^{1/2}
 \int_{B(2L)} |\nabla \otimes A|^2 + C\delta^{-1} h W^{5/2}\Lambda_L.
\end{align}
Here we used \eqref{nablaA1} and that the diagonal element
 of $F_i$, $|i|\le i_0$, is bounded by
\be\label{diag}
  \sup_x F_i(x,x) \le \int_{\bR^3} f_i\big( hp\big)\rd p \le Ch^{-3} w_i W^{3/2}
\ee
since the support of $f_i$ is a spherical shell of width $C w_i W^{1/2}$ and radius of
order $W^{1/2}$, where we recall $w_i = 2^{|i|}w$.
In particular, \eqref{Bterm} shows that the magnetic term in \eqref{gg}
can be estimated as required in Lemma \ref{lm:ad}.

\medskip
For the first term in \eqref{gg}, with a general density matrix $\gamma$,
we insert $\wt F_0$ and $\wt G_0 : = 1- \wt F_0$ to get
\be\label{tildesplit}
   |\tr F_0^2DA'\eta\gamma\eta| \le
 |\tr F_0^2DA'\wt F_0 \eta\gamma\eta \wt F_0|+
|\tr F_0^2A' \wt G_0 \eta\gamma\eta \wt F_0D|,
\ee
where we used \eqref{Fi} and the cyclicity of the
trace. We will estimate these two terms separately.

The  first term in \eqref{tildesplit} is estimated as
\begin{align}\label{1st}
 |\tr F_0^2DA'\wt F_0 \eta\gamma\eta \wt F_0| &\le 
 \Big( \tr [A']^2\wt F_0 \eta\gamma\eta \wt F_0\Big)^{1/2}
 \Big( \tr F_0^4D^2\wt F_0 \eta\gamma\eta \wt F_0\Big)^{1/2}\non \\
  &\le C \Big( \tr [A']^2\wt F_0^2\Big)^{1/2}
 \Big(W \tr F_0^4\eta^2\Big)^{1/2} \non \\
 &\le Ch^{-3} w W^2 \Big(  \int [A']^2 \Big)^{1/2} \Big( \int \eta^2\Big)^{1/2} \non \\
  &\le \delta^{-1}h^{-1} w^2 L^2W^{7/2}\Lambda_L+ \delta  h^{-2}
 W^{1/2}\int_{B(2L)} |\nabla \otimes A|^2 .
\end{align}
Here we first used that $D^2$ is bounded by $CW$ when multiplied by $F_0$.
Then, similarly to \eqref{diag}, we estimated
 the diagonal element of $F_0$ and $\wt F_0$ as
\be
   \sup_x  F_0(x,x)\le \sup_x \wt F_0(x,x) \le Ch^{-3} w W^{3/2},
\label{F0x}
\ee
and finally we used \eqref{Aprime}.

The  second term in \eqref{tildesplit} is estimated as
\begin{align}\label{2st}
 |\tr F_0^2A'\wt G_0 \eta\gamma\eta \wt F_0D| &\le 
 \Big( \tr  F_0^2A'\wt G_0^2A' F_0^2\Big)^{1/2}
 \Big( \tr D^2\wt F_0^2 (\eta\gamma\eta)^2 \Big)^{1/2} .
\end{align}
For the second factor we can use the previous bound
\be
  \tr D^2\wt F_0^2 (\eta\gamma\eta)^2  \le CW \tr \wt F_0^2 \eta^2 \le C h^{-3} w
 W^{5/2}L^3  = CwW^{5/2}\Lambda_L.
\label{Ddens}
\ee
To estimate the first factor, we will need the
 following lemma,  whose proof is postponed:
\begin{lemma}\label{lem:comm2} Let $f$ and $g$ be positive real
 functions on $\bR^3$ such that $fg=0$.
Then for any real function $a\in H^1(\bR^3)$ we have
\begin{align}
\tr_{L^2(\bR^3)}[ f(D) a(x) g(D) a(x) f(D)] \leq h^{-2} \|f\|_{\infty} \|g\|_\infty
  \|\nabla f\|_1 \| a\|_2 \| \nabla \otimes a\|_2\,.
\end{align}
\end{lemma}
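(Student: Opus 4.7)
The plan is to recast the trace as a Hilbert--Schmidt norm, pass to momentum representation, and exploit the hypothesis $fg=0$ to convert one factor of $f$ into a finite difference of $f$ and thereby extract a factor of $h|k|$ paired with $\|\nabla f\|_1$.

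Step 1. Set $F:=f(D)$ and $G:=g(D)$. The trace equals $\|G^{1/2}aF\|_{\mathrm{HS}}^2$. Using the Fourier convention $\hat\psi(p)=\int e^{-ix\cdot p}\psi(x)\rd x$, the operators $F$ and $G^{1/2}$ become pointwise multiplication by $f(hp)$ and $g^{1/2}(hp)$, while the operator $a$ acts as $(2\pi)^{-3}$ times convolution by $\hat a$. The Fourier kernel of $T:=G^{1/2}aF$ is thus $T^\wedge(p,q)=g^{1/2}(hp)\hat a(p-q)f(hq)$. A direct computation of $\|T\|_{\mathrm{HS}}^2=(2\pi)^{-6}\iint|T^\wedge(p,q)|^2\rd p\rd q$, followed by the substitutions $P=hp$, $Q=hq$, $k=(P-Q)/h$, gives
\[
\tr FaGaF=(2\pi)^{-6}h^{-3}\!\int_{\bR^3}\!|\hat a(k)|^2\, I(k)\,\rd k,\qquad I(k):=\!\int_{\bR^3}\! g(R)\,f(R-hk)^2\,\rd R.
\]

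Step 2 (the heart of the argument). Because $fg\equiv 0$ pointwise one has $g(R)f(R)=0$, so
\[
g(R)\,f(R-hk)=g(R)\bigl(f(R-hk)-f(R)\bigr).
\]
Combined with $|f(R-hk)|\le\|f\|_\infty$ and the fundamental theorem of calculus $f(R-hk)-f(R)=-h\int_0^1 k\cdot\nabla f(R-shk)\rd s$, Fubini yields
\[
|I(k)|\le\|f\|_\infty\|g\|_\infty\!\int|f(R-hk)-f(R)|\rd R\le h|k|\,\|f\|_\infty\|g\|_\infty\|\nabla f\|_1.
\]

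Step 3. Cauchy--Schwarz and Plancherel give $\int|k||\hat a(k)|^2\rd k\le\|\hat a\|_2\||k|\hat a\|_2=(2\pi)^3\|a\|_2\|\nabla\otimes a\|_2$. Inserting this and the estimate on $I(k)$ into the formula from Step 1 produces the claimed inequality, with the explicit constant $C=(2\pi)^{-3}$.

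The whole proof pivots on the single algebraic identity $g(R)f(R-hk)=g(R)(f(R-hk)-f(R))$: this is where the disjoint-support assumption $fg=0$ is monetized, trading one copy of $f$ for a finite difference and thereby producing the factor $h|k|\,\|\nabla f\|_1$. There is no real obstacle beyond this; the rest is just careful bookkeeping with Fourier normalization constants.
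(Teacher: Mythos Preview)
Your proof is correct and follows essentially the same route as the paper's: pass to Fourier space, use $fg=0$ to replace one factor $f(R-hk)$ by the difference $f(R-hk)-f(R)$, bound that difference via the fundamental theorem of calculus to produce $h|k|\,\|\nabla f\|_1$, and finish with Cauchy--Schwarz/Plancherel on $\int |k|\,|\hat a(k)|^2\,\rd k$. The only differences are cosmetic (you first rewrite the trace as a Hilbert--Schmidt norm and track the $(2\pi)$ factors explicitly), and your final constant $(2\pi)^{-3}$ is in fact slightly sharper than the stated bound.
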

Using that  $F_0 = f_0(D)$ with a function $f_0$ satisfying
$|\nabla f_0|\le Cw^{-1}W^{-1/2}$, $|\supp f_0|\le CwW^{3/2}$, thus $\|\nabla f_0\|_1\le CW$, 
we obtain
\be
 \tr  F_0^2A'\wt G_0^2A' F_0^2 \le Ch^{-2}W\| A'\|_2 \|\nabla \otimes A'\|_2 
 \le Ch^{-2}LW \int_{B(2L)} |\nabla \otimes A|^2.
\label{appl}
\ee
Here we used \eqref{Aprime} and the second inequality in  \eqref{nablaA1}.

Combining these estimates and separating the two factors
in \eqref{2st} by a Schwarz inequality we obtain the following bound
\begin{align}
 |\tr F_0^2A'\wt G_0 \eta\gamma\eta \wt F_0D| &\le 
 C \delta^{-1}wLW^3\Lambda_L +\delta h^{-2}W^{1/2}\int_{B(2L)} |\nabla\otimes A|^2  .
\end{align}
This completes the proof of Lemma \ref{lm:ad}. \qed

\bigskip

{\it Proof of Lemma \ref{lem:comm2}.}
By passing to Fourier space we get
\begin{align}
\tr_{L^2(\bR^3)}[ f(D) a(x) g(D) a(x) f(D)] = \iint_{\bR^3\times\bR^3}
 f(hp) \hat a(p-q) g(hq) \hat a(q-p) f(hp) \,\rd p\rd q.
\end{align}
We now use that $fg = 0$ to rewrite the above integral as
\begin{align}
&= \iint [f(hp) - f(hq)] |\hat a(p-q)|^2 g(hq) f(hp) \,\rd p \rd q \nonumber \\
&\leq \|f\|_{\infty} \|g\|_\infty
\iint \int_0^1 |h(q-p)\cdot\nabla f (hp + th(q-p))|\,\rd t |\hat a(p-q)|^2\,\rd p \rd q
 \nonumber\\
&= h^{-2}  \|f\|_{\infty} \|g\|_\infty \int_{\bR^3} |\nabla f(v)|\,\rd v \int_{\bR^3} |z| |\hat a(z)|^2\,\rd z.
\end{align}
The result now follows upon passing back in $x$-space in the integral over $\hat a$. 
\qed

\subsubsection{Error terms in  \eqref{bigsplit} for $i> i_0$}

 We now deal with the two sums in the last line of \eqref{bigsplit}
in the regime where $i> i_0$.

\begin{lemma}\label{lm:largeenergy}
Under the conditions  $LW^{1/2} \ll h^{1/2}$  and assuming
\eqref{eq:1new},
 we have for any fixed $\delta>0$
\be
  \Re \sum_{i> i_0} \tr  F_i^2 \wt H(A) \eta \gamma \eta 
 \ge \frac{1}{2} \tr[ D^2 F_>^2 \eta \gamma \eta]- C\delta h^{-2}W^{1/2}
\int_{B(2L)} |\nabla \otimes A|^2 
 - \delta \tr[ D^2 \wt F_>^2 \eta \gamma \eta]
\label{largeenergygamma}
\ee
if $h\le h_\delta$. Moreover,
for any $N\ge 1$ and $h\le h_\delta,$ we also have
\be
  \Big|\sum_{i> i_0} \tr  F_i^2 \wt H(A) \eta P \eta \Big|
 \le C\delta h^{-2}W^{1/2}\int_{B(2L)} |\nabla \otimes A|^2 + 
C_{N} W^{5/2}\Big(\frac{h}{LW^{1/2}}\Big)^N.
\label{largeenergyP}
\ee
\end{lemma}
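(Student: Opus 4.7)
Both estimates exploit the fact that $F_>$ is supported in momentum space strictly above the Fermi sea: on $\supp F_>$ one has $D^2 \ge c_1 W$ for some $c_1 > 1$ (the cutoffs are designed so that $u^2 - W \ge \tfrac{3}{4}W$ on $\supp f_>$), so that $H_0 = D^2 - W \ge c_0 D^2$ with some $c_0>0$, and moreover $F_> P = 0$ exactly. Since $\sum_{i > i_0} F_i^2 = F_>^2$, I would begin both parts by collapsing the sum into a single trace involving $F_>^2 \wt H(A)$, with $\wt H(A) = H_0 + 2DA + h\sigma(B)$, and replacing $A$ by its localized version $A' = \phi' A$ from \eqref{def:Aprim} on $\supp \eta$.

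For \eqref{largeenergygamma}, the kinetic term provides the main contribution: the operator inequality $F_>^2 H_0 \ge c_0 F_>^2 D^2$ on $\supp F_>$ yields $\tr F_>^2 H_0 \eta\gamma\eta \ge c_0 \tr D^2 F_>^2 \eta\gamma\eta$, from which one recovers the target $\tfrac{1}{2}\tr D^2 F_>^2 \eta\gamma\eta$ (any slack between $c_0$ and $\tfrac12$ being absorbed into the permitted error $\delta \tr D^2 \wt F_>^2 \eta\gamma\eta$ via $\wt F_>^2 \ge F_>^2$). The cross term $2\Re \tr F_>^2 DA \eta\gamma\eta$, which equals $\tr F_>^2 (DA+AD)\eta\gamma\eta$ thanks to $\nabla\cdot A = 0$ on $\supp \eta$, is estimated using the quadratic form inequality $\pm(DA+AD) \le \varepsilon D^2 + \varepsilon^{-1}A^2$. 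The $\varepsilon D^2$ contribution is reabsorbed into the main kinetic term. For the $\varepsilon^{-1} A^2$ piece I would follow the decomposition strategy of Lemma~\ref{lm:ad}: insert the enlarged cutoff via $F_> = F_> \wt F_>$, split into a diagonal contribution (handled by the Poincar\'e inequality~\eqref{eq:4} to convert $\int A^2$ into $L^2\int |\nabla \otimes A|^2$) and an off-diagonal contribution (handled by Lemma~\ref{lem:comm2}), the final bound being absorbable into $\delta h^{-2} W^{1/2} \int |\nabla \otimes A|^2 + \delta \tr D^2 \wt F_>^2 \eta\gamma\eta$. The magnetic term $h\Re \tr F_>^2 \sigma(B) \eta\gamma\eta$ is handled by the same estimate as in~\eqref{Bterm} of Lemma~\ref{lm:ad}, and is strictly easier owing to the explicit factor of $h$.

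For \eqref{largeenergyP}, the essential input is the strict orthogonality $F_>^2 P = 0$. Expanding $\eta P \eta = P \eta^2 + [\eta,P]\eta$, the $H_0$ piece reduces to $\tr F_>^2 H_0 [\eta,P]\eta$, while the $2DA$ and $h\sigma(B)$ contributions involve both $[\eta,P]$ and the non-commutation of $A$ and $B$ with $P$. I would control these commutators by replacing the sharp $P$ by a smooth spectral cutoff $\wt P = \chi(D^2/W)$ chosen so that $F_>^2 \wt P = 0$ exactly. The smoothed commutator $[\eta,\wt P]$ can then be expanded by a standard semiclassical identity in which each iteration contributes a factor of order $h/(LW^{1/2})$; iterating $N$ times yields the advertised $(h/(LW^{1/2}))^N$ smallness. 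Commutators involving $A$ or $B$ generate $\|\nabla \otimes A\|_{L^2}$ factors which are absorbed into the permitted magnetic error $\delta h^{-2} W^{1/2} \int |\nabla \otimes A|^2$. The difference $P - \wt P$, being supported on a spectral interval disjoint from $\supp F_>$, drops out of the sharp-support calculation entirely.

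The main obstacle I anticipate is the bookkeeping for the cross term $2DA$ in \eqref{largeenergyP}. Since $A$ is merely in $H^1_{\mathrm{loc}}$, each commutator with $A$ yields an $L^2$-derivative of $A$ rather than a bounded function, so one cannot iterate commutators freely against $A$. The strategy must therefore place $A$ in a single position within each trace, with all other operators being smooth spectral cutoffs amenable to repeated commutation, so that the $\|\nabla \otimes A\|_{L^2}$ factor appears at most once per term and can be absorbed into $\delta h^{-2} W^{1/2} \int |\nabla \otimes A|^2$ without proliferation.
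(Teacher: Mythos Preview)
Your outline for \eqref{largeenergygamma} has a genuine gap in the ``diagonal'' piece. You propose to mimic Lemma~\ref{lm:ad}: after a Schwarz-type splitting, bound the $A^2$-contribution by Poincar\'e, following \eqref{1st} and \eqref{Bterm}. But those estimates work only because the diagonal kernel of the spectral cutoff is controlled, $\sup_x \wt F_0(x,x)\le Ch^{-3}wW^{3/2}$ (see \eqref{F0x} and \eqref{diag}); this reflects the \emph{thinness} of the shell around the Fermi surface. For $\wt F_>$ the spectral support is the entire region $\{D^2\ge cW\}$, so $\sup_x \wt F_>(x,x)=\infty$ and the same bound is unavailable. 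The paper's actual argument for the diagonal piece is different: it controls $\tr [A']^2\wt\omega_>$ (with $\wt\omega_>=\wt F_>\eta\gamma\eta\wt F_>$) by the H\"older--Sobolev--Lieb--Thirring chain \eqref{hol}, yielding a factor $(h^{-2}\tr D^2\wt\omega_>)^{3/5}$, and then closes the estimate using the apriori magnetic bound \eqref{eq:1new} after a Young inequality. Neither ingredient appears in your sketch, and without them the term cannot be absorbed. The same issue afflicts your treatment of $h\sigma(B)$: the estimate \eqref{Bterm} you cite sums only over $|i|\le i_0$ and again relies on \eqref{diag}. The paper sidesteps this by keeping the Pauli form $\sigma(D)\sigma(A')+\sigma(A')\sigma(D)$ intact rather than splitting off $h\sigma(B)$, so that each piece pairs naturally with $\tr D^2\wt\omega_>$ and $\tr [A']^2\wt\omega_>$.

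For \eqref{largeenergyP} your commutator-expansion plan would eventually work but is considerably more elaborate than what the paper does, and your stated ``main obstacle'' is in fact avoidable. The paper observes that the cross-term estimates just established hold for \emph{any} density matrix, hence for $\gamma=P$; this produces the magnetic error plus $\delta\tr D^2\wt F_>^2\eta P\eta$. Both this kinetic remainder and the $H_0$-term are then bounded directly by $\|D\wt F_>\eta P\|_{HS}^2$, which Lemma~\ref{lm:disjoint} controls by $C_N W^{5/2}(h/LW^{1/2})^N$ using only the $O(1)$ separation of $\supp \wt f_>$ from $\supp {\bf 1}_{[0,1]}$ and the smoothness of $\eta$. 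No iterated commutators with $A$ are needed at all.
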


{\it Proof.}
Using the first formula in \eqref{tildeH}, we
 write
\begin{align}\label{newad}
 \Re \sum_{i> i_0} \tr F_i^2 \wt H(A) \eta \gamma \eta 
  &=  \Re \tr F_>^2 \big[ H_0+\sigma(D)\sigma(A)+\sigma(A)\sigma(D)\big]
 \eta \gamma \eta   \\
&\ge
   \frac{1}{2} \tr \big[ D^2 F_>^2 \eta\gamma \eta\big]  - |\tr  F_>^2 \sigma(D)
\sigma(A')\eta \gamma \eta|- |\tr  F_>^2 \sigma(A')\sigma(D)\eta \gamma \eta|. \non
\end{align}
Inserting $\wt G_>:=1-\wt F_>$ into the second term, and using  \eqref{Fi}, we have
\be
   |\tr  F_>^2 \sigma(D)\sigma(A')\eta \gamma \eta| 
\le |\tr  F_>^2 \sigma(D)\sigma(A') \wt F_> \eta \gamma \eta \wt F_> |+
|\tr  F_>^2 \sigma(A') \wt G_> \eta \gamma \eta \wt F_> \sigma(D) |.
\label{DA'}
\ee
For the first term in \eqref{DA'}, with the notation 
$\wt\om_>:=\wt F_> \eta \gamma \eta \wt F_> $,
 we use
\be
 |\tr  F_>^2 \sigma(D)\sigma(A') \wt F_> \eta \gamma \eta \wt F_> |\le
 \Big( \tr [A']^2 \wt\om_>\Big)^{1/2}
 \Big( \tr D^2\wt \om_>\Big)^{1/2} .
\label{fir}
\ee
Applying H\"older, Sobolev and Lieb-Thirring inequalities and the
bounds \eqref{Aprime}, \eqref{nablaA1}, we have
\begin{align}\label{hol}
    \tr [A']^2 \wt\om_> 
  & \le  \Big( \int [A']^5\Big)^{2/5} \Big(\int \wt\varrho_>^{5/3}\Big)^{3/5} \non\\
 &\le   C\Big( \int [A']^2\Big)^{1/10} \Big( \int [A']^6\Big)^{3/10}
\Big(h^{-2}\tr D^2\wt\om_>\Big)^{3/5}
 \non\\
& \le CL^{1/5} \Big( \int_{B(2L)} |\nabla\otimes A|^2\Big)
 \Big(h^{-2}\tr D^2\wt\om_>\Big)^{3/5}, 
\end{align}
where $\wt\varrho_>(x):= \wt \om_>(x,x)$ denotes the density of $\wt \om_>$.
Thus \eqref{fir} can be estimated as
\begin{align}
    |\tr  F_>^2 \sigma(D)\sigma(A') \wt F_> \eta \gamma \eta \wt F_> |
  & \le CL^{1/10} h^{-3/5} \Big( \int_{B(2L)} |\nabla\otimes A|^2\Big)^{1/2}
   \Big(\tr D^2\wt\om_>\Big)^{4/5} \non\\
&\le C_\delta L^{1/2}h^{-3}  \Big( \int_{B(2L)} |\nabla\otimes A|^2\Big)^{5/2}
 + \delta \tr D^2\wt\om_>\non\\
&\le C_\delta (h^{-1/2}L)^5  h^{-2}W^3 \int_{B(2L)} |\nabla\otimes A|^2
 + \delta \tr D^2\wt\om_>\, ,
\end{align}
where we used \eqref{eq:1new} in the last step. Considering that $LW^{1/2}\ll h^{1/2}$,
we obtain that the first term in  \eqref{DA'} is bounded by the 
two negative error terms in \eqref{largeenergygamma}.

For the second term in  \eqref{DA'}, after a Schwarz inequality, we use Lemma \ref{lem:comm2}
similarly to \eqref{appl}:
\begin{align}\label{secse}
|\tr  F_>^2 \sigma(A') \wt G_> \eta \gamma \eta \wt F_> \sigma(D) |&\le
\Big(\tr  F_>^2 A' \wt G_>^2 A' F_>^2 \Big)^{1/2}
\Big(\tr  D^2 \wt F_>^2(\eta \gamma \eta)^2  \Big)^{1/2} \non\\
&\le C\Big(L h^{-2} W  \int_{B(2L)} |\nabla\otimes A|^2\Big)^{1/2}
 \Big(\tr  D^2 \wt F_>^2\eta \gamma \eta  \Big)^{1/2}\non\\
&\le C\delta h^{-3/2}W^{1/2} \int_{B(2L)} |\nabla\otimes A|^2
 + \delta \tr  D^2 \wt F_>^2\eta \gamma \eta
\end{align}
for $h\le h_\delta$,
using $LW^{1/2}\ll h^{1/2}$ and estimating
$\| \nabla f_>\|_1\le CW$.
 Therefore the second term in \eqref{DA'} is also bounded by the 
two negative error terms in \eqref{largeenergygamma}.

We now estimate the third term in \eqref{newad}:
\begin{align}\label{third}
 \big|\tr  F_>^2 \sigma(A')\sigma(D)\eta \gamma \eta\big| &\le
  \big|\tr  F_>^2 \sigma(A')\wt F_> \sigma(D)\eta \gamma \eta\big|+
 \big|\tr  F_>^2 \sigma(A') \wt G_>\sigma(D)\eta \gamma \eta\big| \\
 &\le \Big(\tr F_>^2 \sigma(A')^2 F_>^2 \wt\om_>\Big)^{1/2}
  \Big(\tr D^2 \wt\om_>\Big)^{1/2} \non\\
  & \quad +  \Big(\tr F_>^2 \sigma(A')\wt G_> \sigma(A') F_>^2\Big)^{1/2}
  \Big(\tr \wt F_> \eta\gamma\eta \sigma(D)\wt G_> \sigma(D)\eta\gamma\eta \wt F_>\Big)^{1/2}
  \non \\
 & \le  C \Big( \tr [A']^2 F_>^2 \wt \om_> F_>^2\Big)^{1/2}
    \Big( \tr D^2\wt\om_>\Big)^{1/2} \non \\
&  \quad
+C \Big( \tr  F_>^2 A'\wt G_> A' F_>^2\Big)^{1/2} \Big( \tr D^2\wt\om_>\Big)^{1/2}. \non
\end{align}
Here we used that $F_>^2= F_>^2\wt F^2_>$ and
that $\sigma(D) \wt G_>\sigma (D) = D^2 \wt G_>\le D^2$.
The second term in the r.h.s of \eqref{third} is estimated exactly as 
\eqref{secse}. For the first term we essentially repeat the estimate
\eqref{hol}:
\begin{align*}
  \tr [A']^2 F_>^2 \wt \om_> F_>^2 \le 
 CL^{1/5} \Big( \int_{B(2L)} |\nabla\otimes A|^2\Big) \Big(h^{-2}\tr D^2 \wt \om_> 
\Big)^{3/5}
\end{align*}
after estimating $\tr D^2 
F_>^2 \wt \om_>  F_>^2 \le\tr D^2 \wt \om_> $.
This completes the
proof of \eqref{largeenergygamma}.

\bigskip
For the proof of  \eqref{largeenergyP}, we write
\begin{align*}
  | \tr F_>^2 \wt H(A) \eta P \eta | \le  | \tr F_>^2 H_0 \eta P \eta |
 + 2\big|\tr  F_>^2 \big[ \sigma(D)\sigma(A')+\sigma(A')\sigma(D)\big]\eta P \eta\big|.
\end{align*}
Notice that in the estimate of the $\sigma(D)\sigma(A')$ and $\sigma(A')\sigma(D)$
  terms on the r.h.s of \eqref{newad} explained above is valid for any 
density matrix $\gamma$, in particular also for $\gamma =P$, thus
\begin{align}\label{fr}
  | \tr F_>^2 \wt H(A) \eta P \eta | \le  (1+\delta)| \tr \wt F_>^2 H_0 \eta P \eta |
  + C\delta h^{-2} W^{1/2}\int_{B(2L)} |\nabla\otimes A|^2.
\end{align}
On the support of $\wt F_>$ we have $|H_0|\le 2 D^2$, thus
 $\tr \wt F_>^2 H_0 \eta P \eta\le 
2 \tr D\wt F_>\eta  P\eta \wt F_>D$.
To estimate  
$ \tr D\wt F_>\eta  P\eta \wt F_>D$, we use 
\eqref{compact} from Lemma \ref{lm:disjoint} below.
We set $\ell: = h W^{-1/2}$, $f(p):= \wt f_> (W^{1/2} p)p$
and $g(p):= {\bf 1}( |p|\le 1)$,
so that $\wt F_>(D) D = W^{1/2}f(-i\ell \nabla)$ and
$P= g(-i\ell \nabla)$
\be
   \tr D \wt F_>\eta  P\eta \wt F_>D= \| D\wt F_>\eta P\|_{HS}^2
  \le  C_{N} W^{5/2}\Big(\frac{h}{LW^{1/2}}\Big)^N
\label{eq:separate}
\ee
using that the supports of $f$ and $g$ are separated by a distance $d$ of order one.
The condition $\ell \le Ld$ is guaranteed by \eqref{Lh}.
Upon combining this with \eqref{fr} we have completed the proof of
Lemma~\ref{lm:largeenergy}. \qed

\begin{lemma}\label{lm:disjoint} Let $f$ and $g$ be real functions so that
their supports are separated by $d$, i.e.
\be
   \mbox{dist}(\mbox{supp}(f), \mbox{supp}(g) )\ge d.
\label{dist}
\ee
Let $\eta(x):=\eta_0(x/L)$ with $L>0$,  where $\eta_0$
is a smooth function with compact support. Then for any $N>0$
and $\ell \le Ld$ we have the following bounds
\be\label{L2}
  \| f(-i\ell\nabla)\eta g(-i\ell\nabla) \|_{HS} \le
   C_{N} \Big(\frac{\ell }{dL}\Big)^N \Big(\frac{L}{\ell}\Big)^3\|f\|_2\|g\|_2
\ee
and
\be\label{Linfty}
\| f(-i\ell\nabla)\eta g(-i\ell\nabla) \|_{HS} \le
   C_{N} \Big(\frac{\ell}{dL}\Big)^N\Big(\frac{L}{\ell}\Big)^{3/2}\|f\|_2\|g\|_\infty,
\ee
\be\label{compact}
\| f(-i\ell\nabla)\eta g(-i\ell\nabla) \|_{HS} \le
   C_{N} \Big(\frac{\ell}{dL}\Big)^N \Big(\frac{L}{\ell}\Big)^3
\Big\| \frac{f(p)}{1+ p^2}\Big\|_\infty
\|g(p)(1+p^2)^3\|_\infty
\ee
where $C_N$ depends only on $\eta_0$ and $N$.
\end{lemma}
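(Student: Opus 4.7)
The plan is to compute the Hilbert--Schmidt norm of $A := f(-i\ell\nabla)\eta g(-i\ell\nabla)$ directly in the Fourier representation, where it becomes a manageable double integral, and then exploit two facts: (i) since $\eta_0 \in C_0^\infty(\bR^3)$, its Fourier transform $\hat\eta_0$ lies in the Schwartz class, and (ii) on the supports of $f$ and $g$ the momentum difference $|p-q|$ is bounded below by $d$. Together these produce the rapid-decay factor $(\ell/(Ld))^N$.

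First I would pass to Fourier variables. Since $f(-i\ell\nabla)$ acts as the multiplier $f(\ell p)$, the operator $A$ has momentum-space kernel
\begin{equation*}
\hat K(p,q) = (2\pi)^{-3}\, f(\ell p)\,\hat\eta(p-q)\, g(\ell q),
\end{equation*}
with $\hat\eta(\xi) = L^3 \hat\eta_0(L\xi)$ by the scaling $\eta(x) = \eta_0(x/L)$. By unitarity of the Fourier transform, $\|A\|_{HS}^2 = \iint |\hat K(p,q)|^2\,\rd p\,\rd q$, and the substitution $p \to p/\ell$, $q \to q/\ell$ gives
\begin{equation*}
\|A\|_{HS}^2 = C\,(L/\ell)^6 \iint |f(p)|^2\,|\hat\eta_0(L(p-q)/\ell)|^2\,|g(q)|^2\,\rd p\,\rd q.
\end{equation*}
All three estimates will follow from this identity by choosing different ways to distribute the Schwartz decay $|\hat\eta_0(u)|\le C_M (1+|u|)^{-M}$, applied with $|u| \ge Ld/\ell \ge 1$ on the supports.

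For \eqref{L2}, I would simply pull out the uniform pointwise bound $|\hat\eta_0(L(p-q)/\ell)|^2 \le C_N (\ell/(Ld))^{2N}$ on $\supp f \times \supp g$; the remaining double integral is bounded by $\|f\|_2^2\,\|g\|_2^2$. For \eqref{Linfty}, I would instead dominate $|g(q)|^2 \le \|g\|_\infty^2$, change variables $u = L(p-q)/\ell$ in the $q$-integral to produce a Jacobian $(\ell/L)^3$, and estimate $\int_{|u| \ge Ld/\ell}(1+|u|)^{-2M}\,\rd u \le C_N (\ell/(Ld))^{2N}$ for $M$ chosen sufficiently large (in particular $2M > 3$). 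The remaining $p$-integral yields $\|f\|_2^2$, and after the square root the prefactor reads $(L/\ell)^{3/2}$ as required.

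The most delicate case is \eqref{compact}. Here I would write $|f(p)|^2 \le (1+p^2)^2 \|f/(1+p^2)\|_\infty^2$ and $|g(q)|^2 \le (1+q^2)^{-6}\|g(p)(1+p^2)^3\|_\infty^2$, and use the elementary Peetre-type inequality $1+p^2 \le 2(1+(p-q)^2)(1+q^2)$ to relocate the $(1+p^2)^2$ weight onto the factors $(p-q)$ and $q$. The $q$-part combines with $(1+q^2)^{-6}$ into the integrable weight $(1+q^2)^{-4}$, while the $(p-q)$-polynomial is absorbed by the Schwartz decay of $\hat\eta_0$, and the separation $|p-q| \ge d$ once more extracts the required $(\ell/(Ld))^{2N}$ factor. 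Since all three bounds rest on the same mechanism (Plancherel plus Schwartz decay against separated supports), no step is genuinely deep; the only nontrivial bookkeeping is in \eqref{compact}, where the polynomial exponents must be matched so that the residual $q$-integral converges while enough decay in the $u$-variable remains to beat the distance $d$.
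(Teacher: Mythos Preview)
Your approach is correct and essentially identical to the paper's: both compute the Hilbert--Schmidt norm in the Fourier representation, obtain the same formula $(L/\ell)^6\iint |f(p)|^2|\hat\eta_0(L(p-q)/\ell)|^2|g(q)|^2\,\rd p\,\rd q$, and then exploit the Schwartz decay of $\hat\eta_0$ together with the separation $|p-q|\ge d$. For \eqref{L2} and \eqref{Linfty} the arguments coincide almost verbatim; for \eqref{compact} you invoke Peetre's inequality to redistribute the weight $(1+p^2)^2$, whereas the paper writes down the residual integral $\iint (1+p^2)(1+q^2)^{-3}[1+(p-q)^2]^{-3}\,\rd p\,\rd q$ directly and notes it is finite---the same mechanism, organized slightly differently.
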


{\it Proof.} Since $\eta_0\in C_0^\infty$, we have
$$
   |\wh \eta_0(p)|\le C_N (1+p^2)^{-N/2}
$$
for any $N\ge 0$. We compute in Fourier space, using 
that $\wh\eta(p)= L^3\wh \eta_0(pL)$
(with an appropriate convention about the $2\pi$), and \eqref{dist}, we get
\begin{align}
    \| f(-i\ell\nabla)\eta g(-i\ell\nabla) \|_{HS}^2 = &
\iint \rd p \rd q  \; |f(\ell p)|^2 |\wh \eta(p-q)|^2  |g(\ell q)|^2\non\\
= & (L/\ell)^{6} \iint \rd p \rd q  \; |f(p)|^2 |\wh \eta_0\big((p-q)L/\ell\big)|^2
  |g(q)|^2 \non\\
\le & C_N  (L/\ell)^{6} \iint \rd p \rd q  \;
 \frac{|f(p)|^2|g(q)|^2}{[1+ (p-q)^2(L/\ell)^2]^N}
\non\\
\le & C_N(\ell/Ld)^{2N}\Big(\frac{L}{\ell}\Big)^6\|f\|_2^2\|g\|_2^2 
\end{align}
which proves \eqref{L2}. For the proof of \eqref{Linfty}, we
extract a decay of order $(\ell/Ld)^{2N-4}$, estimate $|g(q)|\le \|g\|_\infty$
and then  integrate out $q$.
The proof of \eqref{compact} is similar:
\begin{multline}
    \| f(D)\eta g(D) \|_{HS}^2 \\
\le C_N\Big\| \frac{f(p)}{1+p^2}\Big\|^2_\infty
\|g(q)(1+q^2)^3\|_\infty^2
  \Big( \frac{\ell}{Ld} \Big)^{2N-6} \Big(\frac{L}{\ell}\Big)^6 
\iint \rd p \rd q  \; \frac{(1+p^2)}{(1+q^2)^3[1+ (p-q)^2]^3}
\end{multline}
and the last integral is finite.
This completes the proof of Lemma \ref{lm:disjoint}. \qed

\subsubsection{Error terms in  \eqref{bigsplit} for $|i|\le i_0$}

 In Lemma \ref{lm:largeenergy} we have estimated the terms $i>i_0$
in the last two summations in \eqref{bigsplit}.
Note that the terms with $i<-i_0$ in the second line of \eqref{bigsplit} identically vanish.
It now remains to  estimate the last four sums in \eqref{bigsplit}
for $0<|i|\le i_0$. 

\begin{lemma}\label{lm:sumi} Define $w_i:= 2^{|i|}w$ and assume \eqref{Lh} and
 $LW^{1/2}\ll w$. Then
for any $\delta>0$ and $h\le h_\delta$ we have
\begin{align}
\sum_{0<i\le i_0} \Re \tr & F_i^2 \wt H(A) \eta \gamma \eta  + 
\sum_{-i_0\le i<0} \Re \tr F_i^2 \wt H(A) \eta (\gamma-1) \eta \non \\
\ge 
& \sum_{0<i\le i_0} \frac{w_i}{40} \tr[ D^2 F_i^2 \eta \gamma \eta]
+\sum_{-i_0\le i<0}\frac{w_i}{40} \tr[ D^2 F_i^2 \eta (1-\gamma) \eta] \non\\
&-C\delta h^{-2}W^{1/2}\int_{B(2L)} |\nabla \otimes A|^2
\label{ilegamma} 
\end{align}
and
\begin{multline}
\sum_{0<i\le i_0} \Big| \tr F_i^2 \wt H(A)\eta P \eta 
\Big|+ \sum_{-i_0\le i<0}\Big| \tr F_i^2 \wt H(A)\eta (P-1) \eta \Big| \\
\le  C\delta h^{-2}W^{1/2}\int_{B(2L)} |\nabla \otimes A|^2  +
 C_NW^{5/2}\Big(\frac{h}{wLW^{1/2}}\Big)^N.
\label{ileP} 
\end{multline}
\end{lemma}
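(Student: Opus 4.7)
The plan is to adapt the dyadic reasoning of Lemmas \ref{lm:ad} and \ref{lm:largeenergy} to each energy shell indexed by $0<|i|\le i_0$, first extracting a controlled fraction of the kinetic energy from the free Hamiltonian and then using the surplus as a reservoir to absorb the magnetic cross-terms. On $\supp f_i$ one has $|H_0|=|D^2-W|\ge \tfrac{3}{8}w_iW$ while $D^2\in [(1-Cw_i)W,(1+Cw_i)W]$; since $F_i$ commutes with $H_0$ and $D^2$, this yields the operator inequalities $H_0 F_i^2\ge c w_i D^2 F_i^2$ for $i>0$ and $-H_0 F_i^2\ge c w_i D^2 F_i^2$ for $i<0$, for a universal $c>0$. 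Testing against the positive operators $\eta\gamma\eta$, respectively $\eta(1-\gamma)\eta$, gives
\[
\tr F_i^2 H_0\eta\gamma\eta\ge c w_i\tr D^2 F_i^2\eta\gamma\eta\ (i>0),\qquad \tr F_i^2 H_0\eta(\gamma-1)\eta\ge c w_i\tr D^2 F_i^2\eta(1-\gamma)\eta\ (i<0).
\]
Choosing $c\ge 1/10$ leaves a reservoir of magnitude at least $(w_i/20)\tr D^2 F_i^2\eta\gamma\eta$ at each scale beyond the $(w_i/40)$ term that appears on the right-hand side of \eqref{ilegamma}, available to absorb magnetic cross-term errors.

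Next I would bound the cross-term $\tr F_i^2[\sigma(D)\sigma(A)+\sigma(A)\sigma(D)]\eta\gamma\eta$ shell by shell. As in the derivation preceding \eqref{gg}, I replace $A$ by $A'=\phi'A$ (permitted since $\eta A=\eta A'$, and the spurious divergence contribution $-ih\tr F_i^2(\nabla\cdot A')\eta\gamma\eta$ vanishes because $(\nabla\cdot A')\eta\equiv 0$: $\nabla\phi'$ is supported on $B(3L/2)\setminus B(5L/4)$ while $\eta$ lives on $B(L)$); the cross-term then equals $2\tr F_i^2 DA'\eta\gamma\eta+h\tr F_i^2\sigma(B')\eta\gamma\eta$. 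The $\sigma(B')$ summand, summed over $|i|\le i_0$, is already controlled by \eqref{Bterm}. For the $DA'$ summand I insert $\wt F_i+(1-\wt F_i)$ and follow the two-term split of \eqref{tildesplit}: the close part is estimated as in \eqref{1st} using $\sup_x \wt F_i^2(x,x)\le Ch^{-3}w_iW^{3/2}$ together with the Poincar\'e bound $\|A'\|_2^2\le CL^2\int_{B(2L)}|\nabla\otimes A|^2$, and the far part is estimated via Lemma \ref{lem:comm2} applied with $f=f_i$, $g=1-\wt f_i$ (whose product vanishes by \eqref{Fi}) together with $\|\nabla f_i\|_1\le CW$. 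Each bound is the one in \eqref{1st}--\eqref{2st} with $w$ replaced by $w_i$; a final Cauchy-Schwarz splits it into a $\delta_i w_i h^{-2}W^{1/2}\int_{B(2L)}|\nabla\otimes A|^2$ piece and a kinetic factor $\tr D^2\wt F_i^2\eta\gamma\eta$ which, via the overlap $\wt F_i^2\le 3(F_{i-1}^2+F_i^2+F_{i+1}^2)$ from \eqref{overlap1} (and \eqref{overlap} at the top end), is absorbed into the neighbouring reservoirs $(w_j/40)\tr D^2 F_j^2\eta\gamma\eta$. Summation over $i$ uses $\sum_{0<|i|\le i_0}w_i\le Cw_{i_0}\le C$ to collapse the field errors into a single $\delta h^{-2}W^{1/2}\int_{B(2L)}|\nabla\otimes A|^2$. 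The case $i<0$ with $\gamma$ replaced by $\gamma-1$ is identical, using the dual kinetic extraction from $-H_0$ and the fact that $1-\gamma\ge 0$.

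For the analogue \eqref{ileP} with the projection $P$, the new tool is momentum-space separation: $P={\bf 1}(D^2\le W)$ and $F_i$ have Fourier supports lying, respectively, inside $\{|p|^2\le W\}$ and $\{u^2\ge W+\tfrac{3}{8}w_iW\}$, hence separated by a distance of order $(w_iW)^{1/2}\ge (wW)^{1/2}$ uniformly in $|i|\le i_0$. Applying Lemma \ref{lm:disjoint} to the product $F_i\,\eta\,P$ along the lines of \eqref{eq:separate} gives a Hilbert-Schmidt bound with decay $C_N(h/(wLW^{1/2}))^N$, which yields the stated estimate for the $H_0$-contribution to $\tr F_i^2\wt H(A)\eta P\eta$. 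The magnetic pieces are handled by the same two-term splitting as in Step 2, producing a $\delta h^{-2}W^{1/2}\int_{B(2L)}|\nabla\otimes A|^2$ contribution after summation in $i$. The main obstacle throughout is bookkeeping: keeping the constants in $H_0\ge cw_iD^2$ explicit enough that the reservoir strictly dominates the Cauchy-Schwarz remainders at every scale, verifying that $\sum_i w_i\le C$ actually collapses the geometric loss in the field error, and routing each $\wt F_i$-factor through \eqref{overlap1} to the three neighbouring reservoirs $F_{i-1}^2,F_i^2,F_{i+1}^2$ so that no dyadic scale is left uncompensated.
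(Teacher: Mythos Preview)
Your strategy is the paper's: extract $H_0F_i^2\ge c\,w_iD^2F_i^2$ on each shell, split the $DA'$ cross-term via $\wt F_i+(1-\wt F_i)$, handle the near piece by the density bound \eqref{eq:supFi} and the far piece by Lemma~\ref{lem:comm2}, and for \eqref{ileP} exploit the momentum-support separation between $F_i$ and $P$ (respectively $1-P$) through Lemma~\ref{lm:disjoint}. So the approach is correct.

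Two points where the paper's execution is tighter than your sketch. First, you plan to land the kinetic error on $\tr D^2\wt F_i^2\eta\gamma\eta$ and then redistribute via the overlap bound \eqref{overlap1} to neighbouring shells. The paper avoids this entirely: in \eqref{4st}--\eqref{5st} it uses the identity $F_i\wt F_i=F_i$ to strip the tildes from the kinetic factor \emph{before} the Cauchy--Schwarz split, obtaining $\tr D^2F_i^2\eta\gamma\eta$ directly. This means each shell's error is absorbed by its own reservoir, no neighbour-routing needed. Second, your Cauchy--Schwarz accounting is off: you put the $w_i$ factor on the field side and leave the kinetic factor bare, but then a bare $\tr D^2\wt F_i^2\eta\gamma\eta$ cannot be absorbed by a reservoir of size $w_j/40$. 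The correct split (as in \eqref{4st}) puts $\delta w_i$ in front of the kinetic term and $\delta^{-1}$ times a small quantity ($h^{-1}L^2W$ from the near piece, $LW^{1/2}/w_i$ from the far piece) in front of the field term; summation in $i$ then uses $\sum_{0<i\le i_0}1\le C|\log h|$ and $\sum_i w_i^{-1}\le Cw^{-1}$ together with $LW^{1/2}\ll h^{1/2}|\log h|^{-1}$ and $LW^{1/2}\ll w$, not $\sum_i w_i\le C$.

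Minor slips: the momentum-space gap between $\supp f_i$ and the Fermi sphere is of order $w_iW^{1/2}$ (linear in $w_i$), not $(w_iW)^{1/2}$; and for $i<0$ in \eqref{ileP} the relevant projection is $1-P={\bf 1}(D^2>W)$, whose support is separated from that of $F_i$ (inside the Fermi ball), so your description covers only $i>0$ as written.
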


{\it Proof of Lemma \ref{lm:sumi}.}
We will present the proof  for the summations over $0<i\le i_0$,
the estimate of the negative $i$'s are identical.
 We start with the first term 
in \eqref{ilegamma}.
Since $w_i= 2^{|i|}w$, then $w_i\le 2$ for $|i|\le i_0$.
 Since on the support of $F_i$, $i>0$, it holds that
$\frac{3}{8} w_iW  \le D^2-W\le \frac{5}{4} w_iW $,  we obtain
$$
   H_0F_i^2 \ge \frac{3}{8} w_iW  F_i^2  \ge \frac{3}{8}
\frac{w_iD^2 F_i^2}{\frac{5}{4}w_i+1} 
 \ge \frac{w_i}{20} (D^2+W) F_i^2,
  \qquad 0 < i \le i_0.
$$
The analogous estimate for negative $i$ will be
$$   
  H_0F_i^2 \le -  \frac{3}{8} w_i W F_i^2 \le  -\frac{w_i}{20} (D^2+W) F_i^2,
  \qquad 0 > i \ge -i_0.
$$
The additional $W$ is necessary only for $i<0$, when
$\tr  F_i\eta\gamma\eta F_i$ may not be comparable with $\tr D^2 F_i\eta\gamma\eta F_i$, but
it is always comparable with
$\tr (D^2+W) F_i\eta\gamma\eta F_i$. 

Using the second identity in \eqref{tildeH},
we have for $0<i\le i_0$,
\begin{align}\label{ii1}
\Re \tr  F_i^2 \wt H(A)\eta\gamma\eta  = & 
  \tr  F_i^2  H_0\eta\gamma\eta +2\Re\tr F_i^2 DA' \eta\gamma\eta
 + h\Re \tr F_i^2 \sigma(B) \eta\gamma\eta
   \non\\
  \ge &  
   \frac{w_i}{20}
 \tr (D^2+W) F_i\eta\gamma\eta F_i  - 2|\tr F_i^2 DA' \eta\gamma\eta|
- h| \tr F_i^2 \sigma(B') \eta\gamma\eta|.
\end{align}
The last term was already treated in  \eqref{Bterm} even after
summation over $i$.

To control the  $DA'$ error term, we proceed as before
in \eqref{tildesplit} and \eqref{DA'} by inserting
$\wt G_i:=1-\wt F_i$:
\be\label{secc}
   |\tr F_i^2 DA' \eta\gamma\eta| \le
 |\tr F_i^2 DA'\wt F_i \eta\gamma\eta\wt F_i|+
|\tr F_i^2 A' \wt G_i\eta\gamma\eta \wt F_i D|.
\ee
The first term is estimated similarly to \eqref{1st}:
\begin{align}\label{4st}
  |\tr F_i^2 DA'\wt F_i \eta\gamma\eta\wt F_i| &\le 
 \Big( \tr [A']^2\wt F_i \eta\gamma\eta \wt F_i\Big)^{1/2}
 \Big( \tr F_i^4D^2 \wt F_i \eta\gamma\eta \wt F_i\Big)^{1/2}\non \\
  &\le C \Big( \tr [A']^2\wt F_i^2\Big)^{1/2}
\Big( \tr D^2 F_i \eta\gamma\eta  F_i\Big)^{1/2}
  \non \\
 &\le C \Big( h^{-3}w_iW^{3/2} \int [A']^2 \Big)^{1/2} 
\Big( \tr D^2 F_i \eta\gamma\eta  F_i\Big)^{1/2}
 \non \\
  &\le \delta w_i \tr D^2 F_i \eta\gamma\eta  F_i
 + C\delta^{-1}(h^{-1}L^2W) h^{-2}W^{1/2}\int_{B(2L)} |\nabla \otimes A|^2 .
\end{align}
Here we used $F_i\wt F_i= F_i$ to remove the tildes from the terms
with $D^2$. We also used  \eqref{Aprime} and the estimate
\begin{align}\label{eq:supFi}
   \sup_x  F_i(x,x)\le \sup_x \wt F_i(x,x) \le Ch^{-3} w_iW^{3/2}.
\end{align}

The second term in \eqref{secc} is estimated similarly to
 \eqref{secse} by using Lemma \ref{lem:comm2} and the fact that
$\|\nabla f_i\|_1\le CW$.
\begin{align}\label{5st}
|\tr F_i^2 A' \wt G_i\eta\gamma\eta \wt F_i D|
&\le 
 \Big( \tr F_i^2 A' \wt G_i^2 A' F_i^2\Big)^{1/2}
 \Big( \tr D^2 F_i^2 \eta\gamma\eta \Big)^{1/2}\non \\
  &\le C \Big( Lh^{-2}W \int_{B(2L)} |\nabla \otimes A|^2\Big)^{1/2}
\Big( \tr D^2 F_i \eta\gamma\eta  F_i\Big)^{1/2}
  \non \\
  &\le \delta w_i \tr D^2 F_i \eta\gamma\eta  F_i
 + C\delta^{-1}\frac{LW^{1/2}}{w_i} h^{-2}W^{1/2}\int_{B(2L)} |\nabla \otimes A|^2 .
\end{align}
Summing up \eqref{4st} and \eqref{5st} for $i\le i_0 \le C|\log h|$ 
and using that $LW^{1/2}\ll h^{1/2}|\log h|^{-1}$ and $LW^{1/2}\ll w$, we obtain
\be
  \sum_{0<i\le i_0}  |\tr F_i^2 DA' \eta\gamma\eta| \le
  \delta \sum_{0<i\le i_0} w_i \tr D^2 F_i \eta\gamma\eta  F_i
 + \delta  h^{-2}W^{1/2}\int_{B(2L)} |\nabla \otimes A|^2
\label{ii2}
\ee
if $h\le h_\delta$.
Combining \eqref{ii2} with the positive terms from
\eqref{ii1} and choosing $\delta$ sufficiently small, we obtain\be\label{iii}
\Re \sum_{0<i\leq i_0} \tr  F_i^2 \wt H(A)\eta\gamma\eta  
  \ge 
  \sum_{0<i\leq i_0}   \frac{w_i}{40}
 \tr (D^2+W) F_i\eta\gamma\eta F_i 
 -\delta  h^{-2}W^{1/2}\int_{B(2L)} |\nabla \otimes A|^2.
\ee
This completes the proof of \eqref{ilegamma}.
The proof of \eqref{ileP} is analogous by using
 that the momentum supports of $F_i$ and $P$ are
separated, so one can  apply \eqref{compact} similarly to \eqref{eq:separate} to obtain
\be
   \Big| \tr F_i^2 H_0\eta P\eta \Big| \le  \tr D F_i  \eta P\eta F_iD =
  \| DF_i \eta P\|_{HS}^2 \le C_NW^{5/2}\Big(\frac{h}{wLW^{1/2}}\Big)^N.
\label{kinsmall}
\ee
The details  will be left to the reader.
This completes the proof of Lemma \ref{lm:sumi}. \qed.

\bigskip

Inserting the estimates from
 Lemma~\ref{lm:ad}, Lemma~\ref{lm:largeenergy} and
Lemma~\ref{lm:sumi}  into \eqref{bigsplit}, we have proved
the following

\begin{proposition}\label{prop:KineticBoundsScaled}
Under the conditions of Theorem \ref{thm:2} and assuming $LW^{1/2}\ll w$,
 for any $\delta>0$ and $h\le h_\delta$ we have
\begin{align}\label{eq:long}
 \tr H(A) \eta \gamma \eta  &\ge \tr \eta [H_0]_-\eta-
  C\delta^{-1} \Big[ h   +w^2\Big]W^{5/2}\Lambda_L
- C_NW^{5/2}\Big(\frac{h}{wLW^{1/2}}\Big)^N\non\\
 &\quad
 +  \frac{1}{100} \tr[ D^2 F_>^2 \eta \gamma \eta]
 +\sum_{0<i\le i_0} \frac{2^iw}{100} \tr[ D^2 F_i^2 \eta \gamma \eta]
+\sum_{-i_0\le i<0}\frac{2^{|i|}w}{100} \tr[ D^2 F_i^2 \eta (1-\gamma) \eta] \non\\
 &\quad - C\delta h^{-2}W^{1/2} \int_{B(2L)} |\nabla \otimes A|^2.
\end{align}
\end{proposition}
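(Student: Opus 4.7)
The plan is to derive Proposition~\ref{prop:KineticBoundsScaled} by assembling the decomposition in Lemma~\ref{lm:decomp} with the three error estimates established in Lemmas~\ref{lm:ad}, \ref{lm:largeenergy}, and \ref{lm:sumi}. I would go term by term through the six correction terms on the right-hand side of \eqref{bigsplit} and verify that, under the hypotheses $LW^{1/2}\ll w$ and the bounds \eqref{Lh}, \eqref{eq:1new} already in force, every contribution either (i) is of size bounded by $C\delta^{-1}(h+w^2)W^{5/2}\Lambda_L$, (ii) can be absorbed into $\delta h^{-2}W^{1/2}\int_{B(2L)}|\nabla\otimes A|^2$, (iii) is of negligible size $C_N W^{5/2}(h/wLW^{1/2})^N$, or (iv) contributes positively to the kinetic terms retained on the right of \eqref{eq:long}.

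For the first correction term (the one involving $F_0^2$), I would invoke Lemma~\ref{lm:ad} and simplify its bound: since $LW^{1/2}\ll w$ one has $wLW^{1/2}\le w^2$, and combining this with the constraint $LW^{1/2}\ll h^{1/2}$ from \eqref{Lh} gives $h^{-1}w^2L^2W\le w^2$, so that the prefactor reduces to $[h+w^2]$ as claimed. For the two summations restricted to $i>i_0$, I would apply Lemma~\ref{lm:largeenergy} to the $\gamma$-piece, producing $\tfrac12\tr[D^2 F_>^2\eta\gamma\eta]$ and the harmless errors, and apply \eqref{largeenergyP} to the $P$-piece. For the sums over $0<|i|\le i_0$, I would apply Lemma~\ref{lm:sumi} to produce the positive kinetic gains $\sum \tfrac{w_i}{40}\tr[D^2 F_i^2\eta\gamma\eta]$ and $\sum \tfrac{w_i}{40}\tr[D^2 F_i^2\eta(1-\gamma)\eta]$, together with the same type of negligible errors.

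The one genuine subtlety is the absorption of the auxiliary term $-\delta\tr[D^2 \wt F_>^2\eta\gamma\eta]$ that Lemma~\ref{lm:largeenergy} leaves on the right-hand side: this term is not itself manifestly of the form kept in \eqref{eq:long}. To handle it I would use the pointwise overlap bound \eqref{overlap} together with the trivial inequality $(f_{i_0-1}+f_{i_0}+f_>)^2\le 3(f_{i_0-1}^2+f_{i_0}^2+f_>^2)$, yielding
\begin{equation*}
\delta\tr[D^2\wt F_>^2\eta\gamma\eta]\le 3\delta\bigl(\tr[D^2 F_{i_0-1}^2\eta\gamma\eta]+\tr[D^2 F_{i_0}^2\eta\gamma\eta]+\tr[D^2 F_>^2\eta\gamma\eta]\bigr).
\end{equation*}
Choosing $\delta$ small enough (after the final $\delta$ in the proposition has been fixed independently) permits this to be reabsorbed into the positive kinetic gains from Lemma~\ref{lm:sumi} and Lemma~\ref{lm:largeenergy}, at the mild cost of replacing the constants $\tfrac12$ and $\tfrac{w_i}{40}$ by the smaller universal constants $\tfrac{1}{100}$ and $\tfrac{2^i w}{100}$ displayed in \eqref{eq:long}.

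Collecting the surviving positive kinetic contributions, the $\tr\eta[H_0]_-\eta$ main term from \eqref{bigsplit}, and the pooled error estimates (where all of the Lebesgue-type remainders of form $\delta h^{-2}W^{1/2}\int|\nabla\otimes A|^2$ from the various lemmas are combined into one, and all the $C_N$ power-law tails from Lemma~\ref{lm:largeenergy} and Lemma~\ref{lm:sumi} are dominated by the larger $C_N W^{5/2}(h/wLW^{1/2})^N$ since $w\le 1$), yields exactly the estimate \eqref{eq:long}. The main obstacle is indeed the bookkeeping around the $\wt F_>^2$ term; apart from that, the proof is a direct insertion.
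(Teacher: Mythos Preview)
Your proposal is correct and matches the paper's approach exactly: the paper simply says ``Inserting the estimates from Lemma~\ref{lm:ad}, Lemma~\ref{lm:largeenergy} and Lemma~\ref{lm:sumi} into \eqref{bigsplit}, we have proved'' the proposition, and you have spelled out precisely that bookkeeping. One small simplification: for the absorption step you cite \eqref{overlap}, but that inequality concerns $\wt f_{i_0}$; for $\wt f_>$ you can use directly the definition $\wt f_>^2 = \sum_{i\ge i_0} f_i^2 = f_{i_0}^2 + f_>^2$, which gives $\wt F_>^2 = F_{i_0}^2 + F_>^2$ as an identity and makes the absorption even cleaner (no factor of $3$, no $F_{i_0-1}$ term).
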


Finally, given $0\le\al<1$ and $\e>0$ as in Theorem \ref{thm:2}, we  choose 
$w = h^{\al/2}$ and an integer $N\ge 3\e^{-1}$,
then  Proposition \ref{prop:KineticBoundsScaled}
implies  Theorem~\ref{thm:2}. \qed

\bigskip

As we already mentioned,  Theorem~\ref{thm:2}  implies Theorem~\ref{thm:1}.

\bigskip

Upon combining Proposition~\ref{prop:KineticBoundsScaled}
with the upper bound we get bounds on the kinetic energy 
terms  in \eqref{eq:long}. These bounds
will be used to control the 
substitution of $A_0$ by $A_{r}$ in Section~\ref{sec:EstimatingAr}.

For each fixed $u\in B(\frac{5}{4}L_0)$,
we will use Theorem  \ref{thm:2} with the constant potential $W$ replaced with
$V_u$ defined in  \eqref{eq:10}.  We know that $c_0/2\le V_u\le C$, so
factors $W=V_u$ can be replaced by constants in the estimates.
To indicate the $u$-dependence,
we define $H^u:= D^2- V_u$. Recall that the operators $F_i$ and $F_>$
defined in Section  \ref{sec:decomp} depend on $W$. We will denote
them by $F_i^u$ and $F_>^u$ in case of $W=V_u$.

\begin{theorem}\label{thm:5.14}
Recall that $L_0 = h^{1/2 -\e_0}$ and $\alpha = 1- 3\e_0$ with $\e_0$ sufficiently small. Choose $L_1 = h^{1/2 +\e_0}$.
If
\begin{align}
\label{eq:EstFunctional}
{\mathcal E}(A) \leq {\mathcal E}(0) + C' h^{1-3\e_0} \Lambda_{L_0},
\end{align}
then,
\begin{align}
  \label{eq:15}
  \int {\mathcal T}_u \frac{\rd u}{L_1^3} \leq C h^{1-3\e_0} \Lambda_{L_0},
\end{align}
where ${\mathcal T}_u$ is given by
\begin{align}
  \label{eq:12}
  {\mathcal T}_u:=  \frac{1}{100} \Big\{ &\tr[ D^2 [F_>^u]^2
  \eta_u  \gamma_u \eta_u]
+\sum_{0<i\le i_0} 2^i w \tr[ D^2 [F_i^u]^2 \eta_u 
\gamma_u \eta_u]\nonumber \\
&+\sum_{-i_0\le i<0} 2^{|i|}w \tr[ D^2 [F_i^u]^2
\eta_u
(1-\gamma_u) \eta_u] \Big\},
\end{align}
where
\be
  \label{eq:13}
  \gamma_u := {\bf 1}_{(-\infty,0]}\big( \chi_u \phi [T_h(A-c_u)-V_u
 ]\phi\chi_u \big). 
\ee
with $c_u$ defined in \eqref{cu} and we set $\eta_u:= \chi_u\phi$.
\end{theorem}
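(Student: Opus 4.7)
The strategy is to refine the lower bound obtained in Theorem~\ref{thm:1} by invoking Proposition~\ref{prop:KineticBoundsScaled} in place of Theorem~\ref{thm:2}, so that the positive kinetic terms $\mathcal{T}_u$ are retained, and then to compare the resulting lower bound with the upper bound \eqref{upper} combined with the hypothesis \eqref{eq:EstFunctional}.

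Concretely, repeating the IMS decomposition and Taylor expansion of $V$ leading to \eqref{2new}, I would apply Proposition~\ref{prop:KineticBoundsScaled} with $\eta = \eta_u$, $W = V_u$, and $\gamma = \gamma_u$. Since $\gamma_u = {\bf 1}_{(-\infty,0]}(\eta_u [T_h(A-c_u) - V_u] \eta_u)$, the variational characterization gives $\tr [T_h(A-c_u) - V_u] \eta_u \gamma_u \eta_u = \tr[\eta_u (T_h(A-c_u) - V_u) \eta_u]_-$, so for each $u \in \Xi_<$ the proposition yields
\begin{equation*}
\tr[\eta_u (T_h(A-c_u) - V_u)\eta_u]_- \geq \tr \eta_u [H^u]_- \eta_u + \mathcal{T}_u - C h^\alpha V_u^{5/2}\Lambda_{L_1} - C\delta h^{-2} V_u^{1/2} \int_{B_u(2L_1)} |\nabla\otimes A|^2.
\end{equation*}
For $u \in \Xi_>$ the hypotheses of Proposition~\ref{prop:KineticBoundsScaled} fail, but there the crude estimate $\mathcal{T}_u \leq C \Lambda_{L_1}$ (from $\gamma_u \leq 1$, the operator bound $D^2 [F_i^u]^2 \leq C V_u [F_i^u]^2$ on each momentum shell, and the diagonal bound \eqref{eq:supFi}) combined with the defining inequality of $\Xi_>$ produces
\begin{equation*}
\int_{\Xi_>} \mathcal{T}_u \frac{\rd u}{L_1^3} \leq C \delta h^{-2} \int_{B(2L_0)} |\nabla \otimes A|^2.
\end{equation*}

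Integrating both contributions over $u$, adding the field energy $(\kappa h^2)^{-1} \int |\nabla \otimes A|^2$, and choosing $\delta$ small enough so that $\kappa^{-1} - C\delta > 0$ absorbs the magnetic error, the choices $L_1 = h^{1/2+\e_0}$ and $w = h^{\alpha/2}$ with $\alpha = 1 - 3\e_0$ balance the remaining error terms and give
\begin{equation*}
\mathcal{E}(A) \geq \int \tr \eta_u [H^u]_- \eta_u \frac{\rd u}{L_1^3} + \int \mathcal{T}_u \frac{\rd u}{L_1^3} - C h^{1-3\e_0} \Lambda_{L_0}.
\end{equation*}
The upper bound \eqref{upper} of Theorem~\ref{thm:1}, after accounting for the harmless shift $V(u) \to V_u$, provides the matching estimate $\mathcal{E}(0) \leq \int \tr \eta_u [H^u]_- \eta_u \frac{\rd u}{L_1^3} + C h^{1-3\e_0} \Lambda_{L_0}$. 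Subtracting and invoking the hypothesis $\mathcal{E}(A) \leq \mathcal{E}(0) + C h^{1-3\e_0} \Lambda_{L_0}$ yields the conclusion.

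The main technical obstacle is the region $\Xi_>$, where Proposition~\ref{prop:KineticBoundsScaled} is inapplicable: its contribution must be absorbed into the total magnetic energy, which itself is controlled by an analog of Corollary~\ref{cor:BoundMagn} adapted to the hypothesis \eqref{eq:EstFunctional}. A secondary issue is the consistent bookkeeping of error exponents: the choice $L_1 = h^{1/2+\e_0}$, $w = h^{(1-3\e_0)/2}$ must simultaneously balance the Proposition~\ref{prop:KineticBoundsScaled} error $h^\alpha$, the IMS/Taylor errors $h^2 L_1^{-2} + L_1^2$, and the upper-bound error \eqref{upper}, all at the target order $h^{1-3\e_0} \Lambda_{L_0}$.
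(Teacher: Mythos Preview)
Your approach is essentially correct and matches the paper's: retain the positive kinetic terms $\mathcal{T}_u$ from Proposition~\ref{prop:KineticBoundsScaled}, integrate in $u$, and compare the resulting lower bound for $\mathcal{E}(A)$ with an upper bound for $\mathcal{E}(0)$ using the hypothesis.

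The paper organizes the argument slightly differently, in a way that dissolves your ``main technical obstacle''. It first invokes (the analog of) Corollary~\ref{cor:BoundMagn} from the hypothesis \eqref{eq:EstFunctional} to obtain the global magnetic bound $h^{-2}\int_{B(2L_0)}|\nabla\otimes A|^2 \le C h^{1-3\e_0}\Lambda_{L_0}$; since this dominates every local integral $h^{-2}\int_{B_u(2L_1)}|\nabla\otimes A|^2$, condition \eqref{Bapr} (equivalently \eqref{eq:1new}) holds for \emph{every} $u\in B(5L_0/4)$, so $\Xi_>$ is empty. Proposition~\ref{prop:KineticBoundsScaled} then applies for all $u$, and no separate treatment is needed.

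This also closes a small gap in your crude estimate $\mathcal{T}_u\le C\Lambda_{L_1}$ on $\Xi_>$: the operator bound $D^2[F_i^u]^2\le CV_u[F_i^u]^2$ is valid for $|i|\le i_0$ but fails for the $F_>^u$ piece, where $D^2$ is unbounded on the support of $f_>$, so the argument you sketch does not control $\tr[D^2[F_>^u]^2\eta_u\gamma_u\eta_u]$. With $\Xi_>$ empty this is moot. For the upper bound on $\mathcal{E}(0)$, the paper goes through precise semiclassics (Theorem~\ref{thm:ivrii2} and Remark~\ref{remark:6.2}) to obtain \eqref{Esc}; your direct appeal to \eqref{upper} is equally valid and arguably more economical.
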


\begin{proof}
By \eqref{2new}
\begin{align}\label{2newNew}
 \tr \Big[ \phi [T_h(A) &-V]\phi \Big]_- \ge
  \int \tr\Big[ \chi_u \phi [T_h(A-c_u)-V_u
 ]\phi\chi_u\Big]_- \frac{\rd u}{L_1^3} ,
\end{align}
with $V_u$ from \eqref{eq:10}. Since $V(0)=c_0$ by
\eqref{eq:lowerboundV} and using $L_0 \ll 1$ and the bound on $\nabla
V$, we can assume that $V \geq c_0/2$ on $B(\frac{5}{4}L_0)$. Also, using \eqref{eq:EstFunctional} we get as in the proof of Corollary~\ref{cor:BoundMagn} that
\begin{align*}
    \frac{1}{2\kappa h^2}\int_{B(2L_0)} |\nabla \otimes A|^2 \le 
C h^{1-3\e_0}\Lambda_{L_0}.
\end{align*}
Therefore,
(using the smallness of $\e_0$) the condition \eqref{Bapr} is satisfied
for all $u$ in the domain of integration. 
Note that this is exactly the condition \eqref{eq:1new}
in Theorem \ref{thm:2} with $W=V_u$, with $L=L_1$ and shifting the center of the
ball $B(2L)$ to $u$. The condition 
\eqref{eq:3} is satisfied since we consider the vector potential $A-c_u$.
Finally, the condition \eqref{Lh} is satisfied  by \eqref{hL}.
We can thus apply
Proposition~\ref{prop:KineticBoundsScaled} and we therefore have
\begin{align}
  \label{eq:11}
  \tr \Big[ \phi [T_h(A) -V]\phi \Big]_- &\ge
\int\Big\{
\tr \chi_u \phi [H_0 - V_u]_{-} \phi \chi_u 
-
C_{\delta} [ h + w^2 ]  V_u^{5/2} \Lambda_{L_1} \\
&\quad - C_N V_u^{5/2}\left( \frac{h}{ w L_1V_u^{1/2}} \right)^N 
- C\delta h^{-2} V_u^{1/2} \int_{B_u(2L_1)} |\nabla \otimes A|^2
+ {\mathcal T}_u \Big\}\frac{\rd u}{L_1^3}. \non
\end{align}
Estimating as around \eqref{5/2} (and choosing $N$ sufficiently
large),
 we get
\begin{align}
  \label{eq:14}
  \tr \Big[ \phi [T_h(A) -V]\phi \Big]_- &\ge
\int \tr \chi_u \phi [ H_0 - V(u)]_{-} \phi \chi_u \frac{\rd
  u}{L_1^3}  - C\delta h^{-2}\int_{B(2L_0)} |\nabla \otimes A|^2
 \nonumber \\
&\quad-  \Big(
C_{\al}h^{\al}+ Ch^2L^{-2}_1 + CL^2_1\Big)\Lambda_{L_0}
+ \int {\mathcal T}_u \frac{\rd
  u}{L_1^3} .
\end{align}
We evaluate the main term using precise semiclassics.
Define $\wt \phi(x) = \phi(L_0 x)$, $\wt V(x) = V(L_0 x)$. 
Then we get by unitary scaling and applying Theorem~\ref{thm:ivrii2} 
with the effective semiclassical parameter $\wt h:= h/L_0= h^{1/2+\e_0}$ that
\begin{align}
{\mathcal E}(0) &= \tr\big( \phi [-h^2 \Delta - V ] \phi \big)_{-}
= \tr\big( \wt \phi [-\wt h^2 \Delta - \wt V ] \wt \phi \big)_{-}\nonumber \\
&=2(2\pi \wt h)^{-3} \int \wt \phi(x)^2 [p^2 - \wt V(x)]_{-} \rd x \rd p + 
{\mathcal O}(\wt h^{-1}) \nonumber \\
&=2 \frac{1}{(2\pi h)^3} \int \phi(x)^2 [p^2 - V(x)]_{-} \rd x \rd p + {\mathcal O}(\Lambda_{L_0} \wt h^{2}).
\end{align}
By Remark~\ref{remark:6.2}
\begin{align}
\int \tr \chi_u \phi [ H_0 - V(u)]_{-} \phi \chi_u \frac{\rd u}{L_1^3}
&=
\frac{2}{15\pi^2}
 h^{-3}\int V_{-}^{5/2}\phi^2 + O(L_1^2h^{-3})\int \phi^2 \nonumber \\
&=
2 \frac{1}{(2\pi h)^3} \int \phi(x)^2 [p^2 - V(x)]_{-} \rd x \rd p + {\mathcal O}(\Lambda_{L_0} L_1^2).
\end{align}
Inserting the choices of $L_0=h^{1/2-\e_0}$ and $L_1=h^{1/2+\e_0}$ we get
\be
    \Big|  {\mathcal E}(0) - \int \tr \chi_u \phi [ H_0 - V(u)]_{-} 
\phi \chi_u \frac{\rd  u}{L_1^3}\Big|\le C h^{1+2\e_0} \Lambda_{L_0}.
\label{Esc}\ee
Finally, choosing sufficiently small $\delta$ (depending on $\kappa$) and
using \eqref{eq:EstFunctional} and \eqref{Esc}, we get \eqref{eq:15}.
\end{proof}

\subsection{Estimates on the smoothed vector field}\label{sec:EstimatingAr}

This section is a technical preparation for the next Section \ref{sec:smoothing}.
Let $A \in H^1(\bR^3)$ be a vector field. 
We fix a radial function $\chi \in C^{\infty}_0(B(1))$
with $\int\chi=1$ 
 and define $\chi_{r}(x) = r^{-3} \chi(x/r)$ and $A_{r} = A * \chi_r$.
We note that $\nabla\cdot A_r =0$ on a ball $B(L)$ if $\nabla\cdot A=0$
on $B(L+r)$.

Suppose also given $\phi'$ with $0\leq \phi' \leq 1,$ $\supp \phi'
\subset B(3L/2)$, $\phi'\equiv 1$ on $B(L)$ and $|\nabla \phi'| \leq C/L$. With this function
$\phi'$ we define $A' = \phi' A$ and $A_r' = \phi' A_r$, then
$A=A'$ and $A_r=A'_r$ on $B(L)$. The constants in the following sections
may depend on $\chi$ and on the constant $C$ in the estimate $|\nabla \phi'| \leq C/L$,
but we will neglect these dependences.

\begin{lemma}\label{lem:AminusAl2}
If $r \leq L/2$, we have
\begin{align}
\label{eq:AminusAl}
\| A' - A_{r}' \|_2^2 \leq Cr^2 \int_{B(2L)}  |\nabla \otimes A|^2
\end{align}
and
\begin{align}\label{eq:AminusAl6}
\| A' - A_{r}' \|_6^2 \leq C \int_{B(2L)}  |\nabla \otimes A|^2.
\end{align}
\end{lemma}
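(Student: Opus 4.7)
The plan is to exploit the fact that $A'-A_r' = \phi'(A-A_r)$ is supported in $B(3L/2)$, and then use the standard mollifier trick that estimates $A - A_r$ pointwise in terms of a short-range average of $|\nabla A|$.

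First I would write
\begin{equation*}
A(x) - A_r(x) = \int \chi_r(y)\bigl(A(x) - A(x-y)\bigr)\,\rd y
= \int \chi_r(y) \int_0^1 y\cdot \nabla A(x-ty)\,\rd t\,\rd y.
\end{equation*}
By Cauchy--Schwarz (using $\int\chi_r = 1$, $\chi_r\ge 0$) and $|y|\le r$ on $\supp\chi_r$, this gives
\begin{equation*}
|A(x) - A_r(x)|^2 \le r^2 \int \chi_r(y)\int_0^1 |\nabla A(x-ty)|^2\,\rd t\,\rd y.
\end{equation*}
For the $L^2$ bound \eqref{eq:AminusAl}, I would integrate this over $x\in B(3L/2)$, exchange order (Fubini), and use that for any $|y|\le r\le L/2$ and $t\in[0,1]$ the shift preserves containment in $B(2L)$, so $\int_{B(3L/2)}|\nabla A(x-ty)|^2\,\rd x \le \int_{B(2L)}|\nabla\otimes A|^2$. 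Since $|A'-A'_r|\le |A-A_r|$ on $B(3L/2)$ and vanishes outside, this delivers \eqref{eq:AminusAl}.

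For the $L^6$ bound \eqref{eq:AminusAl6}, I would use the Sobolev embedding $\dot H^1(\bR^3)\hookrightarrow L^6(\bR^3)$ applied to $A'-A'_r\in H^1(\bR^3)$. Expanding
\begin{equation*}
\nabla(A'-A'_r) = (\nabla\phi')(A-A_r) + \phi'\bigl(\nabla A - (\nabla A)*\chi_r\bigr),
\end{equation*}
the first piece is controlled by $|\nabla\phi'|\le C/L$ together with \eqref{eq:AminusAl}, yielding $(C/L)\cdot r \|\nabla\otimes A\|_{L^2(B(2L))}\le C\|\nabla\otimes A\|_{L^2(B(2L))}$ since $r\le L/2$. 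The second piece is bounded by the triangle inequality plus the fact that convolution with $\chi_r$ is a contraction in $L^2$, giving $\le 2\|\nabla A\|_{L^2(B(3L/2+r))}\le C\|\nabla\otimes A\|_{L^2(B(2L))}$.

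The only mild obstacle is bookkeeping on supports to ensure all integrals that appear on the right-hand side can legitimately be replaced by integrals over the slightly larger ball $B(2L)$; this is what forces the hypothesis $r\le L/2$. No Poincaré inequality or zero-average assumption on $A$ is needed here, because the difference $A-A_r$ is automatically ``mean-free'' on scale $r$ by construction.
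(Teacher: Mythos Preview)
Your proof is correct and follows essentially the same route as the paper: the fundamental theorem of calculus plus Cauchy--Schwarz for \eqref{eq:AminusAl}, then Sobolev plus the product-rule splitting of $\nabla(A'-A_r')$ for \eqref{eq:AminusAl6}. The only minor difference is that you invoke $\chi_r\ge 0$ to treat $\chi_r(y)\,\rd y$ as a probability measure, whereas the paper's Cauchy--Schwarz does not require this; since the mollifier can be chosen non-negative this is harmless.
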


\begin{proof}
We have
\begin{align}\label{previ}
\| A' - A_{r}' \|_2^2 &\leq \int_{B(3L/2)}|A - A_{r}|^2 =
\int_{B(3L/2)}\Bigg| \int_{B(1)} \big[A(x) - A(x + r z)\big] \chi(z)\,\rd z\Bigg|^2\,\rd
x\nonumber \\
&= \int_{B(3L/2)}\sum_{j=1}^3\Bigg| \int_{B(1)} \int_0^1 r z \cdot \nabla A_j(x+tr z)
\chi(z) \,\rd t \rd z\Bigg|^2 \rd x\nonumber \\
&\leq  \frac{4\pi}{3} r^2\int_{B(3L/2)} \sum_{j=1}^3 \int_{B(1)} 
\int_0^1 | z \cdot \nabla A_j(x+tr z)|^2
\,\rd t \rd z \rd x,
\end{align}
by Cauchy-Schwarz. Upon changing variable in the $x$-integral, we obtain \eqref{eq:AminusAl}.

By the Sobolev inequality, we have
\begin{align}
\| A' - A_{r}' \|_6^2 \leq C \int |\nabla \otimes (A' - A_{r}' )|^2 \leq
C \int_{B(3L/2)} \Big[ |\nabla \otimes (A - A_{r} )|^2 + L^{-2} |A - A_{r} |^2\Big].
\end{align}
In the first term we use Young's inequality,
 $\int_{B(3L/2)} |\nabla \otimes A_r|^2 = \int_{B(3L/2)} |\chi_r *(\nabla \otimes A)|^2
\le \int_{B(2L)} |\nabla \otimes A|^2$, after
extending the integration to $\bR^3$ and using $r\le L/2$
to insert a cutoff function ${\bf 1}_{B(2L)}$.
 In the second term we use the previous
calculation \eqref{previ}.
\end{proof}

\begin{lemma}\label{lem:derivatives} Suppose the vector field $A$ satisfies 
\begin{align}\label{eq:gaugezero2}
\int_{B(2L)} A = 0.
\end{align}
Then, for all $r<L/2$ we have
\begin{align}\label{eq:noderiv}
 \| A_{r} \|_{L^2(B(3L/2))}^2 \leq C L^2 \int_{B(2L)} |\nabla
 \otimes A|^2,
\end{align}
and for all multi-indices $n \in \bN^3\setminus \{ 0 \}$.
\begin{align}
\| \partial^{n} A_r \|_{L^2(B(3L/2))}^2\leq C_{n} r^{2-2|n|}
\int_{B(2L)} |\nabla \otimes A|^2 \, ,
\end{align}
where the constants $C, C_{n}$ are independent of $A, L$ and $r$.
\end{lemma}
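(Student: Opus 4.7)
The plan is to apply Young's convolution inequality after distributing derivatives appropriately between $A$ and the mollifier $\chi_r$, and then control the zero-derivative case by Poincar\'e.

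First I would record a locality observation: for any $x \in B(3L/2)$, the defining integral $A_r(x) = \int \chi_r(x-y) A(y)\,\rd y$ samples $A$ only on $B(x,r) \subset B(3L/2 + r) \subset B(2L)$ (using $r \le L/2$). So in every estimate below, $A$ may be replaced by $A \mathbf{1}_{B(2L)}$ without loss, and the $L^p$ norms of this truncated field coincide with the corresponding norms over $B(2L)$.

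For $|n|=0$, the bound \eqref{eq:noderiv} follows from Young's inequality
\[
\|A_r\|_{L^2(B(3L/2))} \le \|A \mathbf{1}_{B(2L)} * \chi_r\|_{L^2(\bR^3)} \le \|\chi_r\|_{L^1}\|A\|_{L^2(B(2L))} = \|A\|_{L^2(B(2L))},
\]
combined with the Poincar\'e inequality on $B(2L)$: since $\int_{B(2L)} A = 0$ by \eqref{eq:gaugezero2}, $\|A\|_{L^2(B(2L))}^2 \le C L^2 \int_{B(2L)} |\nabla \otimes A|^2$.

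For $|n|\ge 1$, pick any index $j$ with $n_j \ge 1$ and move one derivative onto $A$, writing $\partial^n A_r = (\partial_j A) * (\partial^{n-e_j}\chi_r)$. A rescaling gives $\|\partial^{n-e_j}\chi_r\|_{L^1(\bR^3)} = r^{1-|n|}\|\partial^{n-e_j}\chi\|_{L^1(\bR^3)} =: C_n r^{1-|n|}$. Applying Young's inequality once more,
\[
\|\partial^n A_r\|_{L^2(B(3L/2))} \le \|\partial^{n-e_j}\chi_r\|_{L^1}\,\|\partial_j A\|_{L^2(B(2L))} \le C_n r^{1-|n|} \|\nabla \otimes A\|_{L^2(B(2L))},
\]
and squaring yields the stated bound. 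There is no genuine obstacle here; the only point requiring care is the truncation of $A$ to $B(2L)$ before invoking Young's inequality, which is justified by the locality observation at the start.
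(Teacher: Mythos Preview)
Your proof is correct and follows essentially the same route as the paper: for $|n|=0$ you use Young's inequality followed by Poincar\'e on $B(2L)$ (exactly as in the paper), and for $|n|\ge 1$ you move one derivative onto $A$ and control the remaining $|n|-1$ derivatives of the mollifier by scaling. The only cosmetic difference is that the paper bounds the inner convolution by Cauchy--Schwarz, picking up $\|\partial^{n'}\chi\|_{L^2}^2$, whereas you use Young's inequality and pick up $\|\partial^{n'}\chi\|_{L^1}$; both yield the same $r^{2-2|n|}$ factor.
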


\begin{proof}
Analogously to the proof of Lemma~\ref{lem:AminusAl2} we get
\begin{align}
 \| A_{r} - A \|_{L^2(B(3L/2))}^2 \leq C L^2 \int_{B(2L)}
 |\nabla \otimes A|^2 \, .
\end{align}
Also, using \eqref{eq:gaugezero2} and the Poincar\'{e} inequality,
\begin{align}
\| A_r \|_{L^2(B(3L/2))}^2 \leq \| A \|_{L^2(B(2L))}^2 \leq C L^2 \int_{B(2L)}
 |\nabla \otimes A|^2 \, ,
\end{align}
where, in the first step, we used $r \leq L/2$ and Young's inequality
as above. 
This
 finishes the proof of \eqref{eq:noderiv}.

To prove the estimate for the derivatives, for the given multi-index 
$n\in \bN^3\setminus \{ 0 \}$,
 we choose $e \in \bN^3$ with $|e| =1$, 
such that $n':= n - e \in \bN^3$, i.e., $|n|= |n'|+1$.
 Then calculate
\begin{align}
\label{eq:convolution}
\| \partial^{n} A_{r} \|_{L^2(B(3L/2))}^2 & = \int_{B(3L/2)}\Big| r^{-|n'|}
\int_{B(1)}  (\partial^{n'}\chi)(y) 
(\partial^{e} A)(x-r y)\,\rd y\Big|^2 \rd x\nonumber \\
&\leq r^{2-2|n|} \| \partial^{n'}\chi \|_2^2 \int_{B(1)}
 \int_{B(3L/2)} |(\partial^{e} A)(x-r y)|^2\,
\rd x \rd y\nonumber \\
&\leq C r^{2-2|n|}  \int_{B(2L)} |\nabla \otimes A|^2 \,\rd x,
\end{align}
where we used the assumption that $r \leq L/2$, so $B(3L/2)+B(r) \subset B(2L)$.
\end{proof}

We will use  the  results of this section 
for $ L=L_0 = h^{1/2 - \e_0}$
 where $\e_0$ is a small positive number.
First, with the choice of $L=L_0$, by change of variable and using $r\leq L_0/2$, we get
Lemma~\ref{lem:ScaledDerivatives} as a consequence of
Lemma~\ref{lem:derivatives}.
Second, we will use as an input the apriori bound \eqref{eq:1},
which was already proven as a result of Corollary~\ref{cor:BoundMagn}, namely
\begin{align}\label{eq:MagneticApriori}
h^{-2} \int_{B(2L_0)} |\nabla \otimes A|^2 \leq Ch^{\al} \Lambda_{L_0}, \qquad \al = 1-3\e_0.
\end{align}
This will give bounds on various norms of $A'-A_r'$ and $\pt^nA_r$
that will be used in the next section.

\subsection{Smoothing $A$}\label{sec:smoothing}

In this section we refine the result of Theorem \ref{thm:1}
and complete the proof of Theorem \ref{thm:replace} 
by proving \eqref{AtoAr}. 
The vector potential $A$ will not be removed as in 
Section~\ref{sec:remA}, but rather replaced with $A_r$ and this
results in a smaller error. 
We fix three length scales $L_0 \gg L_1 \gg r$, with
\begin{align}
\label{eq:parameters}
L_0 = h^{1/2-\e_0},\qquad L_1 = h^{1/2+\e_0},\qquad r = h^{1/2+\rho},
\end{align}
where $\rho$ will be chosen as  $\rho=C\e_0$ with a sufficiently large constant, 
and we assume that $A$ satisfies \eqref{eq:MagneticApriori}.
We will perform a similar analysis as in Section \ref{sec:remA},
in particular, we will again
perform a dyadic decomposition in energy and 
 the parameter $w$ in the dyadic decomposition will be chosen as
\begin{align}\label{deffw}
w = h^{\al/2} = h^{\frac{1}{2} -  \frac{3}{2}\e_0}.
\end{align}
\begin{lemma}\label{lm:smoothA}
For any sufficiently small $\rho>0$ there exists a
positive constant $\e_0$ (in fact $\e_0= c\rho$ can be chosen
where $c$ is a universal positive constant) such that the following is satisfied. 
Let $L_0$ and $r$ be given by
 \eqref{eq:parameters}, and $\phi\in C_0^\infty(B(L_0))$ with $|\pt^n\phi|\le C_nL_0^{-|n|}$. 
 Let $V$ satisfy the assumptions given in Theorem~\ref{thm:replace}.
 Let furthermore, $A$ satisfy \eqref{eq:MagneticApriori} and $\nabla\cdot A=0$
in $B(L_0+r)$.
Then,
\begin{align}\label{eq:6.106}
\tr\big[ \phi [T_h(A) -V] \phi \big]_{-} \geq 
\tr\big[ \phi[ T_h(A_r) -V] \phi \big]_{-} - C h^{1+\e_0} \Lambda_{L_0}.
\end{align}
\end{lemma}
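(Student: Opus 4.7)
The plan is to write the difference $T_h(A) - T_h(A_r)$ as an explicit sum of first and second order terms in $a := A - A_r$, estimate each one in the state $\phi\gamma\phi$ where $\gamma$ is the spectral projection of $\phi[T_h(A) - V]\phi$ onto the negative semiaxis, and match these against the kinetic energy bounds already collected in Theorem~\ref{thm:5.14}. Since $\phi$ is supported in $B(L_0)$ and $\nabla\cdot A = 0$ in a neighborhood of $\supp \phi$, we may freely replace $A$ by the truncated $A' = \phi' A$ on the support of $\phi$; analogously we work with $A'_r = \phi' A_r$ and $a' := A' - A'_r$. Using the Pauli identity
\begin{equation*}
T_h(A) - T_h(A_r) = D\cdot a + a\cdot D + 2 A_r \cdot a + a^2 + h\,\bsigma\cdot(\nabla\times a),
\end{equation*}
the variational principle gives
\begin{equation*}
\tr[\phi H(A)\phi]_- \geq \tr[\phi H(A_r)\phi]_- + \tr \phi[T_h(A) - T_h(A_r)]\phi\,\gamma,
\end{equation*}
so it suffices to show that $|\tr \phi[T_h(A)-T_h(A_r)]\phi\,\gamma|\leq C h^{1+\e_0}\Lambda_{L_0}$.

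The first (and hardest) task is the linear term $\tr\phi(D\cdot a + a\cdot D)\phi\gamma$. I would follow the template of Lemma~\ref{lm:ad}, Lemma~\ref{lm:largeenergy} and Lemma~\ref{lm:sumi}: insert the partition $\chi_u^2$ of Section~\ref{sec:red} on the intermediate scale $L_1 = h^{1/2+\e_0}$, so that $\gamma$ is replaced fiberwise by the local projections $\gamma_u$ from \eqref{eq:13}; then insert the dyadic energy decomposition $1 = F_0^{u,2} + \sum_{0<|i|\leq i_0} F_i^{u,2} + F_>^{u,2}$ (with $w = h^{\alpha/2}$ as in \eqref{deffw}), and on each piece split $a$ into a part multiplied by the enlarged $\wt F_i^u$ (estimated by Cauchy--Schwarz against $\tr D^2 F_i^{u,2} \eta_u \gamma_u \eta_u$) and a part against $1-\wt F_i^u$ (estimated by Lemma~\ref{lem:comm2}). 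The key improvement over Theorem~\ref{thm:2} is that now instead of $A$ appearing everywhere, we have $a = A - A_r$, whose $L^2$-norm is smaller by a factor $r$ via \eqref{eq:AminusAl} in Lemma~\ref{lem:AminusAl2}, while $\|\nabla\otimes a\|_2 \leq C\|\nabla\otimes A\|_2$ by Young. Integrating the local estimates against $\rd u/L_1^3$ and using the kinetic energy bound \eqref{eq:15} of Theorem~\ref{thm:5.14} in place of ad hoc positivity, the linear contribution is bounded by $\delta\, h^{1-3\e_0}\Lambda_{L_0} + C\delta^{-1}(\text{powers of } r, h, L_0, L_1)\Lambda_{L_0}$, and the choice $\rho = c\e_0$ with $c$ a large universal constant ensures the error is $\le Ch^{1+\e_0}\Lambda_{L_0}$.

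The remaining three terms are simpler. The magnetic Pauli term $h\,\tr\phi\bsigma(\nabla\times a)\phi\gamma$ is controlled by Cauchy--Schwarz against $\|\nabla\times a\|_2^2 \leq C\int_{B(2L_0)}|\nabla\otimes A|^2$ and the density $\tr \phi^2\gamma$, which is $O(\Lambda_{L_0})$ by the non-magnetic Lieb--Thirring inequality combined with the upper bound of Theorem~\ref{thm:1}; using the apriori bound \eqref{eq:MagneticApriori} this gives a gain of a factor $h$. The quadratic term $\tr\phi a^2\phi\gamma$ is estimated by H\"{o}lder followed by Sobolev and the Lieb--Thirring kinetic bound on $\gamma$, using $\|a\|_6 \leq C\|\nabla\otimes A\|_2$ from \eqref{eq:AminusAl6}; the extra power of $r$ implicit in $\|a\|_2$ is what produces the required smallness. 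Finally, $\tr\phi(A_r\cdot a)\phi\gamma$ is handled by Cauchy--Schwarz together with $\|A_r\|_\infty\leq C\|\nabla\otimes A\|_2 \ell^{-1/2}$-type bounds from Lemma~\ref{lem:derivatives} (using Sobolev on the smoothed $A_r$).

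The main obstacle, as in Section~\ref{sec:remA}, is the linear term: one must exploit the dyadic decomposition to separate the regime near the Fermi surface (small $w_i$, but good support intersection with $\gamma$) from the bulk, and balance the two resulting error contributions. The apriori kinetic energy bound \eqref{eq:15} of Theorem~\ref{thm:5.14} is indispensable here, since it allows us to absorb the dyadic kinetic energies $\sum_i w_i \tr D^2 F_i^{u,2}\eta_u\gamma_u\eta_u$ that arise from Cauchy--Schwarz into an acceptable $O(h^{1-3\e_0}\Lambda_{L_0})$ error. Tracking the final exponents, one obtains that any $\rho$ with $\rho > C\e_0$ (for a suitable universal $C$) suffices, so given small $\rho$ one chooses $\e_0$ correspondingly small, as asserted.
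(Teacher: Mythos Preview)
Your plan is essentially the paper's proof: variational principle with $\gamma_0={\bf 1}_{(-\infty,0)}(\phi[T_h(A)-V]\phi)$, the decomposition \eqref{thdec}, localization of the current term to scale $L_1$ followed by the dyadic energy split, and Theorem~\ref{thm:5.14} to absorb the resulting kinetic pieces; the quadratic piece is exactly Lemma~\ref{lem:5.19} (the paper keeps it as $(A-A_r)\cdot(A+A_r)$ rather than splitting into $a^2$ and $2A_r\cdot a$, but this is cosmetic).

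Two small corrections. First, $\gamma_0$ is never ``replaced fiberwise by $\gamma_u$'': after inserting $\chi_u^2$ the current term reads $\int\tr[2D(A-A_r)\,\eta_u\gamma_0\eta_u]\,\rd u/L_1^3$ still with the fixed $\gamma_0$, and the kinetic bound of Theorem~\ref{thm:5.14} is to be applied with $\gamma_0$ in the role of $\gamma_u$ (its proof, via Proposition~\ref{prop:KineticBoundsScaled}, accepts any density matrix, and the upper bound $\cE(A)\le\cE(0)$ is unchanged). Second, your direct Cauchy--Schwarz for $h\,\tr\phi\,\bsigma\!\cdot(\nabla\times a)\,\phi\gamma$ against $\|\nabla\times a\|_2$ and $\tr\phi^2\gamma$ does not close as stated: you would need $L^\infty$ control on the density of $\phi\gamma_0\phi$, which is not available. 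The paper therefore keeps the $h\sigma(B-B_r)$ term inside the current and handles it within the dyadic decomposition, where the diagonal bound $\sup_x F_i(x,x)\le Ch^{-3}w_i$ from \eqref{diag} supplies the missing density control (cf.\ \eqref{Bterm} and the remark at the end of Lemma~\ref{lem:5.23}).
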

Since \eqref{eq:1} was already proven in
Corollary~\ref{cor:BoundMagn},
Lemma~\ref{lm:smoothA} implies \eqref{AtoAr} and therefore finishes
 the proof of Theorem~\ref{thm:replace}.
The rest of this section will be devoted to the proof of Lemma~\ref{lm:smoothA}.

\begin{proof}
We will use localizations as in Section~\ref{sec:red}
 with $L = L_1$. Notice that the apriori assumption \eqref{Bapr} 
will now be satisfied on all the small boxes $B_u(2L_1)$ with $u\in B(\frac{5}{4}L_0)$,
 since
\begin{align}
h^{-2} \int_{B_u(2L_1)} |\nabla \otimes A|^2 \leq h^{-2} \int_{B(2L_0)} 
|\nabla \otimes A|^2
\leq Ch^{\al} \Big(\frac{L_0}{L_1}\Big)^3 \Lambda_{L_1} = Ch^{1 - 9 \e_0} \Lambda_{L_1},
\end{align}
where we used \eqref{eq:MagneticApriori} and the fact that $V_u\ge c_0/2$.
For the Pauli case, $T_h(A)= [\bsigma\cdot (D+A)]^2$, 
consider 
\begin{align}
T_h(A) &= T_h(A_r) + \sigma(A-A_r)\sigma(D)+\sigma(D)\sigma(A-A_r)- (A-A_{r})\cdot (A+A_{r})
\nonumber \\
&=
T_h(A_r) + 2  D (A-A_{r}) - (A-A_{r})\cdot (A+A_{r}) + h\sigma(B-B_r) . \label{thdec}
\end{align}
Here we used that $A$ and $A_r$ are divergence free which property holds
 on the support of $\phi$. Note that
\eqref{thdec} will be used only on $\mbox{supp}\, \phi$.

We denote 
\begin{align}\label{def:gamma0}
\gamma_0 := {\bf 1}_{(-\infty,0)}\big(\phi[T_h(A)-V]\phi\big)
\end{align}
and we collect a few apriori estimates on $\gamma_0$.

\begin{lemma}\label{lm:gamma0}
With the definition \eqref{def:gamma0} and assuming that $A$ satisfies
\eqref{eq:MagneticApriori} and $V$ is bounded, we get
\be
   \tr \gamma_0 \le Ch^{-9\e_0/4}\Lambda_{L_0}
\label{trgamma0}
\ee
\be
   \tr D^2\phi\gamma_0\phi\le Ch^{-9\e_0/4}\Lambda_{L_0}.
\label{trDgamma0}
\ee
\end{lemma}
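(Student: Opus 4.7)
The proof exploits three inputs: the spectral property $\gamma_0\phi[T_h(A)-V]\phi\gamma_0\le 0$, the magnetic Lieb--Thirring bound of Theorem~\ref{thm:lls}, and the a priori field-energy estimate \eqref{eq:MagneticApriori}.

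For the count estimate \eqref{trgamma0}, I first note that any eigenvector $\psi$ of $H:=\phi[T_h(A)-V]\phi$ with eigenvalue $\lambda<0$ satisfies $\psi=\lambda^{-1}\phi[T_h(A)-V]\phi\psi$, so it is supported in $\supp\phi\subset B(L_0)$, and $\phi$ is injective on $\mathrm{Ran}(\gamma_0)$ (otherwise $\psi$ would be a null vector of $H$). Therefore $\phi\,\mathrm{Ran}(\gamma_0)$ is a $(\tr\gamma_0)$-dimensional subspace of $H^1_0(B(L_0);\bC^2)$ on which $\langle[T_h(A)-V]\,\cdot,\,\cdot\rangle$ is strictly negative, and min-max for the Dirichlet Pauli operator gives $\tr\gamma_0\le N_{<0}\bigl((T_h(A)-V)^D_{B(L_0)}\bigr)$. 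A spectral shift at level $c=1$ bounds this count by $\bigl|\tr[(T_h(A)-V)^D_{B(L_0)}-1]_-\bigr|$, and since on $H^1_0(B(L_0))$ the Dirichlet form agrees with that of the full-space Pauli operator with compactly supported potential $(V+1)\mathbf{1}_{B(L_0)}$, form monotonicity gives $\bigl|\tr[\cdot]_-\bigr|\le\bigl|\tr[T_h(A)-(V+1)\mathbf{1}_{B(L_0)}]_-\bigr|$. Theorem~\ref{thm:lls} then bounds the latter using $\|B\|_2^2\le Ch^{2+\alpha}\Lambda_{L_0}$ from \eqref{eq:MagneticApriori}, yielding $\tr\gamma_0\le C\Lambda_{L_0}+Ch^{-21\e_0/4}\le Ch^{-9\e_0/4}\Lambda_{L_0}$.

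For the kinetic bound \eqref{trDgamma0}, taking the trace of $\gamma_0 H\gamma_0\le 0$ yields $\tr\phi T_h(A)\phi\gamma_0\le\|V\|_\infty\tr\gamma_0$. Combining with the form inequality $T_h(A)\ge\frac{1}{2}D^2-A^2-h|B|$ (which follows from $(D+A)^2\ge\frac{1}{2}D^2-A^2$ via $(D+2A)^2\ge 0$, and $\bsigma\cdot B\ge-|B|I$ on spinors) gives $\frac{1}{2}\tr\phi D^2\phi\gamma_0\le \|V\|_\infty\tr\gamma_0+\tr\phi A^2\phi\gamma_0+h\tr\phi|B|\phi\gamma_0$. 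Setting $\wt\gamma:=\phi\gamma_0\phi$ (so $0\le\wt\gamma\le 1$) and introducing a cutoff $A':=\phi'A$ with $\phi'\equiv 1$ on $\supp\phi$, Poincar\'{e} and Sobolev yield $\|A'\|_6^2+\|B'\|_2^2\le C\int_{B(2L_0)}|\nabla\otimes A|^2\le Ch^{2+\alpha}\Lambda_{L_0}$. Interpolating $\rho_{\wt\gamma}$ between $\|\rho\|_1\le\tr\gamma_0$ and $\|\rho\|_{5/3}^{5/3}\le Ch^{-2}\tr D^2\phi\gamma_0\phi$ (Lieb--Thirring density bound), and up to $L^3$ via Hoffmann--Ostenhof plus Sobolev when estimating the $B$-term, H\"{o}lder yields bounds on $\int A^2\rho$ and $h\int|B|\rho$ by products of powers of $\tr\gamma_0$ and $\tr D^2\phi\gamma_0\phi$. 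An AM--GM absorption of a small fraction of $\tr D^2\phi\gamma_0\phi$ into the left-hand side, together with \eqref{trgamma0}, closes the estimate.

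The main technical obstacle is \eqref{trgamma0}: the magnetic Lieb--Thirring inequality of Theorem~\ref{thm:lls} controls $\tr[\cdot]_-$ rather than a count of bound states, and bridging the two requires the Dirichlet localization on $B(L_0)$ combined with the spectral shift. Once this is done, the kinetic bound \eqref{trDgamma0} follows by routine H\"{o}lder/Sobolev interpolation.
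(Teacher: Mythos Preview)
Your argument is correct in substance, but for \eqref{trgamma0} it takes an unnecessarily roundabout route compared to the paper. You localize to a Dirichlet problem, apply a spectral shift $N_{<0}(K)\le|\tr[K-1]_-|$, and then invoke the magnetic Lieb--Thirring sum bound of Theorem~\ref{thm:lls}. The paper instead observes that $T_h(A)\ge (D+A)^2-|B|$ and applies the CLR bound for the magnetic Schr\"odinger operator directly:
\[
\tr\gamma_0 \le 2\,N_{<0}\bigl((D+A)^2-(|B|+V)\mathbf 1_{\supp\phi}\bigr)\le Ch^{-3}\int_{\supp\phi}(|B|+|V|)^{3/2},
\]
where the CLR constant is uniform in $A$ by the diamagnetic inequality. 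This avoids the Dirichlet restriction and the spectral shift altogether and leads to the same exponent. Your detour works but costs you a localization of the vector potential that you do not spell out: Theorem~\ref{thm:lls} requires $\int_{\bR^3}|\nabla\times A|^2$, while \eqref{eq:MagneticApriori} only controls $\int_{B(2L_0)}|\nabla\otimes A|^2$; you must first replace $A$ by $A'=\phi'A$ on the Dirichlet domain (where they agree) and then use Poincar\'e as in \eqref{nablaA1} to control $\|B'\|_2$. This is routine but should be said.

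For \eqref{trDgamma0} your approach is essentially the paper's: both start from $\tr\phi[T_h(A)-V]\phi\gamma_0\le 0$, pass to $D^2\le 2(D+A')^2+2[A']^2$, and close a self-consistent H\"older/Lieb--Thirring estimate on $\int[A']^2\rho$. The paper bounds $\tr\phi T_h(A')\phi\gamma_0$ from below directly via the magnetic Lieb--Thirring inequality (so \eqref{trDgamma0} is logically independent of \eqref{trgamma0}), whereas you bound $\tr\phi V\phi\gamma_0$ from above using \eqref{trgamma0}; both are fine. Your separate treatment of the $h|B|$ term via Hoffmann--Ostenhof and interpolation to $L^3$ for the density is a clean way to handle the Pauli contribution and does close with the stated exponent for $\e_0$ small.
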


\begin{proof}
By the variational principle and $T_h(A)\ge (D+A)^2- |B|$,
\begin{align}
  \tr \gamma_0 &\le 2\tr_{L^2(\bR^3)} {\bf 1}_{(-\infty,0)}\big( \phi\big(\; (D+A)^2
  - |B| - V\big)\phi\big) \nonumber \\
&\le 2 \tr_{L^2(\bR^3)} {\bf 1}_{(-\infty,0)}\big( (D+A)^2
  - (|B| + V) {\bf 1}_{\supp \phi}\big). 
\end{align}
Notice that the last inequality uses the fact that
 we consider the strictly negative eigenvalues.

By the CLR estimate we get, since $V$ is bounded,
\begin{align*}
  \tr \gamma_0 &\leq Ch^{-3}\int_{\supp \phi} ( |B|+ |V|)^{3/2}
  \le C\Lambda_{L_0}+ Ch^{-3} \int_{B(L_0)}|B|^{3/2}\\
&  \le  C\Lambda_{L_0} +C h^{-3}L_0^{3/4} \Big( \int_{B(L_0)} |\nabla \otimes A|^2\Big)^{3/4}\\
&  \le  C\Lambda_{L_0} +C h^{-3}L_0^{3/4} \Big( h^{2+\al} \Lambda_{L_0}\Big)^{3/4}
  =  Ch^{-9\e_0/4}\Lambda_{L_0}
\end{align*}
using \eqref{eq:MagneticApriori} and $\al= 1-3\e_0$, which proves \eqref{trgamma0}.

For the rest of this proof, set
\be
  A' = A\phi', \qquad V' = V\phi',
\label{A3prime}
\ee
where $\phi'$ is a cutoff function such that $\supp \phi' \subset B(3L_0/2)$,
$\phi'\equiv 1$ on $\supp \phi = B(L_0)$ and $|\nabla\phi'|\le C/L_0$.
By the magnetic Lieb-Thirring inequality,  Theorem~\ref{thm:lls}, we also have,
\begin{align}
   0\ge &  \Tr \phi (T_h(A')-V)\phi\gamma_0 \ge - Ch^{-3}\int [V']^{5/2} - 
Ch^{-3} \Big(\int |\nabla\otimes A'|^2 \Big)^{3/4} \Big( \int [V']^4\Big)^{1/4} \non\\
  \ge &  - C\Lambda_{L_0} - C h^{-3}L_0^{3/4} \Big( h^{2+\al} \Lambda_{L_0}\Big)^{3/4}
  = - Ch^{-9\e_0/4}\Lambda_{L_0},
\label{mlt10}
\end{align}
using \eqref{eq:MagneticApriori} and $\al= 1-3\e_0$. Therefore, by Schwarz and
Lieb-Thirring inequalities, from \eqref{mlt10} we get
\begin{align}
   \Tr D^2 \phi\gamma_0\phi \le & 2 \Tr \big[ T_h(A') + [A']^2]  \phi\gamma_0\phi
  \le  2 \Tr \big[ V' + [A']^2]  \phi\gamma_0\phi + Ch^{-9\e_0/4}\Lambda_{L_0}
 \non \\
  \le &  2 \Big(\int \big[ V' + [A']^2]^{5/2}\Big)^{2/5}\Big( \int 
 \big(\phi\varrho_0\phi\big)^{5/3}\Big)^{3/5}
  + Ch^{-9\e_0/4}\Lambda_{L_0}
 \non \\
\le &  C\Big( L_0^3 + \| A'\|_2^{1/2}\|A'\|_6^{9/2}\Big)^{2/5}
 \Big(h^{-2}   \Tr D^2 \phi\gamma_0\phi
  \Big)^{3/5}
  + Ch^{-9\e_0/4}\Lambda_{L_0},
\label{trd2}
\end{align}
where $\varrho_0$ is the density of $\gamma_0$.
Since $\int_{B(2L_0)}A=0$ (see  \eqref{eq:gaugezero}), we have
\be
   \| A'\|_2^2\le \int_{B(2L_0)} A^2 \le  CL_0^2 \int_{B(2L_0)} |\nabla\otimes A|^2
\label{AAA}
\ee
and
\be
  \| A'\|_6^2 \le \int |\nabla\otimes A'|^2 \le 
C\int_{B(2L_0)}\big[ |\nabla\otimes A|^2 + L_0^{-2}|A|^2\big]
  \le  C\int_{B(2L_0)} |\nabla\otimes A|^2 
\label{AAAA}
\ee
by Poincar\'e and Sobolev inequalities, thus
$$
   \| A'\|_2^{1/5}\|A'\|_5^{9/5} \le CL_0^{1/5}\int_{B(2L_0)} |\nabla \otimes A|^2
  \le CL_0^{1/5}h^{2+\al}\Lambda_{L_0} = CL_0^{3+1/5} h^{-2\e_0}.
$$
Clearly $L_0^{3+1/5} h^{-2\e_0} \ll L_0^{6/5}$, so we can continue \eqref{trd2} as
$$
  \Tr D^2 \phi\gamma_0\phi \le C\Lambda_{L_0}^{2/5} \Big(\Tr D^2 \phi\gamma_0\phi
  \Big)^{3/5} + Ch^{-9\e_0/4}\Lambda_{L_0}
$$
from which \eqref{trDgamma0} follows since $h^{-9\e_0/4}\Lambda_{L_0}\gg 1$ if
$\e_0$ is sufficiently small.
\end{proof}

Returning to the decomposition \eqref{thdec}, we
first show that the  effect of the quadratic term (in $A$) is negligible:
\begin{lemma}\label{lem:5.19}
\be
\Big|\tr (A-A_{r})(A+A_{r}) \phi\gamma_0 \phi\Big|
\leq C(r/L_0)^{1/10} (h^{-1} L_0^2) h^{-2-2\e_0} 
 \int_{B(2L_0)}  |\nabla \otimes A|^2.
\label{AAr}
\ee
In particular, assuming 
\eqref{eq:MagneticApriori} and that $\rho\ge 100\e_0$, we have
\be
  \Big|\tr (A-A_{r})(A+A_{r})  \phi \gamma_0\phi\Big|
 \leq Ch^{\frac{1}{10}(\rho-\e_0)-7\e_0+1}  \Lambda_{L_0}
  \le Ch^{1+\e_0}  \Lambda_{L_0}.
\label{AAr1}
\ee
\end{lemma}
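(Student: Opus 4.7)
The plan is to bound the trace by an integral against the density of $\phi\gamma_0\phi$ and then apply a three-factor H\"older inequality tuned to a Lieb--Thirring bound on that density. By cyclicity and the fact that $A$, $A_r$ are multiplication operators,
$$
\tr (A-A_r)\cdot(A+A_r)\,\phi\gamma_0\phi = \int (A-A_r)\cdot(A+A_r)(x)\,\widetilde\rho(x)\,\rd x,
$$
where $\widetilde\rho(x)=\phi^2(x)\gamma_0(x,x)$ is the density of $\widetilde\gamma:=\phi\gamma_0\phi$, which satisfies $0\le\widetilde\gamma\le 1$ since $\gamma_0$ is a projection and $0\le\phi\le 1$. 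Using the pointwise estimate $|(A-A_r)\cdot(A+A_r)|\le |A-A_r|(|A|+|A_r|)$ and H\"older with exponents $(5,5,5/3)$, the problem reduces to controlling
$$
\|A-A_r\|_{L^5(\supp\phi)}\,\bigl(\|A\|_{L^5(\supp\phi)}+\|A_r\|_{L^5(\supp\phi)}\bigr)\,\|\widetilde\rho\|_{L^{5/3}}.
$$

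The three factors are then estimated separately. For the first factor, interpolating $L^5$ between $L^2$ and $L^6$ yields the H\"older exponent $\theta=1/10$ (from $1/5=\theta/2+(1-\theta)/6$); combined with Lemma~\ref{lem:AminusAl2} applied to $A'=\phi'A$ for a cutoff $\phi'\equiv 1$ on $\supp\phi$, this gives
$$
\|A-A_r\|_{L^5(\supp\phi)}\le Cr^{1/10}\Bigl(\int_{B(2L_0)}|\nabla\otimes A|^2\Bigr)^{1/2}.
$$
For the second factor, the same interpolation together with the Poincar\'e estimate \eqref{AAA} (available thanks to the gauge condition \eqref{eq:gaugezero}) and the Sobolev estimate \eqref{AAAA} gives
$$
\|A\|_{L^5(\supp\phi)}+\|A_r\|_{L^5(\supp\phi)}\le CL_0^{1/10}\Bigl(\int_{B(2L_0)}|\nabla\otimes A|^2\Bigr)^{1/2},
$$
the bound on $A_r$ following from Young's inequality applied to the convolution $A_r=A*\chi_r$. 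For the third factor, the standard Lieb--Thirring inequality (which applies since $0\le\widetilde\gamma\le 1$) produces $\|\widetilde\rho\|_{5/3}^{5/3}\le Ch^{-2}\tr D^2\widetilde\gamma$, and \eqref{trDgamma0} from Lemma~\ref{lm:gamma0} bounds the right-hand side by $Ch^{-2-9\e_0/4}\Lambda_{L_0}$, i.e.\ $\|\widetilde\rho\|_{5/3}\le CL_0^{9/5}h^{-3-27\e_0/20}$.

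Multiplying the three bounds, using the algebraic identity $r^{1/10}L_0^{1/10}=(r/L_0)^{1/10}L_0^{1/5}$ together with $L_0^{1/5}\cdot L_0^{9/5}=L_0^2$, and absorbing $h^{-27\e_0/20}$ into $h^{-2\e_0}$ (valid since $27/20<2$ and $h\le 1$), one recovers exactly the statement \eqref{AAr}. For the second claim \eqref{AAr1} I would substitute the apriori bound \eqref{eq:MagneticApriori} for $\int|\nabla\otimes A|^2$ and count exponents: with $L_0=h^{1/2-\e_0}$ and $r/L_0=h^{\rho+\e_0}$ the exponent of $h$ becomes $1+(\rho-69\e_0)/10$ relative to $\Lambda_{L_0}$, which exceeds $1+\e_0$ as soon as $\rho\ge 79\e_0$, a fortiori under the hypothesis $\rho\ge 100\e_0$.

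The main obstacle is choosing the H\"older exponents correctly: the density is only controlled in $L^{5/3}$ via Lieb--Thirring combined with the kinetic bound \eqref{trDgamma0}, and this forces both vector potential factors to sit in $L^5$. In this $L^5$ norm the smoothing $A\mapsto A_r$ contributes only the modest gain $r^{1/10}$ coming from interpolation between $L^2$ (where smoothing costs nothing) and $L^6$ (where only a constant is lost). This small gain is just sufficient to outweigh the other powers of $h$ once the apriori magnetic energy bound \eqref{eq:MagneticApriori} is invoked, with the margin controlled by the ratio $\rho/\e_0$.
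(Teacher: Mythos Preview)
Your proof is correct and follows essentially the same approach as the paper: both write the trace as an integral against the density $\phi\rho_0\phi$, apply H\"older with exponents $(5,5,5/3)$ (equivalently $(5/2,5/3)$ on the product), interpolate each $L^5$ factor between $L^2$ and $L^6$ via Lemma~\ref{lem:AminusAl2} and \eqref{AAA}--\eqref{AAAA}, and bound the density factor by Lieb--Thirring together with \eqref{trDgamma0}. Your arithmetic for the second claim gives the exponent $1+(\rho-69\e_0)/10$, marginally sharper than the paper's stated $\frac{1}{10}(\rho-\e_0)-7\e_0+1$; either way the conclusion under $\rho\ge 100\e_0$ is the same.
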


\begin{proof} Analogously to the proof of Lemma~\ref{lm:gamma0}, 
we use H\"{o}lder and  Lieb-Thirring inequalities. The prime on the $A$
and $A_r$   denote localizations to $B(3L_0/2)$ as in \eqref{A3prime}. 
Then,  using $\| A'\|_p\le \|A\|_p$
and \eqref{AAA}--\eqref{AAAA}, we have the following estimate:
\begin{align}
\label{eq:LTforAminusAl}
\Big| \int (A-A_{r})(A&+A_{r})  \phi \rho_0\phi \,\Big|
 \leq \Big\{\int \big[(A'-A'_{r})(A'+A'_{r}) \big]^{5/2}\Big\}^{2/5} 
\Big\{\int (\phi \rho_0\phi)^{5/3}\Big\}^{3/5}\nonumber \\
&\leq \| A' - A'_{r}\|_2^{1/10} \| A' - A'_{r}\|_6^{9/10}  \| A' + A'_{r}\|_2^{1/10} \| A' + A'_{r}\|_6^{9/10} 
\Big\{ h^{-2} \tr D^2 \phi \gamma_0 \phi \Big\}^{3/5}\nonumber \\
&\leq C\Big\{ rL_0 \int_{B(2L_0)} |\nabla \otimes A|^2 \Big\}^{1/10} 
\Big\{ \int_{B(2L_0)}  |\nabla \otimes A|^2 \Big\}^{9/10} h^{-6/5-27\e_0/20}
 \Lambda_{L_0}^{3/5}\nonumber \\
&\leq C(r/L_0)^{1/10} (h^{-1} L_0^2) h^{-2-2\e_0} \int_{B(2L_0)}  |\nabla \otimes A|^2,
\end{align}
which gives \eqref{AAr}. Here 
 we also used Lemma~\ref{lem:AminusAl2} with $L=L_0$
 and  Lemma~\ref{lm:gamma0}.
\end{proof}

\bigskip

The main problem is to estimate the current term in   \eqref{thdec}, i.e. the term
\begin{align}\label{2curr}
\tr \Big[  \sigma(A-A_r)\sigma(D)+\sigma(D)\sigma(A-A_r)\Big] \phi \gamma_0 \phi
 =\tr  \Big[2  D (A-A_{r}) +h\sigma(B-B_r)\Big]\phi \gamma_0 \phi,
\end{align}
where we used that $A$ and $A_r$ are divergence-free on $\mbox{supp}\,\phi$.
We will apply localizations as in Section~\ref{sec:red}
 with $L = L_1$ on this term. Since this is a first order 
operator, we can do so without localization error, using
$$
   \int \chi_u D\chi_u \frac{\rd u}{L_1^3} = D \Big[ \frac{1}{2}\int
 \chi_u^2 \frac{\rd u}{L_1^3}
  \Big]  =0
$$
for the partition of unity defined in  \eqref{chiint}. So we have
\be
\tr D (A-A_{r}) \phi \gamma_0 \phi =
\int \tr\Big[ D (A-A_{r})  \eta_u  \gamma_0 \eta_u\Big]
 \frac{\rd u}{L_1^3},
\label{currloc}
\ee
where we set $\eta_u:= \chi_u\phi$.
 Similar expression holds for
the terms in the first line of \eqref{2curr}.

Thus, we have to estimate  (see \eqref{A3prime} for notation),
\begin{align}\label{alt1}
\int \tr \Big[ \big( 2D (A'-A_{r}')   +h\sigma(B'-B_r')\big)
\eta_u \gamma_0 \eta_u\Big]\frac{\rd u}{L_1^3}
\end{align}
or, equivalently,
\begin{align}\label{alt2}
\int \tr \Big[ \big(\sigma(A'-A_r')\sigma(D)+\sigma(D)\sigma(A'-A_r') \big)
\eta_u \gamma_0 \eta_u\Big]\frac{\rd u}{L_1^3}.
\end{align}
Here we have set
 $B':=\nabla \times A'$, $B_r':= \nabla\otimes A_r'$. We recall
that, similarly to \eqref{nablaA1}, we have
$$ 
    \int [B']^2 + \int [B'_r]^2 \le C\int_{B(2L_0)} |\nabla\otimes A|^2.
$$

We define, with $V_u$ from \eqref{eq:10},
\begin{align*}
  H^u = D^2 - V_u,\qquad \text{ and } \qquad P={\bf 1}_{(-\infty,0]}(H^u).
\end{align*}
Furthermore, we let $F_i^u, F_{>}^u, \wt F_{i}^u, \wt F_{>}^u$ be
defined as in  Section \ref{sec:decomp} with $W=V_u$.
The parameter $w$ entering the definition
of the $F$'s  has
 been fixed in \eqref{deffw} above as $w = h^{\al/2}=h^{\frac{1}{2}-\frac{3}{2}\e_0}$.

\begin{lemma}\label{lem:5.21}
Suppose $\rho \geq 100 \e_0$. Then, for any fixed $u\in B(\frac{5}{4}L_0)$
and with $\eta_u:=\chi_u\phi$ we have
  \be
    \label{eq:16}
    \tr \Big[ \big(\sigma(A-A_r)\sigma(D)+\sigma(D)\sigma(A-A_r) \big) 
\eta_u  \gamma_0 \eta_u\Big]
\geq - C h^{1+\e_0} \Lambda_{L_1} - C h^{4\e_0}{\mathcal T}_u ,
  \ee
where 
${\mathcal T}_u$ was defined in  \eqref{eq:12}.
\end{lemma}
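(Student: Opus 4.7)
The plan is to adapt the energy-space decomposition developed in Sections~\ref{sec:remA} and \ref{sec:repl} to the smoothed setting, now exploiting the smallness of $A-A_r$ (quantified by Lemmas~\ref{lem:AminusAl2} and \ref{lem:derivatives}) in place of the overall smallness of $A$. Since $\nabla\cdot A=0$ and $\nabla\cdot A_r=0$ on a neighborhood of $\supp\eta_u$, the Pauli algebra gives
\be
\sigma(A-A_r)\sigma(D)+\sigma(D)\sigma(A-A_r) = 2D(A-A_r)+h\sigma(B-B_r).
\ee
The magnetic piece $h\tr\sigma(B-B_r)\eta_u\gamma_0\eta_u$ is dispatched by Cauchy--Schwarz using $\|B-B_r\|_2\le C\|\nabla\otimes A\|_{L^2(B(2L_0))}$ (Young's inequality applied to $B_r=B*\chi_r$), the localized trace bound $\tr \eta_u\gamma_0\eta_u\le Ch^{-9\e_0/4}\Lambda_{L_1}$ (a localized variant of \eqref{trgamma0}), the prefactor $h$, and the apriori bound \eqref{eq:MagneticApriori}; this produces a contribution well below $h^{1+\e_0}\Lambda_{L_1}$.

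The main term is the current $\tr D(A-A_r)\eta_u\gamma_0\eta_u$. We insert the dyadic partition $1=[F_0^u]^2+\sum_{0<|i|\le i_0}[F_i^u]^2+[F_>^u]^2$ on the left (each $F^u$ commutes with $D$), and for every resulting summand split further through $1=\wt F_i^u+\wt G_i^u$ exactly as in the proof of Lemma~\ref{lm:ad}. The ``on-support'' piece $\tr[F_i^u]^2D(A-A_r)\wt F_i^u\eta_u\gamma_0\eta_u\wt F_i^u$ is bounded by Cauchy--Schwarz: the $(A-A_r)$-factor uses the pointwise diagonal estimate \eqref{eq:supFi} together with the $L^2$ smallness $\|A-A_r\|_2\le Cr\|\nabla\otimes A\|_2$ from Lemma~\ref{lem:AminusAl2}, while the $D^2$-weighted factor is controlled either by $\tr D^2\phi\gamma_0\phi\le Ch^{-9\e_0/4}\Lambda_{L_0}$ (Lemma~\ref{lm:gamma0}) or, where more convenient, absorbed into $h^{4\e_0}{\mathcal T}_u$ after an IMS-type comparison of $\eta_u\gamma_0\eta_u$ with $\gamma_u$ on the defining operators. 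The ``off-support'' piece uses Lemma~\ref{lem:comm2} to exploit the momentum separation of $F_i^u$ and $\wt G_i^u$, gaining $\|\nabla f_i^u\|_1\le CV_u$, together with $\|A-A_r\|_6\le C\|\nabla\otimes A\|_2$ from \eqref{eq:AminusAl6}. The high-energy component $[F_>^u]^2$ is treated in parallel with Lemma~\ref{lm:largeenergy}, and the kinetic energy $\tr D^2[F_>^u]^2\eta_u\gamma_0\eta_u$ it generates is exactly the first summand of ${\mathcal T}_u$ (up to the $\gamma_0\leftrightarrow\gamma_u$ comparison).

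The essential gain from smoothing is that wherever Poincar\'e's inequality would give $\|A\|_2\le CL_0\|\nabla\otimes A\|_2$, we now have $\|A-A_r\|_2\le Cr\|\nabla\otimes A\|_2$ with $r=h^{1/2+\rho}$ and $L_0=h^{1/2-\e_0}$, producing a relative factor $r/L_0=h^{\rho+\e_0}$ (or $(r/L_0)^{1/10}$ after $L^2$--$L^6$ interpolation as in \eqref{eq:LTforAminusAl}). With $L_1=h^{1/2+\e_0}$, $w=h^{\al/2}$, $\al=1-3\e_0$, and the dyadic sum ranging over $|i|\le i_0\sim|\log h|$, all the losses from the dyadic parameters, from the ratio $\Lambda_{L_0}/\Lambda_{L_1}$, and from the $\gamma_0\leftrightarrow\gamma_u$ transition are absorbed once $\rho\ge 100\e_0$, producing precisely the stated two-part error $-Ch^{1+\e_0}\Lambda_{L_1}-Ch^{4\e_0}{\mathcal T}_u$. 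The main obstacle is exactly this multiplicative bookkeeping: three separate splittings (Pauli algebra, dyadic partition in energy, $\wt F$-versus-$\wt G$ cutoff) compound their errors, and one must verify that the geometric series over $i$ converges with a coefficient of sufficiently high power of $h$; the generous constant $100$ in the hypothesis on $\rho$ is chosen so as to avoid any need to optimize this count.
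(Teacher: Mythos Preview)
Your overall architecture --- Pauli algebra to split off $h\sigma(B-B_r)$, dyadic partition $1=[F_0^u]^2+\sum_{0<|i|\le i_0}[F_i^u]^2+[F_>^u]^2$, then the $\wt F_i/\wt G_i$ refinement and Lemma~\ref{lem:comm2} for the off-support pieces --- is exactly the paper's, and your identification of the gain $r/L_0$ from Lemma~\ref{lem:AminusAl2} is the right mechanism. But there is a genuine gap.

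You never subtract the free projection $P={\bf 1}_{(-\infty,0]}(H^u)$. The paper's first move (see \eqref{eq:5122}) is to use the reality identity \eqref{reality},
\[
\tr\big[\sigma(D)\sigma(A-A_r)+\sigma(A-A_r)\sigma(D)\big]\eta_u P\eta_u=0,
\]
to replace $\gamma_0$ by $\gamma_0-P$ for free. For $i<0$ one then writes $\gamma_0-P=(\gamma_0-1)-(P-1)$; the $(P-1)$ piece is disposed of by the disjoint-support estimate \eqref{kinsmall}, while the $(\gamma_0-1)$ piece produces, after the Young split, exactly the kinetic factor $\tr D^2[F_i^u]^2\eta_u(1-\gamma_0)\eta_u$ that sits in ${\mathcal T}_u$ for negative $i$. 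Your scheme, working directly with $\gamma_0$, would instead produce $\tr D^2[F_i^u]^2\eta_u\gamma_0\eta_u$ for $i<0$, and this quantity is \emph{not} part of ${\mathcal T}_u$ (look at \eqref{eq:12}: for $i<0$ it is $1-\gamma_u$, not $\gamma_u$). No ``IMS-type comparison of $\eta_u\gamma_0\eta_u$ with $\gamma_u$'' converts $\gamma_0$ into $1-\gamma_u$.

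Your fallback, the crude bound $\tr D^2\phi\gamma_0\phi\le Ch^{-9\e_0/4}\Lambda_{L_0}$ from Lemma~\ref{lm:gamma0}, is too weak to close. Using only $\gamma_0\le 1$ gives $\tr D^2[F_i^u]^2\eta_u\gamma_0\eta_u\le Cw_i\Lambda_{L_1}$, and pairing this with the $(A-A_r)$ factor $\le Ch^{-3}w_i r^2\int_{B(2L_0)}|\nabla\otimes A|^2$ yields, after Cauchy--Schwarz and the geometric sum $\sum_{|i|\le i_0}w_i\sim 1$, an error of order $h^{-1+\rho-3\e_0/2}$. Matching this against the target $h^{1+\e_0}\Lambda_{L_1}=h^{-1/2+4\e_0}$ forces $\rho\ge \tfrac12+O(\e_0)$, which contradicts the standing hypothesis that $\rho$ is small. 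The $P$-subtraction is therefore not a cosmetic step but the device that makes the $i<0$ shells summable.
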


We postpone the proof of Lemma~\ref{lem:5.21} and first finish
 the proof of Lemma~\ref{lm:smoothA}. Using 
\eqref{thdec}, \eqref{def:gamma0}, the estimate \eqref{AAr1} from
 Lemma~\ref{lem:5.19} and  \eqref{currloc}, we have
\begin{align}
\tr\big[ \phi [T_h(A) -V] \phi \big]_{-}  &= \tr\big[ \phi [T_h(A) -V]
 \phi \gamma_0\big] \non\\
&\geq
\tr\big[ \phi [T_h(A_r) -V]
 \phi \gamma_0\big] -C h^{1+\e_0} \Lambda_{L_0} \non\\
&\quad +
\int \tr \Big[ \big(\sigma(A-A_r)\sigma(D)+\sigma(D)\sigma(A -A_r) \big)
\eta_u \gamma_0 \eta_u\Big]\frac{\rd u}{L_1^3}.\non
\end{align}

We now apply Lemma~\ref{lem:5.21} and Theorem~\ref{thm:5.14} to get 
\begin{align}\label{eq:6.129}
\tr\big[ \phi [T_h(A) -V] \phi \big]_{-}  &\geq \tr\big[ \phi [T_h(A_r) -V] 
\phi \gamma_0\big] - C h^{1+\e_0} \Lambda_{L_0}
- C h^{4\e_0}  \int {\mathcal T}_u \frac{\rd u}{L_1^3} \non\\
  &\geq  \tr\big[ \phi [T_h(A_r) -V] \phi \big]_{-} - C h^{1+\e_0} \Lambda_{L_0}.
\end{align}
This finishes the proof of Lemma~\ref{lm:smoothA}.
\end{proof}

\begin{proof}[Proof of Lemma~\ref{lem:5.21}]
{F}rom now on we fix $u\in B(L_0+L_1)$ and drop the $u$ indices and superscripts for
simplicity, i.e, we set $\gamma =\gamma_0$ and $\eta=\eta_u$.
We rewrite the current, for each fixed $u$,
 by using \eqref{reality}, 
as
\begin{align}\label{eq:5122}
\tr \Big[ \big(\sigma(A-A_r)&\sigma(D)+\sigma(D)\sigma(A-A_r) \big) 
\eta \gamma \eta\Big] \non\\
& = \tr \Big[ \big(\sigma(A'-A_r')\sigma(D)+\sigma(D)\sigma(A'-A_r') \big)
\eta (\gamma -P)\eta\Big] \non\\
&= 2\tr[ F_0^2 D (A'-A_{r}')  \eta (\gamma-P) \eta] \nonumber\\
&\quad+
2\sum_{i<0} \tr[ F_i^2 D  (A'-A_{r}')  \eta (\gamma-1) \eta] - 
2\sum_{i<0} \tr[ F_i^2 D  (A'-A_{r}')   \eta (P-1) \eta] \nonumber \\
&\quad+
2\sum_{0<i\le i_0} \tr[ F_i^2 D  (A'-A_{r}')  \eta (\gamma-P) \eta]  \non\\
&\quad+ h\sum_{i\leq i_0}  \tr[ F_i^2 \sigma(B'-B_r')\eta (\gamma-P) \eta]\non\\
&\quad+  \tr \Big[ F_>^2\big(\sigma(A'-A_r')\sigma(D)+\sigma(D)\sigma(A'-A_r') \big)
\eta (\gamma-P) \eta\Big].
\end{align}
Note that we used  formula \eqref{alt1} for all terms with $i\le i_0$, while
we used \eqref{alt2} for $i>i_0$.
Notice also that the sums over negative indices can be restricted to
 $-i_0 \leq i <0$, since the $F_i$'s vanish for larger values of $|i|$.
Another important observation is that the left hand side of \eqref{eq:5122}
 is real, so it suffices to estimate the real part of each term.
The estimates will be very similar to the estimates
of the terms \eqref{tildesplit}, \eqref{DA'} and \eqref{secc}
obtained during the apriori estimates in Section \ref{sec:remA},
but $A$ will be replaced with  $A-A_r$ and we will capitalize on the
the factor $r^2\ll L^2$ gained from the smoothing  in \eqref{eq:AminusAl}
compared with the usual Poincar\'e inequality \eqref{eq:4}.
Now we treat each term in \eqref{eq:5122} separately.

\paragraph{Step 1: The $F_0$ term.}
We write
\begin{multline}
\tr[ F_0^2 D (A'-A_{r}') \eta (\gamma-P) \eta] \\= 
\tr[ F_0^2 D (A'-A'_{r}) \wt F_0 \eta (\gamma-P) \eta \wt F_0]  
+ \tr[ F_0^2 D (A'-A'_{r}) \wt G_0 \eta (\gamma-P) \eta \wt F_0] .
\label{FG1}
\end{multline}
Recall that $\wt F_0$ is slightly larger than $F_0$,
in particular $\wt F_0 F_0=F_0$  and $\wt F_0 + \wt G_0 = 1$
(see Section \ref{sec:decomp} for the precise definitions).
The first term is estimated as in \eqref{1st}, with $\omega_0 = \wt F_0  \eta^2 \wt F_0 $ 
and its density $\varrho_0$ as
\begin{align}
\big|  \tr[ F_0^2 D (A'-A'_{r}) \wt F_0 \eta (\gamma-P) \eta \wt F_0] \big| \leq
C \Big(\int  (A'-A'_{r})^2 \rho_0 \Big)^{1/2} \Big( \tr D^2 \omega_0 \Big)^{1/2}.
\end{align}
We also used that $W=V_u$ is bounded.
By the bounds on $\| \rho_0\|_{\infty}$ and
 $\tr D^2 \omega_0\le C\tr \om_0\le Cw\Lambda_{L_1}$ 
from \eqref{F0x} and \eqref{eq:AminusAl}, we have
\begin{align}
\big|  \tr[ F_0^2 D (A'-A'_{r}) \wt F_0 \eta (\gamma-P) \eta \wt F_0] \big| &\leq
C \Big( h^{-3} w^2 r^2 \Lambda_{L_1} \int_{B(2L_0)} |\nabla \otimes A| \Big)^{1/2} \nonumber \\
&\le  C h^{1+\rho -6\e_0}\Lambda_{L_1}
\end{align}
using \eqref{eq:MagneticApriori} and the choice of parameters
 \eqref{eq:parameters}, \eqref{deffw}.

For the other term in \eqref{FG1}, similarly to \eqref{2st}
and \eqref{Ddens}, we have
\begin{multline}\label{5128}
\big|  \tr[ F_0^2 D (A'-A'_{r}) \wt G_0 \eta (\gamma-P) \eta \wt F_0] \big| \\
\leq
C \tr[ D^2 \wt F_0 \eta^2 \wt F_0]^{1/2} \tr[  F_0^2 (A'-A'_{r})
 \wt G_0^2 (A'-A'_{r}) F_0^2]^{1/2}.
\end{multline}
We apply the estimate $\tr[ D^2 \wt F_0 \eta^2 \wt F_0] \leq Cw \Lambda_{L_1}$
 as before together with the first inequality of 
\eqref{appl} from the application of Lemma~\ref{lem:comm2}
 to get
\begin{align}\label{eq:Commutator5.22}
\big|  \tr[ F_0^2 D (A'-A'_{r}) \wt G_0 \eta (\gamma-P) \eta \wt F_0] \big| 
&\leq \Big(Cw \Lambda_{L_1} r h^{-2}\| \nabla \otimes A \|_{L^2(B(2L_0))}^2\Big)^{1/2}
\nonumber \\
&\le Ch^{1+\frac{1}{4}(2\rho -21\e_0)}\Lambda_{L_1}.
\end{align}
Here we used \eqref{eq:AminusAl} and that 
\begin{align}\label{5130}
\int |\nabla \otimes (A'-A_{r}')|^2 &\leq \int_{B(2L_0)}
  |\nabla \otimes (A-A_{r})|^2 + CL_0^{-2} \int_{B(3L_1/2)} |A-A_{r}|^2\nonumber \\
&\leq (1 + C (r/L_0)^2) \int_{B(2L_0)}  |\nabla \otimes A|^2,
\end{align}
to collect the $A$-terms.
In summary, we have proved 
\be
\Big| \tr[ F_0^2 (A'-A_{r}') \cdot D \eta (\gamma-P) \eta] \Big|
\le Ch^{1+\e_0}\Lambda_{L_1} \le Ch^{1+\e_0}\Lambda_{L_0}
\label{F0est}
\ee
assuming $\rho\ge 100\e_0$.

\paragraph{Step 2: $i\ge i_0+1$.}
\begin{lemma}\label{lem:5.22}
Assume that \eqref{eq:MagneticApriori} is satisfied and that $\rho \geq 100 \e_0$.
Then we have
\begin{align}
\label{eq:130}
\Big|\tr[ F_>^2 D (A- A_{r}) \eta \gamma \eta ]\Big| &\leq 
2 h^{1/4} \tr \big[ D^2 \wt F_> \eta \gamma \eta \wt F_>\big]  + C h^{5/4} \Lambda_{L_0}.
\end{align}
\end{lemma}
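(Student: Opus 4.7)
The plan is to mirror the Cauchy--Schwarz strategy used for the $F_0$ term in Step~1, adapted to the high-momentum regime where $D^2 F_>^2$ is unbounded and cannot be absorbed into a multiplicative constant as in the $F_0$ case. Using $F_>\wt F_> = F_>$ and cyclicity of the trace, and inserting $1 = \wt F_> + \wt G_>$ between $(A-A_r)$ and $\eta$, one decomposes
$$\tr[F_>^2 D(A-A_r)\eta\gamma\eta] = \tr[F_>^2 D(A-A_r)\wt F_>\eta\gamma\eta\wt F_>] + \tr[F_>^2 D(A-A_r)\wt G_>\eta\gamma\eta\wt F_>],$$
and estimates each piece separately.

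For the diagonal piece, Cauchy--Schwarz (inserting $\gamma = \gamma^{1/2}\gamma^{1/2}$ and using $[D,F_>]=0$) yields
$$\big|\tr[F_>^2 D(A-A_r)\wt F_>\eta\gamma\eta\wt F_>]\big|\le \big(\tr (A-A_r)^2 \omega_>\big)^{1/2}\big(\tr D^2 F_>^4\eta\gamma\eta\big)^{1/2},$$
with $\omega_>:=\wt F_>\eta\gamma\eta\wt F_>$. Since $F_>^4\le\wt F_>^2$ the second factor is bounded by $\tr D^2\wt F_>\eta\gamma\eta\wt F_>$, exactly the quantity on the right-hand side of \eqref{eq:130}. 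For the first factor, after localizing $A-A_r$ to $A'-A_r'$ via a cutoff equal to $1$ on a neighborhood of $\supp\eta$, one combines H\"older with the Sobolev--Lieb--Thirring bound $\|\rho_>\|_{5/3}^{5/3}\le Ch^{-2}\tr D^2\omega_>$ (valid since $0\le\omega_>\le C$ with density $\rho_>$) and the interpolation $\|A'-A_r'\|_5^2\le\|A'-A_r'\|_2^{1/5}\|A'-A_r'\|_6^{9/5}$, fed with the smoothing estimates \eqref{eq:AminusAl}, \eqref{eq:AminusAl6} and the apriori field-energy bound \eqref{eq:MagneticApriori}.

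For the off-diagonal piece, cyclicity and Cauchy--Schwarz give
$$\big|\tr[F_>^2 D(A-A_r)\wt G_>\eta\gamma\eta\wt F_>]\big|\le \big(\tr F_>^2(A-A_r)\wt G_>^2 (A-A_r)F_>^2\big)^{1/2}\big(\tr D^2\wt F_>\eta\gamma\eta\wt F_>\big)^{1/2}.$$
Since $F_>\wt F_>=F_>$ forces $f_>\wt g_>=0$, the first factor has exactly the structure of Lemma~\ref{lem:comm2} with $f=f_>^2$, $g=\wt g_>^2$, $a=A-A_r$; applied with $\|\nabla f_>^2\|_1\le CW$ (a scaling computation analogous to the one leading to \eqref{appl}) together with \eqref{eq:AminusAl} and $\|\nabla\otimes(A-A_r)\|_2^2\le C\int_{B(2L_0)}|\nabla\otimes A|^2$, it produces a bound of order $rh^{-2}\int_{B(2L_0)}|\nabla\otimes A|^2\le Crh^{\alpha}\Lambda_{L_0}$.

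Finally, a weighted Young/AM--GM step (for the diagonal term in the form $u^{4/5}v \le h^{1/4}u + Ch^{-1}v^5$, so that the Lieb--Thirring factor $(\tr D^2\omega_>)^{3/5}$ gets absorbed back into the kinetic term; for the off-diagonal term in the simpler form $\sqrt{XY}\le h^{1/4}Y + X/(4h^{1/4})$) produces the two terms on the right-hand side of \eqref{eq:130}, and the choice $\rho\ge 100\e_0$ ensures that all ``non-kinetic'' residues fit into $Ch^{5/4}\Lambda_{L_0}$. The main obstacle is precisely the unboundedness of $D^2$ on $\supp F_>$: unlike in the $F_0$ analysis we cannot trade $\tr D^2 F_>^4\eta\gamma\eta$ for a constant multiple of $\tr\omega_>$, so the kinetic factor must be retained on the right-hand side with a small coefficient $h^{1/4}$, and identifying the correct Young exponents (matching the $3/5$ power produced by Lieb--Thirring) is what fixes the error size at $h^{5/4}\Lambda_{L_0}$.
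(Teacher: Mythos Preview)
Your proof is correct and follows essentially the same route as the paper: the same $\wt F_>/\wt G_>$ decomposition, the same Cauchy--Schwarz splits, the H\"older--Sobolev--Lieb--Thirring chain for the diagonal piece and Lemma~\ref{lem:comm2} for the off-diagonal piece, then a Young inequality with the $4/5$--$1/5$ exponents dictated by the Lieb--Thirring $3/5$ power. The paper phrases the Young step with an auxiliary parameter $\tau$ optimized to $\tau=h^{1/4}$ rather than writing the inequality $u^{4/5}v\le h^{1/4}u+Ch^{-1}v^5$ directly, but this is purely cosmetic.
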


\begin{proof}
Recalling  $\sum_{i\ge i_0+1} F_i^2 = F_>^2$,  $F_>\wt F_>= F_>$ and $\wt F_>+ \wt G_>
=1$, we get
\begin{multline}
\label{eq:5.28}
\tr[ F_>^2 D (A- A_{r}) \eta \gamma \eta ] \\
= \tr[ F_>^2 D (A' - A'_{r}) \wt F_> \eta \gamma \eta \wt F_>] +
 \tr[ F_>^2 D (A'-A_{r}') \wt G_> \eta \gamma \eta \wt F_>].
\end{multline}
The first term we estimate as
\begin{align}\label{5131}
\Big|\tr[ F_>^2 D (A' - A_{r}') \wt F_> \eta \gamma \eta \wt F_>] \Big| \leq
\Big( \tr[ F_>^4 D^2 \wt F_> \eta \gamma \eta \wt F_> ] 
\tr[ (A' - A_{r}')^2 \wt F_> \eta \gamma \eta \wt F_>] \Big)^{1/2}.
\end{align}
In the last factor we apply H\"{o}lder, Sobolev and
 Lieb-Thirring inequalities similarly to \eqref{eq:LTforAminusAl} to get
\begin{align}
\tr[ (A' - A_{r}')^2 \wt F_> \eta \gamma \eta \wt F_>]  &\leq
\| A' - A_{r}' \|_2^{1/5} \| A' - A_{r}' \|_6^{9/5} \Big( h^{-2}
 \tr[ D^2 \wt F_> \eta \gamma \eta \wt F_>] \Big)^{3/5}\nonumber \\
&\leq r^{1/5} h^{-6/5} \int_{B(2L_1)} |\nabla \otimes A|^2 \,
 \Big( \tr[ D^2 \wt F_> \eta \gamma \eta \wt F_>] \Big)^{3/5}.
\end{align}
Inserting this into \eqref{5131}, using $\wt F_>^2 \leq 1$,
 we find
\begin{align}
\label{eq:134}
\Big|\tr[ F_>^2 D (A' &- A_{r}') \wt F_> \eta \gamma \eta \wt F_>]\Big| \leq
r^{1/10} h^{-3/5}\Big( \int_{B(2L_1)} |\nabla \otimes A|^2\Big)^{1/2}
 \Big( \tr[ D^2 \wt F_> \eta \gamma \eta \wt F_>] \Big)^{4/5} \nonumber \\
&= \tau^{5/4} \tr[ D^2 \wt F_> \eta \gamma \eta \wt F_>] 
+ \tau^{-5} r^{1/2} h^{-3} \Big( \int_{B(2L_1)} |\nabla \otimes A|^2\Big)^{5/2} \nonumber \\
&\leq  \tau^{5/4} \tr[ D^2 \wt F_> \eta \gamma \eta \wt F_>] 
+ \tau^{-5} \big( h^{5+\rho-24\e_0}\big)^{1/2} \Lambda_{L_0}
\end{align}
for any $\tau>0$, and where we used \eqref{eq:MagneticApriori} to get the last estimate.

We now consider the second term in \eqref{eq:5.28}. By a 
Cauchy-Schwarz inequality and by applying Lemma~\ref{lem:comm2} (recall \eqref{f>})
similarly as in \eqref{5128}--\eqref{5130}, we find
\begin{align}
\label{eq:135}
\Big| \tr[ F_>^2 D (A'-&A_{r}') \wt G_> \eta \gamma \eta \wt F_>] \Big|\leq
\Big(\tr[ D^2 \wt F_> (\eta \gamma \eta)^2 \wt F_>] \tr[ F_>^2 (A'-A_{r}')
 \wt G_>^2 (A'-A_{r}') F_>^2] \Big)^{1/2}\nonumber \\
&\leq \tau \tr[ D^2 \wt F_> \eta \gamma \eta \wt F_>] + \tau^{-1} h^{-2} r \int_{B(2L_1)}
 |\nabla \otimes A|^2 \nonumber \\
&\leq \tau \tr[ D^2 \wt F_> \eta \gamma \eta \wt F_>] 
 + \tau^{-1} h^{\frac{3}{2} + \rho - 3\e_0} \Lambda_{L_0},
\end{align}
where we used \eqref{eq:MagneticApriori} to get the last estimate.

Combining \eqref{eq:134} and \eqref{eq:135}, we get \eqref{eq:130} by choosing
 $\tau=h^{1/4}$.
\end{proof}

\paragraph{Step 3: $|i|\leq i_0+1$.}

\begin{lemma}\label{lem:5.23}
If $\rho > 100 \e_0$,
\begin{align}\label{eq:5.136}
&\Re \Big\{2\sum_{i<0} \tr[ F_i^2 D(A'-A_{r}') \eta (\gamma-1) \eta] - 
2\sum_{i<0} \tr[ F_i^2 D(A'-A_{r}')\eta (P-1) \eta] \nonumber \\
&\quad+
2\sum_{0<i\le i_0} \tr[ F_i^2 D(A'-A_{r}')  \eta (\gamma-P) \eta]  \non\\
&\quad+ h\sum_{i\leq i_0}  \tr[ F_i^2 \sigma(B'-B_r')\eta (\gamma-P) \eta] \Big\}\non \\
&\geq
- C h^{4 \e_0}\sum_{i \leq i_0}  2^i w  \tr[ D^2 F_i^2
 \eta \gamma \eta] \non\\
&- C h^{4 \e_0} \sum_{0> i \geq i_0} 2^{|i|} w  
\tr[ D^2 F_i^2 \eta (1-\gamma) \eta]
- C h^{1+\e_0} \Lambda_{L_0}.
\end{align}
\end{lemma}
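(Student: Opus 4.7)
The plan is to mimic, in each dyadic shell $|i|\le i_0$, the same two-step decomposition carried out in Steps 1 and 2 for $i=0$ and $i>i_0$. For every current term I would insert $1=\wt F_i+\wt G_i$ between $D(A'-A_r')$ and $\eta\bullet\eta$ and use $F_i=F_i\wt F_i$ to write
\begin{align*}
\tr\bigl[F_i^2 D(A'-A_r')\eta\bullet\eta\bigr]
&=\tr\bigl[F_i^2 D(A'-A_r')\wt F_i\,\eta\bullet\eta\,\wt F_i\bigr]\\
&\quad +\tr\bigl[F_i^2 D(A'-A_r')\wt G_i\,\eta\bullet\eta\,\wt F_i\bigr],
\end{align*}
where $\bullet$ is $\gamma-1$, $P-1$ (for $i<0$) or $\gamma$, $P$ (for $0<i\le i_0$), each treated as $\pm$ a positive density matrix so that Cauchy--Schwarz with a density matrix is available. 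For the magnetic term $h\tr[F_i^2\sigma(B'-B_r')\eta(\gamma-P)\eta]$ I would simply repeat the argument of \eqref{Bterm} with $B$ replaced by $B-B_r$, using $\|B-B_r\|_2^2\le C\int_{B(2L_0)}|\nabla\otimes A|^2$ (Young's inequality) together with the diagonal bound $\sup_x\wt F_i(x,x)\le Ch^{-3}w_iV_u^{3/2}$ from \eqref{eq:supFi}.

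For the $\wt F_i$-piece I would argue as in \eqref{4st}: Cauchy--Schwarz yields the product
$\bigl(\tr[(A'-A_r')^2\wt F_i|\bullet|\wt F_i]\bigr)^{1/2}\bigl(\tr[F_i^4 D^2\wt F_i|\bullet|\wt F_i]\bigr)^{1/2}$.
The first factor is controlled by the diagonal bound on $\wt F_i^2$ combined with the key $L^2$-smoothing estimate $\|A'-A_r'\|_2^2\le Cr^2\int_{B(2L_0)}|\nabla\otimes A|^2$ from Lemma~\ref{lem:AminusAl2} and the apriori bound \eqref{eq:MagneticApriori}, giving $Cw_ir^2 h^{\alpha-1}\Lambda_{L_0}$ per shell. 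For the second factor I would replace $\wt F_i^2$ by a finite sum $C\sum_{j\in J(i)}F_j^2$ (by disjoint momentum supports and the bounded-overlap of $\wt f_i$ with neighboring $f_j$), so that the kinetic energy reappears precisely in the form $w_i\tr[D^2 F_j^2\eta\bullet\eta]$ entering $\mathcal{T}_u$. Splitting via $ab\le\tau a^2+\tau^{-1}b^2$ with $\tau^{-1}=h^{4\e_0}\cdot w_i$ produces the desired absorption into $h^{4\e_0}\mathcal{T}_u$ plus a residual $Ch^{1-7\e_0+2\rho}\Lambda_{L_0}$, which is $\le Ch^{1+\e_0}\Lambda_{L_0}$ as soon as $\rho\ge 4\e_0$. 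For the $\wt G_i$-piece I would instead argue as in \eqref{5st}, bounding it by $\bigl(\tr[F_i^2(A'-A_r')\wt G_i^2(A'-A_r')F_i^2]\bigr)^{1/2}\bigl(\tr[D^2F_i^2\eta|\bullet|\eta]\bigr)^{1/2}$, and apply Lemma~\ref{lem:comm2} with $\tilde f=f_i^2$, $\tilde g=\wt g_i^2$ (which satisfy $\tilde f\tilde g=0$ since $f_i\wt f_i=f_i$); together with $\|\nabla f_i^2\|_1\le CV_u$, this yields $Ch^{-2}V_u r\int_{B(2L_0)}|\nabla\otimes A|^2$ after using \eqref{5130}, and the same Cauchy--Schwarz trade-off completes the bound. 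The terms containing $\eta P\eta$ in place of $\eta\gamma\eta$ are handled by the same decomposition but with the second Cauchy--Schwarz factor estimated by the rapid-decay bound \eqref{compact} of Lemma~\ref{lm:disjoint}, since the momentum supports of $F_i$ (resp.\ $\wt F_i$) and $P$ are separated by a distance of order $W^{1/2}$; this produces an arbitrarily small error $(h/(wL_1V_u^{1/2}))^N$ per shell, which is absorbable into $Ch^{1+\e_0}\Lambda_{L_0}$.

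The main obstacle will be the careful bookkeeping of the powers $h, w=h^{\alpha/2}, r=h^{1/2+\rho}, L_0, L_1$: one must verify that the $L^2$-smoothing gain $r^2=h^{1+2\rho}$ from Lemma~\ref{lem:AminusAl2} outweighs both the loss $\Lambda_{L_1}/\Lambda_{L_0}=(L_1/L_0)^3=h^{6\e_0}$ incurred by the localization at scale $L_1$ and the factor $w^{-1}=h^{-\alpha/2+\e_0}$ needed to convert $w_i\tr[D^2F_i^2\eta\bullet\eta]$ into a $2^{|i|}w$-weighted term compatible with $\mathcal{T}_u$; balancing these in the $\wt F_i$-piece forces $2\rho\ge 8\e_0$, in the $\wt G_i$-piece $\rho\ge$ const.\ $\e_0$, and in the magnetic piece similarly. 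The condition $\rho>100\e_0$ is comfortably sufficient for all branches. Finally, summation over $|i|\le i_0\le C|\log h|$ is harmless since every bound on an individual shell already carries the weight $w_i$ dyadically matching the prefactors in $\mathcal{T}_u$, and the sum of absolute residuals picks up only a logarithmic loss absorbable by any fractional power of $h$.
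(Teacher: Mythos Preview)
Your proposal is correct and follows essentially the same route as the paper: split each $\tr[F_i^2 D(A'-A_r')\eta\bullet\eta]$ via $1=\wt F_i+\wt G_i$, control the $\wt F_i$-piece by Cauchy--Schwarz together with the diagonal bound \eqref{eq:supFi} and the smoothing gain \eqref{eq:AminusAl}, control the $\wt G_i$-piece by Lemma~\ref{lem:comm2}, dispose of the $P$-terms by the separation-of-supports bound \eqref{kinsmall}, and handle the $\sigma(B'-B_r')$ term as in \eqref{Bterm}. The only cosmetic difference is that where you replace $\wt F_i^2$ by a finite sum $\sum_{j\in J(i)}F_j^2$ via bounded overlap, the paper simply uses $F_i^4\le F_i^2$ and $F_i\wt F_i=F_i$ to arrive directly at $\tr[D^2 F_i^2\eta\bullet\eta]$, which is a bit cleaner but equivalent.
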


\begin{proof}
Consider $0<i\le i_0$.
We write 
\begin{align}
\label{eq:5137}
 \tr[ F_i^2 D (A' - A_r')  \eta \gamma \eta]=
  \tr[ F_i^2 D (A' - A_r') \wt F_i \eta \gamma \eta] +
\tr[ F_i^2 D (A' - A_r') \wt G_i \eta \gamma \eta], 
\end{align}
using $\wt F_i + \wt G_i \equiv 1$.
We can estimate the first term, using $F_i \wt F_i = F_i$, as,
\begin{align}
\Big| \tr[ F_i^2 D (A' - A_r') \wt F_i \eta \gamma \eta] \Big| &\leq
\{ \tr[ F_i^4 D^2 \wt F_i \eta \gamma \eta \wt F_i] \}^{1/2}
 \{ \tr[ (A' - A_r')^2 \wt F_i \eta \gamma \eta \wt F_i]\}^{1/2}\non\\
&\leq h^{4 \e_0} 2^i w  \tr[ D^2 F_i^2
 \eta \gamma \eta]  + h^{-4\e_0} 2^{-i} w^{-1} \tr[ (A' - A_r')^2 \wt F_i^2] \non\\
 &\leq
 h^{4 \e_0} 2^i w  \tr[ D^2 F_i^2
 \eta \gamma \eta] + C h^{-4\e_0} h^{-3} r^2 \int_{B(2L_0)}|\nabla \otimes A|^2 \non\\
 &\leq
 h^{4 \e_0} 2^i w  \tr[ D^2 F_i^2
 \eta \gamma \eta] + C h^{1+2\rho-7\e_0} \Lambda_{L_0},
\end{align}
where we used \eqref{eq:supFi} and \eqref{eq:AminusAl} in the third inequality.
 Then we used \eqref{eq:MagneticApriori} and the choice of parameters 
\eqref{eq:parameters} to finish the estimate.
This is in agreement with the desired estimate if $\rho > 100\e_0$ 
(since the number of terms $i_0$ in the sum is only logarithmic in $h$).

The remaining term in \eqref{eq:5137} is estimated using Lemma~\ref{lem:comm2}
and that $\|\nabla f_i \|_1\le CW\le C$:
\begin{align}
\Big| \tr[ F_i D F_i(A' - A_r') \wt G_i \eta \gamma \eta]\Big|
&\leq 
\Big\{ \tr[ F_i^2 D^2 \eta \gamma \eta]\Big\}^{1/2} \Big\{ \tr[ \wt G_i (A' - A_r') 
 F_i^2 (A' - A_r') \wt G_i \eta \gamma \eta]\Big\}^{1/2} \non \\
&\leq h^{4\e_0} \cdot 2^i w \tr[ F_i^2 D^2 \eta \gamma \eta]
+C h^{-4\e_0} \frac{r h^{-2}}{2^i w} \| \nabla \otimes A \|_{L^2(B(2L_0))}^2 \non\\
&\leq h^{4\e_0} \cdot 2^i w \tr[ F_i^2 D^2 \eta \gamma \eta]
+ C h^{1+\rho - 6\e_0}  \Lambda_{L_0}.
\end{align}
This is in agreement with \eqref{eq:5.136} if $\rho > 100\e_0$.

We also estimate the corresponding term with $P$.
\begin{align}
 \Big| \tr[ F_i^2 D(A'-A_{r}') \eta P \eta] \Big| &\leq \Big\{
 \tr[ F_i^2 D^2 \eta P \eta ] \tr[ P \eta (A'-A_{r}') F_i^2 (A'-A_{r}')\eta P
 \Big\}^{1/2}\non\\
 & = \| D F_i \eta P \|_{HS} \Big\{ \tr[ P \eta (A'-A_{r}') F_i^2 (A'-A_{r}')\eta P
 \Big\}^{1/2}.
\end{align}
Using \eqref{kinsmall} and that $(h/wL) = h^{\e_0/2}$, the first factor
can be made smaller than an arbitrarily large power of
$h$, while the second one is bounded by
$\Lambda_{L_1}^{1/2} \| A'-A'_r\|_2$, which is not bigger than
a fixed positive power of $h$. So this term is negligible.
The similar terms for negative indices $i$ are estimated 
in the same manner with only notational changes.

Also the $\sigma(B'-B_r')$ terms are readily controlled. We leave this part to the reader.
\end{proof}

We can now finish the proof of Lemma~\ref{lem:5.21}. We combine
\eqref{eq:5122} with \eqref{F0est} and the results of Lemma~\ref{lem:5.22} and~\ref{lem:5.23}.
This finishes the proof of Lemma~\ref{lem:5.21}.
\end{proof}

\end{document}